\newtheorem{theorem}{Theorem}[section]
\newtheorem{proposition}[theorem]{Proposition}
\newtheorem{corollary}[theorem]{Corollary}
\newtheorem{lemma}[theorem]{Lemma}
\newtheorem{defin}{Definition}
\newtheorem{remarkx}{Remark}
\newtheorem{assumption}{A \hspace{-5pt}}
\newtheorem{assumptionnew}{B \hspace{-2pt}}
\newtheorem{assumptiongla}{C \hspace{-2pt}}
\newtheorem{assumptiongla2}{D \hspace{-2pt}}
\DeclareSymbolFont{extraup}{U}{zavm}{m}{n}
\DeclareMathSymbol{\ballotx}{\mathalpha}{extraup}{129}
\DeclareMathOperator*{\argmin}{arg\,min}
\DeclareMathOperator*{\argmax}{arg\,max}
\DeclareMathOperator*{\kl}{KL}
\newcommand{\muu}{\pi_{\lambda}}  
\newcommand\numberthis{\addtocounter{equation}{1}\tag{\theequation}}
\DeclareMathOperator{\prox}{prox}
\newcommand{\md}{\mathrm{d}}
\def\real{\mathbb{R}}
\def\cP{{\mathcal P}}
\newcommand{\norm}[1]{\ensuremath{\Vert #1 \Vert}}
\newcommand{\bR}{\mathbb{R}}
\newcommand{\bE}{\mathbb{E}}
\newcommand{\bN}{\mathbb{N}}
\title{Proximal Interacting Particle Langevin Algorithms}
\author[1]{\href{mailto:<paula.cordero-encinar22@imperial.ac.uk>?Subject=Your UAI 2025 paper}{Paula Cordero Encinar}{}}
\author[2]{Francesca R. Crucinio}
\author[1]{O. Deniz Akyildiz}
\affil[1]{%
    Imperial College London, UK
}
\affil[2]{%
    ESOMAS, University of Turin, \& Collegio Carlo Alberto, Italy
}
\begin{document}
\maketitle

\begin{abstract}
We introduce a class of algorithms, termed proximal interacting particle Langevin algorithms (PIPLA), for inference and learning in latent variable models whose joint probability density is non-differentiable. Leveraging proximal Markov chain Monte Carlo techniques and interacting particle Langevin algorithms, we propose three algorithms tailored to the problem of estimating parameters in a non-differentiable statistical model. We prove nonasymptotic bounds for the parameter estimates produced by the different algorithms in the strongly log-concave setting and provide comprehensive numerical experiments on various models to demonstrate the effectiveness of the proposed methods. In particular, we demonstrate the utility of our family of algorithms for sparse Bayesian logistic regression, training of sparse Bayesian neural networks or neural networks with non-differentiable activation functions, image deblurring, and sparse matrix completion. Our theory and experiments together show that PIPLA family can be the de facto choice for parameter estimation problems in non-differentiable latent variable models.
\end{abstract}

\section{Introduction}\label{sec:intro}
Latent variable models (LVMs) are a class of probabilistic models which are widely used in machine learning and computational statistics for various applications, such as image, audio, and text modelling as well as in the analysis of biological data \citep{bishop2006pattern, murphy2012machine}. LVMs have demonstrated great success at capturing (often interpretable) latent structure in data, which is crucial in different scientific disciplines such as psychology and social sciences \citep{lvm_psychology, lvm_psy_2}, ecology \citep{lvm_ecology}, epidemiology \citep{lvm_epidemiology, muthen1992latent} and climate sciences \citep{christensen2012latent}. 

An LVM can be described as compactly as a parametrised joint probability distribution $p_\theta(x, y)\propto e^{-U(\theta, x)}$, where $\theta$ is a set of static parameters, $x$ denotes latent (unobserved, hidden, or missing) variables, and finally $y$ denotes (fixed) observed data. Given an LVM, there are two fundamental, intertwined statistical estimation tasks that must be solved simultaneously: (i) inference, which involves estimating the latent variables given the observed data and the model parameters through the computation of the posterior distribution $p_\theta(x|y)$, and (ii) learning, which involves estimating the model parameters $\theta$ given the observed data $y$ through the computation and maximisation of the marginal likelihood $p_\theta(y)$. 
The learning problem is often termed maximum marginal likelihood estimation (MMLE) and the main challenge is that $p_\theta(y)$ is often intractable.

The marginal likelihood $p_\theta(y)$ (also called the \textit{model evidence} \citep{bernardo2009bayesian}) in an LVM can be expressed as an integral, $p_\theta(y) = \int p_\theta(x, y) \mathrm{d} x$, over the latent variables. Hence the task of learning in an LVM can be formulated as solving the following optimisation problem
\begin{align}\label{eq:MMLE_problem}
\bar{\theta}_\star \in \arg \max_{\theta \in \Theta} p_\theta(y) = \arg \max_{\theta \in \Theta} \int p_\theta(x, y) \mathrm{d} x,
\end{align}
\normalsize
where $\Theta$ is the parameter space (which will be $\bR^{d_\theta}$ in our setting throughout). A classical algorithm for this setting is the celebrated expectation-maximisation (EM) algorithm \citep{dempster_1977}, which was first proposed in the context of missing data. The EM algorithm is an iterative procedure consisting of two main steps. Given a parameter estimate $\theta_k$, the expectation step (E-step) computes the expected value of the log likelihood function $\log p_\theta(x, y)$ with respect to the current conditional distribution for the latent variables given the observed data $p_{\theta_k}(x|y)$, i.e., $Q(\theta, \theta_k) = \bE_{p_{\theta_k}(x|y)}[\log p_\theta(x, y)]$. The second step is a maximisation step (M-step) which consists of maximising the expectation computed in the E-step. The EM algorithm, when it can be implemented exactly, builds a sequence of parameter estimates $(\theta_k)_{k\in \bN}$ where $\theta_k \in \arg \max_{\theta} Q(\theta, \theta_{k-1})$, which monotonically increase the marginal likelihood, i.e., $\log p_{\theta_k}(y) \geq \log p_{\theta_{k-1}}(y)$ \citep{dempster_1977}.

The wide use of the EM algorithm is due to the fact that it can be implemented using approximations for both steps \citep{wei1990monte, biometrika_1993, 1995_royal_statistical}, leveraging significant advances in Monte Carlo methods for the E-step and numerical optimisation techniques for the M-step. 
In particular, in most modern statistical models in machine learning, the posterior distribution, $p_\theta(x | y)$ for fixed $\theta$, is intractable, requiring an approximation for the E-step. One way to address this is by designing Markov chain Monte Carlo (MCMC) samplers. 
This approach has led to significant developments, where Markov kernels based on the unadjusted Langevin algorithm (ULA) \citep{tweedie_1996, alain_2017} have become a widespread choice in high dimensional settings thanks to their favourable theoretical properties \citep{alain_2019, NEURIPS2019_wibisono, RefWorks:RefID:73-de2021efficient, pmlr-v178-chewi22a, convergence_ula_2017}.

Recently, \citet{pmlr-v206-kuntz23a} explore an alternative approach for MMLE based on \citet{Neal1998}, where they exploit the fact that the EM algorithm is equivalent to performing coordinate descent of a free energy functional, whose minimum is the maximum likelihood estimate of the latent variable model. They propose several interacting particle algorithms to address the optimisation problem. 
This method has led to subsequent works \citep{akyildiz2023interacting, sharrock2024tuningfree, lim2024momentum, caprio2024error, johnston2024taming, gruffaz2024stochastic} including ours.

\paragraph{Contribution.} This work focuses on LVMs whose joint probability density is non-differentiable by leveraging proximal methods \citep{combettes2011proximal, proximal_algorithms_neal}. 
In the classical sampling case, this setting has been considered in a significant body of works, see,
for example, \citet{pereyra2016proximal,JMLR:v18:15-038, durmus_proximal,pmlr-v75-bernton18a, crucinio2023optimal, jko_forward_backward_2023_diao, pmlr-v178-chen22c,  NEURIPS2020_2779fda0, NEURIPS2019_6a8018b3_salim, pmlr-v134-lee21a}, due to significant applications in machine learning, most notably in the use of non-differentiable regularisers. For example, this type of model naturally arises when including sparsity-inducing penalties, such as Laplace priors for regression problems or Bayesian neural networks \citep{10.1162/neco.1995.7.1.117, pmlr-v97-yun19a}, and total variation priors in image processing \citep{durmus_proximal}. They are also relevant for non-differentiable activation functions in neural networks.
Specifically, we summarise our contributions below.
\begin{itemize}
\item We develop the first proximal interacting particle Langevin algorithm (PIPLA) family. Similar algorithms so far are investigated in the usual differentiable setting \citep{pmlr-v206-kuntz23a, akyildiz2023interacting, johnston2024taming}. We extend these methods to the non-differentiable setting via the use of proximal techniques. Specifically, we propose two main algorithms, termed Moreau-Yosida interacting particle Langevin algorithm (MYIPLA) and proximal interacting particle gradient Langevin algorithm (PIPGLA).
\item We present a theoretical analysis of our methods and, for comparison, of the proximal extensions we develop for the method in \citet{pmlr-v206-kuntz23a}, which we termed Moreau-Yosida particle gradient descent (MYPGD).

\item We apply our methods to a variety of examples and demonstrate that the PIPLA family is a viable option for the MMLE problem in non-differentiable settings.
\end{itemize}

This paper is organised as follows. Section \ref{sec:background} introduces the technical background necessary to develop our methods. In Sections \ref{sec:proximal_pipla} and \ref{sec:theoretical_analysis}, we present the proposed algorithms along with their theoretical analysis. Section \ref{sec:experiments} presents comprehensive numerical experiments before concluding. 

\paragraph{Notation.} We endow $\real^d$ with the Borel $\sigma$-field $\mathcal{B}(\real^d)$ with respect to
the Euclidean norm 
$\Vert\cdot\Vert$ when $d$ is clear from context. 
$\mathcal{N}(x| \mu, \Sigma)$ 
is the multivariate Gaussian, $I$ is the identity matrix, and $\mathcal{U}(x|a, b)$ is a uniform distribution. $\mathcal{C}^1$ denotes the space of continuously differentiable functions.  
We denote by $\mathcal{P}(\real^d)$ the set of probability measures over
$\mathcal{B}(\real^d)$ and endow this space with the topology of weak convergence. 
For all $p\geq 1$, we denote by
$\mathcal{P}_p(\real^d)$ the set of probability measures over
$\mathcal{B}(\real^d)$ with finite $p$-th moment. For any $\mu,\nu\in\mathcal{P}_2(\real^d)$, we denote by $W_2(\mu, \nu)$ the
$2$-Wasserstein distance between $\mu$ and $\nu$. 
$(\mathbf B_t)_{t\geq 0}$ is a $d$-dimensional Brownian motion, and $(\xi_n)_{n\in\mathbb{N}}$ is a sequence of i.i.d. $d$-dimensional standard Gaussian random variables.

In the following we adopt the convention that the solution to the continuous time solution is given by bold letters, whilst other processes (including the time discretisation) are not.
\section{Background}\label{sec:background}
We present the background and setting for our analysis.

\subsection{Langevin Dynamics}
At the core of our approach is the use of Langevin diffusion processes \citep{tweedie_1996}, which are widely used for building advanced sampling algorithms. The Langevin stochastic differential equation (SDE) is given by
\begin{equation}\label{eq:langevin_diffusion}
    \md \mathbf X_t = -\nabla U(\mathbf X_t)\md t + \sqrt{2} \md \mathbf B_t,
\end{equation}
\normalsize
where $U:\mathbb{R}^d\to\mathbb{R}$ is a continuously differentiable function. Under mild assumptions, \eqref{eq:langevin_diffusion} admits a strong solution, and its associated semigroup has a unique invariant distribution given by $\pi(x)\propto e^{-U(x)}$ \citep{pavliotis2014stochastic}.
In most cases, solving \eqref{eq:langevin_diffusion} analytically is not possible, however, we can resort to a discrete-time Euler-Maruyama approximation with step size $\gamma$ which gives the following Markov chain
\begin{equation}
\label{eq:ula}
    X_{n+1} = X_n -\gamma\nabla U(X_n) + \sqrt{2\gamma}\xi_{n+1}.
\end{equation}
\normalsize
This algorithm, known as the unadjusted Langevin algorithm (ULA) \citep{alain_2019}, exhibits favourable properties when $U$ is $\mu$-strongly convex and $L$-smooth (i.e. $\nabla U(x)$ is $L$-Lipschitz).
In particular, it converges exponentially fast to its biased limit $\pi^\gamma$ and the asymptotic bias is of order $\gamma^{1/2}$ \citep{alain_2019}. 

\subsection{MMLE with Langevin Dynamics}
A recent approach to solve the MMLE problem in \eqref{eq:MMLE_problem} is to build an extended stochastic dynamical system which can be run in the space $\bR^{d_\theta} \times \bR^{d_x}$, with the aim of jointly solving the problem of latent variable sampling and parameters optimisation. \citet{pmlr-v206-kuntz23a} first proposed a method termed \textit{particle gradient descent} (PGD) which builds on the observation made in \citet{Neal1998}, that the EM algorithm can be expressed as a minimisation problem of the free-energy objective (see Appendix~\ref{app:intro_pgd} for details). By constructing a gradient flow with respect to this functional, \citet{pmlr-v206-kuntz23a} propose the following system of SDEs
\small
\begin{align}
    \md\bm{\theta}_t^N &= -\frac{1}{N}\sum_{i=1}^N \nabla_{\theta} U(\bm{\theta}_t^N, \mathbf{X}_{t}^{i, N}) \md t,\label{eq:pdg_theta}\\
    \md \mathbf{X}_{t}^{i,N}&=-\nabla_{x} U(\bm{\theta}_t^N, \mathbf{X}_t^{i, N})\md t +\sqrt{2} \md \mathbf{B}_t^{i, N},  \quad i = 1,\dots,N,\nonumber
\end{align}
\normalsize
where $U(\theta, x) = -\log p_\theta(x, y)$ given  fixed observations $y$.
By introducing a noise term in the dynamics of $\theta$ \eqref{eq:pdg_theta}, which may facilitate escaping local minima in non-convex settings, \citet{akyildiz2023interacting} propose an interacting Langevin SDE:
\small
\begin{align}
\label{eq:ipla}
    \md \bm{\theta}_{t}^N&=-\frac{1}{N}\sum_{i=1}^N \nabla_{\theta} U(\bm{\theta}_t^N, \mathbf X_t^{i, N})\md t +\sqrt{\frac{2}{N}} \md \mathbf{B}_t^{0, N},\\
    \md \mathbf{X}_{t}^{i,N}&=-\nabla_{x} U(\bm{\theta}_t^N, \mathbf{X}_t^{i, N})\md t +\sqrt{2} \md \mathbf{B}_t^{i, N},  \quad i = 1,\dots,N.\notag \notag
\end{align}
\normalsize
An Euler-Maruyama discretisation of~\eqref{eq:ipla}, provides the interacting particle Langevin algorithm (IPLA).
Under strong-convexity and smoothness of $U$, IPLA and PGD exhibit favourable convergence properties \citep{akyildiz2023interacting, caprio2024error}.
\subsection{Proximal Methods}\label{subsec:proximal_alg}
Proximal methods \citep{combettes2011proximal,proximal_algorithms_neal} use proximity mappings of convex functions, instead of gradient mappings, to construct fixed point schemes and compute function minima. We now introduce some important definitions. Consider $U:\mathbb{R}^d\to \mathbb{R}$.
\begin{defin}[Proximity mappings]
\label{def:proxy}
The $\lambda$-proximity mapping or proximal operator function of $U$ is defined for any $\lambda>0$ as 
\begin{equation*}
    \prox_U^{\lambda}(x)\coloneqq \argmin_{z\in\mathbb{R}^{d}}\; \left\lbrace U(z) + \Vert z-x\Vert^2/{(2\lambda)}\right\rbrace.
\end{equation*}
\normalsize
\end{defin}
Intuitively, the proximity operator behaves similarly to a gradient map by moving points in the direction of the minimisers of $U$. In fact, when $U$ is differentiable, the proximal mapping corresponds to the implicit gradient step, as opposed to the explicit gradient step, which is known to be more stable \citep{proximal_algorithms_neal}. 
Note that as $\lambda\to 0$, the proximity operator converges to the identity operator, while as $\lambda\to \infty$, the proximity operator maps all points to the set of minimisers of $U$.

The idea of proximal methods is to approximate the non-differentiable target density $\pi \propto e^{-U}$, where $U$ is a convex lower semi-continuous function, by substituting the potential $U$ with a smooth approximation $U^\lambda$, where the level of smoothness is controlled by the parameter $\lambda>0$. The proximity map allows us to define a family of approximations to $\pi$, indexed by $\lambda$, and referred to as Moreau-Yosida approximation \citep{Moreau1965}. 
\begin{defin}
[$\lambda$-Moreau-Yosida approximation]
 For any $\lambda>0$, define the $\lambda$-Moreau-Yosida approximation of $U$ as 
   \begin{align*}
       U^{\lambda}(x)&\coloneqq  \min_{z\in\mathbb{R}^{d}} \left\lbrace U(z) + \Vert z-x\Vert^2/(2\lambda)\right\rbrace\\
       &= U(\prox_U^{\lambda}(x))+\Vert \prox_U^{\lambda}(x)-x\Vert^2/(2\lambda).
   \end{align*}
   \normalsize
   Consequently, we define the $\lambda$-Moreau-Yosida approximation of $\pi$ as the following density $\pi_{\lambda}(x)\propto e^{-U^{\lambda}(x)}$.
   \label{def:my}
\end{defin}

The approximation $\pi_\lambda$ converges to $\pi$ as $\lambda\to 0$ \citep{Rockafellar_B._1998, durmus_proximal} and is differentiable even if $\pi$ is not, with log-gradient
\begin{align}
\label{eq:proximity}
    \nabla\log\pi_{\lambda}(x) = -\nabla U^{\lambda}(x) = (\prox_U^{\lambda}(x)-x)/\lambda.
\end{align}
\normalsize
Since $\pi_\lambda$ is continuously differentiable, proximal MCMC methods \citep{pereyra2016proximal, durmus_proximal} rely on discretisations of the Langevin diffusion associated with $\pi_\lambda$, given by replacing the drift term in Eq. \eqref{eq:langevin_diffusion} with $\nabla U^\lambda$,
to approximately sample from $\pi$.
We consider two classes of proximal Langevin algorithms, based on different discretisation schemes: Euler-Maruyama discretisations \cite{pereyra2016proximal,durmus_proximal}, and splitting schemes \cite{klatzer2023accelerated, habring2023subgradient, ehrhardt2023proximal, durmus2019analysis, NEURIPS2019_6a8018b3_salim}.

\subsubsection{Proximal Langevin methods}
When $U(x) = -\log \pi(x)$ can be expressed as $U(x) = g_1(x) + g_2(x)$, with $g_1, g_2$ convex lower bounded functions, $g_1$ differentiable and $g_2$ proper and lower semi-continuous, \citet{durmus_proximal} consider  $\pi_\lambda\propto e^{-U^\lambda}$ with $U^\lambda(x)= g_1(x)+g_2^\lambda(x)$, and the corresponding Langevin diffusion
\small
\begin{equation*}
    \md \mathbf{X}_{\lambda, t} = -(\nabla g_1(\mathbf{X}_{\lambda, t}) + \nabla g_2^\lambda(\mathbf{X}_{\lambda, t}))\md t +\sqrt{2} \md \mathbf B_t, \ t\geq 0.
\end{equation*}
\normalsize 
An Euler-Maruyama discretisation with step size $\gamma>0$ results in the Moreau-Yosida ULA (MYULA) algorithm.
\subsubsection{Proximal gradient MCMC methods}
\label{sec:pgla} 
Inspired by the proximal gradient algorithm (see, e.g., \citet[Section 4.2]{proximal_algorithms_neal} or \citet{combettes2011proximal}), which is a forward-backward splitting optimisation algorithm, \citet{NEURIPS2019_6a8018b3_salim} propose a sampling algorithm for the case $U^\lambda(x) = g_1(x) + g_2^\lambda(x)$ termed proximal gradient Langevin algorithm (PGLA). The method consists of a forward step in the direction of $\nabla g_1$ with an additional stochastic term, followed by a backward step using the proximity map of $g_2$:
\begin{align*}
X_{n+1/2} &= X_n -\gamma\nabla g_1(X_n)+\sqrt{2\gamma}\xi_{n+1}\\
X_{n+1} &= \prox^{\lambda}_{g_2}\left(X_{n+1/2}\right),
\end{align*}
\normalsize 
These algorithms were originally proposed as an alternative to MYULA to deal with cases in which $U$ is the sum of a differentiable likelihood $g_1$ and a compactly supported $g_2$, since the application of the proximity map after the addition of the stochastic term guarantees that $X_{n+1}$ remains in the support of $g_2$ \citep{NEURIPS2020_2779fda0}.
In addition, \citet{ehrhardt2023proximal} give conditions under which using an approximate proximity map does not affect numerical results and generalise existing convergence bounds.

\section{Proximal Interacting Particle Methods for MMLE}\label{sec:proximal_pipla}
Our goal is to extend interacting particle algorithms for the MMLE problem \eqref{eq:MMLE_problem} to cases where the distribution $p_\theta(x, y) = \pi(\theta, x) \propto e^{-U(\theta, x)}$ may be non-differentiable. 
To achieve this, we build on the previously presented methodology, approximating the target distribution $\pi\propto e^{-U} = e^{- g_1 -g_2}$ with a Moreau-Yosida envelope $\pi_{\lambda}$ and deriving a numerical scheme using either an Euler-Maruyama discretisation or a splitting scheme. 
In particular, we introduce three classes of proximal algorithms.

Recall that $U(\theta, x) = - \log p_\theta(x, y)$ where $y$ is fixed. 
The proximal map in the MMLE setting is given below.
\begin{remarkx}
\label{rem:proxy}
In our scenario, the $\argmin$ in the proximal map is taken over both variables $(\theta, x)$, that is, 
\begin{align*}
    &\prox_U^{\lambda}(\theta, x) = (\prox_U^{\lambda}(\theta, x)_{\theta}, \prox_U^{\lambda}(\theta, x)_x)\\
    &=\argmin_{z_0\in\mathbb{R}^{d_{\theta}}, z\in\mathbb{R}^{d_x}}\; \left\lbrace U(z_0, z) + \Vert (z_0, z)-(\theta, x)\Vert^2/(2\lambda)\right\rbrace.
\end{align*}
\normalsize
\end{remarkx}

\subsection{Proximal Interacting Particle Algorithms}
Our algorithms are based on different discretisation schemes of the following continuous-time interacting SDEs:
\small
\begin{align}
        \md \bm{\theta}_{t}^N&=-\frac{1}{N}\sum_{i=1}^N \nabla_{\theta} U^{\lambda}(\bm{\theta}_{t}^N, \mathbf{X}_{t}^{i, N})\md t +\sqrt{\frac{2}{N}} \md \mathbf{B}_t^{0, N},\label{eq:particles_proximal_1}\\
        \label{eq:particles_proximal_2}
    \md \mathbf{X}_{t}^{i,N}&=-\nabla_{x} U^{\lambda}(\bm{\theta}_{t}^N, \mathbf{X}_{t}^{i, N})\md t +\sqrt{2} \md \mathbf{B}_t^{i, N},
\end{align}
\normalsize
where $U^\lambda = g_1 + g_2^{\lambda}$ is the Moreau-Yosida approximation of $U$. 
As in the case of the interacting SDE in Eq.~\eqref{eq:ipla}, we show that~\eqref{eq:particles_proximal_1}--\eqref{eq:particles_proximal_2} converges to an SDE of the McKean--Vlasov type (MKVSDE) as $N\to \infty$. In particular, if the potential $U$ is regular enough, the MKVSDE that~\eqref{eq:particles_proximal_1}--\eqref{eq:particles_proximal_2} approximates becomes arbitrarily close to that approximated by~\eqref{eq:ipla} as $\lambda\to 0$ (see Appendix~\ref{app:gf} for a proof).

\subsubsection{Moreau-Yosida Interacting Particle Langevin Algorithm (MYIPLA)}\label{sec:myula}
If we consider $U^{\lambda} = g_1 + g_2^{\lambda}$ as in \citet{durmus_proximal}, and substitute its gradient in the Euler-Maruyama discretisation of \eqref{eq:particles_proximal_1}--\eqref{eq:particles_proximal_2} we derive  \textit{MYIPLA} (Moreau-Yosida Interacting Particle Langevin algorithm):
\small
\begin{align}
    \theta_{n+1}^N =& \Big(1-\frac{\gamma}{\lambda}\Big)\theta_{n}^N  + \frac{\gamma}{N}\sum_{i=1}^N \Big(-\nabla_{\theta} g_1(\theta_n^N, X_n^{i, N})\nonumber \\
    &+ \frac{1}{\lambda} \prox_{g_2}^{\lambda}(\theta_n^N, X_n^{i, N})_{\theta}\Big)+\sqrt{\frac{2\gamma}{N}}\xi_{n+1}^{0, N},\label{eq:pip-myula_theta}\\
    X_{n+1}^{i, N} =& \Big(1-\frac{\gamma}{\lambda}\Big)X_{n}^{i, N} -\gamma \nabla_{x} g_1(\theta_n^N, X_n^{i, N}) \nonumber\\
    &+ \frac{\gamma}{\lambda}\prox_{g_2}^{\lambda}(\theta_n^N, X_n^{i, N})_x+\sqrt{2\gamma}\;\xi_{n+1}^{i, N},\label{eq:pip-myula_x}
\end{align}
\normalsize
where the notation $\prox_{g_2}^{\lambda}(\theta, X)_{\theta} $, $\prox_{g_2}^{\lambda}(\theta, X)_{x}$ refers to the $\theta$ and $x$ component of the proximal mapping $\prox_{g_2}^{\lambda}$, as mentioned in Remark~\ref{rem:proxy}. 
Setting $\gamma = \lambda$ and $g_1 = 0$ as in \citet{pereyra2016proximal}, we obtain a specific case of the previous algorithm, that we refer to as \textit{PIPULA} (proximal interacting particle ULA).

\subsubsection{Proximal Interacting Particle Gradient Langevin Algorithm (PIPGLA)}
Inspired by the proximal gradient method  \citep{NEURIPS2019_6a8018b3_salim, ehrhardt2023proximal}, we employ a splitting scheme to discretise~\eqref{eq:particles_proximal_1}--\eqref{eq:particles_proximal_2} and obtain \textit{PIPGLA} (Proximal Interacting Particle Gradient Langevin algorithm).
In this case, we perform one ULA step for both the $\theta$ and $x$ component using $\nabla g_1$ followed by a backward step using $\prox_{g_2}^\lambda$:
\small
\begin{align}
    \theta_{n+1/2}^N &= \theta_{n}^N  - \frac{\gamma}{N}\sum_{i=1}^N \nabla_{\theta} g_1(\theta_n^N, X_n^{i, N}) +\sqrt{\frac{2\gamma}{N}}\xi_{n+1}^{0, N}, \label{eq:pip-gla} \\
    X_{n+1/2}^{i, N} &= X_{n}^{i, N} -\gamma \nabla_{x} g_1(\theta_n^N, X_n^{i, N})+\sqrt{2\gamma}\;\xi_{n+1}^{i, N}, \label{eq:pip-gla-x-half}\\
    \theta_{n+1}^N &= \frac{1}{N}\sum_{i=1}^N\prox_{g_2}^{\lambda} \bigg(\theta_{n+1/2}^N, X_{n+1/2}^{i, N}\bigg)_{\theta},\label{eq:pip-gla-theta} \\
    X_{n+1}^{i, N} &= \prox_{g_2}^{\lambda} \big(\theta_{n+1/2}^N, X_{n+1/2}^{i, N}\big)_{x},\label{eq:pip-gla-x}
\end{align}
\normalsize
Setting $\lambda=\gamma$, as is common in proximal gradient algorithms (see, e.g., \citet{NEURIPS2019_6a8018b3_salim}), we obtain an algorithm similar to MYIPLA except for the fact that $\nabla g_2^\lambda$ is evaluated at $(\theta_{n+1/2}, X^{i,N}_{n+1/2})$ instead of $(\theta_n, X^{i,N}_n)$.
 
Similarly to PGLA, this algorithm ensures that $X_{n+1}^{i, N}$ belongs to the support of $g_2$ for all $i$. 
If the parameter space $\Theta$ is convex, then also $\theta_{n+1}^N$ belongs to the support of $g_2$ since $\theta^N_{n+1}$ is a convex combination of elements of $\Theta$.

\subsection{Proximal Particle Gradient Descent Methods (PPGD)}
All the methods above incorporate a noise term in the $\theta$-dimension, making the system more akin to a Langevin-type system, as in \citet{akyildiz2023interacting}.
However, we also explore the case where the noise term is removed from the dynamics of $\theta$, similar to \citet{pmlr-v206-kuntz23a}. 
By removing the noise term from the dynamics of $\theta$ in \eqref{eq:particles_proximal_1}, we obtain the following system of SDEs
\begin{align*}
        \md \bm{\theta}_{t}^N&=-\frac{1}{N}\sum_{i=1}^N \nabla_{\theta} U^{\lambda}(\bm{\theta}_{t}^N, \mathbf{X}_{t}^{i, N})\md t,\\
    \md \mathbf{X}_{t}^{i,N}&=-\nabla_{x} U^{\lambda}(\bm{\theta}_{t}^N, \mathbf{X}_{t}^{i, N})\md t +\sqrt{2} \md \mathbf{B}_t^{i, N}.
\end{align*}
\normalsize
Discretising this SDE system, we obtain similar algorithms to MYIPLA and PIPULA without the term $\sqrt{2\gamma/N}\xi_{n+1}^{0, N}$ in \eqref{eq:pip-myula_theta}, which can be seen as proximal versions of PGD. 
Accordingly, we term these methods as proximal PGD (PPGD) and Moreau-Yosida PGD (MYPGD), respectively. We provide a detailed description of these methods and their theoretical analysis in Appendix~\ref{app:proximal_particle_gradient_descent_methods}.

\section{Nonasymptotic analysis}\label{sec:theoretical_analysis}
This section presents a theoretical analysis of the parameter estimates obtained by the PIPLA family. 
It is important to note that similar assumptions across algorithms enable a fair comparison of convergence rates.

\subsection{Assumptions}
Let $g_1,g_2:\mathbb{R}^{d_\theta}\times\mathbb{R}^{d_x}\to \mathbb{R}$ and $U = g_1 + g_2$. 
\begin{assumption}\label{assumption_1}
    $g_1$ is continuously differentiable, convex, $L_{g_1}$-smooth and lower bounded, and $g_2$ is proper, convex, lower semi-continuous and lower bounded. 
\end{assumption}
\textbf{A\ref{assumption_1}} guarantees that $\prox_{g_2}^\lambda$ is well defined and that $\nabla U^{\lambda}$ is Lipschitz in both variables with constant $L\leq L_{g_1} + \lambda^{-1}$ \cite[Proposition 1]{durmus_proximal}. 
\begin{assumption}\label{assumption_2}
   The initial condition $Z_0^N = (\theta_0,  N^{-1/2}X_0^{1,N},\dots,$ $ N^{-1/2}X_0^{N,N})$ satisfies $\mathbb{E}[\Vert Z_0^N\Vert^2]\leq H$ for $H>0$. 
\end{assumption}
\begin{assumption}\label{assumption_1prime}
    $g_2$ is Lipschitz with constant $\Vert g_2\Vert_{\textnormal{Lip}}$. 
\end{assumption}
\begin{assumption}\label{assumption_3}
    $g_1$ is $\mu$-strongly convex. 
\end{assumption}
\begin{remarkx}
Let $v=(\theta, x)$ and $v'=(\theta', x')$. We have that $\nabla g_2^{\lambda}(v) = (v-\prox_{g_2}
^{\lambda}(v))/\lambda$ and $\prox_{g_2}
^{\lambda}$ is firmly non expansive \cite[Eq. (7)]{durmus_proximal} which implies Lipschitzness. 
By the Cauchy-Schwarz inequality, 
\small
\begin{align*}
     &\langle v - v', \nabla g_2^{\lambda}(v)-\nabla g_2^{\lambda}(v')\rangle  \\
&\geq \frac{1}{\lambda}(\Vert v-v'\Vert^2 - \Vert v-v'\Vert \Vert \prox_{g_2}
^{\lambda}(v)-\prox_{g_2}
^{\lambda}(v')\Vert)\geq 0.
\end{align*}
\normalsize
Therefore, under \textbf{A\ref{assumption_3}}, $\nabla U^{\lambda}$ is also $\mu$-strongly convex.
\end{remarkx}
\begin{remarkx}\label{remark_differentiability} Let $\Omega\subset\mathbb{R}^{d_{\theta}}\times \mathbb{R}^{d_x}$ denote the (nonempty) set where $g_2$ is twice differentiable. Theorem 25.5 in \citep{Rockafellar_1970}, guarantees that if $g_2$ is a proper, convex function, then $g_2$ is differentiable except in a set of measure zero, i.e., $\Omega^{\mathsf{c}}$ has measure zero. Also, by Alexandrov's Theorem \citep{second_order_convex}, $g_2$ is twice differentiable almost everywhere---in particular, these points form a subset of $\textnormal{\textsf{dom}}(\nabla g_2)$. In addition, the Hessian $\nabla^2 g_2$ (or alternatively its distributional counterpart, $D^2 g_2$, if $\nabla^2 g_2$ does not exist) is symmetric and positive definite \citep[Proposition 7.11]{RefWorks:RefID:77-alberti1999geometrical}. 
\end{remarkx}

Let $\Bar{\theta}_{\star}$ be the maximiser of $p_\theta(y)$. Let $\mathsf{m}_{\Bar{\theta}_\star}$ be the restriction of the Lebesgue measure $\mathsf{m}$ on $\bR^{d_\theta} \times \bR^{d_x}$ to the set $\{\bar{\theta}_\star\} \times \bR^{d_x}$, which is well defined (see, e.g., \cite[Section 10.6]{Bogachev2007}).
\begin{assumption}\label{assumption_4}
We assume that $\mathsf{m}_{\Bar{\theta}_\star}(\Omega^{\mathsf{c}} \cap (\{\bar{\theta}_\star\} \times \bR^{d_x})) = 0$. Moreover,
\begin{align*}
    &\mathbb{E}_X[\Vert\nabla_{\theta} U(\Bar{\theta}_{\star}, X)\Vert] \leq A,\\
    &\mathbb{E}_X [\Vert\nabla^2_{(\theta, x)} g_2(\Bar{\theta}_\star, X)\nabla_{(\theta, x)} g_2(\Bar{\theta}_\star, X)\Vert]\leq B
\end{align*}
\normalsize
where $X\sim \rho_{\Bar{\theta}_\star}(x)$ with $\rho_{\Bar{\theta}_\star}(x) \propto e^{-U^{\lambda}(\Bar{\theta}_{\star},x)}$. 
\end{assumption} 

\subsection{The proof strategy}
Our objective is to find the MMLE $\bar{\theta}_\star = \argmax_{\theta} p_\theta(y)$,
where $p_\theta(y) = \int e^{-U(\theta, x)} \md x$. Therefore, we provide an upper bound on the distance between the iterates of our algorithms and $\bar{\theta}_\star$, that is, $\mathbb{E}[\Vert\Bar{\theta}_{\star} -\theta_n\Vert^2]^{1/2}$.

Let $(\bm{\theta}^N_t)_{t\geq 0}$ be the $\theta$-marginal of the solution to the SDE \eqref{eq:particles_proximal_1}--\eqref{eq:particles_proximal_2} and $(\theta_n^N)_{n\in\mathbb{N}}$ be the $\theta$ iterates of any algorithm which is a discretisation of \eqref{eq:particles_proximal_1}--\eqref{eq:particles_proximal_2}. Denote the $\theta$-marginal of the target measure of \eqref{eq:particles_proximal_1}--\eqref{eq:particles_proximal_2} by $\pi_{\lambda, \Theta}^N$,
\begin{align}\label{eq:marginal_distribution_system_sdes}
    \pi^N_{\lambda,\Theta} (\theta) &\propto 
    \biggr(\int_{\real^{d_x}} e^{-U^\lambda(\theta, x)} \md x \biggr)^N.
\end{align}
\normalsize
Note that $\mathbb{E}[\Vert  \Bar{\theta}_{\star}-\theta_n^N\Vert^2]^{1/2} = W_2(\delta_{\Bar{\theta}_\star}, \mathcal{L}(\theta_n^N))$. Applying the triangle inequality of the $W_2$ metric, it follows:
\begin{align}\label{eq:splitting}
    W_2(\delta_{\Bar{\theta}_\star}, \mathcal{L}(\theta_n^N))
    \leq &\underbrace{W_2(\delta_{\Bar{\theta}_{\star}}, \pi_{\lambda, \Theta}^N)}_{\textnormal{concentration}} + \underbrace{W_2(\pi_{\lambda, \Theta}^N, \mathcal{L}(\boldsymbol{\theta}_{n\gamma}^N))}_{\textnormal{convergence}}\nonumber\\
    &+\underbrace{W_2(\mathcal{L}(\theta_{n}^N), \mathcal{L}(\boldsymbol{\theta}_{n\gamma}^N))}_{\textnormal{discretisation}}.
\end{align}
\normalsize
The \textit{concentration} term characterises the concentration of the $\theta$-marginal of the target measure of the SDE \eqref{eq:particles_proximal_1}--\eqref{eq:particles_proximal_2}, given by \eqref{eq:marginal_distribution_system_sdes}, around the maximiser of $p_\theta(y)$, denoted by $\bar{\theta}_\star$. Handling this term is not trivial since the maximisers of $p_\theta(y)$ and of $p_\theta^\lambda(y):= \int_{\real^{d_x}}p_\theta^\lambda(x, y)\md x$ are not necessarily the same as we clarify in the next section. The \textit{convergence} term captures the convergence of the solution of the SDE to its target measure \eqref{eq:marginal_distribution_system_sdes}.  Finally, the \textit{discretisation} term characterises the error introduced by discretising the SDE. We provide nonasymptotic results for the convergence of the PIPLA family, proofs of the results are provided in Appendices \ref{ap:theory_IPLA} and \ref{app:proximal_particle_gradient_descent_methods}.

\subsection{Nonasymptotic analysis of MYIPLA}
\begin{theorem}[MYIPLA]\label{th:myipla_main_text}
Let \textbf{A\ref{assumption_1}}--\textbf{A\ref{assumption_4}} hold. Let $\theta_n^N$ denote the iterate \eqref{eq:pip-myula_theta} and $\Bar{\theta}_{\star}$ be the maximiser of $p_\theta(y)$. Fix $\gamma_0\in(0, \min\{(L_{g_1}+\lambda^{-1})^{-1}, 2\mu^{-1}\})$. Then for every $\lambda > 0$ and $\gamma\in(0,\gamma_0]$, one has
\begin{align*}
\mathbb{E}&[\Vert\theta_n^N-\Bar{\theta}_{\star}\Vert^2]^{1/2}\leq \frac{\lambda}{\mu} \Big(\frac{\Vert g_2\Vert_{\textnormal{Lip}}^2}{2} A  +  B\Big) + \sqrt{\frac{d_{\theta}}{N\mu}} \\
&+ e^{-\mu n\gamma}\bigg(\mathbb{E}[\Vert Z_0^N-z_{\star}\Vert^2]^{1/2}+\Big(\frac{d_xN+d_{\theta}}{N\mu}\Big)^{1/2}\bigg) \\&+ C_1(1+\sqrt{d_{\theta}/N+d_x})\gamma^{1/2}+\mathcal{O}(\lambda^2),
\end{align*}
\normalsize
for all $n\in\mathbb{N}$, where $z_\star = (\theta_\star, N^{-1/2}x_\star,\dots,N^{-1/2}x_\star)$ and $(\theta_\star, x_\star)$ is the minimiser of $U^{\lambda}$ and $C_1>0$ is a constant independent of $t,n,N,\gamma,\lambda,d_{\theta}, d_x$.
\end{theorem}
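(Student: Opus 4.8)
The plan is to bound the three terms of the decomposition \eqref{eq:splitting} separately, working throughout in the extended, rescaled state space in which the interacting system \eqref{eq:particles_proximal_1}--\eqref{eq:particles_proximal_2} becomes a single Langevin diffusion. Writing $\bar Z_t^N=(\boldsymbol{\theta}_t^N, N^{-1/2}\mathbf{X}_t^{1,N},\dots,N^{-1/2}\mathbf{X}_t^{N,N})$, the dynamics read $\md \bar Z_t^N=-\nabla V(\bar Z_t^N)\,\md t+\sqrt{2/N}\,\md \mathbf{B}_t$ for a potential $V$ whose $\mu$-strong convexity is inherited from that of $\nabla U^{\lambda}$ established in the remark following \textbf{A\ref{assumption_3}}; its invariant law is $\pi_\lambda^N\propto(\int e^{-U^\lambda(\theta,x)}\md x)^N$, whose $\theta$-marginal is exactly $\pi^N_{\lambda,\Theta}$ of \eqref{eq:marginal_distribution_system_sdes} and which is $N\mu$-strongly log-concave. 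This last fact is what produces the factors $N\mu$ in the denominators of the bound.

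For the \emph{convergence} term I would use a synchronous coupling of two copies of the diffusion started from $\mathcal{L}(\bar Z_0^N)$ and from $\pi_\lambda^N$: strong convexity gives $\tfrac{\md}{\md t}\Vert\bar Z_t-\bar Z_t'\Vert^2\le -2\mu\Vert\bar Z_t-\bar Z_t'\Vert^2$, the Brownian terms cancelling, hence $W_2(\mathcal{L}(\bar Z_t^N),\pi_\lambda^N)\le e^{-\mu t}W_2(\mathcal{L}(\bar Z_0^N),\pi_\lambda^N)$. Since projection onto $\theta$ is $1$-Lipschitz the same contraction passes to the marginals, and bounding the initial distance by $\mathbb{E}[\Vert Z_0^N-z_\star\Vert^2]^{1/2}+W_2(\delta_{z_\star},\pi_\lambda^N)$ together with the strong-log-concavity variance estimate $W_2(\delta_{z_\star},\pi_\lambda^N)\le((d_xN+d_\theta)/(N\mu))^{1/2}$ yields the $e^{-\mu n\gamma}(\cdots)$ term with $t=n\gamma$. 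The \emph{discretisation} term is the standard one-step-plus-Grönwall Euler--Maruyama estimate for a diffusion with drift that is $L\le L_{g_1}+\lambda^{-1}$-Lipschitz (by \textbf{A\ref{assumption_1}}) and $\mu$-strongly convex; the per-particle effective dimension $d_\theta/N+d_x$ enters through the variance of a single step, giving $C_1(1+\sqrt{d_\theta/N+d_x})\gamma^{1/2}$, uniformly in $n$ thanks to the contraction.

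The \emph{concentration} term $W_2(\delta_{\bar\theta_\star},\pi^N_{\lambda,\Theta})=\mathbb{E}_{\pi^N_{\lambda,\Theta}}[\Vert\theta-\bar\theta_\star\Vert^2]^{1/2}$ is where the real difficulty lies, and I would split it once more through the maximiser $\bar\theta_\star^\lambda$ of $p_\theta^\lambda(y):=\int e^{-U^\lambda(\theta,x)}\md x$: a concentration piece $W_2(\delta_{\bar\theta_\star^\lambda},\pi^N_{\lambda,\Theta})\le\sqrt{d_\theta/(N\mu)}$ from $N\mu$-strong log-concavity of $\pi^N_{\lambda,\Theta}$, and a \emph{bias} piece $\Vert\bar\theta_\star^\lambda-\bar\theta_\star\Vert$. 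Since $g_1$ is $\mu$-strongly convex and $g_2$ convex, $U$ is jointly $\mu$-strongly convex, so by Prékopa's theorem both $-\log p_\theta(y)$ and $-\log p_\theta^\lambda(y)$ are $\mu$-strongly convex in $\theta$; using $\nabla_\theta\log p_\theta(y)|_{\bar\theta_\star}=0$ this converts the bias into $\Vert\bar\theta_\star^\lambda-\bar\theta_\star\Vert\le\mu^{-1}\Vert\nabla_\theta\log p^\lambda_\theta(y)-\nabla_\theta\log p_\theta(y)\Vert_{\bar\theta_\star}$. I would then express both gradients through the Fisher-type identity $\nabla_\theta\log p_\theta(y)=-\mathbb{E}_{\rho_\theta}[\nabla_\theta U(\theta,\cdot)]$ and its $\lambda$-analogue with $\rho_\theta^\lambda\propto e^{-U^\lambda(\theta,\cdot)}$, so that the bias is governed by two effects: (i) the change of measure between $\rho_{\bar\theta_\star}^\lambda$ and $\rho_{\bar\theta_\star}$, controlled via the uniform envelope bound $0\le g_2-g_2^\lambda\le\tfrac{\lambda}{2}\Vert g_2\Vert_{\mathrm{Lip}}^2$ which, combined with $\mathbb{E}_X[\Vert\nabla_\theta U(\bar\theta_\star,X)\Vert]\le A$, contributes $\tfrac{\lambda}{\mu}\cdot\tfrac{\Vert g_2\Vert_{\mathrm{Lip}}^2}{2}A$; and (ii) the gradient discrepancy $\nabla g_2^\lambda-\nabla g_2$, which from the proximal optimality relation $v-\prox_{g_2}^\lambda(v)=\lambda\nabla g_2(\prox_{g_2}^\lambda(v))$ expands as $\nabla g_2^\lambda(v)=\nabla g_2(v)-\lambda\nabla^2 g_2(v)\nabla g_2(v)+\mathcal{O}(\lambda^2)$, so that $\mathbb{E}_X[\Vert\nabla^2 g_2\,\nabla g_2\Vert]\le B$ contributes $\tfrac{\lambda}{\mu}B+\mathcal{O}(\lambda^2)$. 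The almost-everywhere twice differentiability and the measure-zero hypothesis on $\Omega^{\mathsf{c}}$ in \textbf{A\ref{assumption_4}}, justified by Remark~\ref{remark_differentiability}, are precisely what make this second-order expansion legitimate under the integral over the fibre $\{\bar\theta_\star\}\times\real^{d_x}$. Collecting the concentration, convergence, and discretisation bounds in \eqref{eq:splitting} gives the stated estimate. The main obstacle is the bias analysis in (ii): because the maximisers of $p_\theta(y)$ and $p_\theta^\lambda(y)$ genuinely differ, a crude $\mathcal{O}(\lambda)$ envelope bound is insufficient and the careful Moreau--Yosida Taylor expansion, together with the change-of-measure control, is needed to obtain the sharp $\tfrac{\lambda}{\mu}$-order bias with the explicit constants $A$ and $B$.
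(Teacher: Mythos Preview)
Your proposal is correct and follows essentially the same route as the paper: the same three-term splitting \eqref{eq:splitting}, synchronous-coupling contraction for the convergence term, standard strong-convexity Euler--Maruyama analysis for the discretisation term, and the further split of the concentration term through $\bar\theta_\star^\lambda$ with Pr\'ekopa--Leindler giving the $\sqrt{d_\theta/(N\mu)}$ piece and the Moreau--Yosida Taylor expansion of $\prox_{g_2}^\lambda$ (plus the envelope bound $0\le g_2-g_2^\lambda\le\tfrac{\lambda}{2}\Vert g_2\Vert_{\mathrm{Lip}}^2$ for the change of measure) giving the $\tfrac{\lambda}{\mu}(\tfrac12\Vert g_2\Vert_{\mathrm{Lip}}^2 A+B)$ bias. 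One cosmetic slip: what you call ``its invariant law $\pi_\lambda^N\propto(\int e^{-U^\lambda}\md x)^N$'' is already the $\theta$-marginal; the full invariant law on the extended space is $\propto\exp(-\sum_i U^\lambda(\theta,x_i))$, but this does not affect any of your arguments.
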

See Appendix \ref{ap:MYIPLA} for the full proof.
Below, we provide a sketch of the proof, following the error decomposition in \eqref{eq:splitting}. 
The \textit{concentration} term can be split into two parts: (1) the distance between the MMLE of the original distribution, $\bar{\theta}_{\star}$, and the MMLE of the MY approximation, $\bar{\theta}_{\star, \lambda}$, and (2) the concentration of the MMLE of the MY approximation around the target regularised marginal measure $\pi_{\lambda, \Theta}^N$ provided in \eqref{eq:marginal_distribution_system_sdes}. This results in
\begin{equation*}
    W_2(\delta_{\Bar{\theta}_\star}, \pi_{\lambda, \Theta}^N)\leq \| \Bar{\theta}_\star - \Bar{\theta}_{\star, \lambda}\| + W_2(\delta_{\Bar{\theta}_{\star, \lambda}}, \pi_{\lambda, \Theta}^N),
\end{equation*}
\normalsize
where $\Bar{\theta}_{\lambda,\star}$ denotes the maximiser of $p_\theta^\lambda(y)$.
We derive in Proposition \ref{prop:theta_convergence_lambda} a novel bound for the distance between the maximisers $\Bar{\theta}_\star, \Bar{\theta}_{\star, \lambda}$, given by
\begin{equation*}
    \Vert\Bar\theta_{\star}-\Bar{\theta}_{\star, \lambda}\Vert \leq \frac{\lambda}{\mu} \left(\frac{\Vert g_2\Vert_{\textnormal{Lip}}^2}{2} A  +  B\right) + \mathcal{O}(\lambda^2),
\end{equation*}
\normalsize
with $A, B$ given in \textbf{A\ref{assumption_4}}. 
We observe that stronger regularisation (i.e., larger values of $\lambda$) increases the distance between the two MMLEs—intuitively, in the limit $\lambda \to \infty$, all points collapse to the minimiser of the potential $U$, therefore this leads to a larger difference between $\bar{\theta}_{\star}$ and $\bar{\theta}_{\star, \lambda}$.

Next, consider the stationary measure of the SDE \eqref{eq:particles_proximal_1}--\eqref{eq:particles_proximal_2}, denoted by $\pi_{\lambda, \star}^N(\theta, x_1, \ldots, x_N)$, and its $\theta$-marginal given by $\pi_{\lambda, \Theta}^N$. Using a form of the Prékopa-Leindler inequality for strong convexity \citep[Theorem 3.8]{saumard2014log}, $\pi_{\lambda, \Theta}^N$ is $N\mu$-strongly log-concave and by \citep[Lemma A.8]{sinho_lemma_A8}
\begin{align}\label{eq:concentration_term_aux}
W_2(\delta_{\Bar{\theta}_{\star, \lambda}}, \pi_{\lambda, \Theta}^N)\leq \sqrt{\frac{d_{\theta}}{N\mu}},
\end{align}
\normalsize
which concludes the bound for the \textit{concentration} term.
Besides, the \textit{convergence} term is characterised by the exponential decay of the Wasserstein-2 distance between $\pi_{\lambda, \Theta}^N$ and the $\theta$-marginal of the solution of the SDE $\mathcal{L}(\boldsymbol{\theta}_{n\gamma}^N)$, that is,
\smaller
\begin{align*}
    W_2(\pi_{\lambda, \Theta}^N, \mathcal{L}(\boldsymbol{\theta}_{n\gamma}^N))\leq e^{-\mu n\gamma}\bigg(\mathbb{E}[\Vert Z_0^N-z_{\star}\Vert^2]^{1/2}+\sqrt{\frac{d_xN+d_{\theta}}{N\mu}}\bigg).
\end{align*}
\normalsize
Finally, the \textit{discretisation} term is bounded by
\begin{align*}
    W_2(\mathcal{L}(\theta_{n}^N), \mathcal{L}(\boldsymbol{\theta}_{n\gamma}^N))\leq C_1(1+\sqrt{d_{\theta}/N+d_x})\gamma^{1/2}.
\end{align*}
\normalsize
This bound is derived using a strategy similar to that used in classical Langevin methods.
\subsection{Nonasymptotic analysis of PIPGLA}
We introduce an extra assumption regarding the least norm element $\nabla^0g_2$ in the subdifferential of $g_2$, defined in \ref{app:pip_gla analysis}.
\begin{assumptiongla}\label{assumption_1_gla}
    We assume that $\Vert\nabla^0 g_2(\theta, x)\Vert^2\leq C$ for all $\theta\in\mathbb{R}^{d_\theta}$ and $x\in\mathbb{R}^{d_x}$.
\end{assumptiongla} 
For the convergence analysis of PIPGLA, we present a novel proof that differs from the error decomposition used for MYIPLA.
\begin{theorem}[PIPGLA]\label{th:pipgla_main_text}
    Let \textbf{A\ref{assumption_1}}, \textbf{A\ref{assumption_2}}, \textbf{A\ref{assumption_3}} and \textbf{C\ref{assumption_1_gla}}
 hold. 
 Let $\theta_n^N$ denote the iterate \eqref{eq:pip-gla-theta} and $\Bar{\theta}_{\star}$ be the maximiser of $p_\theta(y)$.
 Then for $\gamma\leq1/L_{g_1}$ and $\gamma\leq \lambda\leq\gamma/(1-\mu\gamma)$,  
 the following holds
\begin{align}
\label{eq:pipgla_bound}
    \bE&\left[\|\theta_n^N - \bar{\theta}_\star\|^2\right]^{1/2} \leq \sqrt{\frac{\lambda^{n}(1-\gamma\mu)^{n}}{\gamma^{n}}} W_2(\mathcal{L}({Z_0^N}), \pi^N)\nonumber\\
    &+\sqrt{\frac{d_{\theta}}{N\mu}}+\sqrt{\frac{\lambda(2\gamma L_{g_1}(d_\theta +N d_x)+\lambda N C)}{N\left(1-\lambda(1-\mu\gamma)/\gamma\right)}}\nonumber,
\end{align}
\normalsize
for all $n\in\mathbb{N}$, with $Z_0^N$ given in \textbf{A\ref{assumption_2}} and $C>0$ given in \textbf{C\ref{assumption_1_gla}} and independent of $t,n,N,\gamma,d_{\theta}, d_x$.
\end{theorem}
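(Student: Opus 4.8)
The plan is to abandon the three-term split used for MYIPLA and instead establish a single geometric recursion in lifted coordinates, aimed \emph{directly} at the Gibbs target of the true (non-smoothed) joint potential. First I would pass to the extended variable $Z_n^N=(\theta_n^N, N^{-1/2}X_n^{1,N},\dots,N^{-1/2}X_n^{N,N})$, in which every noise increment has common scale $\sqrt{2\gamma/N}$ and $\|Z\|^2=\|\theta\|^2+N^{-1}\sum_i\|X^i\|^2$. Writing $\mathcal U^N(Z)=\sum_{i=1}^N U(\theta,\sqrt N\,\tilde x^i)$ with $U=g_1+g_2$, the forward half-steps \eqref{eq:pip-gla}--\eqref{eq:pip-gla-x-half} become a single unadjusted Langevin step $Z_{n+1/2}=Z_n-\tfrac{\gamma}{N}\nabla\mathcal G_1^N(Z_n)+\sqrt{2\gamma/N}\,\xi_{n+1}$ on $\mathcal G_1^N(Z)=\sum_i g_1(\theta,\sqrt N\,\tilde x^i)$, while \eqref{eq:pip-gla-theta}--\eqref{eq:pip-gla-x} form a per-particle resolvent of $g_2$ followed by averaging of the $\theta$-blocks. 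A direct Hessian computation shows that \textbf{A\ref{assumption_3}} and $L_{g_1}$-smoothness lift to $N\mu$-strong convexity and $NL_{g_1}$-smoothness of $\mathcal G_1^N$, so that for $\gamma\le 1/L_{g_1}$ the deterministic forward map $z\mapsto z-\tfrac{\gamma}{N}\nabla\mathcal G_1^N(z)$ is $(1-\gamma\mu)$-Lipschitz. The target is $\pi^N\propto e^{-\mathcal U^N}$, whose $\theta$-marginal is $\pi^N_{\Theta}(\theta)\propto p_\theta(y)^N$ with mode $\bar\theta_\star$ itself; this is why no Moreau--Yosida bias appears in the concentration term, in contrast to MYIPLA.

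The core is a one-step inequality $V_{n+1}\le r\,V_n+b$ for $V_n=W_2(\mathcal L(Z_n^N),\pi^N)^2$, with $r=\lambda(1-\gamma\mu)/\gamma$ and $b=\tfrac{\lambda}{N}\bigl(2\gamma L_{g_1}(d_\theta+Nd_x)+\lambda NC\bigr)$. I would prove it by coupling $Z_n^N$ optimally with a draw $Z^\star\sim\pi^N$, using a synchronous Brownian increment so the injected noise cancels in the difference, then expanding $\|Z_{n+1}^N-Z^\star\|^2$. Two structural facts drive the estimate: the $\theta$-averaging in \eqref{eq:pip-gla-theta} is absorbed by Jensen's inequality, reducing the backward map to a particle-wise resolvent that is firmly non-expansive in the lifted metric; and the implicit step, which satisfies $(\,Z_{n+1/2}-Z_{n+1})/\lambda\in\partial g_2(Z_{n+1})$ per particle, is handled through the subgradient inequality for the convex $g_2$, with the residual least-norm subgradient bounded by \textbf{C\ref{assumption_1_gla}}. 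The $\lambda/\gamma$ inflation in $r$ is precisely the cost of targeting the un-smoothed $g_2$ with a resolvent of scale $\lambda$ that does not match the forward scale $\gamma$: balancing the forward contraction $(1-\gamma\mu)$ against the $\lambda$-scaled backward displacement via a weighted Young's inequality produces the factor $\lambda/\gamma$, and the hypothesis $\gamma\le\lambda\le\gamma/(1-\mu\gamma)$ is exactly what keeps $r<1$.

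With the recursion in hand the rest is routine. Unrolling gives $V_n\le r^nV_0+b/(1-r)$, i.e. $W_2(\mathcal L(Z_n^N),\pi^N)\le \sqrt{r^n}\,W_2(\mathcal L(Z_0^N),\pi^N)+\sqrt{b/(1-r)}$, which already produces the first and third terms after substituting $r$ and $b$. For the parameter component I use that projection onto the unscaled $\theta$-block is $1$-Lipschitz, so $W_2(\mathcal L(\theta_n^N),\pi^N_{\Theta})\le W_2(\mathcal L(Z_n^N),\pi^N)$, and then $\mathbb E[\|\theta_n^N-\bar\theta_\star\|^2]^{1/2}=W_2(\delta_{\bar\theta_\star},\mathcal L(\theta_n^N))\le W_2(\mathcal L(\theta_n^N),\pi^N_\Theta)+W_2(\delta_{\bar\theta_\star},\pi^N_\Theta)$. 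The last term is the concentration: since $U$ is $\mu$-strongly convex, the Prékopa--Leindler inequality \citep[Theorem 3.8]{saumard2014log} shows $\pi^N_\Theta$ is $N\mu$-strongly log-concave with mode $\bar\theta_\star$, whence $W_2(\delta_{\bar\theta_\star},\pi^N_\Theta)\le\sqrt{d_\theta/(N\mu)}$ by \citep[Lemma A.8]{sinho_lemma_A8}. Collecting the three contributions reproduces \eqref{eq:pipgla_bound}.

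The step I expect to be the main obstacle is the one-step inequality, and in particular pinning down $r=\lambda(1-\gamma\mu)/\gamma$ and $b$ simultaneously. Because PIPGLA targets the exact non-differentiable potential, I cannot lean on Lipschitzness of a smoothed gradient as in MYIPLA; instead I must quantify the mismatch between the forward scale $\gamma$ and the resolvent scale $\lambda$ while simultaneously tracking the averaged $\theta$-update across all $N$ particles and the discretisation noise in dimension $d_\theta+Nd_x$. Engineering the weighted Young's inequality so that it yields a single clean factor $\lambda/\gamma$, together with a bias that scales as $\lambda NC$ through \textbf{C\ref{assumption_1_gla}} and as $\gamma L_{g_1}(d_\theta+Nd_x)$ through the forward step, is where the novelty and the bulk of the work reside.
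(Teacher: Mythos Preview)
Your overall scaffolding---lift to $Z_n^N$, obtain a one-step $W_2^2$ recursion toward $\pi^N\propto e^{-\mathcal U^N}$, unroll, project to the $\theta$-marginal, and add the $\sqrt{d_\theta/(N\mu)}$ concentration via Pr\'ekopa--Leindler---is exactly what the paper does, as is the observation that PIPGLA is compared directly to the un-smoothed target so no Moreau--Yosida bias enters. The gap is in the mechanism you propose for the one-step inequality.

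The synchronous-coupling idea does not close. If $Z^\star\sim\pi^N$ is a static draw, there is no noise on its side to synchronise with, so the Gaussian increment does not ``cancel in the difference''; it adds $2\gamma(d_\theta+Nd_x)/N$ to $\mathbb E\|Z_{n+1/2}-Z^\star\|^2$. If instead you let $Z^\star$ evolve under PIPGLA with the same noise, it is no longer $\pi^N$-distributed after one step (PIPGLA does not leave $\pi^N$ invariant), so you lose control of $W_2(\mathcal L(Z_{n+1}),\pi^N)$. Either way you are left with potential-value terms $2\gamma[\mathcal E_{g_1}(\pi^N)-\mathcal E_{g_1}(\mathcal L(Z_n))]$ and $-2\lambda[\mathcal E_{g_2}(\mathcal L(Y_n))-\mathcal E_{g_2}(\pi^N)]$ of indefinite sign, and firm non-expansiveness plus the subgradient bound from \textbf{C\ref{assumption_1_gla}} give you nothing with which to absorb them.

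The paper resolves this not by coupling but through a free-energy decomposition in the style of \citet{NEURIPS2019_6a8018b3_salim}: it writes $\kl(\mathcal L(Y_n)\,|\,\pi)=[\mathcal E_{g_1}+\mathcal E_{g_2}+\mathcal H](\mathcal L(Y_n))-[\mathcal E_{g_1}+\mathcal E_{g_2}+\mathcal H](\pi)$ and bounds each piece separately. The forward step gives a $(1-\gamma\mu)$-contracted Wasserstein inequality for $\mathcal E_{g_1}$; the backward step gives a $\lambda$-scaled one for $\mathcal E_{g_2}$ (with bias $\lambda^2 C$ from \textbf{C\ref{assumption_1_gla}}); and---this is the device you are missing---the Gaussian step is handled by the heat-flow entropy estimate $2\gamma[\mathcal H(\mathcal L(Y_n))-\mathcal H(\pi)]\le W_2^2(\mathcal L(Z_n),\pi)-W_2^2(\mathcal L(Y_n),\pi)$ from \citet[Lemma~5]{durmus2019analysis}. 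After rescaling the $g_2$ inequality by $\gamma/\lambda$ and summing, the left-hand side is $2\gamma\,\kl(\mathcal L(Y_n)\,|\,\pi)\ge 0$, which is dropped; the $W_2^2(Z_n,\pi)$ and $W_2^2(Y_n,\pi)$ terms telescope away, and what remains is exactly $W_2^2(X_{n+1},\pi)\le \tfrac{\lambda(1-\gamma\mu)}{\gamma}W_2^2(X_n,\pi)+\lambda(2\gamma L_{g_1}d+\lambda C)$. The factor $\lambda/\gamma$ thus comes from the scale mismatch when the $\lambda$-level $g_2$ bound is combined with the $\gamma$-level $g_1$ and entropy bounds, not from a Young's inequality; and the $L_{g_1}$ in the bias is precisely the cost of transferring $\mathcal E_{g_1}$ from $Z_n$ to $Y_n$ across the noise via smoothness.
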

See Appendix \ref{app:pip_gla analysis} for the proof.
Similarly to the previous result, we can split the errors as follows
\begin{equation*}
    W_2(\mathcal{L}({\theta}^N_n), \delta_{\Bar{\theta}_\star})\leq W_2(\delta_{\Bar{\theta}_{\star}}, \pi_{\Theta}^N) + W_2(\pi_{ \Theta}^N, \mathcal{L}({\theta_n^N})).
\end{equation*}
\normalsize
The  \emph{concentration} term $W_2(\delta_{\Bar{\theta}_\star}, \pi_{\Theta}^N)$ can be bounded as in \eqref{eq:concentration_term_aux}, thanks to the strong convexity assumption \textbf{A\ref{assumption_1}} and the fact that the set of non-differentiable points of the potential $U(\theta, x)$ has measure zero.
For the other error term, we derive novel nonasymptotic bounds for general $\lambda$, unlike \citet{NEURIPS2019_6a8018b3_salim} (see Corollary~\ref{cor:pipgla_final} in App.~\ref{app:pip_gla analysis}).
Intuitively, $W_2(\pi_{\Theta}^N, \mathcal{L}({\theta_n^N}))$ controls the convergence of the solution of the SDE to $\pi_{\Theta}^N$ and the error due to time discretisation. The bound for the latter term is derived without introducing the law of the solution of the SDE, $\mathcal{L}(\boldsymbol{\theta}_{n\gamma}^N)$, in contrast to the approach taken for MYIPLA.

\subsection{Algorithm comparison}\label{subsec:algorithm_comparison}
Theorems~\ref{th:myipla_main_text}, ~\ref{th:pipgla_main_text} and ~\ref{th:mypgd} (Appendix) allow us to derive complexity estimates for $\lambda$, the number of particles $N$, $\gamma$, and the number of steps $n$ to achieve an error $\mathbb{E}\left[\Vert \theta_n^N-\bar{\theta}^\star \Vert^2\right]^{1/2}=\mathcal{O}(\varepsilon)$, see blue columns of Table \ref{tab:comparison_table}. These bounds are expressed in terms of the key parameters $d_\theta, d_x$.
Details for deriving these bounds are provided in Appendix~\ref{app:algorithm_comparison}.
It is important to mention that although all algorithms yield the same complexity estimates in terms of $\gamma$, MYPGD requires more stringent assumptions on $\gamma$.
Specifically, while MYIPLA (Theorem~\ref{th:myipla_main_text}) requires $\gamma_0 < \min\{(L_{g_1}+\lambda^{-1})^{-1}, 2\mu^{-1}\}$, MYPDG (Theorem~\ref{th:mypgd}) requires $\gamma_0 < (L_{g_1}+\lambda^{-1}+\mu)^{-1}$, which is strictly smaller. 
In contrast, PIPGLA allows for a more flexible choice of $\gamma \leq 1/L_{g_1}$, but requires stronger assumptions on $\lambda$.

We also compare the algorithms in terms of their computational requirements. In particular, we evaluate the computational cost of running each algorithm for $n$ iterations with $N$ particles and a time discretisation step $\gamma$, while guaranteeing an $\mathcal{O}(\varepsilon)$ error.
The comparison is based on the number of component-wise evaluations of $\nabla g_1$ and $\prox_{g_2}^\lambda$, and independent standard $1$-$d$ Gaussians samples. These costs are summarised in the green columns of Table \ref{tab:comparison_table}.

\begin{table*}[t]
\smaller
\caption{Comparison between algorithms (Section \ref{subsec:algorithm_comparison}). $\delta>0$ is any small positive constant.}\label{tab:comparison_table}
\begin{tabular}{|c ||c |c |c|c|c|c|}
\hline
\cellcolor{white!10} & \cellcolor{blue!10}  $\lambda$ & \cellcolor{blue!10}  $N$ & \cellcolor{blue!10}  $\gamma$ & \cellcolor{blue!10} $n$ &
\cellcolor{green!10} Evaluations of $\nabla g_1$  and $\prox_{g_2}^\lambda$ &
\cellcolor{green!10} Indep. $1$d Gaussians
\\
\hline \hline
\cellcolor{white!10}\bfseries  MYIPLA & \cellcolor{blue!10} $\mathcal{O}(\varepsilon)$ & \cellcolor{blue!10} $\mathcal{O}(d_\theta \varepsilon^{-2})$ & \cellcolor{blue!10} $\mathcal{O}(d_x^{-1}\varepsilon^2)$ & \cellcolor{blue!10} $\mathcal{O}(d_x \varepsilon^{-2-\delta})$ &\cellcolor{green!10} $\mathcal{O}(d_\theta d_x(d_\theta+d_x)\varepsilon^{-4-\delta})$ & \cellcolor{green!10} $\mathcal{O}(d_\theta d_x^2\varepsilon^{-4-\delta})$\\
\hline
\cellcolor{white!10}\bfseries  PIPGLA & \cellcolor{blue!10}  $\mathcal{O}(\varepsilon^2)$ & \cellcolor{blue!10} $\mathcal{O}(d_\theta \varepsilon^{-2})$ & \cellcolor{blue!10} $\mathcal{O}(d_x^{-1}\varepsilon^2)$ & \cellcolor{blue!10} $\mathcal{O}(\log \varepsilon^2/\log d_x)$& \cellcolor{green!10} $\mathcal{O}(d_\theta (d_\theta+d_x)\varepsilon^{-2}\frac{\log \varepsilon^{2}}{\log d_x})$  & \cellcolor{green!10}  $\mathcal{O}(d_\theta d_x\varepsilon^{-2}\frac{\log \varepsilon^{2}}{\log d_x})$\\
\hline
\cellcolor{white!10}\bfseries  MYPGD & \cellcolor{blue!10}  $\mathcal{O}(\varepsilon)$ & \cellcolor{blue!10} $\mathcal{O}(d_x \varepsilon^{-2})$ & \cellcolor{blue!10} $\mathcal{O}(d_x^{-1}\varepsilon^2)$ & \cellcolor{blue!10} $\mathcal{O}(d_x \varepsilon^{-2-\delta})$ & \cellcolor{green!10} $\mathcal{O}(d_x^2(d_\theta+d_x)\varepsilon^{-4-\delta})$ & \cellcolor{green!10} $\mathcal{O}( d_x^3\varepsilon^{-4-\delta})$\\
\hline
\end{tabular}
\end{table*}

\begin{table*}[b]
\smaller
\caption{Performance of Bayesian logistic regression for Laplace and uniform priors.} 
\label{table-logistic-comparison}
\centering
\begin{tabular}{llllll}
  \toprule
  Algorithm     & Approx./Iterative  & \multicolumn{2}{c}{NMSE (\%)} & \multicolumn{2}{c}{Times (s)} \\
  \cmidrule(r){3-4}
  \cmidrule(r){5-6}
       &   &  Laplace & Unif & Laplace & Unif \\
  \midrule
 \multirow{2}{*}{MYPGD} & Approx  & $6.09\pm 0.34$  &   $\mathbf{0.60\pm 0.23}$& $91.9\pm4.8$ &$109.3\pm4.6$  \\
 & Iterative  & $4.44\pm1.40$  &   $-$& $129.7\pm 15.8$ & $-$  \\
  \midrule
  \multirow{2}{*}{MYIPLA} & Approx  & $4.42\pm 1.32$  & $15.26\pm 4.44$ & $\mathbf{89.9\pm4.2}$& $\mathbf{97.0\pm4.2}$\\
 & Iterative  &  $4.67\pm1.60$ &   $-$& $120.5\pm10.1$ & $-$  \\
 \midrule
  \multirow{2}{*}{PIPGLA} & Approx  & $2.30\pm0.58$  & $6.83\pm 3.97$ & $116.5\pm5.5$& $103.1\pm8.0$\\
   & Iterative  & $\mathbf{2.02\pm0.54}$  & $-$ & $122.9\pm6.9$& $-$\\
  \bottomrule
\end{tabular}
\end{table*}
\normalsize

\section{Experiments}\label{sec:experiments}
The code is available in \url{https://github.com/paulaoak/proximal-ipla}. 
Additional experiments for low-rank matrix completion are provided in Appendix \ref{app:low_rank_matrix}, where the goal is to recover a low-rank matrix from partially observed and noisy data.

\subsection{Bayesian Logistic Regression}\label{subsec:bayesian_logistic_regression}
We consider a similar set-up to \citet{RefWorks:RefID:73-de2021efficient}
and employ a synthetic dataset consisting of $d_y=900$ datapoints (see Appendix \ref{ap:experiments_logistic} for details). The latent variables are the $d_x=50$ regression weights, to which we assign an isotropic Laplace prior 
$p_{\theta}(x) = \prod_{i=1}^{d_x} \text{Laplace}(x_i|\theta, 1)$ or a uniform prior $p_{\theta}(x) = \prod_{i=1}^{d_x} \mathcal{U}(x_i|-\theta, \theta)$.  The likelihood is given by
$p_{\theta}(y|x) =\prod_{j=1}^{d_y} s(v_j^T x)^{y_j}s(-v_j^T x)^{1-y_j}$, where $v_j \sim \mathcal{U}(-1, 1)^{\otimes d_x}$ are a set of $d_x$-dimensional covariates sampled from a uniform distribution and $s(u)$ is the logistic function. 
The true value of $\theta$ is set randomly to $\theta = -4$ for the Laplace prior and $\theta = 1.5$ for the uniform one.

Approximations of the proximal mapping for both priors are derived in Appendices~\ref{ap:proximal_op_laplace_mean} and \ref{ap:proximal_op_uniform_mean}, as closed-form solutions are unavailable. 
We compare these approximations to an iterative approach for computing the proximal map, which is feasible only for the Laplace prior due to instabilities in the uniform case.
Figure~\ref{fig:logistic-rates} shows the variance of the $\theta$ estimates produced by MYIPLA and PIPGLA computed over 100 runs for different initialisations (left) and the sequence of $\theta$ estimates for 50 particles (right) in the case of a Laplace prior. We observe that the variance of the parameter estimates decreases with rate $\mathcal{O}(1/N)$ as suggested by Theorems~\ref{th:myipla_main_text} and \ref{th:pipgla_main_text}, and that the iterative algorithms have a slightly lower variance compared to their approximate versions, with PIPGLA having better performance than MYIPLA.
It is also important to highlight that for all algorithms considered, approximate solvers are on average 25\% faster than iterative solvers (see Table \ref{table-logistic-comparison}). Besides, in Table \ref{table-logistic-comparison} (expanded in Table \ref{table-logistic-comparison-extended}), we compare the performances of the different proximal algorithms through the normalised MSEs (NMSE) for $\theta$. 
In the uniform case, MYPGD outperforms all other algorithms; this is likely due to the lack of diffusive term in the corresponding SDE which is beneficial when dealing with a compactly supported prior. While PIPGLA also produces estimates within the support, they exhibit a larger bias.

\begin{figure}[t]
    \centering
    \begin{subfigure}[b]{0.23\textwidth}
        \centering
        \includegraphics[width=\textwidth]{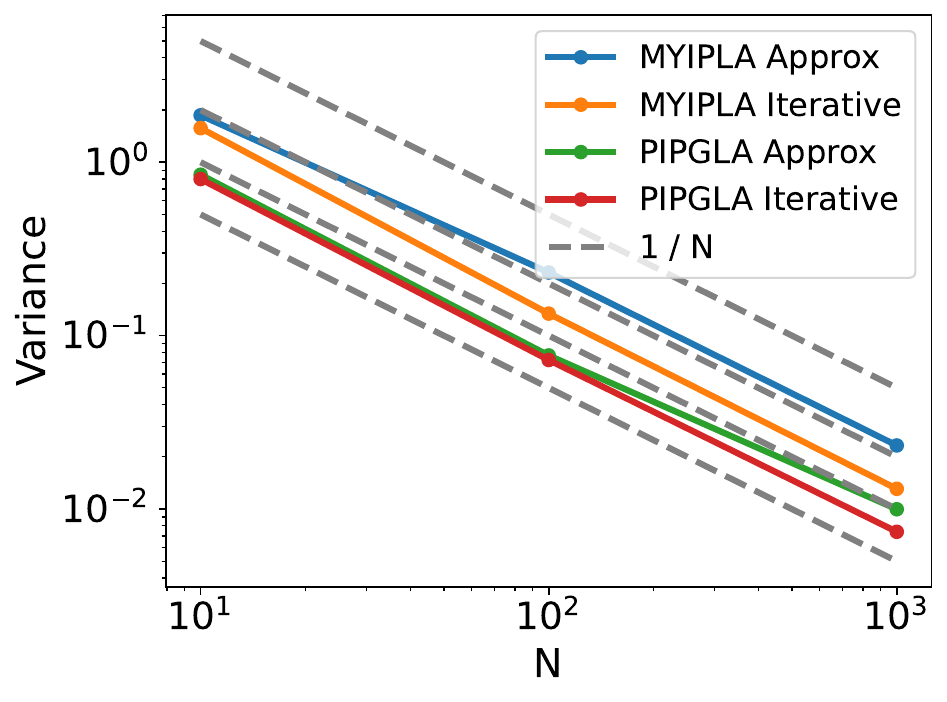}
        \label{fig:subfigure1}
    \end{subfigure}
    \hfill
    \begin{subfigure}[b]{0.23\textwidth}
        \centering
        \includegraphics[width=\textwidth]{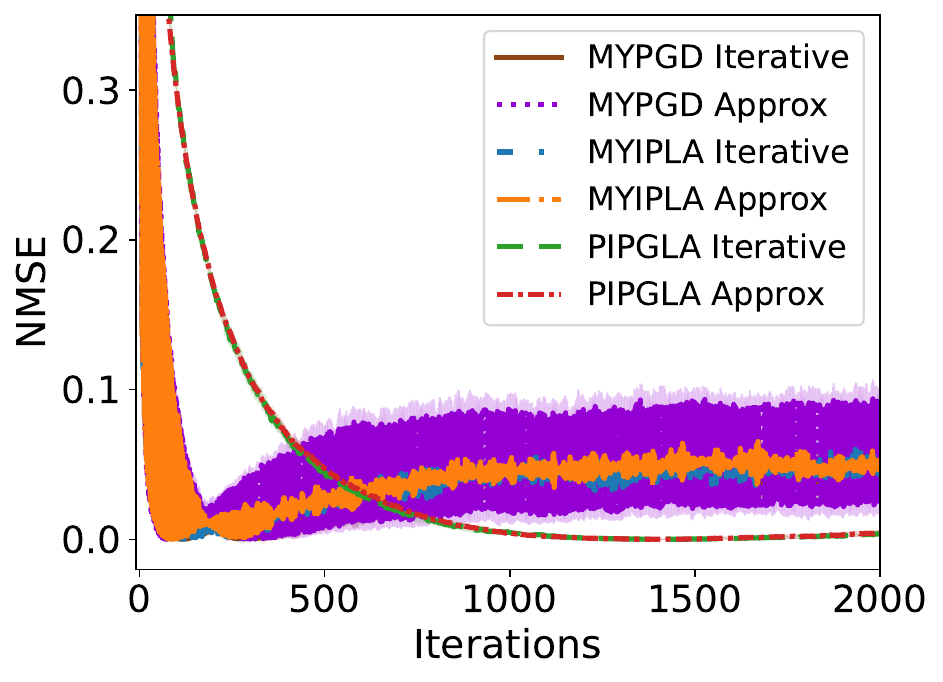}
        \label{fig:subfigure2}
    \end{subfigure}
    \caption{Laplace prior. Left: convergence rate of the variance of the parameter estimates against $N$ produced by MYIPLA and PIPGLA over 100 runs. We see that the $\mathcal{O}(1/N)$ convergence rate holds for the second moments. Right: evolution of the normalised MSE for 50 particles over 100 runs.}
    \label{fig:logistic-rates}
\end{figure}

\subsection{Bayesian Neural Network}\label{subsec:bayesian_nn} 
To evaluate our algorithms on complex multimodal posteriors, we consider a Bayesian neural network with a sparsity-inducing prior on the weights for MNIST image classification.
Following \citet{JMLR:v23:20-1426} and \citet{pmlr-v206-kuntz23a}, we use a two-layer network with \emph{tanh} activation functions and avoid the cost of computing the gradients on a big dataset by subsampling 1000 data points with labels 4 and 9.  
The input layer of the network has 40 nodes and 784 inputs and the output layer has 2 nodes. 
The latent variables are the weights, $w\in\mathbb{R}^{d_w:=40\times 784}$ and $v\in\mathbb{R}^{d_v:=2\times 40}$ of the input and output layers, respectively.  
We assign priors $p_{\alpha}(w)=\prod_i \text{Laplace}(w_i|0, e^{2\alpha})$ and $p_{\beta}(v)=\prod_i \text{Laplace}(v_i|0, e^{2\beta})$ and learn $\theta = (\alpha, \beta)$ from the data. 
One may ask whether the Laplace prior is more appropriate in this setting than the Normal one.
\citet{jaynes_prior} shows that the Laplace prior naturally arises for Bayesian neural network models (see Appendix \ref{ap:experiments_bnn_sparsity_inducing_prior} for details).
We analyse the sparsity-inducing effect of the Laplace prior by examining the distribution of the weights for a randomly chosen particle from the final particle cloud and comparing it to that obtained with a Normal prior.
We note that our methods (Fig. \ref{fig:laplace}, \ref{fig:laplace_pipgla}) lead to final weights with values highly concentrated around zero in comparison to the Normal prior (Fig. \ref{fig:normal}). 
The sparse representation of our algorithm has the advantage of producing models that are smaller in terms of memory usage when small weights are zeroed out. This is investigated in Table \ref{table-bnn-sparsity} in the Appendix. Furthermore, we compare the performance of the PIPLA family against IPLA which ignores the non-differentiability of the model density. 
Figure~\ref{fig:laplace_ipla} shows that, despite using a Laplace prior, IPLA fails to induce sufficient sparsity compared to our proposed methods.
Quantitative results for the variance of the weights and error metrics are shown in Table \ref{tab:comparison_table}, comparing our approach with other algorithms in the literature.
Appendix \ref{ap:experiments_bnn_addtional_dataset} provides additional results on more complex datasets. In particular, we apply our methods to a classification task using CIFAR10 dataset. 
Furthermore, in Appendix \ref{ap:experiments_bnn_activation_function}, we also explore the application of our methods to neural networks with non-differentiable activation functions, such as ReLU.

\begin{figure}[t]
\centering
  \begin{subfigure}[b]{0.21\textwidth}
      \centering
      \includegraphics[width=\textwidth]{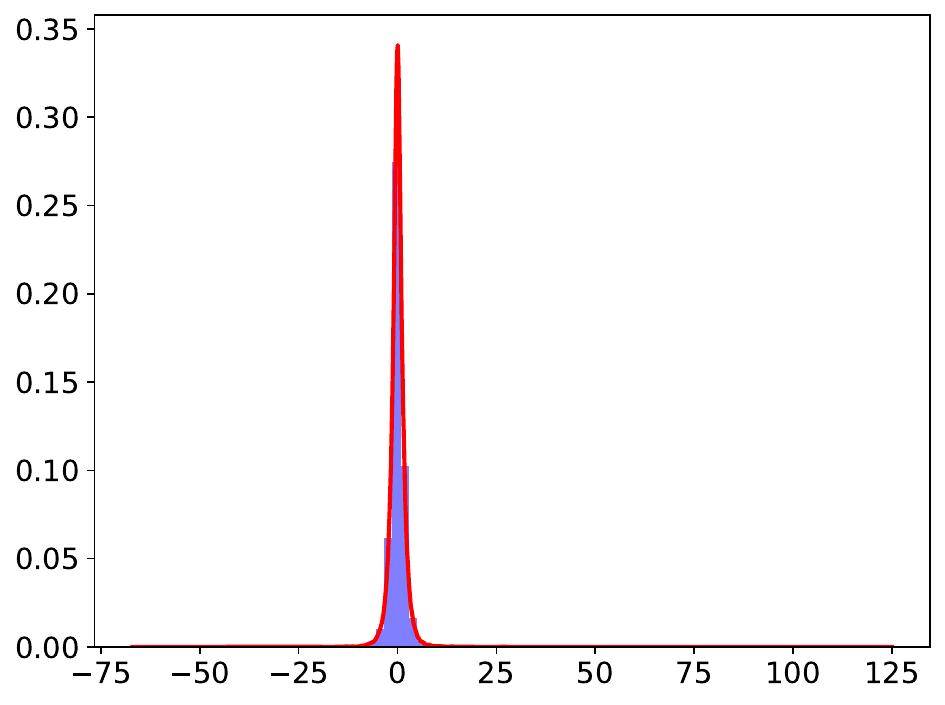}
      \caption{{\small Laplace prior, MYIPLA}}
      \label{fig:laplace}
  \end{subfigure}
  \hfill
        \begin{subfigure}[b]{0.21\textwidth}
      \centering
      \includegraphics[width=\textwidth]{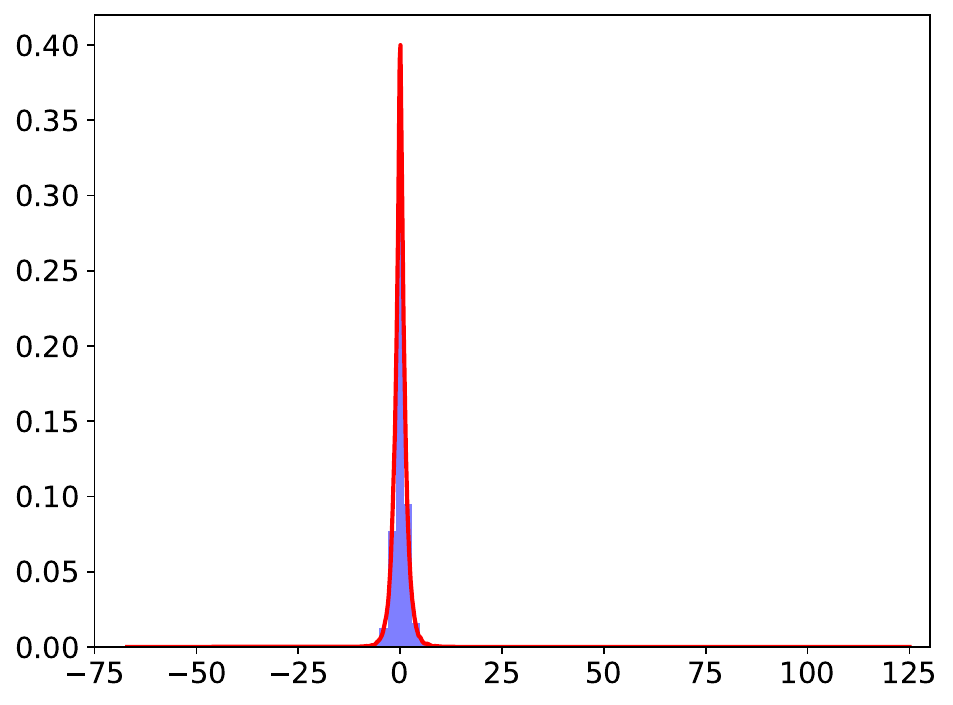} 
      \caption{{\small Laplace prior, PIPGLA}}
      \label{fig:laplace_pipgla}
  \end{subfigure}
    \hfill
\begin{subfigure}[b]{0.21\textwidth}
      \centering
      \includegraphics[width=\textwidth]{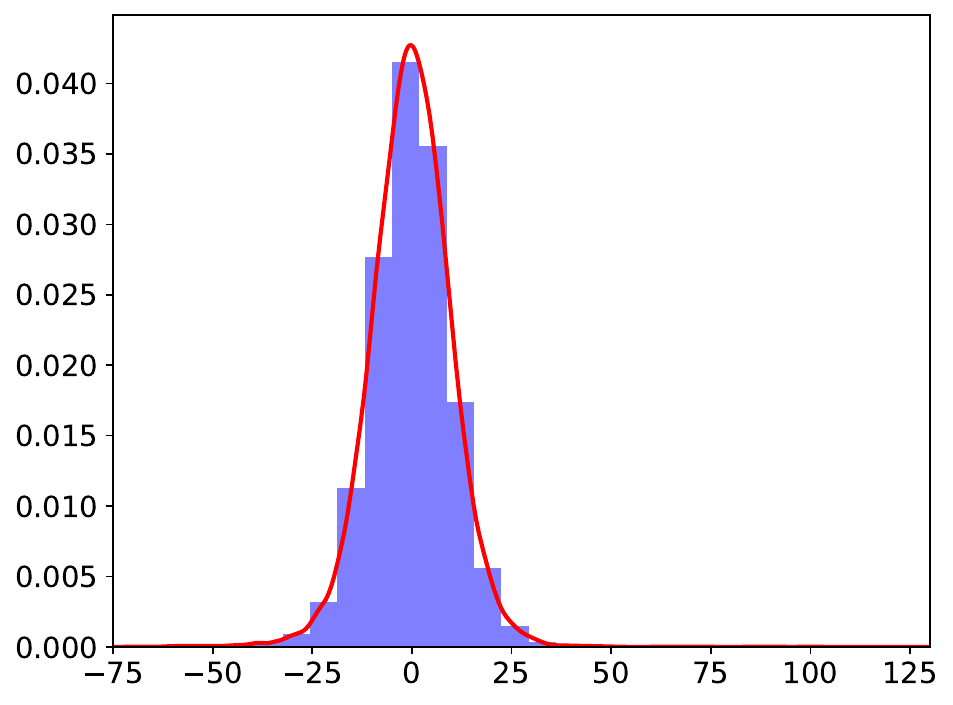} 
      \caption{{\small Normal prior, IPLA}}
      \label{fig:normal}
  \end{subfigure}
  \hfill
    \begin{subfigure}[b]{0.21\textwidth}
      \centering
      \includegraphics[width=\textwidth]{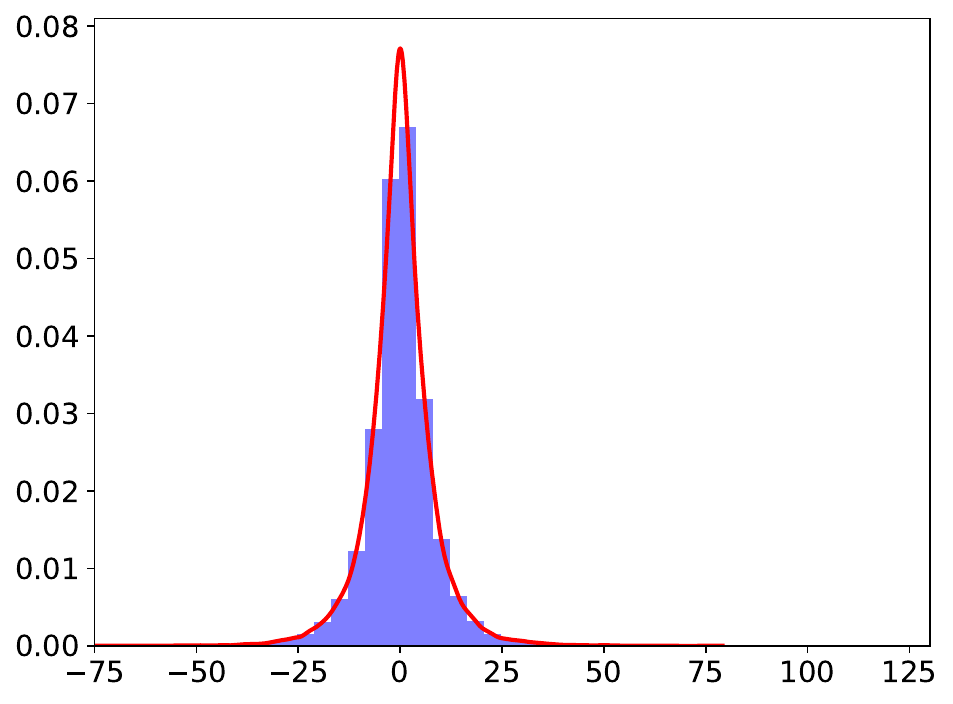}
      \caption{{\small Laplace prior, IPLA}}
      \label{fig:laplace_ipla}
  \end{subfigure}
\caption{Histogram (blue) and density estimation (red) of the BNN weights for a randomly chosen particle. Our methods (top) produce sparser weights, which is crucial for compressibility, compared to IPLA (bottom), which ignores the non-differentiabilities.}
\label{fig:histogram_weights}
\end{figure}

\begin{table}[t]
\smaller
  \caption{Bayesian neural network. Test errors and log pointwise predictive density (LPPD) achieved using the final particle cloud with $N = 50$. Computation times and standard deviation of the empirical distribution of the weight matrix $w$ are also provided.}
  \label{table-bnn-comparison}
  \centering
  \begin{tabular}{lllll}
    \toprule
    Algorithm     & Error $(\%)$     &  LPPD $(\times 10^{-1})$ & Times (s) & Std. $w$\\
    \midrule
    MYPGD & $\mathbf{1.50\pm0.77}$  & $-1.02\pm 0.15$  &   $20$ & $2.02$\\
    MYIPLA & $2.00\pm 0.85$  & $-1.07\pm0.19$ & $22$& $2.27$ \\
    PIPGLA & $2.00\pm0.75$  & $\mathbf{-0.96\pm 0.09}$  &   $33$ & $\mathbf{1.73}$\\
    PGD & $2.00\pm0.98$  & $-0.98\pm 0.10$  &   $\mathbf{19}$ & $8.80$\\
    SOUL &  $6.37\pm1.56$  & $-3.50\pm 2.43$  &   ${55}$ & $12.09$\\
    IPLA & $1.99\pm1.01$  & $-1.01\pm 0.25$  &   $\mathbf{19}$ & $11.70$\\
    \bottomrule
  \end{tabular}
\end{table}
\normalsize

\subsection{Image Deblurring}\label{subsec:image_deblurring}
The objective of image deconvolution is to recover a high-quality image from a blurred and noisy observation $y = Hx +\varepsilon$, where $H$ is a circulant blurring matrix and $\varepsilon\sim\mathcal{N}(0, \sigma^2 I)$. This inverse problem is ill-conditioned, a challenge that Bayesian methods address by incorporating prior knowledge. 
A common choice is the total variation prior, which promotes smoothness and is defined as $ TV(x)=\Vert \nabla_d x\Vert_{1}$, where $\Vert \cdot\Vert_{1}$ is the $\ell_1$ norm and $\nabla_d$ is the two-dimensional discrete gradient operator. 
However, the strength of this prior depends on a hyperparameter $\theta$ that typically requires manual tuning. Instead of fixing this parameter manually, we estimate its optimal value.
Thus, the posterior distribution for the model takes the form
$p_\theta(x|y)\propto C(\theta)\exp\left(-\Vert y- Hx\Vert^2/(2\sigma^2)-e^\theta TV(x)\right)$. For the experiments, we use the algorithm proposed by \citep{douglas_56} to numerically evaluate the proximal operator of the total variation norm.  Qualitative results are presented in Figure \ref{fig:image_deconvolution}, with additional results provided in Appendix \ref{app:image_deblurring}.

\begin{figure}[h]
    \centering

    \begin{subfigure}[b]{0.14\textwidth}
        \centering
        \includegraphics[width=\textwidth]{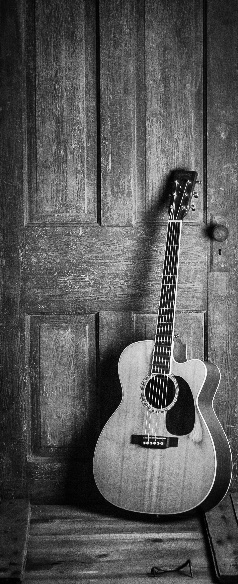}
        \caption{Original}
        \label{fig:subfig1}
    \end{subfigure}
    \hfill
    \begin{subfigure}[b]{0.14\textwidth}
        \centering
        \includegraphics[width=\textwidth]{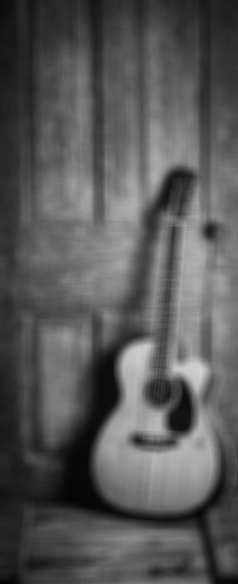}
        \caption{Blurred}
        \label{fig:subfig2}
    \end{subfigure}
    \hfill
    \begin{subfigure}[b]{0.14\textwidth}
        \centering
        \includegraphics[width=\textwidth]{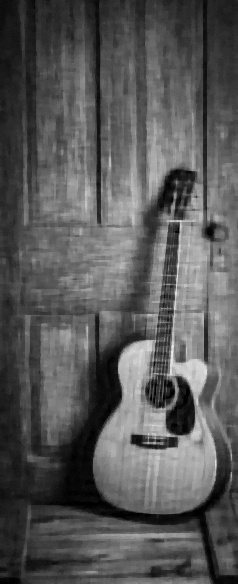}
        \caption{MYIPLA}
        \label{fig:subfig3}
    \end{subfigure}

    \caption{Image deblurring experiment.}
    \label{fig:image_deconvolution}
\end{figure}

\section{Conclusion}\label{sec:conclusion} 
Our algorithms present a novel approach for handling Bayesian models arising from different types of non-differentiable regularisations, including Lasso, elastic net, nuclear-norm and total variation norm.
While our theoretical guarantees are established under strong convexity assumptions, in practice, our methods perform well under more general conditions and demonstrate robustness and stability across a range of regularisation parameter values. 
Moreover, unlike standard Langevin algorithms—which fail to converge for light-tailed distributions \citep{roberts1996exponential}—our proximal variants remain effective, due to the implicit regularisation introduced by the proximal map.
Future work holds many promising avenues. 
Our theoretical framework can be extended to the non-convex setting using recent non-convex optimisation bounds \citep{zhang2023nonasymptotic} and their non-differentiable adaptations. 
Additionally, our novel bounds on the difference between the true minimiser and the minimiser of the Moreau-Yosida approximation can also be used within recent multiscale approaches to extend them to non-differentiable settings, see, e.g., \citet{akyildiz2024multiscale}.

\begin{acknowledgements} 
PCE is supported by EPSRC through the Modern Statistics and Statistical Machine Learning (StatML) CDT programme, grant no. EP/S023151/1.
FRC gratefully acknowledges the ``de Castro" Statistics Initative at the \textit{Collegio Carlo Alberto} and the \textit{Fondazione Franca e Diego de Castro}.
\end{acknowledgements}

\bibliography{references}

\newpage

\onecolumn

\title{Proximal Interacting Particle Langevin Algorithms\\(Supplementary Material)}
\maketitle

\appendix

\section{Theoretical Analysis of Proximal Interacting Langevin Algorithms}\label{ap:theory_IPLA}
\subsection{Approximation of minimisers}
{\normalsize
Before proceeding with the proof of Theorem~\ref{th:myipla_main_text} and~\ref{th:pipgla_main_text}, we derive a result controlling the distance between the maximiser of $p_\theta(y) = \int_{\real^{d_x}} e^{-U(\theta, x)}\md x$, denoted by $\Bar{\theta}_{\star}$, and the maximiser of $p_\theta^\lambda(y)= \int_{\real^{d_x}} e^{-U^\lambda(\theta, x)}\md x$, denoted by $\Bar{\theta}_{\lambda,\star}$.

For simplicity let us denote
\begin{align*}
k_\lambda(\theta) := \int_{\real^{d_x}} e^{-U^\lambda(\theta, x)}\md x,\qquad\qquad k(\theta) := \int_{\real^{d_x}} e^{-U(\theta, x)}\md x
\end{align*}
and $K_\lambda(\theta):=-\log k_\lambda(\theta)$.

Let $\Omega\subset\mathbb{R}^{d_{\theta}}\times \mathbb{R}^{d_x}$ denote the set on which $g_2$ is twice differentiable. Let $\Bar{\theta}_{\star}$ be the maximiser of $k$  
and let $\tilde{\Omega} = \Omega \cap (\{\Bar{\theta}_\star\} \times \bR^{d_x})$.
By \textbf{A}\ref{assumption_4}
\begin{equation*}
    k(\Bar{\theta}_{\star}) = \int e^{-U(\Bar{\theta}_{\star}, x)} \md x = \int_{\tilde{\Omega}} e^{-U(\Bar{\theta}_{\star}, x)} \md x.
\end{equation*}
Since, $k$ achieves a maximum at $\Bar{\theta}_{\star}$ and $g_1,g_2$ are differentiable in $\tilde{\Omega}$, then
\begin{equation}\label{eq:substract}
    0 = \nabla_{\theta} k(\Bar{\theta}_{\star}) = \nabla_{\theta} \int_{\tilde{\Omega}} e^{-U(\Bar{\theta}_{\star}, x)} \md x =- \int_{\tilde{\Omega}} \Big(\nabla_{\theta} g_1(\Bar{\theta}_{\star}, x) + \nabla_{\theta} g_2(\Bar{\theta}_{\star}, x)\Big)e^{-U(\Bar{\theta}_{\star}, x)} \md x.
\end{equation}
\begin{proposition}[Convergence of minimisers]
\label{prop:theta_convergence_lambda}
Under assumption \textbf{A\ref{assumption_1}}, the Lipschitzness of $g_2$ given by \textbf{A\ref{assumption_1prime}}, the strong convexity assumption \textbf{A\ref{assumption_3}} and \textbf{A\ref{assumption_4}}, it follows that
    \begin{equation*}
\Vert \Bar{\theta}_{\lambda,\star}-\Bar{\theta}_{\star}\Vert \leq  \frac{\lambda}{\mu} \Big(\frac{\Vert g_2\Vert_{\textnormal{Lip}}^2}{2} A  +  B\Big) +\mathcal{O}(\lambda^2),
\end{equation*}
where $\Vert g_2\Vert_{\textnormal{Lip}}$ is the Lipschitz constant of $g_2$ and $A, B$ are given in \textbf{A\ref{assumption_4}}.
\end{proposition}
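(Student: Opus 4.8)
The plan is to reduce the bound on $\Vert \Bar{\theta}_{\lambda,\star} - \Bar{\theta}_{\star}\Vert$ to a bound on the gradient $\nabla K_\lambda(\Bar{\theta}_\star)$ and then to expand that gradient to first order in $\lambda$. Since $U^\lambda$ is jointly $\mu$-strongly convex (the remark following \textbf{A\ref{assumption_3}}), a Pr\'ekopa--Leindler inequality \citep[Theorem 3.8]{saumard2014log} shows that the marginal $k_\lambda$ is $\mu$-strongly log-concave, i.e.\ $K_\lambda=-\log k_\lambda$ is $\mu$-strongly convex. As $\Bar{\theta}_{\lambda,\star}$ minimises $K_\lambda$ we have $\nabla K_\lambda(\Bar{\theta}_{\lambda,\star})=0$, so strong convexity combined with Cauchy--Schwarz gives $\Vert \Bar{\theta}_{\lambda,\star}-\Bar{\theta}_\star\Vert \leq \mu^{-1}\Vert\nabla K_\lambda(\Bar{\theta}_\star)\Vert$. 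The whole problem thus reduces to proving $\Vert\nabla K_\lambda(\Bar{\theta}_\star)\Vert \leq \lambda\big(\tfrac12\Vert g_2\Vert_{\textnormal{Lip}}^2 A + B\big)+\mathcal{O}(\lambda^2)$.

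Next I would differentiate under the integral sign to write $\nabla K_\lambda(\Bar{\theta}_\star)=\mathbb{E}_{X\sim\rho_{\Bar{\theta}_\star}}[\nabla_\theta g_1(\Bar{\theta}_\star,X)+\nabla_\theta g_2^\lambda(\Bar{\theta}_\star,X)]$, with $\rho_{\Bar{\theta}_\star}\propto e^{-U^\lambda(\Bar{\theta}_\star,\cdot)}$, and compare it against the first-order optimality condition \eqref{eq:substract}, namely $\int_{\tilde{\Omega}}(\nabla_\theta g_1+\nabla_\theta g_2)(\Bar{\theta}_\star,x)\,e^{-U(\Bar{\theta}_\star,x)}\md x=0$. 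Two separate $\mathcal{O}(\lambda)$ discrepancies distinguish the two objects. First, the integrand: since $\nabla g_2^\lambda(v)=\lambda^{-1}(v-\prox_{g_2}^\lambda(v))$ and, at points of twice-differentiability, $\prox_{g_2}^\lambda(v)=v-\lambda\nabla g_2(v)+\mathcal{O}(\lambda^2)$, one obtains $\nabla_\theta g_2^\lambda=\nabla_\theta g_2-\lambda[\nabla^2 g_2\,\nabla g_2]_\theta+\mathcal{O}(\lambda^2)$, where $[\cdot]_\theta$ denotes the $\theta$-block. Second, the reference measure: the Moreau-envelope expansion $g_2^\lambda=g_2-\tfrac{\lambda}{2}\Vert\nabla g_2\Vert^2+\mathcal{O}(\lambda^2)$ gives $e^{-U^\lambda}=e^{-U}\big(1+\tfrac{\lambda}{2}\Vert\nabla g_2\Vert^2+\mathcal{O}(\lambda^2)\big)$.

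I would then substitute both expansions into the numerator $\int(\nabla_\theta g_1+\nabla_\theta g_2^\lambda)e^{-U^\lambda}\md x$ and collect by powers of $\lambda$. The zeroth-order term vanishes by \eqref{eq:substract}, leaving $\nabla K_\lambda(\Bar{\theta}_\star)=\lambda\,\mathbb{E}_X\big[\tfrac12\Vert\nabla g_2\Vert^2\,\nabla_\theta U-[\nabla^2 g_2\,\nabla g_2]_\theta\big]+\mathcal{O}(\lambda^2)$; the normaliser $\int e^{-U^\lambda}\md x=k(\Bar{\theta}_\star)(1+\mathcal{O}(\lambda))$ only perturbs the $\mathcal{O}(\lambda^2)$ remainder, and replacing the $e^{-U}$-average by the $\rho_{\Bar{\theta}_\star}$-average of \textbf{A\ref{assumption_4}} is likewise an $\mathcal{O}(\lambda^2)$ change because it multiplies a prefactor already of order $\lambda$. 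Taking norms, using $\Vert\nabla g_2\Vert\leq\Vert g_2\Vert_{\textnormal{Lip}}$, and invoking $\mathbb{E}_X[\Vert\nabla_\theta U\Vert]\leq A$ and $\mathbb{E}_X[\Vert\nabla^2 g_2\,\nabla g_2\Vert]\leq B$ yields the bound on $\Vert\nabla K_\lambda(\Bar{\theta}_\star)\Vert$, and dividing by $\mu$ gives the claim.

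The main obstacle is making the two first-order expansions rigorous despite $g_2$ being only twice differentiable almost everywhere. Here I would rely on Remark~\ref{remark_differentiability} and the hypothesis $\mathsf{m}_{\Bar{\theta}_\star}(\Omega^{\mathsf{c}}\cap(\{\Bar{\theta}_\star\}\times\bR^{d_x}))=0$ of \textbf{A\ref{assumption_4}}, which lets me restrict every integral to $\tilde{\Omega}$, where $\nabla^2 g_2$ exists and the pointwise Taylor formulae hold $\rho_{\Bar{\theta}_\star}$-almost surely. The genuinely delicate step is showing the remainders integrate to true $\mathcal{O}(\lambda^2)$ quantities: this needs a dominated-convergence argument controlling the $\lambda$-expansions of $\prox_{g_2}^\lambda$ and $g_2^\lambda$ uniformly, together with a justified differentiation under the integral sign defining $\nabla K_\lambda$, for which the Lipschitz bound on $g_2$ and the integrability of the Gibbs measure $\rho_{\Bar{\theta}_\star}$ supply the necessary envelopes.
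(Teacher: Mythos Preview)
Your proposal is correct and follows essentially the same route as the paper's proof: reduce via strong convexity of $K_\lambda$ (Pr\'ekopa--Leindler) to bounding $\Vert\nabla K_\lambda(\bar{\theta}_\star)\Vert$, Taylor-expand $\prox_{g_2}^\lambda$ to second order via the implicit function theorem, subtract the first-order condition \eqref{eq:substract}, and invoke \textbf{A\ref{assumption_1prime}} and \textbf{A\ref{assumption_4}}. The one minor difference is in handling the reference-measure change: you Taylor-expand $g_2^\lambda=g_2-\tfrac{\lambda}{2}\Vert\nabla g_2\Vert^2+\mathcal{O}(\lambda^2)$ pointwise, whereas the paper bypasses this second expansion by using the uniform bound $0\leq U-U^\lambda\leq\tfrac{\lambda}{2}\Vert g_2\Vert_{\textnormal{Lip}}^2$ from \citet[Proposition~1]{durmus_proximal} to estimate $e^{-U^\lambda}-e^{-U}=e^{-U^\lambda}(1-e^{-(U-U^\lambda)})$ directly, which keeps all averages with respect to $\rho_{\bar{\theta}_\star}$ throughout and spares your extra conversion step. (Incidentally, the sign in your expansion should read $\nabla_\theta g_2^\lambda=\nabla_\theta g_2+\lambda[\nabla^2 g_2\,\nabla g_2]_\theta+\mathcal{O}(\lambda^2)$, not minus; this is immaterial once norms are taken.)
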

\begin{proof}
To obtain a bound of $\Vert \Bar{\theta}_{\lambda,\star}-\Bar{\theta}_{\star}\Vert$ in terms of $\lambda$, we first define the measure $\pi_{\lambda}^1\propto e^{-U^{\lambda}(\theta,x)}$ and observe that $\pi_{\lambda}^1$ is $\mu$-strongly log-concave, since $\pi_{\lambda}^1\propto e^{-U^{\lambda}(\theta, x)}$ and $U^{\lambda}$ is strongly convex by \textbf{A\ref{assumption_3}}.
Therefore, by a form of the Prékopa-Leindler inequality for strong convexity \citep[Theorem 3.8]{saumard2014log}, $\pi_{\lambda, \Theta}^1\propto e^{-K_{\lambda}(\theta)} = k_{\lambda}(\theta)$ is $\mu$-strongly log-concave, which results in 
\begin{equation}\label{eq:prekopa}
    \langle\Bar{\theta}_{\lambda,\star} -\Bar{\theta}_{\star}, \nabla K_{\lambda}(\Bar{\theta}_{\lambda,\star})-\nabla K_{\lambda}(\Bar{\theta}_{\star})\rangle\geq \mu \Vert\Bar{\theta}_{\lambda,\star} -\Bar{\theta}_{\star}\Vert^2.
\end{equation} 
Since, $\Bar{\theta}_{\lambda,\star}$ is the maximiser of $k_{\lambda}(\theta)$ and $k_{\lambda}(\theta)$ is differentiable, it follows that $\nabla k_{\lambda}(\Bar{\theta}_{\lambda,\star})= 0$ and therefore $\nabla K_{\lambda}(\Bar{\theta}_{\lambda,\star})= 0$. Using the Cauchy-Schwarz inequality, we can rearrange \eqref{eq:prekopa} to obtain
\begin{equation}\label{eq:theta_bound}
    \Vert\Bar{\theta}_{\lambda,\star} -\Bar{\theta}_{\star}\Vert \leq \frac{1}{\mu}\Vert\nabla K_{\lambda}(\Bar{\theta}_{\lambda,\star})-\nabla K_{\lambda}(\Bar{\theta}_{\star})\Vert = \frac{1}{\mu}\Vert \nabla K_{\lambda}(\Bar{\theta}_{\star})\Vert = \frac{1}{\mu k_{\lambda}(\Bar{\theta}_{\star})}\Vert \nabla k_{\lambda}(\Bar{\theta}_{\star})\Vert.
\end{equation}
We now focus on the term $\Vert \nabla k_{\lambda}(\Bar{\theta}_{\star})\Vert$
\begin{equation*}
    \Vert \nabla k_{\lambda}(\Bar{\theta}_{\star})\Vert = \Big\Vert \nabla_{\theta} \int e^{-U^{\lambda}(\Bar{\theta}_{\star}, x)} \md x\Big\Vert = \Big\Vert \int \nabla_{\theta}U^{\lambda}(\Bar{\theta}_{\star}, x) e^{-U^{\lambda}(\Bar{\theta}_{\star}, x)} \md x\Big\Vert.
\end{equation*}

Recall that $U^{\lambda}(\theta, x) = g_1(\theta, x) + g_2^{\lambda}(\theta, x)$.
For simplicity, let us assume that $\nabla_{\theta} g_1(\theta, x) = 0$, later we will show that this condition is not necessary. Then, we have that
\begin{equation}\label{eq:bound}
    \Vert \nabla k_{\lambda}(\Bar{\theta}_{\star})\Vert =\Big\Vert \int \nabla_{\theta}g_2^{\lambda}(\Bar{\theta}_{\star}, x) e^{-U^{\lambda}(\Bar{\theta}_{\star}, x)} \md x\Big\Vert = \Big\Vert \int \frac{\Bar{\theta}_{\star}-\prox_{g_2}
^{\lambda}(\Bar{\theta}_{\star}, x)_{\theta}}{\lambda}  e^{-U^{\lambda}(\Bar{\theta}_{\star}, x)} \md x\Big\Vert.
\end{equation}
Since $g_2$ is convex, the problem $\min_u\left\lbrace g_2(u) + \Vert v-u\Vert^2/(2\lambda) \right\rbrace$ with $v=(\theta, x)$ has a unique minimum $w$ that satisfies $\lambda \nabla g_2(w) - (v-w) = 0$.
We consider the implicit system $\phi(\lambda, w) = \lambda \nabla g_2(w) - (v-w)$, and note that $\phi(0, v) = 0$ and 
\begin{equation*}
    \frac{\partial \phi(\lambda, w)}{\partial w} = \lambda\nabla^2 g_2(w) + I\succ 0,
\end{equation*}
i.e. positive definite due to  {Remark \ref{remark_differentiability}} and assumption \textbf{A\ref{assumption_4}}.
Thus the Jacobian of $\phi$ w.r.t. $w$ is invertible.
Hence, the implicit function theorem shows that there is some locally defined $\zeta$, such that $\zeta(0) = v$ and $\phi(\lambda, \zeta(\lambda)) = 0$. Furthermore, 
\begin{align*}
    \frac{\partial\phi(\lambda, \zeta(\lambda))}{\partial\lambda}\bigg\vert_{\lambda = 0} &= \bigg(\nabla g_2(\zeta(\lambda)) + \lambda \nabla^2 g_2(\zeta(\lambda))\frac{\partial\zeta(\lambda)}{\partial\lambda} + \frac{\partial\zeta(\lambda)}{\partial\lambda}\bigg)\bigg\vert_{\lambda = 0} = 0,\\
    \frac{\partial^2\phi(\lambda, \zeta(\lambda))}{\partial\lambda^2}\bigg\vert_{\lambda = 0} &= \bigg(2\nabla^2 g_2(\zeta(\lambda))\frac{\partial\zeta(\lambda)}{\partial\lambda}+ \lambda \frac{\partial}{\partial\lambda}\Big(\nabla^2 g_2(\zeta(\lambda))\frac{\partial\zeta(\lambda)}{\partial\lambda}\Big) + \frac{\partial^2\zeta(\lambda)}{\partial^2\lambda}\bigg)\bigg\vert_{\lambda = 0} = 0,\\
\end{align*}
which provides
\begin{align*}
    \frac{\partial\zeta(0)}{\partial\lambda} &= -\nabla g_2(v),\\
    \frac{\partial^2\zeta(0)}{\partial\lambda^2} &= -2 \nabla^2 g_2(v)\nabla g_2(v).
\end{align*}
Using Taylor's expansion at $\lambda = 0$ we have
\begin{equation}\label{eq:taylors2}
    \prox_{g_2}^{\lambda}(v) = \zeta(\lambda) = v - \lambda\nabla g_2(v) - \lambda^2 \nabla^2 g_2(v)\nabla g_2(v)+\mathcal{O}(\lambda^3).
\end{equation}
Therefore
\begin{equation}\label{eq:proximal_operator2}
    \frac{\Bar{\theta}_{\star}-\prox_{g_2}
^{\lambda}(\Bar{\theta}_{\star}, x)_{\theta}}{\lambda} = \nabla_{\theta} g_2(\Bar{\theta}_{\star}, x) + \lambda[\nabla^2_{(\theta, x)} g_2(\Bar{\theta}_{\star}, x)]_{1:d_\theta}\nabla_{(\theta, x)} g_2(\Bar{\theta}_{\star}, x) + \mathcal{O}(\lambda^2),
\end{equation}
where the notation $[\nabla^2_{(\theta, x)} g_2(\theta, x)]_{1:d_\theta}$ means that we only take the first $d_\theta$ rows of the Hessian. For simplicity, we denote $h(\theta, x)= [\nabla^2_{(\theta, x)} g_2(\theta, x)]_{1:d_\theta}\nabla_{(\theta, x)} g_2({\theta}, x)$.

Substituting \eqref{eq:proximal_operator2} in \eqref{eq:bound}, we have
{\normalsize
\begin{align*}
    \Vert \nabla &k_{\lambda}(\Bar{\theta}_{\star})\Vert = \Big\Vert \int_{\Tilde{\Omega}} (\nabla_{\theta} g_2(\Bar{\theta}_{\star}, x) + \lambda h(\Bar{\theta}_{\star}, x) + \mathcal{O}(\lambda^2)) e^{-U^{\lambda}(\Bar{\theta}_{\star}, x)} \md x\Big\Vert \\ 
    &\leq\Big\Vert \int_{\Tilde{\Omega}} \nabla_{\theta} g_2(\Bar{\theta}_{\star}, x) e^{-U^{\lambda}(\Bar{\theta}_{\star}, x)} \md x\Big\Vert + \lambda\bigg(\int_{\Tilde{\Omega}} \Vert h(\Bar{\theta}_{\star}, X) \Vert \frac{e^{-U^\lambda(\Bar{\theta}_\star, x)}}{k_{\lambda}(\Bar{\theta}_{\star})} \md x\bigg) k_{\lambda}(\Bar{\theta}_{\star})+ \mathcal{O}(\lambda^2) k_{\lambda}(\Bar{\theta}_{\star})\\
    &=\Big\Vert \int_{\Tilde{\Omega}} \nabla_{\theta} g_2(\Bar{\theta}_{\star}, x) e^{-U^{\lambda}(\Bar{\theta}_{\star}, x)} \md x\Big\Vert + \lambda\mathbb{E}_X[\Vert h(\Bar{\theta}_{\star}, X) \Vert]k_{\lambda}(\Bar{\theta}_{\star}) + \mathcal{O}(\lambda^2) k_{\lambda}(\Bar{\theta}_{\star}).
    \end{align*}}
Subtracting~\eqref{eq:substract} in the first term, we have 
{\normalsize
\begin{align*}
    \Vert \nabla &k_{\lambda}(\Bar{\theta}_{\star})\Vert \leq \Big\Vert \int_{\Tilde{\Omega}} \nabla_{\theta} g_2(\Bar{\theta}_{\star}, x) \big(e^{-U^{\lambda}(\Bar{\theta}_{\star}, x)} - e^{-U(\Bar{\theta}_{\star}, x)}\big) \md x\Big\Vert  + \big(\lambda\mathbb{E}_X[\Vert h(\Bar{\theta}_{\star}, X) \Vert]+\mathcal{O}(\lambda^2)\big) k_{\lambda}(\Bar{\theta}_{\star})\\
    &\leq \int_{\Tilde{\Omega}} \Vert\nabla_{\theta} g_2(\Bar{\theta}_{\star}, x)\Vert e^{-U^{\lambda}(\Bar{\theta}_{\star}, x)} \big(1- e^{-U(\Bar{\theta}_{\star}, x) + U^{\lambda}(\Bar{\theta}_{\star}, x)}\big) \md x + \big(\lambda\mathbb{E}_X[\Vert h(\Bar{\theta}_{\star}, X) \Vert]+\mathcal{O}(\lambda^2)\big)k_{\lambda}(\Bar{\theta}_{\star})\\
    &\leq \big(1- e^{-\lambda \Vert g_2\Vert_{\textnormal{Lip}}^2/2}\big) \int_{\Tilde{\Omega}}\Vert\nabla_{\theta} g_2(\Bar{\theta}_{\star}, x)\Vert  e^{-U^{\lambda}(\Bar{\theta}_{\star}, x)} \md x + \big(\lambda\mathbb{E}_X[\Vert h(\Bar{\theta}_{\star}, X) \Vert]+\mathcal{O}(\lambda^2)\big)k_{\lambda}(\Bar{\theta}_{\star})\\
    &= \big(1- e^{-\lambda \Vert g_2\Vert_{\textnormal{Lip}}^2/2}\big) \bigg(\int_{\Tilde{\Omega}}\Vert\nabla_{\theta} g_2(\Bar{\theta}_{\star}, x)\Vert  \frac{e^{-U^{\lambda}(\Bar{\theta}_{\star}, x)}}{k_{\lambda}(\Bar{\theta}_{\star})} \md x\bigg)\; k_{\lambda}(\Bar{\theta}_{\star}) \\ &+ \big(\lambda\mathbb{E}_X[\Vert h(\Bar{\theta}_{\star}, X) \Vert]+\mathcal{O}(\lambda^2)\big)k_{\lambda}(\Bar{\theta}_{\star})
\end{align*}}
where we used the fact that, since $g_2$ is Lipschitz by \textbf{A\ref{assumption_1}}, $0\leq U(v)-U^{\lambda}(v) \leq \frac{\lambda \Vert g_2\Vert_{\textnormal{Lip}}^2}{2}$ for $v=(\theta, x)$, as shown in the proof of \citet[Proposition 1]{durmus_proximal}.

By \textbf{A\ref{assumption_4}}, we further have
$\mathbb{E}_X[\Vert \nabla_{\theta} g_2(\Bar{\theta}_{\star}, X) \Vert]\leq A$, $\mathbb{E}_X[\Vert h(\Bar{\theta}_{\star}, X) \Vert]\leq B$ and thus
{\normalsize
\begin{align*}
  \Vert \nabla k_{\lambda}(\Bar{\theta}_{\star})\Vert
    &\leq\big(1- e^{-\lambda \Vert g_2\Vert_{\textnormal{Lip}}^2 /2}\big)  \mathbb{E}_X [\Vert\nabla_{\theta} g_2(\Bar{\theta}_{\star}, X)\Vert] k_{\lambda}(\Bar{\theta}_{\star}) + \big(\lambda\mathbb{E}_X[\Vert h(\Bar{\theta}_{\star}, X) \Vert]+\mathcal{O}(\lambda^2)\big)k_{\lambda}(\Bar{\theta}_{\star})\\
   &= \bigg(\lambda  \frac{\Vert g_2\Vert_{\textnormal{Lip}}^2}{2} \mathbb{E}_X [\Vert\nabla_{\theta} g_2(\Bar{\theta}_{\star}, X)\Vert]  + \lambda\mathbb{E}_X[\Vert h(\Bar{\theta}_{\star}, X) \Vert]+\mathcal{O}(\lambda^2)\bigg) k_{\lambda}(\Bar{\theta}_{\star})\\
    & \leq \lambda \Big(\frac{\Vert g_2\Vert_{\textnormal{Lip}}^2}{2} A  +  B\Big) k_{\lambda}(\Bar{\theta}_{\star})+\mathcal{O}(\lambda^2)k_{\lambda}(\Bar{\theta}_{\star}). \numberthis \label{eqn2}
\end{align*}}
Putting together \eqref{eq:theta_bound} and  \eqref{eqn2}, we get that
\begin{equation*}
    \Vert\Bar{\theta}_{\lambda,\star} -\Bar{\theta}_{\star}\Vert \leq \frac{1}{\mu k_{\lambda}(\Bar{\theta}_{\star})}\Vert \nabla k_{\lambda}(\Bar{\theta}_{\star})\Vert\leq \frac{\lambda}{\mu} \Big(\frac{\Vert g_2\Vert_{\textnormal{Lip}}^2}{2} A  +  B\Big) +\frac{1}{\mu}\mathcal{O}(\lambda^2) = \frac{\lambda}{\mu} \Big(\frac{\Vert g_2\Vert_{\textnormal{Lip}}^2}{2} A  +  B\Big) +\mathcal{O}(\lambda^2).
\end{equation*}

For the case $\nabla_{\theta} g_1(\theta, x) \neq 0$, the same results follows since
{\normalsize
\begin{align*}
    \Vert \nabla &k_{\lambda}(\Bar{\theta}_{\star})\Vert = \Big\Vert \int_{\Tilde{\Omega}} \Big(\nabla_{\theta} g_1(\Bar{\theta}_{\star}, x) + \frac{\Bar{\theta}_{\star}-\prox_{g_2}
^{\lambda}(\Bar{\theta}_{\star}, x)}{\lambda}\Big)  e^{-U^{\lambda}(\Bar{\theta}_{\star}, x)} \md x\Big\Vert \\
&\leq \Big\Vert \int_{\Tilde{\Omega}} \big(\nabla_{\theta} g_1(\Bar{\theta}_{\star}, x) + \nabla_{\theta} g_2(\Bar{\theta}_{\star}, x)\big) e^{-U^{\lambda}(\Bar{\theta}_{\star}, x)} \md x\Big\Vert + \big(\lambda\mathbb{E}_X[\Vert h(\Bar{\theta}_{\star}, X) \Vert]+\mathcal{O}(\lambda^2)\big) k_{\lambda}(\Bar{\theta}_{\star})\\
&= \Big\Vert \int_{\Tilde{\Omega}} \nabla_{\theta} U(\Bar{\theta}_{\star}, x) \big(e^{-U^{\lambda}(\Bar{\theta}_{\star}, x)} - e^{-U(\Bar{\theta}_{\star}, x)}\big) \md x\Big\Vert + \big(\lambda\mathbb{E}_X[\Vert h(\Bar{\theta}_{\star}, X) \Vert]+\mathcal{O}(\lambda^2)\big) k_{\lambda}(\Bar{\theta}_{\star})\\
&\leq  \int_{\Tilde{\Omega}}\Vert\nabla_{\theta} U(\Bar{\theta}_{\star}, x)\Vert   e^{-U^{\lambda}(\Bar{\theta}_{\star}, x)} \big(1- e^{-U(\Bar{\theta}_{\star}, x) + U^{\lambda}(\Bar{\theta}_{\star}, x)}\big) \md x \\
&+ \big(\lambda\mathbb{E}_X[\Vert h(\Bar{\theta}_{\star}, X) \Vert]+\mathcal{O}(\lambda^2)\big) k_{\lambda}(\Bar{\theta}_{\star})\\
&\leq  \big(1- e^{-\lambda \Vert g_2\Vert_{\textnormal{Lip}}^2/2}\big) \int_{\Tilde{\Omega}} \Vert\nabla_{\theta} U(\Bar{\theta}_{\star}, x)\Vert e^{-U^{\lambda}(\Bar{\theta}_{\star}, x)} \md x + \big(\lambda\mathbb{E}_X[\Vert h(\Bar{\theta}_{\star}, X) \Vert]+\mathcal{O}(\lambda^2)\big) k_{\lambda}(\Bar{\theta}_{\star})\\
    &= \big(1- e^{-\lambda \Vert g_2\Vert_{\textnormal{Lip}}^2/2}\big) \mathbb{E}_X[\Vert\nabla_{\theta} U(\Bar{\theta}_{\star}, x)\Vert] k_{\lambda}(\Bar{\theta}_{\star}) + \big(\lambda\mathbb{E}_X[\Vert h(\Bar{\theta}_{\star}, X) \Vert]+\mathcal{O}(\lambda^2)\big) k_{\lambda}(\Bar{\theta}_{\star})\\
    &= \bigg(\lambda  \frac{\Vert g_2\Vert_{\textnormal{Lip}}^2}{2} \mathbb{E}_X [\Vert\nabla_{\theta} g_2(\Bar{\theta}_{\star}, X)\Vert]  + \lambda\mathbb{E}_X[\Vert h(\Bar{\theta}_{\star}, X) \Vert]+\mathcal{O}(\lambda^2)\bigg) k_{\lambda}(\Bar{\theta}_{\star})\\
    & \leq \lambda \Big(\frac{\Vert g_2\Vert_{\textnormal{Lip}}^2}{2} A  +  B\Big) k_{\lambda}(\Bar{\theta}_{\star})+\mathcal{O}(\lambda^2)k_{\lambda}(\Bar{\theta}_{\star}).
\end{align*}}
\end{proof}}

\subsection{MYIPLA}\label{ap:MYIPLA}
{\normalsize
Following \citet{akyildiz2023interacting}, we have the following results.

\begin{proposition} \label{prop:strong_solution}
Assuming conditions \textbf{A\ref{assumption_1}} and \textbf{A\ref{assumption_2}} hold, there exist a unique strong solution to \eqref{eq:particles_proximal_1}--\eqref{eq:particles_proximal_2}.
\end{proposition}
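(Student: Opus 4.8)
The plan is to recast the particle system \eqref{eq:particles_proximal_1}--\eqref{eq:particles_proximal_2} as a single It\^o SDE on the product space $\bR^{d_\theta + N d_x}$ and then invoke the classical strong existence-and-uniqueness theory for SDEs whose drift is globally Lipschitz and of linear growth and whose diffusion coefficient is well behaved. Stacking the state as $Z_t = (\bm\theta_t^N, \mathbf X_t^{1, N}, \dots, \mathbf X_t^{N, N})$, the system reads $\md Z_t = b(Z_t)\, \md t + \Sigma\, \md \mathbf B_t$, where $b$ places $-N^{-1}\sum_{i=1}^N \nabla_\theta U^\lambda(\theta, x^i)$ in the $\theta$-block and $-\nabla_x U^\lambda(\theta, x^i)$ in the $i$-th $x$-block, and $\Sigma$ is a constant diagonal matrix with entries $\sqrt{2/N}$ (on the $\theta$-block) and $\sqrt 2$ (on the $x$-blocks). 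Since $\Sigma$ is constant it is automatically Lipschitz and of linear growth, so the whole argument reduces to controlling the drift $b$.

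The single substantive step is to verify that $b$ is globally Lipschitz, and this is precisely where the Moreau-Yosida regularisation earns its keep: although $U = g_1 + g_2$ need not be differentiable, Assumption \textbf{A\ref{assumption_1}} together with \cite[Proposition 1]{durmus_proximal} guarantees that $U^\lambda = g_1 + g_2^\lambda$ is continuously differentiable with an $L$-Lipschitz gradient, $L \le L_{g_1} + \lambda^{-1}$. Hence each coordinate block of $b$ is a Lipschitz function of its arguments; the $\theta$-block, being an average of $N$ such maps, stays Lipschitz (and the $N^{-1}$ averaging keeps the constant from growing in $N$), and assembling the blocks via $\norm{b(z) - b(z')}^2 = \norm{b_\theta(z) - b_\theta(z')}^2 + \sum_{i=1}^N \norm{b_{x_i}(z) - b_{x_i}(z')}^2$ yields a global Lipschitz constant $L_b$ on $\bR^{d_\theta + N d_x}$. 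This is a routine aggregation — using $N^{-1}\sum_i \norm{\cdot} \le (N^{-1}\sum_i \norm{\cdot}^2)^{1/2}$ for the averaged block — so I would not dwell on it. Linear growth is then immediate from $\norm{b(z)} \le \norm{b(z) - b(0)} + \norm{b(0)} \le L_b \norm{z} + \norm{b(0)}$.

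With $b$ globally Lipschitz and of linear growth and $\Sigma$ constant, and with Assumption \textbf{A\ref{assumption_2}} supplying an initial condition of finite second moment, the standard existence-and-uniqueness theorem for It\^o SDEs applies and yields a unique strong solution to \eqref{eq:particles_proximal_1}--\eqref{eq:particles_proximal_2}; this mirrors the argument for IPLA in \citet{akyildiz2023interacting}. I would flag that there is no genuine obstacle beyond verifying the Lipschitz property, but it is worth stressing the conceptual point that the regularisation parameter enters $L_b$ through the $\lambda^{-1}$ term, so the bound is uniform in $N$ yet degrades as $\lambda \to 0$ — consistent with the fact that the unregularised drift $\nabla U$ need not exist at all.
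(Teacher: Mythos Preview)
Your proposal is correct and matches the paper's approach exactly: the paper's proof simply cites the classical strong existence-and-uniqueness theorem (Karatzas--Shreve) together with \citet[Proposition 1]{akyildiz2023interacting}, which is precisely the route you outline. Your expansion of the Lipschitz verification via \cite[Proposition 1]{durmus_proximal} fills in the only substantive detail the paper leaves implicit.
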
 
\begin{proof}
    The proof follows from \citet{karatzas} and \citet[Proposition 1]{akyildiz2023interacting}.
\end{proof}
\begin{proposition}[Invariant measure]\label{prop:invariant_measure}
    For any $N\in\mathbb{N}$, the measure $\pi_{\lambda, \star}^N(\theta,x_1,\dots,x_N)\propto \exp(-\sum_{i=1}^N U^{\lambda}(\theta, x_i))$ is an invariant measure for the interacting particle system \eqref{eq:particles_proximal_1}-\eqref{eq:particles_proximal_2}.
\end{proposition}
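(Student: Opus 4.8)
The plan is to recognise the system \eqref{eq:particles_proximal_1}--\eqref{eq:particles_proximal_2} as an overdamped Langevin diffusion on the product space $\mathbb{R}^{d_\theta}\times(\mathbb{R}^{d_x})^N$ driven by a constant, block-diagonal diffusion matrix, for which the Gibbs measure associated with the potential is invariant by the standard Fokker--Planck argument. Concretely, writing $Z=(\theta,x_1,\dots,x_N)$ and $V(Z)=\sum_{i=1}^N U^\lambda(\theta,x_i)$, I would first check that the two lines of \eqref{eq:particles_proximal_1}--\eqref{eq:particles_proximal_2} stack into the single SDE $\md Z_t=-D\nabla V(Z_t)\,\md t+\sqrt{2D}\,\md \mathbf{B}_t$, where $D=\mathrm{diag}(N^{-1}I_{d_\theta},I_{d_x},\dots,I_{d_x})$ is constant and positive definite. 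This identification is immediate from $\nabla_\theta V=\sum_i\nabla_\theta U^\lambda(\theta,x_i)$ and $\nabla_{x_i}V=\nabla_x U^\lambda(\theta,x_i)$ together with the scalings $D_\theta=N^{-1}I$, $D_{x_i}=I$, which reproduce both the drifts and the noise coefficients $\sqrt{2/N}$ and $\sqrt{2}$.

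Given this form, the generator is $\mathcal{L}f=-\langle D\nabla V,\nabla f\rangle+\mathrm{tr}(D\nabla^2 f)$ and the candidate invariant density is $\pi_{\lambda,\star}^N\propto e^{-V}$. I would then verify invariance in the weak sense, i.e. $\int \mathcal{L}f\,\md\pi_{\lambda,\star}^N=0$ for all $f\in C_c^\infty$. Integrating the second-order term by parts once and using $\nabla\pi_{\lambda,\star}^N=-\nabla V\,\pi_{\lambda,\star}^N$ converts $\int\mathrm{tr}(D\nabla^2 f)\,\md\pi_{\lambda,\star}^N$ into $\int\langle\nabla V,D\nabla f\rangle\,\md\pi_{\lambda,\star}^N$ (here symmetry of $D$ is used), which exactly cancels the first-order contribution $-\int\langle D\nabla V,\nabla f\rangle\,\md\pi_{\lambda,\star}^N$. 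Hence $\int\mathcal{L}f\,\md\pi_{\lambda,\star}^N=0$, establishing invariance. Well-posedness of the diffusion is already provided by Proposition~\ref{prop:strong_solution}.

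I would carry out this weak verification rather than the pointwise identity $\mathcal{L}^*\pi_{\lambda,\star}^N=0$ precisely because of the \emph{main obstacle}: under \textbf{A\ref{assumption_1}} the Moreau--Yosida envelope $U^\lambda$ is only guaranteed to be continuously differentiable with Lipschitz gradient (via \cite[Proposition 1]{durmus_proximal}) and need not be twice differentiable, so $\nabla^2 V$ may fail to exist on the non-smooth set. The weak formulation sidesteps this, since the integration by parts places all second derivatives on the smooth test function $f$ and only ever uses $\nabla V=\nabla U^\lambda$, which is globally defined and locally integrable. A secondary point to confirm is that $\pi_{\lambda,\star}^N$ is a genuine probability measure, i.e. $\int e^{-V}<\infty$; this follows because $U^\lambda$ inherits lower-boundedness and coercivity (and, under \textbf{A\ref{assumption_3}}, strong convexity), while the use of compactly supported test functions makes all boundary terms in the integration by parts vanish. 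Otherwise the argument is a routine mirror of the invariance proof in \citet{akyildiz2023interacting}.
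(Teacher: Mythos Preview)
Your proposal is correct and follows the same approach as the paper, which simply cites Proposition~2 of \citet{akyildiz2023interacting}; you have spelled out the underlying argument (identify the system as $\md Z_t=-D\nabla V(Z_t)\,\md t+\sqrt{2D}\,\md\mathbf{B}_t$ with constant block-diagonal $D$ and verify $\int\mathcal{L}f\,\md\pi_{\lambda,\star}^N=0$ by integration by parts). Your care about using the weak formulation because $U^\lambda$ is only $\mathcal{C}^1$ is a nice touch that the paper does not make explicit.
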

\begin{proof}
    The proof follows from Proposition 2 of \citet{akyildiz2023interacting}.
\end{proof}
Therefore, the system \eqref{eq:particles_proximal_1}-\eqref{eq:particles_proximal_2} has an invariant measure which admits
\begin{equation*}    
\pi_{\lambda,\Theta}^N(\theta) \propto \int_{\mathbb{R}^{d_x}}\dots \int_{\mathbb{R}^{d_x}} e^{-\sum_{i=1}^N U^{\lambda}(\theta, x_i)} \md x_1\dots \md{x_{N}} = \Big(\int_{\mathbb{R}^{d_x}} e^{-U^{\lambda}(\theta, x)}\md x\Big)^N = k_{\lambda}(\theta)^N,
\end{equation*}
as $\theta$-marginal and can therefore act as a global optimiser of $k_{\lambda}(\theta)$, or more precisely of $\log k_{\lambda}(\theta)$. 
That is, let $K_{\lambda}(\theta) = -\log k_{\lambda}(\theta)$, then $
    \pi_{\lambda,\Theta}^N(\theta)\propto \exp({-NK_{\lambda}(\theta)})$, concentrates around the minimiser of $K_{\lambda}(\theta)$, hence the maximiser of $k_{\lambda}(\theta)$ 
as $N\to\infty$. This is a classical setting in global optimisation, where $N$ acts as the inverse temperature parameter. We now analyse the rate at which $\pi_{\lambda, \Theta}$ concentrates around the maximiser of $k(\theta)$. 

\begin{proposition}[Concentration bound]\label{prop:concentration_bound}
    Let $\pi_{\lambda,\Theta}^N$ be as defined above and $\Bar{\theta}_{\star}$, $\Bar{\theta}_{\lambda,\star}$ be the maximisers of $k(\theta)$, $k_{\lambda}(\theta)$, respectively.  Then, under assumption \textbf{A\ref{assumption_1}}, the Lipschitzness of $g_2$ (\textbf{A\ref{assumption_1prime}}), the strong convexity assumption \textbf{A\ref{assumption_3}} and assumption \textbf{A\ref{assumption_4}}, it follows
    \begin{equation*}
W_2(\pi_{\lambda,\Theta}^N,\delta_{\Bar{\theta}_{\star}}) \leq W_2(\pi_{\lambda,\Theta}^N,\delta_{\Bar{\theta}_{\lambda,\star}}) + \Vert \Bar{\theta}_{\lambda,\star}-\Bar{\theta}_{\star}\Vert \leq \sqrt{\frac{d_{\theta}}{\mu N}} + \frac{\lambda}{\mu} \Big(\frac{\Vert g_2\Vert_{\textnormal{Lip}}^2}{2} A  +  B\Big) +\mathcal{O}(\lambda^2),
\end{equation*}
where $d_{\theta}$ is the dimension of the parameter space $\Theta$ and $\Vert g_2\Vert_{\textnormal{Lip}}$ is the Lipschitz constant for $g_2$.
\end{proposition}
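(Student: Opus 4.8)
The plan is to reduce the claimed two-sided bound to two ingredients: a concentration estimate for the strongly log-concave marginal $\pi_{\lambda,\Theta}^N$ around its mode, and the already-established minimiser-displacement bound of Proposition~\ref{prop:theta_convergence_lambda}. The first inequality in the statement is immediate. Since the $2$-Wasserstein distance between two Dirac masses is the Euclidean distance, $W_2(\delta_{\Bar{\theta}_{\lambda,\star}}, \delta_{\Bar{\theta}_{\star}}) = \Vert \Bar{\theta}_{\lambda,\star}-\Bar{\theta}_{\star}\Vert$, so the triangle inequality for $W_2$ applied at the intermediate point $\delta_{\Bar{\theta}_{\lambda,\star}}$ gives
\[
W_2(\pi_{\lambda,\Theta}^N,\delta_{\Bar{\theta}_{\star}}) \leq W_2(\pi_{\lambda,\Theta}^N,\delta_{\Bar{\theta}_{\lambda,\star}}) + \Vert \Bar{\theta}_{\lambda,\star}-\Bar{\theta}_{\star}\Vert.
\]

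For the first term I would first establish that $\pi_{\lambda,\Theta}^N \propto k_\lambda(\theta)^N = \exp(-N K_\lambda(\theta))$ is $N\mu$-strongly log-concave. This follows because $U^\lambda$ is $\mu$-strongly convex under \textbf{A\ref{assumption_3}}, so by the Prékopa--Leindler inequality \citep[Theorem 3.8]{saumard2014log} the marginal potential $K_\lambda$ is $\mu$-strongly convex, and raising $k_\lambda$ to the power $N$ scales the strong-convexity constant of the exponent to $N\mu$. Since $\Bar{\theta}_{\lambda,\star}$ is the minimiser of $K_\lambda$, hence the mode of $\pi_{\lambda,\Theta}^N$, the concentration of an $m$-strongly log-concave measure around its mode yields $W_2(\pi_{\lambda,\Theta}^N,\delta_{\Bar{\theta}_{\lambda,\star}})^2 = \mathbb{E}[\Vert \theta - \Bar{\theta}_{\lambda,\star}\Vert^2] \leq d_\theta/(N\mu)$, which is exactly \citep[Lemma A.8]{sinho_lemma_A8}. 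The short self-contained route is integration by parts, $\mathbb{E}[(\theta-\Bar{\theta}_{\lambda,\star})\cdot \nabla (NK_\lambda)(\theta)] = d_\theta$, combined with the strong-convexity bound $(\theta-\Bar{\theta}_{\lambda,\star})\cdot\nabla(NK_\lambda)(\theta)\geq N\mu\Vert\theta-\Bar{\theta}_{\lambda,\star}\Vert^2$ (valid because $\nabla(NK_\lambda)(\Bar{\theta}_{\lambda,\star})=0$), which together give the stated variance estimate.

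The second term is handled directly: Proposition~\ref{prop:theta_convergence_lambda}, whose hypotheses (\textbf{A\ref{assumption_1}}, \textbf{A\ref{assumption_1prime}}, \textbf{A\ref{assumption_3}}, \textbf{A\ref{assumption_4}}) coincide with those assumed here, gives $\Vert \Bar{\theta}_{\lambda,\star}-\Bar{\theta}_{\star}\Vert \leq \frac{\lambda}{\mu}\big(\frac{\Vert g_2\Vert_{\textnormal{Lip}}^2}{2}A+B\big) + \mathcal{O}(\lambda^2)$. Adding the two bounds recovers the right-hand side of the proposition.

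I expect no substantive obstacle here, since the result is essentially a triangle inequality wrapped around two previously secured estimates. The only point requiring care is that the concentration estimate must be taken against the \emph{mode} $\Bar{\theta}_{\lambda,\star}$ rather than the mean of $\pi_{\lambda,\Theta}^N$ (it is the mode that Proposition~\ref{prop:theta_convergence_lambda} controls); the integration-by-parts argument above achieves this without needing to locate the mean. The other mildly delicate step is transferring the $\mu$-strong convexity of $U^\lambda$ through the $x$-marginalisation, which is precisely what Prékopa--Leindler supplies.
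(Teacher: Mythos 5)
Your proposal is correct and follows essentially the same route as the paper: triangle inequality through $\delta_{\Bar{\theta}_{\lambda,\star}}$, Prékopa--Leindler to transfer the $\mu$-strong convexity of $U^\lambda$ through the $x$-marginalisation so that $\pi_{\lambda,\Theta}^N$ is $N\mu$-strongly log-concave, the Lemma A.8 concentration bound around the mode, and Proposition~\ref{prop:theta_convergence_lambda} for $\Vert\Bar{\theta}_{\lambda,\star}-\Bar{\theta}_{\star}\Vert$. Your added integration-by-parts justification of the mode-centred concentration estimate is a correct, self-contained substitute for the cited lemma but does not change the argument.
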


\begin{proof}
Using a form of the Prékopa-Leindler inequality for strong convexity \citep[Theorem 3.8]{saumard2014log}, $\pi_{\lambda, \Theta}^N$ is $N\mu$-strongly log-concave. Since it is also smooth, we can apply Lemma A.8 of \citet{sinho_lemma_A8} to obtain a convergence bound for $W_2(\pi_{\lambda,\Theta}^N,\delta_{\Bar{\theta}_{\lambda,\star}})$, 
\begin{equation}\label{eq:bound_11}
W_2(\pi_{\lambda,\Theta}^N,\delta_{\Bar{\theta}_{\lambda,\star}})\leq \sqrt{\frac{d_{\theta}}{\mu N}}.
\end{equation}
On the other hand, the 2-Wasserstein distance between two degenerate distributions satisfies
\begin{equation}\label{eq:bound_22}
W_2(\delta_{\Bar{\theta}_{\lambda,\star}},\delta_{\Bar{\theta}_{\star}}) = \Vert \Bar{\theta}_{\lambda,\star}-\Bar{\theta}_{\star}\Vert. 
\end{equation}
By the triangular inequality the Wasserstein distance $W_2(\pi_{\lambda,\Theta}^N,\delta_{\Bar{\theta}_{\star}})$ is upper bounded by the sum of \eqref{eq:bound_11}-\eqref{eq:bound_22}.
We then conclude using Proposition~\ref{prop:theta_convergence_lambda}.
\end{proof}

The main difference with earlier works in the previous concentration bound are the second and third terms, which result from the Moreau-Yosida approximation of the non-differentiable target density $\pi$.
Following on the assumptions made above and the smoothness of $\pi_{\lambda,\Theta}^N$, we have similar results to Propositions 4 and 5 of \citet{akyildiz2023interacting}, establishing exponential ergodicity of \eqref{eq:particles_proximal_1}-\eqref{eq:particles_proximal_2} and analysing the time-discretised scheme \eqref{eq:pip-myula_theta}-\eqref{eq:pip-myula_x}.

Combining all these results, we can provide specific bounds on the accuracy of MYIPLA in terms of $N,\gamma, n, \lambda$ and the convexity properties of $U$.
\begin{theorem}[Theorem~\ref{th:myipla_main_text} restated]
    Let \textbf{A\ref{assumption_1}}--\textbf{A\ref{assumption_4}} hold. Then for every $\lambda$ and $\gamma_0\in(0, \min\{(L_{g_1}+\lambda^{-1})^{-1}, 2\mu^{-1}\})$ there exist constants $C_1>0$ independent of $t,n,N,\gamma,\lambda,d_{\theta}, d_x$ such that for every $\gamma\in(0,\gamma_0]$, one has
\begin{align*}
    \mathbb{E}[\Vert\theta_n^N-\Bar{\theta}_{\star}\Vert^2]^{1/2}\leq &\sqrt{\frac{d_{\theta}}{N\mu}} + \frac{\lambda}{\mu} \Big(\frac{\Vert g_2\Vert_{\textnormal{Lip}}^2}{2} A  +  B\Big) +e^{-\mu n\gamma}\bigg(\mathbb{E}[\Vert Z_0^N-z_{\star}\Vert^2]^{1/2}+\Big(\frac{d_xN+d_{\theta}}{N\mu}\Big)^{1/2}\bigg) \\&+ C_1(1+\sqrt{d_{\theta}/N+d_x})\gamma^{1/2} +\mathcal{O}(\lambda^2)
\end{align*}
for all $n\in\mathbb{N}$, where $z_\star = (\theta_*, N^{-1/2}x_\star,\dots,N^{-1/2}x_\star)$ and $(\theta_\star, x_\star)$ is the minimiser of $U^{\lambda}$.
\end{theorem}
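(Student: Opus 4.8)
The plan is to bound $\bE[\Vert\theta_n^N-\Bar\theta_\star\Vert^2]^{1/2}=W_2(\delta_{\Bar\theta_\star},\mathcal{L}(\theta_n^N))$ by inserting two intermediate laws and invoking the triangle inequality for $W_2$, exactly as in the decomposition~\eqref{eq:splitting}, so that the error splits into a \emph{concentration}, a \emph{convergence}, and a \emph{discretisation} contribution. The concentration term $W_2(\delta_{\Bar\theta_\star},\pi_{\lambda,\Theta}^N)$ is already controlled by Proposition~\ref{prop:concentration_bound}, which supplies both the statistical term $\sqrt{d_\theta/(N\mu)}$ and the Moreau--Yosida bias $(\lambda/\mu)(\Vert g_2\Vert_{\textnormal{Lip}}^2 A/2 + B)+\mathcal{O}(\lambda^2)$; I would therefore treat this term as known and concentrate the remaining effort on the other two.

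For the convergence term $W_2(\pi_{\lambda,\Theta}^N,\mathcal{L}(\boldsymbol\theta_{n\gamma}^N))$, I would use that, by Proposition~\ref{prop:invariant_measure}, $\pi_{\lambda,\star}^N$ is invariant for the SDE~\eqref{eq:particles_proximal_1}--\eqref{eq:particles_proximal_2}, and that in the rescaled coordinates $Z^N=(\theta,N^{-1/2}x_1,\dots,N^{-1/2}x_N)$ the joint potential is $\mu$-strongly convex (via Prékopa--Leindler together with the strong convexity of $U^\lambda$ granted by the remark following \textbf{A\ref{assumption_3}}). A synchronous coupling of two copies of the diffusion, one started at stationarity and one at the initial law, then yields exponential contraction at rate $\mu$, producing the factor $e^{-\mu n\gamma}$ multiplying $\bE[\Vert Z_0^N-z_\star\Vert^2]^{1/2}+((d_xN+d_\theta)/(N\mu))^{1/2}$. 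This is the direct analogue of Proposition~4 of~\citet{akyildiz2023interacting}, whose argument transfers once smoothness and strong convexity of $U^\lambda$ are in place.

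For the discretisation term $W_2(\mathcal{L}(\theta_n^N),\mathcal{L}(\boldsymbol\theta_{n\gamma}^N))$, I would follow the classical one-step Euler--Maruyama analysis for strongly log-concave Langevin dynamics. The required ingredients are already available: by \textbf{A\ref{assumption_1}} the drift $\nabla U^\lambda=\nabla g_1+\nabla g_2^\lambda$ is $L$-Lipschitz with $L\le L_{g_1}+\lambda^{-1}$, and it is $\mu$-strongly monotone. A synchronous coupling between the discrete chain and the continuous solution, combined with uniform-in-time second-moment bounds for the particle system, gives a per-step error of order $\gamma^{1/2}$ with dimension dependence $(1+\sqrt{d_\theta/N+d_x})$, mirroring Proposition~5 of~\citet{akyildiz2023interacting}. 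Summing the three bounds and absorbing constants into $C_1$ yields the claimed inequality.

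The hard part is the concentration contribution, and specifically the bound of Proposition~\ref{prop:theta_convergence_lambda} on $\Vert\Bar\theta_\star-\Bar\theta_{\lambda,\star}\Vert$, because the maximisers of $p_\theta(y)$ and of its surrogate $p_\theta^\lambda(y)$ genuinely differ. Controlling this gap requires first a first-order optimality comparison that uses the $\mu$-strong log-concavity of the $\theta$-marginal $k_\lambda$ to reduce the problem to bounding $\Vert\nabla k_\lambda(\Bar\theta_\star)\Vert/k_\lambda(\Bar\theta_\star)$, and then a careful expansion of the proximal map via the implicit function theorem, giving $\prox_{g_2}^\lambda(v)=v-\lambda\nabla g_2(v)-\lambda^2\nabla^2 g_2(v)\nabla g_2(v)+\mathcal{O}(\lambda^3)$, which is only valid on the measure-one set where $g_2$ is twice differentiable (Remark~\ref{remark_differentiability}, Alexandrov's theorem). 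Restricting the integrals to the differentiability set, bounding the increment $1-e^{-\lambda\Vert g_2\Vert_{\textnormal{Lip}}^2/2}$, and combining it with the $\mathcal{O}(\lambda)$ Hessian term under \textbf{A\ref{assumption_4}} is the delicate step that produces the leading $\mathcal{O}(\lambda)$ bias, and it is where most of the technical work concentrates.
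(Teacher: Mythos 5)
Your proposal follows the paper's proof exactly: the same triangle-inequality decomposition into concentration, convergence, and discretisation terms, with the concentration part handled by Proposition~\ref{prop:concentration_bound} (resting on the implicit-function-theorem expansion of the proximal map in Proposition~\ref{prop:theta_convergence_lambda}) and the other two terms imported from Propositions~4 and~5 of \citet{akyildiz2023interacting}. You have also correctly identified that the genuinely novel and delicate step is the bound on $\Vert\Bar{\theta}_\star - \Bar{\theta}_{\lambda,\star}\Vert$; no gaps.
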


\begin{proof}
Let us denote by $\mathcal{L}(\theta_n^N)$ the $\theta$-marginal of the measure associated to the law of MYIPLA and  $\mathcal{L}(\boldsymbol{\theta}_t^N)$ the $\theta$-marginal of the measure associated to the law of interacting particle system \eqref{eq:particles_proximal_1}-\eqref{eq:particles_proximal_2} at time $t$.
The expectation of the norm can be decomposed  into a term involving the difference between the maximisers of the marginal maximum likelihood of the Moreau-Yosida approximation of the joint density and the original density, a term concerning the concentration of $\pi_{\lambda,\Theta}^N$ around the marginal maximum likelihood estimator $\Bar{\theta}_\star$, a term describing the convergence of \eqref{eq:particles_proximal_1}-\eqref{eq:particles_proximal_2} to its invariant measure, and a term involving the error induced by the time discretisation.

The first two terms are upper bounded by Proposition \ref{prop:concentration_bound}, the third and fourth inequalities result from Proposition 4 and Proposition 5 of \citet{akyildiz2023interacting}.
\end{proof}}

\subsection{PIPULA}
\label{ap:pipula_analysis}
{\normalsize
To study the theoretical guarantees of PIPULA, we observe that PIPULA is equivalent to MYIPLA when $\gamma = \lambda$ and $g_1=0$. 
We recall that in the case $g_1 = 0$, $\nabla U^{\gamma}$ is Lipschitz in both variables with constant $L\leq \gamma^{-1}$ \citep{durmus_proximal}.
To obtain a similar result to Theorem~\ref{th:myipla_main_text} we introduce the following additional assumption.

\begin{assumptionnew}\label{assumption_22}
    We assume that there exists $\mu>0$ such that $\langle v-v', \prox_U^{\gamma}(v) - \prox_U^{\gamma}(v')\rangle\leq (1-\mu) \Vert v-v'\Vert^2$, for all $v,v'\in \mathbb{R}^{d_{\theta}}\times \mathbb{R}^{d_x}$.
\end{assumptionnew}

\textbf{B\ref{assumption_22}} implies that $\nabla U^{\gamma}$ is $\mu$-strongly convex, i.e. $\langle v-v', \nabla U^{\gamma}(v) - \nabla U^{\gamma}(v')\rangle \geq \mu \Vert v-v'\Vert^2$ for all $v,v'\in \mathbb{R}^{d_x}\times \mathbb{R}^{d_{\theta}}$.
In addition, since $U$ is a proper convex function we have that $U$ is twice differentiable almost everywhere (see the discussion below \textbf{A\ref{assumption_4}}).
Let $\Omega\subset\mathbb{R}^{d_{\theta}}\times \mathbb{R}^{d_x}$ denote the points where $U$ is twice differentiable, $\Bar{\theta}_{\star}$ be the maximiser of $k$  
and $\Tilde{\Omega} = \Omega\cap (\{\Bar{\theta}_{\star}\}\times \mathbb{R}^{d_x})$.

Using a similar strategy to that used to obtain the error bound in Theorem \ref{th:myipla_main_text}, we obtain the following result for PIPULA.

\begin{corollary}\label{th:pipula_theorem}
    Let \textbf{A\ref{assumption_1}}--\textbf{A\ref{assumption_1prime}}, \textbf{A\ref{assumption_4}} and
\textbf{B\ref{assumption_22}} hold. Then for every $\gamma_0\in(0, 2\mu^{-1})$ there exist constants $C_1>0$ independent of $t,n,N,\gamma,\lambda,d_{\theta}, d_x$ such that for every $\gamma\in(0,\gamma_0]$, one has
\begin{align*}
    \mathbb{E}[\Vert\theta_n^N-\Bar{\theta}_{\star}\Vert^2]^{1/2}\leq &\sqrt{\frac{d_{\theta}}{N\mu}} + \frac{\gamma}{\mu} \Big(\frac{\Vert g_2\Vert_{\textnormal{Lip}}^2}{2} A  +  B\Big)\\
    &+e^{-\mu n\gamma}\bigg(\mathbb{E}[\Vert Z_0^N-z_{\star}\Vert^2]^{1/2}+\Big(\frac{d_xN+d_{\theta}}{N\mu}\Big)^{1/2}\bigg) \\&+ C_1(1+\sqrt{d_{\theta}/N+d_x})\gamma^{1/2} +\mathcal{O}(\gamma^2)
\end{align*}
for all $n\in\mathbb{N}$, where $z_\star = (\theta_*, N^{-1/2}x_\star,\dots,N^{-1/2}x_\star)$ and $(\theta_\star, x_\star)$ is the minimiser of $U^{\gamma}$.
\end{corollary}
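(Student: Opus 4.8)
The plan is to exploit the equivalence stated just above: PIPULA is exactly MYIPLA with $g_1 = 0$ under the identification $\lambda = \gamma$, so I would reuse the error decomposition \eqref{eq:splitting} verbatim, with $\pi_{\gamma,\Theta}^N$ in place of $\pi_{\lambda,\Theta}^N$, and bound the three resulting terms (concentration, convergence, discretisation) one at a time. The only structural changes are (i) the source of strong convexity --- in MYIPLA this came from \textbf{A\ref{assumption_3}}, whereas here $g_1 = 0$ and the $\mu$-strong convexity of $U^\gamma$ is supplied instead by \textbf{B\ref{assumption_22}} --- and (ii) the effective smoothness constant, which with $g_1 = 0$ is $L \leq \gamma^{-1}$ by \citet[Proposition 1]{durmus_proximal}.

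For the \emph{concentration} term I would reproduce Propositions \ref{prop:theta_convergence_lambda} and \ref{prop:concentration_bound} with $\lambda$ replaced by $\gamma$. The Taylor expansion \eqref{eq:taylors2} of $\prox_{g_2}^\gamma$ and the bounds $A,B$ in \textbf{A\ref{assumption_4}} depend only on the second-order structure of $g_2$ (Remark \ref{remark_differentiability}) and carry over unchanged; note that with $g_1 = 0$ one has $\nabla_\theta U = \nabla_\theta g_2$, so the first bound in \textbf{A\ref{assumption_4}} reads $\mathbb{E}_X[\Vert\nabla_\theta g_2(\bar\theta_\star, X)\Vert] \leq A$ and feeds directly into \eqref{eqn2}. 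The strong log-concavity of the $\theta$-marginal $\pi_{\gamma,\Theta}^N$, needed both for the minimiser-distance argument and for the Pr\'ekopa--Leindler/Lemma A.8 step \eqref{eq:concentration_term_aux}, now follows from the $\mu$-strong convexity of $U^\gamma$ implied by \textbf{B\ref{assumption_22}}. This yields the first two summands $\sqrt{d_\theta/(N\mu)} + \tfrac{\gamma}{\mu}(\tfrac12\Vert g_2\Vert_{\textnormal{Lip}}^2 A + B) + \mathcal{O}(\gamma^2)$.

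The \emph{convergence} and \emph{discretisation} terms I would obtain by invoking the exponential-ergodicity and one-step discretisation estimates (Propositions 4 and 5 of \citet{akyildiz2023interacting}), now instantiated with contraction rate $\mu$ from \textbf{B\ref{assumption_22}} and smoothness $L \leq \gamma^{-1}$, giving the factors $e^{-\mu n\gamma}(\mathbb{E}[\Vert Z_0^N - z_\star\Vert^2]^{1/2} + ((d_x N + d_\theta)/(N\mu))^{1/2})$ and $C_1(1 + \sqrt{d_\theta/N + d_x})\gamma^{1/2}$, with \textbf{A\ref{assumption_2}} controlling the initial second moment. Summing the three bounds via the triangle inequality delivers the claimed estimate.

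The main obstacle is the step-size condition. Feeding $g_1 = 0$, $\lambda = \gamma$ into the MYIPLA requirement $\gamma_0 < (L_{g_1} + \lambda^{-1})^{-1}$ gives the vacuous constraint $\gamma_0 < \gamma$, so the usual smoothness-based restriction cannot be used directly. The resolution, which I would make explicit, is that at $\lambda = \gamma$ the forward Euler step on $\nabla U^\gamma$ collapses to the proximal map: since $-\gamma\nabla U^\gamma(v) = \prox_U^\gamma(v) - v$, the deterministic part of the PIPULA update is exactly $v \mapsto \prox_U^\gamma(v)$, which is firmly nonexpansive for every $\gamma > 0$ and, under \textbf{B\ref{assumption_22}}, a genuine contraction. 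This implicit/backward structure is unconditionally stable, so no upper bound on $\gamma$ is inherited from the $\gamma^{-1}$ smoothness, and the only surviving constraint is the strong-convexity one $\gamma_0 < 2\mu^{-1}$, exactly as in MYIPLA, which guarantees the discretised chain contracts at the rate $e^{-\mu n\gamma}$ used in the convergence term. Verifying carefully that the proximal step does not reintroduce a stability restriction is the key technical point.
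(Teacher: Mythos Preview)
Your approach matches the paper's: reuse the MYIPLA decomposition with $\lambda=\gamma$, $g_1=0$, draw strong convexity of $U^\gamma$ from \textbf{B\ref{assumption_22}} rather than \textbf{A\ref{assumption_3}}, and rework only the discretisation step because the smoothness restriction $\gamma_0 < L^{-1}=\gamma$ becomes vacuous. The paper's resolution of that step is the same phenomenon you identify, but packaged as an inequality rather than a stability narrative: instead of noting that the drift step $v\mapsto\prox_U^\gamma(v)$ is firmly nonexpansive, it records the (algebraically equivalent) co-coercivity of $\nabla U^\gamma$,
\[
\langle \nabla U^\gamma(x)-\nabla U^\gamma(y),\, x-y\rangle \;\geq\; \gamma\,\|\nabla U^\gamma(x)-\nabla U^\gamma(y)\|^2,
\]
and plugs this directly into the one-step error estimate of \citet[Lemma B.1]{akyildiz2023interacting} to recover the Proposition 5 bound under only $\gamma_0\in(0,2\mu^{-1})$. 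Your ``unconditional stability'' intuition is correct but stops one step short: stability of the discrete chain is not yet a discretisation error bound, and what actually makes the argument go through is that co-coercivity lets you absorb the term $\gamma^2\|\nabla U^\gamma(x)-\nabla U^\gamma(y)\|^2$ in the one-step expansion---which via the naive Lipschitz estimate $L\leq\gamma^{-1}$ would return $\|x-y\|^2$ and destroy the contraction---into the inner-product term already present in the proof.
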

\begin{proof}
    Under
\textbf{A\ref{assumption_1}}--\textbf{A\ref{assumption_1prime}}, \textbf{A\ref{assumption_4}} and
\textbf{B\ref{assumption_22}} Propositions \ref{prop:theta_convergence_lambda}, \ref{prop:strong_solution}, \ref{prop:invariant_measure},
\ref{prop:concentration_bound} and Proposition 4 of \citet{akyildiz2023interacting} hold with $\lambda=\gamma$.
To obtain a bound on the discretisation error observe that under \textbf{B\ref{assumption_22}} $U^\gamma$ is strongly convex, since $\nabla U^\gamma$ is also Lipschitz continuous with constant $L\leq \gamma^{-1}$ \citep[Proposition 1]{durmus_proximal}, we have that $\nabla U^\gamma$ is co-coercive (see Theorem 1 in \cite{gao2018properties})
\begin{align}
\label{eq:cocoercive}
\langle \nabla U^\gamma(x)-\nabla U^\gamma(y),x-y \rangle &\geq \frac{1}{L}\Vert \nabla U^\gamma(x)-\nabla U^\gamma(y) \Vert^2\\
&\geq \gamma\Vert \nabla U^\gamma(x)-\nabla U^\gamma(y) \Vert^2,\notag
\end{align}
for every $x,y \in \mathbb{R}^d$. By plugging this result into the proof of \citet[Lemma B.1]{akyildiz2023interacting} we obtain an equivalent result to that of \citet[Proposition 5]{akyildiz2023interacting}:
\begin{align*}
    W_2(\mathcal{L}(\theta_{n}^N), \mathcal{L}(\boldsymbol{\theta}_{n\gamma}^N)) \leq C_1 (1+\sqrt{d_\theta/N+d_x})\gamma^{1/2},
\end{align*}
where $C_1>0$ is independent of $t,n,N,\gamma, d_\theta, d_x$ and $\gamma \in (0, \gamma_0)$ with $\gamma_0 \in (0, 2\mu^{-1})$.
\end{proof}}

\subsection{PIPGLA}
\label{app:pip_gla analysis}
{\normalsize
Recall that PIPGLA is given by the following scheme
\begin{align*}
    \theta_{n+1/2}^N &= \theta_{n}^N  - \frac{\gamma}{N}\sum_{j=1}^N \nabla_{\theta} g_1(\theta_n^N, X_n^{j, N}) +\sqrt{\frac{2\gamma}{N}}\xi_{n+1}^{0, N},\\
    X_{n+1/2}^{i, N} &= X_{n}^{i, N} -\gamma \nabla_{x} g_1(\theta_n^N, X_n^{i, N})+\sqrt{2\gamma}\;\xi_{n+1}^{i, N},\\
    \theta_{n+1}^N &= \frac{1}{N}\sum_{i=1}^N\prox_{g_2}^{\lambda} \bigg(\theta_{n+1/2}^N, X_{n+1/2}^{i, N}\bigg)_{\theta},\quad
    X_{n+1}^{i, N} = \prox_{g_2}^{\lambda} \big(\theta_{n+1/2}^N, X_{n+1/2}^{i, N}\big)_{x}.
\end{align*}
We want to prove a bound for $W_2(\mathcal{L}(\theta^N_n), \delta_{\Bar{\theta}_\star})$, where $\mathcal{L}(\theta^N_n)$ denotes the distribution of the random variable $\theta_n^N$. Applying the triangular inequality for Wasserstein distances
\begin{equation}\label{eq:wasserstein_distance_decomposition}
    W_2(\mathcal{L}({\theta}^N_n), \delta_{\Bar{\theta}_\star})\leq W_2(\delta_{\Bar{\theta}_{\star}}, \pi_{\Theta}^N) + W_2(\pi_{ \Theta}^N, \mathcal{L}({\theta_n^N})).
\end{equation}
The \emph{concentration} term $W_2(\delta_{\Bar{\theta}_{\star}}, \pi_{\Theta}^N)$ is analysed in Lemma \ref{lemma:aux_first_term_pipgla} and Theorem \ref{th:pipgla_appendix}. For the term $W_2(\pi_{ \Theta}^N, \mathcal{L}({\theta_n^N}))$, we derive a novel bound. The roadmap for obtaining this latter bound is as follows:
\begin{enumerate}
    \item We first analyse the PGLA updates targetting a distribution $\pi_\lambda \propto \exp(-(g_1+ g_2^\lambda))$ in $\mathbb{R}^d$ given by
\begin{align*}
    Z_{n} &= X_{n} -\gamma \nabla g_1(X_n),\\
    Y_{n} & = Z_{n} +\sqrt{2\gamma}\;\xi_{n},\\
    X_{n+1} &= \text{prox}_{g_2}^{\lambda} (Y_{n}).
\end{align*}
We provide an error on $W_2(\mathcal{L}({X_{n+1}}),\pi)$ where $\pi_\lambda \propto \exp(-(g_1+ g_2))$ for arbitrary $\lambda$ and $\gamma$. One key idea for this is the use of the minimal section which quantifies the least norm element in the subdifferential set $\partial g(x)$ introduced in Definition \ref{definition:minimal_section}.
\item  We show in Corollary \ref{remark:pipgla_rewrite_N} that the results established in 1. can be applied to a proximal gradient scheme in which the noise is scaled by $\sqrt{N}$
\begin{align*}
V_{n+1/2} &= V_{n} -\gamma\nabla g_1(V_n)+\sqrt{\frac{2\gamma}{N}}\;\xi_{n},\\
V_{n+1} &= \text{prox}_{g_2}^{\lambda} (V_{n+1/2})= V_{n+1/2}-\lambda \nabla g_2^\lambda(V_{n+1/2}).
\end{align*}

\item Taking $Z_n = (\theta_n^N, N^{-1/2}X_n^{1, N},\dots, N^{-1/2}X_n^{N, N})$, PIPGLA can be expressed as 
\begin{align*}
    Z_{n+1/2} &= Z_n -\gamma\nabla G_1(Z_n) + \sqrt{\frac{2\gamma}{N}} \xi_{n+1},\\
    Z_{n+1}&= \text{prox}_{G_2}^{\lambda} (Z_{n+1/2}) =Z_{n+1/2} -\lambda\nabla G_2^\lambda(Z_{n+1/2})\notag.
\end{align*}
where $G_1$ and $G_2$ are defined as 
\begin{align*}
G_1(z_\theta, z_1,\dots, z_N) = \frac{1}{N}\sum_{i=1}^N  g_1(z_\theta, \sqrt{N} z_i), \\
G_2^\lambda(z_\theta, z_1,\dots, z_N) = \frac{1}{N}\sum_{i=1}^N  g_2^\lambda(z_\theta, \sqrt{N} z_i).
\end{align*}
Since $G_1$ and $G_2^\lambda$ preserve the properties (strong convexity and Lipschitzness) of $g_1$ and $g_2^\lambda$ (see Corollary \ref{cor:pipgla_final}), we use the result in 2. to bound the error $W_2(\mathcal{L}({Z_n}), \pi^N)$. Finally, using a data processing inequality it follows $W_2(\mathcal{L}({\theta_n^N}), \pi_{\Theta}^N)\leq W_2(\mathcal{L}({Z_n}), \pi^N)$.
\end{enumerate}

We begin by presenting some results that will be useful for proving a bound for $W_2(\pi_{\Theta}^N, \mathcal{L}({\theta_n^N}))$.

\subsubsection{Error bound for proximal gradient Langevin algorithm}

We collect here a number of results adapted from \cite{NEURIPS2019_6a8018b3_salim} which show convergence of the proximal gradient Langevin algorithm (PGLA) introduced in \cite{NEURIPS2019_6a8018b3_salim} and recalled in Section~\ref{sec:pgla}.
In particular, we derive convergence of the splitting scheme for general $\lambda$ (which includes as a special case $\lambda=\gamma$) when both $\nabla g_1$ and $\prox_{g_2}^\lambda$ can be computed exactly (which is a special case of the result in \cite{NEURIPS2019_6a8018b3_salim} in the case $\lambda=\gamma$).

For convenience we consider the following decomposition of the PGLA update targeting a distribution $\muu \propto \exp(-(g_1+ g_2^\lambda))$ over $\real^d$
\begin{align*}
    Z_{n} &= X_{n} -\gamma \nabla g_1(X_n),\\
    Y_{n} & = Z_{n} +\sqrt{2\gamma}\;\xi_{n},\\
    X_{n+1} &= \prox_{g_2}^{\lambda} (Y_{n}).
\end{align*}
We are going to derive a bound for $W_2(\mathcal{L}({X_{n+1}}),\pi)$, where $\mathcal{L}({X_{n+1}})$ denotes the distribution of the random variable $X_{n+1}$. For every $\pi$-integrable function $g:\mathbb{R}^{d}\to \mathbb{R}$, we define $\mathcal{E}_g(\pi)=\int g\md \pi$ and we denote $\mathcal{F}= \mathcal{E}_{g_1+g_2} + \mathcal{H}$, where $\mathcal{H}$ is the negative entropy $\mathcal{H}(\pi)=\int \log\pi\;\md \pi$. We also introduce the subdifferential of a convex function and its minimal section, since we will use them for our proofs.
\begin{defin}\label{definition:minimal_section}
[Subdifferential and minimal section]
   \em For any convex function $g:\mathbb{R}^d\to\mathbb{R}$, its subdifferential evaluated at $x$ is the set
   \begin{equation*}
       \partial g(x) := \{ d\in\mathbb{R}^d \;|\; g(x)+ \langle d, y - x \rangle \leq g(y) \forall y\in\mathbb{R}^d\}.
   \end{equation*}
   Thanks to \citet[Proposition 16.4]{combettes_bauschke_2017}, we have that $\partial g(x)$ is a nonempty closed convex set. So the projection of 0 onto $\partial g(x)$, that is, the least norm element in the set  $\partial g(x)$, is well defined, and we refer to this element as $\nabla^0 g(x)$. Following \citet[Section 3.1]{NEURIPS2019_6a8018b3_salim}, we name the function $\nabla^0 g: \mathbb{R}^d\to\mathbb{R}$ the minimal section of $\partial g$.
   \em
   \label{def:subdiff}
\end{defin}

Following \cite{NEURIPS2019_6a8018b3_salim}, we derive our results under the following assumptions on $g_1, g_2$.

\begin{assumptiongla2}\label{assumption_1_gla2}
    We assume that $g_1\in\mathcal{C}^1$ is convex, gradient Lipschitz with constant $L_{g_1}$ and lower bounded, and $g_2$ is proper, convex, lower semi-continuous and lower bounded. 
\end{assumptiongla2}

\begin{assumptiongla2}\label{assumption_3_gla}
    $g_1$ is $\mu$-strongly convex.
\end{assumptiongla2}

\begin{assumptiongla2}\label{assumption_2_gla2}
    Assume that $\Vert\nabla^0 g_2(x)\Vert^2\leq C$ for every $x\in\mathbb{R}^d$.
\end{assumptiongla2}

In particular, \textbf{D\ref{assumption_1_gla2}} and \textbf{D\ref{assumption_3_gla}} are equivalent to \textbf{A\ref{assumption_1}} and \textbf{A\ref{assumption_3}} for the target $\pi_\lambda(\theta, x)\propto \exp(-U^\lambda(\theta, x))$. Similarly, \textbf{D\ref{assumption_2_gla2}} is equivalent to \textbf{C\ref{assumption_1_gla}}.

\begin{lemma}\label{lemma:pipgla_lemma_1}
    Let \textbf{D\ref{assumption_1_gla2}} and \textbf{D\ref{assumption_3_gla}} hold and assume $\gamma\leq 1/L_{g_1}$. Then, for all $n\in\mathbb{N}$ 
\begin{equation*}
    2\gamma\big[\mathcal{E}_{g_1}(\mathcal{L}({Z_n}))-\mathcal{E}_{g_1}(\pi)\big] \leq (1-\gamma\mu)W_2^2( \mathcal{L}({X_n}), \pi) - W_2^2( \mathcal{L}({Z_n}), \pi).
\end{equation*}
\end{lemma}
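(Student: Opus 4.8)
The plan is to reduce the statement to a deterministic, pointwise inequality for a single gradient-descent step and then integrate it against an optimal coupling. First I would fix realisations $x_n$ and $x$, set $z_n = x_n - \gamma\nabla g_1(x_n)$, and establish the elementary bound
\begin{equation*}
2\gamma\big(g_1(z_n) - g_1(x)\big) \le (1-\gamma\mu)\|x_n - x\|^2 - \|z_n - x\|^2.
\end{equation*}
This follows by combining two standard consequences of \textbf{D\ref{assumption_1_gla2}} and \textbf{D\ref{assumption_3_gla}}: the descent lemma from $L_{g_1}$-smoothness, which for $\gamma\le 1/L_{g_1}$ gives $g_1(z_n)\le g_1(x_n) - \tfrac{\gamma}{2}\|\nabla g_1(x_n)\|^2$, together with the $\mu$-strong convexity inequality evaluated at the generic point $x$, namely $g_1(x_n)\le g_1(x) + \langle\nabla g_1(x_n),\, x_n - x\rangle - \tfrac{\mu}{2}\|x_n - x\|^2$. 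Adding these and substituting the inner product via the algebraic identity $\|z_n - x\|^2 = \|x_n-x\|^2 - 2\gamma\langle\nabla g_1(x_n),\, x_n-x\rangle + \gamma^2\|\nabla g_1(x_n)\|^2$ cancels the gradient-norm terms and yields the displayed bound.

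Next I would lift this to the level of measures. Let $(X_n, X)$ be an optimal coupling of $\mathcal{L}(X_n)$ and $\pi$, so that $\mathbb{E}[\|X_n - X\|^2] = W_2^2(\mathcal{L}(X_n),\pi)$, and recall $Z_n = X_n - \gamma\nabla g_1(X_n)$. Applying the pointwise inequality along this coupling and taking expectations gives
\begin{equation*}
2\gamma\big(\mathcal{E}_{g_1}(\mathcal{L}(Z_n)) - \mathcal{E}_{g_1}(\pi)\big) \le (1-\gamma\mu)\,W_2^2(\mathcal{L}(X_n),\pi) - \mathbb{E}[\|Z_n - X\|^2],
\end{equation*}
where I used $\mathbb{E}[g_1(Z_n)] = \mathcal{E}_{g_1}(\mathcal{L}(Z_n))$ and $\mathbb{E}[g_1(X)] = \mathcal{E}_{g_1}(\pi)$. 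Finally, since $(Z_n, X)$ is a (not necessarily optimal) coupling of $\mathcal{L}(Z_n)$ and $\pi$, one has $W_2^2(\mathcal{L}(Z_n),\pi) \le \mathbb{E}[\|Z_n - X\|^2]$; replacing the last term by $-W_2^2(\mathcal{L}(Z_n),\pi)$ only relaxes the bound and produces exactly the claimed inequality.

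The step I expect to require the most care is the handling of $\mathcal{E}_{g_1}(\pi)$. Unlike the classical optimisation descent lemma, the comparison is not against the minimiser of $g_1$ but against its integral over the target; this is precisely why I apply strong convexity at a \emph{generic} point $x$ rather than at a fixed minimiser, so that integrating against the random target $X\sim\pi$ reproduces $\mathcal{E}_{g_1}(\pi)$ exactly. Note that the argument makes no use of the specific form of $\pi$, so the lemma in fact holds for any target with finite second moment and finite $\mathcal{E}_{g_1}(\pi)$. The only other subtlety is the direction of the final estimate: because the term $\mathbb{E}[\|Z_n - X\|^2]$ enters with a negative sign, bounding it from below by $W_2^2(\mathcal{L}(Z_n),\pi)$ is what legitimises using the suboptimal coupling $(Z_n, X)$ without weakening the conclusion.
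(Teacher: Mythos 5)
Your proposal is correct and follows essentially the same route as the paper's proof: both rest on the descent lemma under $\gamma\le 1/L_{g_1}$, strong convexity evaluated at a generic point $a$ (so that integrating against $X\sim\pi$ yields $\mathcal{E}_{g_1}(\pi)$), the expansion of $\Vert z_n-a\Vert^2$ to cancel the gradient-norm terms, and then passage to an optimal coupling with the suboptimal-coupling bound $W_2^2(\mathcal{L}(Z_n),\pi)\le\mathbb{E}[\Vert Z_n-X\Vert^2]$ absorbing the negatively signed term. The only cosmetic difference is that you fix the optimal coupling of $\mathcal{L}(X_n)$ and $\pi$ at the outset, whereas the paper works with an arbitrary coupling and takes the infimum at the end.
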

\begin{proof}
Let $a\in\mathbb{R}^{d}$, using that $g_1$ is $\mu$--strongly convex
\begin{align*}
\Vert Z_n - a\Vert^2 &= \Vert X_{n} -a\Vert^2 -2\gamma \langle \nabla g_1(X_n), X_n-a \rangle + \gamma^2\Vert \nabla g_1(X_n)\Vert^2\\
&\leq \Vert X_{n} -a\Vert^2 +2\gamma \big( g_1(a)- g_1(X_n)-\frac{\mu}{2} \Vert X_n-a \Vert^2\big) + \gamma^2\Vert \nabla g_1(X_n)\Vert^2\\
&=(1-\gamma\mu)\Vert X_{n} -a\Vert^2 +2\gamma \big( g_1(a)- g_1(X_n)\big) + \gamma^2\Vert \nabla g_1(X_n)\Vert^2.\numberthis\label{eq:lemma_step}
\end{align*}
Since $g_1$ is gradient Lipschitz with constant $L_{g_1}$ and $Z_n-X_n = -\gamma\nabla g_1(X_n)$
\begin{align*}
    g_1(Z_n)&\leq g_1(X_n) + \langle \nabla g_1(X_n), Z_n-X_n\rangle + \frac{L_{g_1}}{2}\Vert Z_n -X_n\Vert^2\\
    &=g_1(X_n) -\gamma\Big( 1-\frac{\gamma L_{g_1}}{2}\Big)\Vert\nabla g_1(X_n)\Vert^2 \\
    &\leq g_1(X_n) -\frac{\gamma}{2}\Vert\nabla g_1(X_n)\Vert^2,
\end{align*}
where in the last inequality we have used that $\gamma\leq 1/L_{g_1}$. Reordering terms gives the following upper bound
\begin{equation*}
\gamma^2\Vert\nabla g_1(X_n)\Vert^2 \leq 2\gamma\big(g_1(X_n)-g_1(Z_n)\big).
\end{equation*}
Plugging this into \eqref{eq:lemma_step} we have
\begin{equation*}
\Vert Z_n - a\Vert^2\leq(1-\gamma\mu)\Vert X_{n} -a\Vert^2 +2\gamma \big( g_1(a)- g_1(Z_n)\big).
\end{equation*}
It is important to note, the above inequality is true for any $a$, $X_n$, and $Z_n$ where $Z_n = X_n - \gamma \nabla g_1(X_n)$ (as deterministic vectors with appropriate dimension). Now, let $(a, X_n) \sim \nu(\md a, \md x_n)$ with marginal $\nu^a(\md a) = \pi(\md a)$. Taking conditional expectation w.r.t. $Z_n$ given $\sigma(a, X_n)$, we obtain
\begin{equation*}
    \mathbb{E}\big[\Vert Z_n - a\Vert^2\big | \sigma(a, X_n) ]\leq(1-\gamma\mu) \| X_{n} -a\|^2 +2\gamma \big( g_1(a) - \bE \big[ g_1(Z_n) | \sigma(a, X_n)\big]\big).
\end{equation*}
By taking the unconditional expectation (i.e. w.r.t. $\nu$), we get
\begin{equation*}
    \mathbb{E}\big[\Vert Z_n - a\Vert^2\big ]\leq(1-\gamma\mu) \bE_\nu\big[\Vert X_{n} -a\Vert^2\big] +2\gamma \big( \mathcal{E}_{g_1}(\pi)- \mathcal{E}_{g_1}(\mathcal{L}({Z_n}))\big).
\end{equation*}
By the definition of the Wasserstein distance we get
\begin{equation*}
    W_2^2(\mathcal{L}({Z_n}),\pi)\leq(1-\gamma\mu)\mathbb{E}_\nu\big[\Vert X_{n} -a\Vert^2\big] +2\gamma \big( \mathcal{E}_{g_1}(\pi)- \mathcal{E}_{g_1}(\mathcal{L}({Z_n}))\big).
\end{equation*}
Note that the last inequality is true for all $\nu$ with prescribed marginal above. In particular, we can take the infimum over all such couplings and inequality would still hold for the infimum. This leads to
\begin{equation*}
    W_2^2(\mathcal{L}({Z_n}),\pi)\leq(1-\gamma\mu)W_2^2(\mathcal{L}({X_{n}}), \pi) +2\gamma \big( \mathcal{E}_{g_1}(\pi)- \mathcal{E}_{g_1}(\mathcal{L}({Z_n}))\big),
\end{equation*}
which is the desired result.
\end{proof}

\begin{lemma}\label{lemma:bound_second_step}
Let $g:\mathbb{R}^d\to\mathbb{R}$ be a convex function and $g^\lambda$ its $\lambda$-Moreau-Yosida approximation. Consider $a, y_0, y_1\in\mathbb{R}^d$ such that $y_1=\prox_g^\lambda(y_0)$. Then,
\begin{equation*}
\Vert y_1 - a\Vert^2 \leq \Vert y_0 - a\Vert^2 - 2\lambda \left(g(y_0)- g(a)\right)+ \lambda^2\Vert \nabla g^0 (y_0)\Vert^2.
\end{equation*}
\end{lemma}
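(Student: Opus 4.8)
The plan is to exploit the first-order optimality condition characterising the proximal point $y_1=\prox_g^\lambda(y_0)$, together with two applications of the subgradient inequality for the convex function $g$. First I would record that the minimiser defining $\prox_g^\lambda(y_0)$ satisfies $0\in\partial g(y_1)+(y_1-y_0)/\lambda$, so that, writing $p:=(y_0-y_1)/\lambda=\nabla g^\lambda(y_0)$ (the identity $\nabla g^\lambda(y_0)=(y_0-\prox_g^\lambda(y_0))/\lambda$ being \eqref{eq:proximity}), we obtain both $y_1=y_0-\lambda p$ and the crucial membership $p\in\partial g(y_1)$.

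Next I would expand the square and split the inner product against $y_0-y_1$ and $y_1-a$:
\begin{equation*}
\Vert y_1-a\Vert^2=\Vert y_0-a\Vert^2-2\lambda\langle p,\, y_0-a\rangle+\lambda^2\Vert p\Vert^2, \qquad \langle p,\, y_0-a\rangle=\lambda\Vert p\Vert^2+\langle p,\, y_1-a\rangle.
\end{equation*}
Since $p\in\partial g(y_1)$, the subgradient inequality $g(a)\ge g(y_1)+\langle p,\, a-y_1\rangle$ gives $\langle p,\, y_1-a\rangle\ge g(y_1)-g(a)$, and substituting yields
\begin{equation*}
\Vert y_1-a\Vert^2\le \Vert y_0-a\Vert^2-2\lambda\big(g(y_1)-g(a)\big)-\lambda^2\Vert p\Vert^2.
\end{equation*}

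The remaining and most delicate step is to replace $g(y_1)$ by $g(y_0)$ and to convert the unwanted, $y_1$-dependent term $-\lambda^2\Vert p\Vert^2$ into the advertised $+\lambda^2\Vert\nabla^0 g(y_0)\Vert^2$. For this I would take the minimal-section subgradient $q:=\nabla^0 g(y_0)\in\partial g(y_0)$ from Definition~\ref{definition:minimal_section} and invoke convexity at $y_0$: $g(y_1)\ge g(y_0)+\langle q,\, y_1-y_0\rangle=g(y_0)-\lambda\langle q,\, p\rangle$. Feeding $-2\lambda g(y_1)\le -2\lambda g(y_0)+2\lambda^2\langle q,\, p\rangle$ into the previous display leaves the two $\lambda^2$ terms combined as $\lambda^2\big(2\langle q,\, p\rangle-\Vert p\Vert^2\big)$, and completing the square via $2\langle q,\, p\rangle-\Vert p\Vert^2=\Vert q\Vert^2-\Vert p-q\Vert^2\le\Vert q\Vert^2$ bounds this contribution by $\lambda^2\Vert q\Vert^2=\lambda^2\Vert\nabla^0 g(y_0)\Vert^2$, giving exactly the claim. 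The main obstacle is precisely this transfer: one must resist estimating at $y_1$ (which would leave a $g(y_1)$ and no clean constant) and instead route through a subgradient at $y_0$, after which the cross term is absorbed by a completion of squares rather than a lossy Cauchy--Schwarz bound.
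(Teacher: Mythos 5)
Your proof is correct and follows essentially the same route as the paper's: the same expansion of $\Vert y_1-a\Vert^2$, the same shift of the inner product onto $y_1-a$, and the same use of the membership $\nabla g^\lambda(y_0)\in\partial g(\prox_g^\lambda(y_0))$ to invoke the subgradient inequality at $y_1$. The only divergence is the final step: where the paper passes through the identity $g^\lambda(y_0)=g(y_1)+\Vert y_0-y_1\Vert^2/(2\lambda)$ and then cites \citet[Lemma 9]{NEURIPS2019_6a8018b3_salim} for $g^\lambda(y_0)\geq g(y_0)-\lambda\Vert\nabla^0 g(y_0)\Vert^2/2$, you prove that same inequality inline via the subgradient inequality at $y_0$ with the minimal section and a completion of squares --- a self-contained but mathematically equivalent finish.
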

\begin{proof}
Recalling that $\prox_g^\lambda(y_0) = y_0 -\lambda\nabla g^\lambda(y_0)$ we have
\begin{equation}
\label{eq:part1}
\Vert y_1-a\Vert^2 = \Vert y_0-a\Vert^2 - 2\lambda\langle \nabla g^\lambda(y_0), y_0 -a\rangle + \lambda^2\Vert \nabla g^\lambda(y_0)\Vert^2.
\end{equation}
Using that $y_1 = y_0 -\lambda\nabla g^\lambda(y_0)$ we can write
\begin{equation*}
\langle \nabla g^\lambda(y_0), y_0 -a\rangle = \langle \nabla g^\lambda(y_0), y_1 -a\rangle+\lambda\norm{\nabla g^\lambda(y_0)}^2.
\end{equation*}
Since $\nabla g^\lambda(y_0)$ belongs to the subdifferential of $g(y_1)$, i.e. $\nabla g^\lambda(x)\in \partial g(\text{prox}_g^\lambda(x))$  \citep[Proposition 16.44]{combettes_bauschke_2017}, we further have that
\begin{equation*}
    \langle \nabla g^\lambda(y_0), y_1 -a\rangle \geq g(y_1)-g(a),
\end{equation*}
from which we obtain
\begin{equation*}
-2\lambda\langle \nabla g^\lambda(y_0), y_0 -a\rangle \leq  -2\lambda\left(g(y_1)-g(a)+\lambda\norm{\nabla g^\lambda(y_0)}^2\right).
\end{equation*}
Recalling the definition of Moreau-Yosida approximation in Definition~\ref{def:my} we have that $g^\lambda(y_0)=g(y_1) + \Vert y_0-y_1\Vert^2/(2\lambda)$; plugging this into the equation above gives
\begin{align}
\label{eq:part2}
    -2\lambda\langle \nabla g^\lambda(y_0), y_0 -a\rangle &\leq  -2\lambda\left(g^\lambda(y_0)-g(a)\right)-2\lambda^2\norm{\nabla g^\lambda(y_0)}^2+\norm{y_1-y_0}^2\\
    &=-2\lambda(g^\lambda(y_0)-g(a))-\lambda^2\norm{\nabla g^\lambda(y_0)}^2.\notag
\end{align}
Finally, using \citet[Lemma 9]{NEURIPS2019_6a8018b3_salim} which states that $g^\lambda(x)\geq g(x)-\lambda\Vert\nabla^0 g(x)\Vert/2$, where $\nabla^0 g$ is the minimal section introduced in Definition \ref{def:subdiff}, and combining ~\eqref{eq:part1} and~\eqref{eq:part2} we have
\begin{align*}
  \Vert y_1-a\Vert^2 &\leq \Vert y_0-a\Vert^2 -2\lambda\left(g^\lambda(y_0)-g(a)\right)\leq \Vert y_0-a\Vert^2 -2\lambda\left(g(y_0)-g(a)\right) + \lambda^2 \Vert \nabla^0 g(x)\Vert^2.
\end{align*}
\end{proof}

\begin{lemma}\label{lemma:pipgla_lemma_2}
Let \textbf{D\ref{assumption_1_gla2}}--\textbf{D\ref{assumption_2_gla2}} hold.
Then,
\begin{equation*}
2\lambda\big[ \mathcal{E}_{g_2}(\mathcal{L}({Y_n}))- \mathcal{E}_{g_2}(\pi)\big]\leq W_2^2(\mathcal{L}({Y_n}),\pi) - W_2^2(\mathcal{L}({X_{n+1}}),\pi) + \lambda^2 C 
\end{equation*}
\end{lemma}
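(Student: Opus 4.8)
The plan is to apply Lemma~\ref{lemma:bound_second_step} pointwise with $g = g_2$, $y_0 = Y_n$, and $y_1 = X_{n+1} = \prox_{g_2}^\lambda(Y_n)$, and then to pass to Wasserstein distances via the same coupling argument used in the proof of Lemma~\ref{lemma:pipgla_lemma_1}. This keeps all the analytic work inside Lemma~\ref{lemma:bound_second_step}, leaving only bookkeeping at the probabilistic level.

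First I would invoke Lemma~\ref{lemma:bound_second_step}, which yields, for any deterministic $a\in\mathbb{R}^d$,
\begin{equation*}
\Vert X_{n+1} - a\Vert^2 \leq \Vert Y_n - a\Vert^2 - 2\lambda\big(g_2(Y_n) - g_2(a)\big) + \lambda^2\Vert\nabla^0 g_2(Y_n)\Vert^2.
\end{equation*}
Assumption \textbf{D\ref{assumption_2_gla2}} bounds the final term by $\lambda^2 C$ uniformly in $Y_n$, giving
\begin{equation*}
\Vert X_{n+1} - a\Vert^2 \leq \Vert Y_n - a\Vert^2 - 2\lambda\big(g_2(Y_n) - g_2(a)\big) + \lambda^2 C.
\end{equation*}

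Next I would introduce a coupling $(a, Y_n)\sim\nu$ whose $a$-marginal is $\pi$ and whose $Y_n$-marginal is $\mathcal{L}(Y_n)$, noting that $X_{n+1}$ is a deterministic function of $Y_n$ and hence inherits this coupling. Taking expectation under $\nu$ turns the integrals of $g_2$ into $\mathcal{E}_{g_2}(\mathcal{L}(Y_n))$ and $\mathcal{E}_{g_2}(\pi)$, yielding
\begin{equation*}
\mathbb{E}_\nu\big[\Vert X_{n+1} - a\Vert^2\big] \leq \mathbb{E}_\nu\big[\Vert Y_n - a\Vert^2\big] - 2\lambda\big(\mathcal{E}_{g_2}(\mathcal{L}(Y_n)) - \mathcal{E}_{g_2}(\pi)\big) + \lambda^2 C.
\end{equation*}
I would then take $\nu$ to be the optimal coupling between $\mathcal{L}(Y_n)$ and $\pi$, so that the first term on the right equals $W_2^2(\mathcal{L}(Y_n),\pi)$, while on the left $(a, X_{n+1})$ is a (not necessarily optimal) coupling of $\pi$ and $\mathcal{L}(X_{n+1})$, giving $\mathbb{E}_\nu[\Vert X_{n+1} - a\Vert^2] \geq W_2^2(\mathcal{L}(X_{n+1}),\pi)$. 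Substituting both and rearranging produces exactly the claimed inequality.

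The only step needing care is the asymmetry of the coupling optimality: optimality is chosen for the pair $(Y_n, a)$ so the right-hand side collapses to $W_2^2(\mathcal{L}(Y_n),\pi)$, whereas the left-hand side is merely lower-bounded by $W_2^2(\mathcal{L}(X_{n+1}),\pi)$ because $X_{n+1}=\prox_{g_2}^\lambda(Y_n)$ is forced to use this same coupling and need not realise the optimal transport between $\mathcal{L}(X_{n+1})$ and $\pi$. Both inequalities point in the correct direction for the rearrangement, so beyond this observation the argument is routine.
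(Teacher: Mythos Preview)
Your proposal is correct and follows essentially the same approach as the paper: apply Lemma~\ref{lemma:bound_second_step} with $g=g_2$, $y_0=Y_n$, $y_1=X_{n+1}$, bound the minimal-section term by $\lambda^2 C$ via \textbf{D\ref{assumption_2_gla2}}, and then pass to Wasserstein distances by choosing the optimal coupling for $(Y_n,a)$ and lower-bounding the $X_{n+1}$ side. Your discussion of the coupling asymmetry is in fact more explicit than the paper's, which simply takes the infimum over couplings of $(Y_n,a)$ without dwelling on why the left-hand side still dominates $W_2^2(\mathcal{L}(X_{n+1}),\pi)$.
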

\begin{proof}
Applying Lemma \ref{lemma:bound_second_step} with $y_0= Y_n$, $y_1= X_{n+1}$ and $g= g_2$, we have
\begin{align*}
   \Vert X_{n+1}-a\Vert^2 &\leq  \Vert Y_n-a\Vert^2 - 2\lambda \big(g_2(Y_n)- g_2(a) \big) +\lambda^2\Vert\nabla^0 g_2(Y_n)\Vert^2.
\end{align*}    
Now, let $a$ be a random vector sampled from the distribution with density $\pi$. Taking expectations in the previous expression and using the definition of the Wasserstein distance we obtain
\begin{align*}
    W_2^2(\mathcal{L}({X_{n+1}}), \pi) &\leq  \mathbb{E}\big[\Vert Y_n-a\Vert^2\big]- 2\lambda\big(\mathcal{E}_{g_2}(\mathcal{L}({Y_n}))- \mathcal{E}_{g_2}(\pi) \big) +\lambda^2\mathbb{E}[\Vert\nabla^0 g_2(Y_n)\Vert^2]\\
    &\leq  \mathbb{E}\big[\Vert Y_n-a\Vert^2\big]- 2\lambda\big(\mathcal{E}_{g_2}(\mathcal{L}({Y_n}))- \mathcal{E}_{g_2}(\pi) \big) +\lambda^2 C.
\end{align*}
Finally, taking the infimum over all couplings $Y_n, a$ of $\mathcal{L}({Y_n}),\pi$, it follows that
\begin{equation*}
    W_2^2(\mathcal{L}({X_{n+1}}), \pi) \leq  W_2^2(\mathcal{L}({Y_n}),\pi)- 2\lambda\big(\mathcal{E}_{g_2}(\mathcal{L}({Y_n}))- \mathcal{E}_{g_2}(\pi) \big) +\lambda^2 C.
\end{equation*}
\end{proof}

\begin{theorem}\label{th:main_theorem_pipgla}
    Let \textbf{D\ref{assumption_1_gla2}}--\textbf{D\ref{assumption_2_gla2}} hold and assume that $\gamma\leq 1/L_{g_1}$.
    Then, for all $n\in\mathbb{N}$
    \begin{align*}
2\gamma\kl(\mathcal{L}({Y_n})\;|\;\pi) \leq& (1-\gamma\mu) W_2^2(\mathcal{L}({X_n}), \pi)- \frac{\gamma}{\lambda}W_2^2(\mathcal{L}({X_{n+1}}), \pi) \\
&- \left(1-\frac{\gamma}{\lambda}\right)W_2^2(\mathcal{L}({Y_{n}}), \pi) + \gamma(2 \gamma L_{g_1} d +  \lambda C) .
\end{align*}
\end{theorem}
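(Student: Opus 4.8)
The plan is to exploit the identity $\kl(\rho\,|\,\pi) = \mathcal{F}(\rho) - \mathcal{F}(\pi)$, where $\mathcal{F} = \mathcal{E}_{g_1+g_2} + \mathcal{H}$ is the free energy associated with $\pi\propto\exp(-(g_1+g_2))$, to split the left-hand side into three contributions and control each by matching it to one of the three elementary operations composing the PGLA update. Writing
\begin{align*}
\kl(\mathcal{L}(Y_n)\,|\,\pi) &= \big[\mathcal{E}_{g_1}(\mathcal{L}(Y_n)) - \mathcal{E}_{g_1}(\pi)\big] + \big[\mathcal{E}_{g_2}(\mathcal{L}(Y_n)) - \mathcal{E}_{g_2}(\pi)\big] \\
&\quad + \big[\mathcal{H}(\mathcal{L}(Y_n)) - \mathcal{H}(\pi)\big],
\end{align*}
I would bound the three bracketed terms separately and then sum, aiming for a telescoping cancellation of the intermediate Wasserstein distances $W_2(\mathcal{L}(Z_n),\pi)$ and $W_2(\mathcal{L}(Y_n),\pi)$.

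For the $g_1$-term, Lemma~\ref{lemma:pipgla_lemma_1} controls $\mathcal{E}_{g_1}(\mathcal{L}(Z_n))$ rather than $\mathcal{E}_{g_1}(\mathcal{L}(Y_n))$, so first I would transfer from $Z_n$ to $Y_n = Z_n + \sqrt{2\gamma}\,\xi_n$. Using the $L_{g_1}$-smoothness descent inequality with increment $\sqrt{2\gamma}\,\xi_n$, together with independence and mean-zeroness of $\xi_n$ (so the linear term vanishes in expectation) and $\mathbb{E}\|\xi_n\|^2 = d$, gives $\mathcal{E}_{g_1}(\mathcal{L}(Y_n)) \le \mathcal{E}_{g_1}(\mathcal{L}(Z_n)) + \gamma L_{g_1} d$. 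Multiplying by $2\gamma$ and invoking Lemma~\ref{lemma:pipgla_lemma_1} yields
\begin{equation*}
2\gamma\big[\mathcal{E}_{g_1}(\mathcal{L}(Y_n)) - \mathcal{E}_{g_1}(\pi)\big] \le (1-\gamma\mu)W_2^2(\mathcal{L}(X_n),\pi) - W_2^2(\mathcal{L}(Z_n),\pi) + 2\gamma^2 L_{g_1} d,
\end{equation*}
which already produces the $2\gamma^2 L_{g_1} d$ term. For the $g_2$-term I would apply Lemma~\ref{lemma:pipgla_lemma_2} and rescale its $2\lambda$ prefactor to $2\gamma$ by multiplying through by the positive constant $\gamma/\lambda$, obtaining
\begin{equation*}
2\gamma\big[\mathcal{E}_{g_2}(\mathcal{L}(Y_n)) - \mathcal{E}_{g_2}(\pi)\big] \le \tfrac{\gamma}{\lambda}W_2^2(\mathcal{L}(Y_n),\pi) - \tfrac{\gamma}{\lambda}W_2^2(\mathcal{L}(X_{n+1}),\pi) + \gamma\lambda C,
\end{equation*}
which contributes the $\gamma\lambda C$ term and the $\gamma/\lambda$ coefficients.

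The remaining, and most delicate, piece is the entropy term, for which no lemma is stated. Here I would use that the noise injection $Y_n = Z_n + \sqrt{2\gamma}\,\xi_n$ is exactly the time-$\gamma$ map of the heat flow $\partial_t\rho = \Delta\rho$, which is the Wasserstein gradient flow of the (geodesically convex) negative entropy $\mathcal{H}$. Applying the Evolution Variational Inequality for a $0$-convex functional against the comparator $\pi$, integrating over $[0,\gamma]$, and using the monotonicity of $\mathcal{H}$ along its own gradient flow gives
\begin{equation*}
2\gamma\big[\mathcal{H}(\mathcal{L}(Y_n)) - \mathcal{H}(\pi)\big] \le W_2^2(\mathcal{L}(Z_n),\pi) - W_2^2(\mathcal{L}(Y_n),\pi).
\end{equation*}
Summing the three displays, the $W_2^2(\mathcal{L}(Z_n),\pi)$ terms cancel and the $W_2^2(\mathcal{L}(Y_n),\pi)$ terms combine into $-(1-\gamma/\lambda)W_2^2(\mathcal{L}(Y_n),\pi)$, delivering exactly the claimed bound with error $\gamma(2\gamma L_{g_1} d + \lambda C)$. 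I expect this entropy/heat-flow step to be the main obstacle, since one must justify the EVI for $\mathcal{H}$ (geodesic convexity and the identification of the Gaussian convolution semigroup as its gradient flow), or alternatively replace it by a direct relative-entropy computation under Gaussian convolution as in \citet{NEURIPS2019_6a8018b3_salim}; the $g_1$ and $g_2$ steps are routine given Lemmas~\ref{lemma:pipgla_lemma_1} and~\ref{lemma:pipgla_lemma_2}.
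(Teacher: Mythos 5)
Your proposal is correct and follows essentially the same route as the paper: the same free-energy decomposition $\kl = \mathcal{E}_{g_1} + \mathcal{E}_{g_2} + \mathcal{H}$ (up to $\mathcal{F}(\pi)$), the same smoothness argument to transfer $\mathcal{E}_{g_1}$ from $\mathcal{L}(Z_n)$ to $\mathcal{L}(Y_n)$ at cost $2\gamma^2 L_{g_1} d$, the same use of Lemmas~\ref{lemma:pipgla_lemma_1} and~\ref{lemma:pipgla_lemma_2} (the latter rescaled by $\gamma/\lambda$), and the same telescoping sum. The entropy inequality you flag as the main obstacle is exactly the statement the paper imports as \citet[Lemma 5]{durmus2019analysis}, whose proof is the EVI/heat-flow argument you sketch, so no gap remains.
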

\begin{proof}
Since $g_1+g_2$ is convex by assumption the following holds $\pi\in \mathcal{P}_2(\mathbb{R}^d)$, $\mathcal{H}(\pi)<\infty$, $\mathcal{E}_{g_1+g_2}(\pi)<\infty$ and for all $\mu\in\mathcal{P}_2(\mathbb{R}^d)$ satisfying $\mathcal{E}_{g_1+g_2}(\mu)<\infty$,
\begin{equation*}
    \kl(\mu\;|\;\pi) = 
    \mathcal{E}_{g_1 + g_2}(\mu)+\mathcal{H}(\mu) -(\mathcal{E}_{g_1 + g_2}(\pi)+\mathcal{H}(\pi)) = \mathcal{F}(\mu)-\mathcal{F}(\pi).
\end{equation*}
We can further decompose $\mathcal{E}_{g_1 + g_2}(\mu) = \mathcal{E}_{g_1}(\mu) + \mathcal{E}_{g_2}(\mu)$.
Using \citet[Lemma 5]{durmus2019analysis} we have that the negative entropy satisfies the following inequality
\begin{equation}\label{eq:entropy_result}
    2\gamma\big[\mathcal{H}(\mathcal{L}({Y_n}))-\mathcal{H}(\pi)\big]\leq W_2^2(\mathcal{L}({Z_n}),\pi)-W_2^2(\mathcal{L}({Y_{n}}), \pi). 
\end{equation}
Since $g_1$ is $L_{g_1}$-gradient Lipschitz and strongly convex, it follows that
\begin{equation*}
    0\leq g_1(Y_n) - g_1(Z_n) + \langle\nabla g_1(Z_n), Z_n-Y_n\rangle\leq \frac{L_{g_1}}{2}\Vert Y_n-Z_n\Vert^2.
\end{equation*}
Note that $Y_n-Z_n = \sqrt{2\gamma}\xi_n$ is independent of $Z_n$, $\mathbb{E}[Y_n-Z_n]=0$ and $\mathbb{E}[\Vert Y_n-Z_n\Vert^2]=2\gamma d$, where $d$ is the dimension of the standard Gaussian random variable $\xi_n$. Therefore, taking expectations in the previous inequality we get
\begin{equation}\label{eq:potential_result_1}
2\gamma\big[\mathcal{E}_{g_1}(\mathcal{L}({Y_n})) - \mathcal{E}_{g_1}(\mathcal{L}({Z_n}))\big] \leq {2\gamma^2 L_{g_1} d}.
\end{equation}
On the other hand, by Lemmas \ref{lemma:pipgla_lemma_1} and \ref{lemma:pipgla_lemma_2} we have
\begin{align}
2\gamma\big[\mathcal{E}_{g_1}(\mathcal{L}({Z_n}))-\mathcal{E}_{g_1}(\pi)\big] &\leq (1-\gamma\mu)W_2^2( \mathcal{L}({X_n}), \pi) - W_2^2( \mathcal{L}({Z_n}), \pi),\label{eq:potential_result_2}\\    
2\gamma\big[ \mathcal{E}_{g_2}(\mathcal{L}({Y_n}))- \mathcal{E}_{g_2}(\pi)\big]&\leq \frac{\gamma}{\lambda} W_2^2(\mathcal{L}({Y_n}),\pi) - \frac{\gamma}{\lambda} W_2^2(\mathcal{L}({X_{n+1}}),\pi) + \gamma\lambda C. \label{eq:potential_result_3}
\end{align}
Summing up \eqref{eq:entropy_result}-\eqref{eq:potential_result_3} and using that $\kl(\mathcal{L}({Y_n})\;|\;\pi)= \mathcal{F}(\mathcal{L}({Y_n}))-\mathcal{F}(\pi)$ we have the desired result.

\end{proof}

\begin{corollary}\label{col:pipgla}
     Let \textbf{D\ref{assumption_1_gla2}}--\textbf{D\ref{assumption_2_gla2}} hold and assume that $\gamma\leq 1/L_{g_1}$ and $\gamma\leq \lambda\leq\gamma/(1-\mu\gamma)$. Then
    \begin{equation*}
    W_2^2(\mathcal{L}({X_n}),\pi)\leq \frac{\lambda^n(1-\gamma\mu)^n}{\gamma^n}W_2^2(\mathcal{L}({X_0}),\pi) + \frac{\lambda(2 \gamma L_{g_1} d + \lambda C)}{1-\lambda(1-\mu\gamma)/\gamma}. 
\end{equation*}
\end{corollary}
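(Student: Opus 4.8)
The plan is to turn the per-step descent estimate of Theorem~\ref{th:main_theorem_pipgla} into a single scalar recursion and then unroll it. Writing $a_n := W_2^2(\mathcal{L}(X_n),\pi)$, the first step is to discard all provably nonnegative contributions from Theorem~\ref{th:main_theorem_pipgla}. The left-hand side $2\gamma\kl(\mathcal{L}(Y_n)\,|\,\pi)$ is nonnegative, and the hypothesis $\gamma\leq\lambda$ forces $1-\gamma/\lambda\geq 0$, so the term $-(1-\gamma/\lambda)W_2^2(\mathcal{L}(Y_n),\pi)$ on the right is nonpositive. Dropping both therefore leaves
\[
0\leq (1-\gamma\mu)\,a_n-\tfrac{\gamma}{\lambda}\,a_{n+1}+\gamma\bigl(2\gamma L_{g_1}d+\lambda C\bigr),
\]
and multiplying through by $\lambda/\gamma$ and rearranging gives the affine recursion $a_{n+1}\leq r\,a_n+b$ with contraction ratio $r=\lambda(1-\gamma\mu)/\gamma$ and additive constant $b=\lambda(2\gamma L_{g_1}d+\lambda C)$.

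The second step is to verify the sign conditions that make this a genuine contraction. Because $g_1$ is simultaneously $\mu$-strongly convex and $L_{g_1}$-smooth one has $\mu\leq L_{g_1}$, so $\gamma\leq 1/L_{g_1}$ yields $\gamma\mu\leq\gamma L_{g_1}\leq 1$ and hence $1-\gamma\mu\in[0,1)$; this guarantees $r,b\geq 0$. The upper constraint $\lambda\leq\gamma/(1-\mu\gamma)$ is exactly $\lambda(1-\mu\gamma)/\gamma\leq 1$, i.e. $r\leq 1$. To sum the additive error one uses the strict inequality $r<1$, which is implicit in the denominator $1-\lambda(1-\mu\gamma)/\gamma=1-r$ appearing in the statement.

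The final step is to unroll the recursion. Iterating $a_{n}\leq r\,a_{n-1}+b$ down to $a_0$ gives $a_n\leq r^n a_0+b\sum_{k=0}^{n-1}r^k$, and bounding the finite geometric sum by $\sum_{k=0}^{\infty}r^k=1/(1-r)$ yields
\[
W_2^2(\mathcal{L}(X_n),\pi)\leq r^n\,W_2^2(\mathcal{L}(X_0),\pi)+\frac{b}{1-r}.
\]
Substituting $r=\lambda(1-\gamma\mu)/\gamma$ and $b=\lambda(2\gamma L_{g_1}d+\lambda C)$ reproduces the claimed bound verbatim, since $r^n=\lambda^n(1-\gamma\mu)^n/\gamma^n$ and $b/(1-r)=\lambda(2\gamma L_{g_1}d+\lambda C)/\bigl(1-\lambda(1-\mu\gamma)/\gamma\bigr)$.

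I expect no serious obstacle here: the corollary is essentially a mechanical consequence of Theorem~\ref{th:main_theorem_pipgla}. The only points requiring care are the two sign checks — that $1-\gamma/\lambda\geq 0$ so the $Y_n$-term can be discarded, and that $r\leq 1$ so the geometric series converges — both of which are encoded precisely in the two-sided window $\gamma\leq\lambda\leq\gamma/(1-\mu\gamma)$. All the substantive analytic effort (the entropy and potential descent lemmas, and the minimal-section bound) lives upstream in proving Theorem~\ref{th:main_theorem_pipgla} rather than in this unrolling.
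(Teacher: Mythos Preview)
Your proposal is correct and follows essentially the same approach as the paper: drop the nonnegative KL term and the $-(1-\gamma/\lambda)W_2^2(\mathcal{L}(Y_n),\pi)$ term from Theorem~\ref{th:main_theorem_pipgla} to obtain the one-step recursion $a_{n+1}\leq r\,a_n+b$, then unroll and bound the geometric sum using $\lambda\leq\gamma/(1-\mu\gamma)$. Your additional sign checks (that $1-\gamma\mu\geq 0$ via $\mu\leq L_{g_1}$, and the explicit identification of the strict inequality $r<1$ needed for the denominator) are more careful than the paper's terse version but do not change the argument.
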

\begin{proof}
Since the KL divergence and the Wasserstein distance are always non-negative and we assume that $\gamma\leq \lambda$, we have by Theorem \ref{th:main_theorem_pipgla} that for all $n\in\mathbb{N}$
\begin{equation*}
W_2^2(\mathcal{L}({X_{n+1}}), \pi) \leq \frac{\lambda(1-\gamma\mu)}{\gamma} W_2^2(\mathcal{L}({X_n}), \pi) + \lambda(2 \gamma L_{g_1} d +  \lambda C).
\end{equation*}
Unrolling this recurrence we get 
\begin{align*}
 W_2^2(\mathcal{L}({X_{n}}), \pi) &\leq \frac{\lambda^n(1-\gamma\mu)^n}{\gamma^n} W_2^2(\mathcal{L}({X_0}), \pi) + \lambda(2 \gamma L_{g_1} d + \lambda C)\sum_{i=0}^{n-1}\frac{\lambda^i(1-\gamma\mu)^i}{\gamma^i}\\
 &\leq \frac{\lambda^n(1-\gamma\mu)^n}{\gamma^n} W_2^2(\mathcal{L}({X_0}), \pi) + \frac{\lambda(2 \gamma L_{g_1} d + \lambda C)}{1-\lambda(1-\gamma\mu)/\gamma},
\end{align*}
where we have used the assumption $\lambda\leq \gamma/(1-\mu\gamma)$.
\end{proof}

\subsubsection{Convergence and discretisation bounds}
We start by showing that the results established above can be applied to a proximal gradient scheme in which the noise is scaled by $\sqrt{N}$
\begin{align*}
V_{n+1/2} &= V_{n} -\gamma\nabla g_1(V_n)+\sqrt{\frac{2\gamma}{N}}\;\xi_{n},\numberthis\label{eq:simplified_pipgla_2}\\
V_{n+1} &= V_{n+1/2}-\lambda \nabla g_2^\lambda(V_{n+1/2}).
\end{align*}

\begin{corollary}[Rescaled noise]
\label{remark:pipgla_rewrite_N}
Let \textbf{D\ref{assumption_1_gla2}}--\textbf{D\ref{assumption_2_gla2}} hold and assume that $\gamma\leq 1/L_{g_1}$ and $\gamma\leq \lambda\leq\gamma/(1-\mu\gamma)$. Then,
\begin{equation*}
    W_2^2(\mathcal{L}({V_n}),\pi^N)\leq \frac{\lambda^n(1-\gamma\mu)^n}{\gamma^n}W_2^2(\mathcal{L}({V_0}),\pi^N) + \frac{\lambda(2 \gamma L_{g_1} d + \lambda N C)}{N\left(1-\lambda(1-\mu\gamma)/\gamma\right)}.
\end{equation*}
\em
\end{corollary}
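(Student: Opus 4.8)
The plan is to reduce the rescaled scheme \eqref{eq:simplified_pipgla_2} to the standard PGLA iteration already analysed in Corollary~\ref{col:pipgla} by a change of variables that absorbs the factor $N^{-1/2}$ in the noise. Concretely, I would set $W_n := \sqrt{N}\,V_n$ and introduce the rescaled potentials
\[
\tilde g_1(w) := N g_1(w/\sqrt N), \qquad \tilde g_2(w) := N g_2(w/\sqrt N).
\]
A direct computation gives $\nabla\tilde g_1(w) = \sqrt N\,\nabla g_1(w/\sqrt N)$, so multiplying the first line of \eqref{eq:simplified_pipgla_2} by $\sqrt N$ turns the prediction step into $W_{n+1/2} = W_n - \gamma\nabla\tilde g_1(W_n) + \sqrt{2\gamma}\,\xi_n$, i.e. the noise is restored to the unscaled $\sqrt{2\gamma}$. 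For the backward step I would first check that the Moreau envelope commutes with the rescaling, $\tilde g_2^\lambda(w) = N g_2^\lambda(w/\sqrt N)$ (substitute $u=\sqrt N z$ in the defining minimisation), so that $\nabla\tilde g_2^\lambda(w) = \sqrt N\,\nabla g_2^\lambda(w/\sqrt N)$, and multiplying the second line by $\sqrt N$ yields $W_{n+1} = W_{n+1/2} - \lambda\nabla\tilde g_2^\lambda(W_{n+1/2}) = \prox_{\tilde g_2}^\lambda(W_{n+1/2})$. Hence $(W_n)$ follows exactly the PGLA scheme of Corollary~\ref{col:pipgla} with potentials $\tilde g_1,\tilde g_2$.

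Next I would verify that $\tilde g_1,\tilde g_2$ satisfy \textbf{D\ref{assumption_1_gla2}}--\textbf{D\ref{assumption_2_gla2}} with the same smoothness and convexity constants but an inflated minimal-section bound. The factor $\sqrt N$ in the gradient and the factor $1/\sqrt N$ in the argument cancel, so $\tilde g_1$ is again $L_{g_1}$-smooth and $\mu$-strongly convex, and the conditions $\gamma\le 1/L_{g_1}$ and $\gamma\le\lambda\le\gamma/(1-\mu\gamma)$ carry over verbatim. For the minimal section I would use that $\partial\tilde g_2(w) = \sqrt N\,\partial g_2(w/\sqrt N)$, whence $\nabla^0\tilde g_2(w) = \sqrt N\,\nabla^0 g_2(w/\sqrt N)$ and therefore $\Vert\nabla^0\tilde g_2(w)\Vert^2 \le NC$; this is precisely the origin of the $\lambda NC$ term. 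The $W$-scheme targets $\tilde\pi \propto \exp(-(\tilde g_1+\tilde g_2)) = \exp\big(-N(g_1+g_2)(w/\sqrt N)\big)$, which is the pushforward of $\pi^N \propto \exp\big(-N(g_1+g_2)\big)$ under the map $v\mapsto\sqrt N v$.

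Applying Corollary~\ref{col:pipgla} to $(W_n)$ (dimension $d$, constants $L_{g_1},\mu$, minimal-section bound $NC$) gives
\[
W_2^2(\mathcal{L}(W_n),\tilde\pi) \le \frac{\lambda^n(1-\gamma\mu)^n}{\gamma^n} W_2^2(\mathcal{L}(W_0),\tilde\pi) + \frac{\lambda(2\gamma L_{g_1} d + \lambda N C)}{1-\lambda(1-\mu\gamma)/\gamma}.
\]
I would then translate this back using the scaling property of the Wasserstein distance: since $\mathcal{L}(W_n)$ and $\tilde\pi$ are the pushforwards of $\mathcal{L}(V_n)$ and $\pi^N$ under the linear map $v\mapsto\sqrt N v$, one has $W_2^2(\mathcal{L}(W_n),\tilde\pi) = N\,W_2^2(\mathcal{L}(V_n),\pi^N)$, and likewise at $n=0$. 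Substituting and dividing through by $N$ yields exactly the claimed inequality.

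The gradient and subdifferential bookkeeping is routine; the one step that needs genuine care is confirming that the Moreau envelope and the minimal section transform correctly under the rescaling, so that the backward step is truly a $\prox$ of $\tilde g_2$ and the constant $C$ inflates to $NC$ rather than some other power of $N$, since an error here would misplace the $N$-dependence in the additive term. I would also remark that the same bound can be obtained by re-running Lemmas~\ref{lemma:pipgla_lemma_1}--\ref{lemma:pipgla_lemma_2} and Theorem~\ref{th:main_theorem_pipgla} directly with noise variance $2\gamma/N$, tracking the $1/N$ factor through the entropy estimate $\mathbb{E}[\Vert Y_n-Z_n\Vert^2]=2\gamma d/N$; the change of variables is simply the cleaner packaging of that computation.
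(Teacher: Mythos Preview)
Your proof is correct. Both you and the paper reduce to Corollary~\ref{col:pipgla}, but through different rescalings. The paper keeps the state variable fixed and defines $\tilde g_i = N g_i$ (no argument change), so that $\tilde g_1$ becomes $(NL_{g_1})$-smooth and $(N\mu)$-strongly convex with $\Vert\nabla^0\tilde g_2\Vert^2\le N^2C$; it then absorbs the noise scaling by taking $\tilde\gamma=\gamma/N$, $\tilde\lambda=\lambda/N$ and checks that the hypotheses of Corollary~\ref{col:pipgla} for $(\tilde\gamma,\tilde\lambda,\tilde L,\tilde\mu)$ reduce exactly to the original ones. You instead rescale space via $W_n=\sqrt{N}V_n$ and set $\tilde g_i(w)=N g_i(w/\sqrt N)$, which preserves the constants $L_{g_1},\mu$ (so the step-size conditions transfer without any computation) and inflates the minimal-section bound only to $NC$; the $1/N$ in the final bound then comes from the Wasserstein scaling $W_2^2(\mathcal{L}(W_n),\tilde\pi)=N\,W_2^2(\mathcal{L}(V_n),\pi^N)$ rather than from the step sizes. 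Your route trades the step-size bookkeeping for a pushforward argument at the end; both are equally short, and your observation that one could also rerun Lemmas~\ref{lemma:pipgla_lemma_1}--\ref{lemma:pipgla_lemma_2} with noise variance $2\gamma/N$ is a third, more pedestrian, equivalent.
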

\begin{proof}
Let $\Tilde{g}_1=Ng_1$ and $\Tilde{g}_2=Ng_2$.
It is easy to check that $\Tilde{g}_1$ is $(NL_{g_1})$--gradient Lipschitz and $(N\mu)$--strongly convex. 
In addition, we have that
\begin{align*}
    \prox_{g_2}^\lambda(x) = \argmin_{z\in\mathbb{R}^d} \frac{\Tilde{g}_2(x)}{N} + \frac{\Vert x-z\Vert^2}{2\lambda} =  \argmin_{z\in\mathbb{R}^d} \frac{1}{N}\left(\Tilde{g}_2(x) + \frac{\Vert x-z\Vert^2}{2\lambda/N}\right) = \prox_{\Tilde{g}_2}^{\lambda/N}(x),
\end{align*}
since the $\argmin$ does not change if the function is multiplied by a constant, which results in $\nabla g_2^\lambda=\nabla \Tilde{g}_2^{\lambda/N}/N$.
Thus, \eqref{eq:simplified_pipgla_2} can be rewritten as
\begin{align*}
V_{n+1/2} &= V_{n} -\Tilde{\gamma}\nabla \Tilde{g}_1(V_n)+\sqrt{2\Tilde{\gamma}}\;\xi_{n},\\
V_{n+1} &= V_{n+1/2}-\Tilde{\lambda} \nabla \Tilde{g}_2^{\Tilde{\lambda}}(V_{n+1/2}),
\end{align*}
where $\Tilde{\gamma}=\gamma/N$ and $\Tilde{\lambda}=\lambda/N$. Note that the subdifferential set satisfies $\partial \tilde{g}_2 = N \partial g_2$. Therefore, since $\Vert \nabla^0 g_2 (x)\Vert^2\leq C$ for all $x\in\mathbb{R}^d$ by \textbf{D\ref{assumption_2_gla2}}, it follows that $\Vert \nabla^0 \Tilde{g}_2(x)\Vert^2\leq N^2C$. 
Therefore, taking $\Tilde{\gamma}\leq 1/(NL_{g_1})$ which is equivalent to $\gamma\leq 1/L_{g_1}$, and applying Corollary \ref{col:pipgla} the result follows.
\end{proof}

In order to be able to use the bound obtained in Corollary \ref{remark:pipgla_rewrite_N}, we rewrite PIPGLA as the algorithm given in \eqref{eq:simplified_pipgla_2}.
To do so, define 
\begin{align*}
G_1(z_\theta, z_1,\dots, z_N) = \frac{1}{N}\sum_{i=1}^N  g_1(z_\theta, \sqrt{N} z_i), \\
G_2^\lambda(z_\theta, z_1,\dots, z_N) = \frac{1}{N}\sum_{i=1}^N  g_2^\lambda(z_\theta, \sqrt{N} z_i).
\end{align*}
Note that the gradients of these functions are given by
\begin{equation*}
    \nabla G_1(z_\theta, z_1,\dots, z_N)  = \left(N^{-1}\sum_{i=1}^N\nabla_\theta g_1(z_\theta, \sqrt{N}z_i),N^{-1/2}\nabla_{z_1} g_1(z_{\theta},\sqrt{N}z_1),\dots,N^{-1/2}\nabla_{z_N} g_1(z_{\theta}, \sqrt{N}z_N)\right)^{\intercal}
\end{equation*}
and similarly for $G_2^\lambda$.

Taking $Z_n = (\theta_n^N, N^{-1/2}X_n^{1, N},\dots, N^{-1/2}X_n^{N, N})$, PIPGLA can be expressed as 
\begin{align}\label{eq:intermediate_pipgla}
    Z_{n+1/2} &= Z_n -\gamma\nabla G_1(Z_n) + \sqrt{\frac{2\gamma}{N}} \xi_{n+1},\\
    Z_{n+1}&= Z_{n+1/2} -\lambda\nabla G_2^\lambda(Z_{n+1/2})\notag.
\end{align}

\begin{corollary}\label{cor:pipgla_final}
Let $Z_n=(\theta_n^N, N^{-1/2}X_n^{1,N}, \dots, N^{-1/2}X_n^{N,N})$ and $\pi^N\propto\exp(-N(G_1+G_2))$.
    Suppose that \textbf{A\ref{assumption_1}}, \textbf{A\ref{assumption_3}} and \textbf{C\ref{assumption_1_gla}} hold true and assume $\gamma\leq 1/L_{g_1}$ and $\gamma\leq \lambda\leq\gamma/(1-\mu\gamma)$. Then,
\begin{equation*}
    W_2^2(\mathcal{L}({Z_n}), \pi^N)\leq \frac{\lambda^n(1-\gamma\mu)^n}{\gamma^n}W_2^2(\mathcal{L}({Z_0}), \pi^N) + \frac{\lambda(2\gamma L_{g_1}(d_\theta +N d_x)+\lambda N C)}{N\left(1-\lambda(1-\mu\gamma)/\gamma\right)}.
\end{equation*}
\end{corollary}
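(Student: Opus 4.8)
The plan is to recognise the claimed bound as a direct instance of Corollary~\ref{remark:pipgla_rewrite_N}, applied to the lifted potentials $G_1$ and $G_2$ on $\mathbb{R}^{d_\theta+Nd_x}$, once I have checked that these inherit the hypotheses \textbf{D\ref{assumption_1_gla2}}--\textbf{D\ref{assumption_2_gla2}} from $g_1,g_2$ with the \emph{same} constants. By the computation preceding the statement, the substitution $Z_n=(\theta_n^N, N^{-1/2}X_n^{1,N},\dots,N^{-1/2}X_n^{N,N})$ rewrites PIPGLA exactly in the form \eqref{eq:intermediate_pipgla}, which is precisely the rescaled-noise scheme \eqref{eq:simplified_pipgla_2} with $g_1\to G_1$, $g_2\to G_2$ and dimension $d=d_\theta+Nd_x$. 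Hence it suffices to verify the three structural properties and then read off the bound with this value of $d$.

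First I would establish that $G_1$ is $\mu$-strongly convex and $L_{g_1}$-gradient Lipschitz, the crucial point being that the $\sqrt{N}$ scaling in the definition of $Z_n$ is engineered so that these constants do not degrade with $N$. Concretely, at any point of twice-differentiability and for any test vector $(v,w_1,\dots,w_N)$ one has the identity
\[
(v,w)^\top \nabla^2 G_1\,(v,w) = \frac{1}{N}\sum_{i=1}^N (v,\sqrt{N}w_i)^\top \nabla^2 g_1(z_\theta,\sqrt{N}z_i)\,(v,\sqrt{N}w_i),
\]
so that sandwiching each Hessian block by $\mu I \preceq \nabla^2 g_1 \preceq L_{g_1}I$ (from \textbf{A\ref{assumption_1}} and \textbf{A\ref{assumption_3}}) and summing gives $\mu\|(v,w)\|^2 \leq (v,w)^\top\nabla^2 G_1(v,w)\leq L_{g_1}\|(v,w)\|^2$; convexity, $\mathcal{C}^1$ regularity and lower boundedness transfer immediately. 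This is the same lifting argument used in \citet{akyildiz2023interacting}. I would next control the minimal section of $\partial G_2$: using the chain/sum rule, every element of $\partial G_2$ is of the form $(N^{-1}\sum_i a_i,\,N^{-1/2}b_1,\dots,N^{-1/2}b_N)$ with $(a_i,b_i)\in\partial g_2(z_\theta,\sqrt{N}z_i)$; choosing $(a_i,b_i)=\nabla^0 g_2(z_\theta,\sqrt{N}z_i)$ and applying Jensen to the shared $\theta$-block yields
\[
\|\nabla^0 G_2\|^2 \leq \Big\|\tfrac{1}{N}\sum_i a_i\Big\|^2 + \tfrac{1}{N}\sum_i\|b_i\|^2 \leq \tfrac{1}{N}\sum_i \|\nabla^0 g_2(z_\theta,\sqrt{N}z_i)\|^2 \leq C,
\]
by \textbf{C\ref{assumption_1_gla}}, so $G_2$ satisfies \textbf{D\ref{assumption_2_gla2}} with the \emph{same} constant $C$.

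With these three facts in hand, I would invoke Corollary~\ref{remark:pipgla_rewrite_N} with $d=d_\theta+Nd_x$ and target $\pi^N\propto\exp(-N(G_1+G_2))$, which coincides with the lift of the invariant measure $\propto\exp(-\sum_i U(\theta,x_i))$ under $x_i=\sqrt{N}z_i$. Substituting $d=d_\theta+Nd_x$ into the constant term of that corollary turns $2\gamma L_{g_1}d$ into $2\gamma L_{g_1}(d_\theta+Nd_x)$ and leaves the $\lambda N C$ contribution unchanged, producing exactly the asserted bound under the stated step-size restrictions $\gamma\leq 1/L_{g_1}$ and $\gamma\leq\lambda\leq\gamma/(1-\mu\gamma)$.

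I expect the main obstacle to be the constant-tracking in the two structural lemmas rather than any new analytic idea: one must confirm that the $\sqrt{N}$ reweighting preserves $\mu$ and $L_{g_1}$ exactly, so that no spurious $N$-dependence enters the contraction factor $\lambda(1-\gamma\mu)/\gamma$, and that the Jensen step genuinely bounds the coupled $\theta$-block of $\nabla^0 G_2$ by $C$ and not $NC$. Once those constants are pinned down, the remainder is a verbatim application of the already-established Corollary~\ref{remark:pipgla_rewrite_N}.
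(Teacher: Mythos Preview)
Your proposal is correct and follows essentially the same route as the paper: verify that the lifted potentials $G_1,G_2$ inherit \textbf{D\ref{assumption_1_gla2}}--\textbf{D\ref{assumption_2_gla2}} from $g_1,g_2$ with the \emph{same} constants $\mu,L_{g_1},C$ (the paper asserts this, you supply the Hessian and Jensen details), then invoke Corollary~\ref{remark:pipgla_rewrite_N} with $d=d_\theta+Nd_x$. The only difference is the level of detail---your Hessian identity and explicit decomposition of elements of $\partial G_2$ make the constant-preservation transparent, whereas the paper states these facts without computation.
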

\begin{proof}

Note that if \textbf{C\ref{assumption_1_gla}} holds then $\Vert\nabla^0 G_2(z)\Vert^2\leq C$ for every $z$. To see this note that
\begin{equation*}
     \partial G_2(z) = \partial G_2(z_\theta, z_1,\dots, z_N) = \frac{1}{N}\sum_{i=1}^N\partial g_2(z_\theta, \sqrt{N} z_i).
\end{equation*}
Therefore, using the fact that $(N^{-1}\sum_i a_i)^2\leq N^{-1}\sum_i a_i^2$, we get that the minimal section satisfies
\begin{equation*}
    \Vert\nabla^0 G_2(z)\Vert^2 \leq \frac{1}{N}\sum_{i=1}^N\left\Vert\nabla^0 g_2(z_\theta, \sqrt{N}z_i)\right\Vert^2\leq C.
\end{equation*}
In addition, observe that \textbf{A\ref{assumption_1}}, \textbf{A\ref{assumption_3}}  imply that $G_1, G_2$ and $ G_2^\lambda$ are convex since $g_1, g_2$ and $g_2^\lambda$ are convex, and $G_1$ is also $\mu$-strongly convex and $L_{g_1}$-gradient Lipschitz. The proof then follows from Corollary \ref{remark:pipgla_rewrite_N}.
\end{proof}

Before proving our final result for $W_2(\mathcal{L}(\theta_n^N), \delta_{\bar{\theta}_\star})$, we provide a result adapted from \citet[Lemma A.8]{sinho_lemma_A8} to our non-differentiable setting that will be useful to bound the first term of \eqref{eq:wasserstein_distance_decomposition}.
\begin{lemma}\label{lemma:aux_first_term_pipgla}
    Suppose that the distribution $\pi\propto \exp(-f)$ on $\mathbb{R}^d$ is $\alpha$-strongly log-concave, almost everywhere differentiable and that $x^\star$ is the minimiser of $f$. Then,
    \begin{equation*}
        \mathbb{E}_{X\sim\pi}[\Vert X-x^\star\Vert^2]\leq d/\alpha.
    \end{equation*}
\end{lemma}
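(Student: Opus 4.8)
The plan is to reduce the statement to an integration-by-parts identity that converts the dimension $d$ into the target second moment, exploiting crucially that $x^\star$ is the \emph{minimiser} (so the naive variance bound, which would go the wrong way, is avoided). First I would record the consequence of strong convexity evaluated at the minimiser. Since $x^\star$ minimises the convex function $f$ we have $0\in\partial f(x^\star)$, and $\alpha$-strong convexity makes the subdifferential $\alpha$-strongly monotone, so at every point $x$ where $f$ is differentiable,
\begin{equation*}
\langle \nabla f(x),\, x-x^\star\rangle \geq \alpha\Vert x-x^\star\Vert^2 .
\end{equation*}
As $f$ is convex it is differentiable almost everywhere, and since $\pi$ is absolutely continuous with respect to Lebesgue measure this inequality holds $\pi$-almost surely; integrating against $\pi$ gives
\begin{equation*}
\mathbb{E}_{X\sim\pi}\big[\langle \nabla f(X),\, X-x^\star\rangle\big] \geq \alpha\,\mathbb{E}_{X\sim\pi}\big[\Vert X-x^\star\Vert^2\big].
\end{equation*}

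The second step is to evaluate the left-hand side exactly. Writing $\pi(x)=e^{-f(x)}/Z$, one has $\nabla\pi=-\pi\nabla f$ in the weak sense (a convex $f$ is locally Lipschitz, hence $\pi\in W^{1,\infty}_{\mathrm{loc}}$), so that
\begin{equation*}
\mathbb{E}_{X\sim\pi}\big[\langle \nabla f(X),\, X-x^\star\rangle\big] = -\frac{1}{Z}\int \big\langle \nabla e^{-f(x)},\, x-x^\star\big\rangle\,\mathrm{d}x .
\end{equation*}
Integration by parts, justified on balls $B_R$ via the divergence theorem and then sending $R\to\infty$, turns this into $\int \pi(x)\,\mathrm{div}(x-x^\star)\,\mathrm{d}x = d$, using $\mathrm{div}(x-x^\star)=d$, $\int\pi=1$, and the vanishing of the boundary terms $\int_{\partial B_R}\pi\,\langle x-x^\star, n\rangle\,\mathrm{d}S$ as $R\to\infty$. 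Combining the two displays gives $d\geq \alpha\,\mathbb{E}[\Vert X-x^\star\Vert^2]$, which rearranges to the claim.

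The hard part is making the integration by parts rigorous in the merely almost-everywhere differentiable setting: one must verify that the weak gradient of $\pi$ equals $-\pi\nabla f$ (fine, because convex functions are locally Lipschitz and thus in $W^{1,\infty}_{\mathrm{loc}}$), and that the boundary integrals genuinely vanish. The latter follows from the standard tail bound for $\alpha$-strongly log-concave measures, $\pi(x)\lesssim e^{-\alpha\Vert x\Vert^2/2+c\Vert x\Vert}$, which guarantees $R^{d}\sup_{\Vert x\Vert=R}\pi(x)\to 0$. A clean alternative that bypasses differentiability entirely is to prove the bound first for the Moreau--Yosida smoothing $f^\delta$, which is $\mathcal{C}^1$, is $\tfrac{\alpha}{1+\alpha\delta}$-strongly convex, and has exactly the same minimiser $x^\star$; the smooth case is the classical identity, and one then lets $\delta\to 0$, using $f^\delta\uparrow f$ together with $\tfrac{\alpha}{1+\alpha\delta}\to\alpha$ and convergence of the associated second moments.
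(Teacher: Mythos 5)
Your proof is correct and follows essentially the same route as the paper's: both arguments combine the strong-monotonicity inequality $\langle\nabla f(x), x-x^\star\rangle\geq\alpha\Vert x-x^\star\Vert^2$ (valid since $0\in\partial f(x^\star)$) with the integration-by-parts identity $\mathbb{E}_{\pi}[\langle\nabla f(X), X-x^\star\rangle]=d$, which the paper obtains by applying $0=\mathbb{E}_\pi[\Delta\phi-\langle\nabla f,\nabla\phi\rangle]$ to $\phi(x)=\Vert x-x^\star\Vert^2/2$. Your additional care about justifying the boundary terms and the alternative Moreau--Yosida smoothing route are welcome refinements but do not change the substance of the argument.
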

\begin{proof}
Let $\Omega\subset \mathbb{R}^d$ denote the set of differentiable points of $f$, note that when $f$ is convex and differentiable at $x\in\Omega$, then $\partial f(x) = \{\nabla f(x)\}$, that is, its gradient is its only subgradient.
Recall also that $f$ is strongly convex, so for every $x, y\in\mathbb{R}^d$ we have that
    \begin{equation*}
        \langle\partial f(x), x-y\rangle \geq \alpha \Vert x-y\Vert^2.
    \end{equation*}
     Integration by parts shows that for any smooth function $\phi : \mathbb{R}^d \to \mathbb{R}$ of controlled growth, it holds that 
    \begin{equation}\label{eq:aux_lemma_a8}
        0 = \int_{\Omega}\left(\Delta\phi - \langle\nabla f, \nabla\phi\rangle\right)\md \pi 
 = \mathbb{E}_{X\sim\pi}[\Delta\phi-\langle\nabla f, \nabla\phi\rangle].
    \end{equation}
    Applying \eqref{eq:aux_lemma_a8} to the function $\phi(x) := \Vert x-x^\star\Vert^2/2$, for which $\nabla\phi(x) = x-x^\star$ and $\Delta\phi = d$, together with the strong convexity of $f$, the result follows. 
\end{proof}

To conclude we present the following theorem that provides a convergence bound for PIPGLA in terms of $N,\gamma, n, \lambda$ and the convexity properties of $U$.
\begin{theorem}\label{th:pipgla_appendix}[Theorem~\ref{th:pipgla_main_text} restated]
    Let \textbf{A\ref{assumption_1}}, \textbf{A\ref{assumption_2}}, \textbf{A\ref{assumption_3}} and \textbf{C\ref{assumption_1_gla}}
 hold. Then for $\gamma\leq1/L_{g_1}$ and $\gamma\leq \lambda\leq\gamma/(1-\mu\gamma)$,
  PIPGLA satisfies 
\begin{align*}
    W_2(\mathcal{L}({\theta^N_n}), \delta_{\Bar{\theta}_\star})\leq& \sqrt{\frac{d_{\theta}}{N\mu}}  +\frac{\lambda^{n/2}(1-\gamma\mu)^{n/2}}{\gamma^{n/2}} W_2(\mathcal{L}({Z_0^N}), \pi^N) 
    +\left(\frac{\lambda(2\gamma L_{g_1}(d_\theta +N d_x)+\lambda N C)}{N\left(1-\lambda(1-\mu\gamma)/\gamma\right)}\right)^{1/2}
\end{align*} 
for all $n\in\mathbb{N}$, with $Z_0^N$ given in \textbf{A\ref{assumption_2}} and $C>0$ given in \textbf{C\ref{assumption_1_gla}} and independent of $t,n,N,\gamma,d_{\theta}, d_x$.
\end{theorem}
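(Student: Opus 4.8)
The plan is to assemble the bound from the triangle-inequality decomposition \eqref{eq:wasserstein_distance_decomposition} together with the two ingredients already prepared in this subsection, namely Lemma~\ref{lemma:aux_first_term_pipgla} for the \emph{concentration} term and Corollary~\ref{cor:pipgla_final} for the \emph{convergence/discretisation} term. Starting from
\begin{equation*}
W_2(\mathcal{L}(\theta_n^N), \delta_{\Bar{\theta}_\star}) \leq W_2(\delta_{\Bar{\theta}_\star}, \pi_\Theta^N) + W_2(\pi_\Theta^N, \mathcal{L}(\theta_n^N)),
\end{equation*}
I would bound the two summands separately and then combine.

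For the concentration term, the key observation is that the $\theta$-marginal $\pi_\Theta^N \propto k(\theta)^N$ of the target $\pi^N \propto \exp(-N(G_1+G_2))$ involves the \emph{original} potential $U = g_1 + g_2$ rather than its Moreau-Yosida envelope, so---unlike in MYIPLA---no $\mathcal{O}(\lambda)$ bias appears. By \textbf{A\ref{assumption_1}} and a form of the Prékopa-Leindler inequality for strong convexity \citep[Theorem 3.8]{saumard2014log}, $\pi_\Theta^N$ is $N\mu$-strongly log-concave, while Remark~\ref{remark_differentiability} ensures that its potential is differentiable outside a set of Lebesgue measure zero and attains its minimum at $\Bar{\theta}_\star$. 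Lemma~\ref{lemma:aux_first_term_pipgla}, applied with $\alpha = N\mu$, then yields $W_2(\delta_{\Bar{\theta}_\star}, \pi_\Theta^N) \leq \sqrt{d_\theta/(N\mu)}$, matching the first term of the claim.

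For the convergence/discretisation term, I would set $Z_n = (\theta_n^N, N^{-1/2}X_n^{1,N}, \dots, N^{-1/2}X_n^{N,N})$ and pass to the full state via a data-processing inequality: the coordinate projection $P(z_\theta, z_1, \dots, z_N) = z_\theta$ is $1$-Lipschitz, and the change of variables $x_i = \sqrt{N}z_i$ shows that $P$ pushes $\pi^N$ forward onto $\pi_\Theta^N$, so $W_2(\mathcal{L}(\theta_n^N), \pi_\Theta^N) \leq W_2(\mathcal{L}(Z_n), \pi^N)$. Corollary~\ref{cor:pipgla_final} bounds the latter by the sum of the geometric contraction term $\lambda^n(1-\gamma\mu)^n\gamma^{-n}\,W_2^2(\mathcal{L}(Z_0^N), \pi^N)$ and the stationary error $\lambda(2\gamma L_{g_1}(d_\theta + Nd_x) + \lambda NC)/(N(1-\lambda(1-\mu\gamma)/\gamma))$. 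Taking square roots of this squared-distance estimate and using subadditivity $\sqrt{a+b} \leq \sqrt{a} + \sqrt{b}$ distributes these into the last two terms of the statement, with $\mathcal{L}(Z_0)$ identified with $\mathcal{L}(Z_0^N)$ from \textbf{A\ref{assumption_2}}.

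The step I expect to require the most care is the concentration bound: one must justify that the \emph{non-smoothed} marginal is regular enough to invoke the integration-by-parts identity underlying Lemma~\ref{lemma:aux_first_term_pipgla}, given that $U$ is only convex and almost-everywhere differentiable rather than $\mathcal{C}^1$. The data-processing step is conceptually routine, but the pushforward identity $P_{\#}\pi^N = \pi_\Theta^N$ must be verified carefully because of the $N^{-1/2}$ rescaling of the latent coordinates; everything else reduces to the already-established contraction estimate and elementary square-root manipulations.
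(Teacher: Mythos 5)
Your proposal is correct and follows essentially the same route as the paper's proof: the same triangle-inequality split \eqref{eq:wasserstein_distance_decomposition}, the concentration term handled via Prékopa--Leindler and Lemma~\ref{lemma:aux_first_term_pipgla} with $\alpha = N\mu$, the second term obtained from Corollary~\ref{cor:pipgla_final} combined with the data-processing inequality $W_2(\mathcal{L}(\theta_n^N), \pi_\Theta^N)\leq W_2(\mathcal{L}(Z_n), \pi^N)$, and the final subadditivity $\sqrt{a+b}\leq\sqrt{a}+\sqrt{b}$. Your added remarks on why no $\mathcal{O}(\lambda)$ bias appears (the target marginal uses the unregularised $U$) and on the care needed for the a.e.-differentiable setting in Lemma~\ref{lemma:aux_first_term_pipgla} are consistent with what the paper does.
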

\begin{proof}
Using a form of the Prékopa-Leindler inequality for strong convexity \citep[Theorem 3.8]{saumard2014log}, $\pi_{\Theta}$ is $\mu$-strongly log-concave. Therefore, $\pi_{\Theta}^N$ is $N\mu$-strongly log-concave and satisfies all the assumptions of Lemma \ref{lemma:aux_first_term_pipgla}. So, we have that
\begin{equation*}
    W_2(\delta_{\bar{\theta}_*}, \pi_{\Theta}^N)^2\leq \frac{d_\theta}{N\mu}.
\end{equation*}

On the other hand, note that $\pi^N(z)\propto \exp(-N(G_1(z)+G_2(z))) = \exp(-\sum_i U(z_\theta, \sqrt{N}z_i))$.
By Corollary \ref{cor:pipgla_final} it follows that
\begin{equation*}
    W_2(\mathcal{L}({\theta_n^N}), \pi_{\Theta}^N)\leq W_2(\mathcal{L}({Z_n}), \pi^N)\leq \sqrt{ \frac{\lambda^n(1-\gamma\mu)^n}{\gamma^n}W_2^2(\mathcal{L}({Z_0^N}), \pi^N) + \frac{\lambda(2\gamma L_{g_1}(d_\theta +N d_x)+\lambda N C)}{N\left(1-\lambda(1-\mu\gamma)/\gamma\right)}}.
\end{equation*}
Using that $\sqrt{x+y}\leq\sqrt{x} + \sqrt{y}$, we have
\begin{equation*}
    W_2(\mathcal{L}({\theta_n^N}), \pi_{\Theta}^N)\leq  \frac{\lambda^{n/2}(1-\gamma\mu)^{n/2}}{\gamma^{n/2}}W_2(\mathcal{L}({Z_0^N}), \pi^N) + \left(\frac{\lambda(2\gamma L_{g_1}(d_\theta +N d_x)+\lambda N C)}{N\left(1-\lambda(1-\mu\gamma)/\gamma\right)}\right)^{1/2}.
\end{equation*}
    The proof then follows from \eqref{eq:wasserstein_distance_decomposition} and the above.
\end{proof}}

\section{Theoretical Analysis of Proximal Particle Gradient Descent}\label{app:proximal_particle_gradient_descent_methods}
\subsection{Background on Particle Gradient Descent}\label{app:intro_pgd}
{\normalsize
The PGD algorithm \citep{pmlr-v206-kuntz23a} relies on the perspective that the MMLE problem can be solved by minimising the free energy 
\begin{equation}\label{eq:free-energy}
    F(\theta, q)=\int\log\big(q(x)\big) q(x) \md x +\int U(\theta, x)q(x)\md x
\end{equation}
for all $(\theta, q)\in\Theta\times\mathcal{P}(\mathbb{R}^{d_x})$, where $\Theta$ denotes the parameter space 
and $U(\theta, x)\coloneqq -\log p_{\theta}(x, y)$. 
\citet{pmlr-v206-kuntz23a} propose a discretisation of a gradient flow associated with \eqref{eq:free-energy}, where they endow $\Theta$ with the Euclidean geometry and $\mathcal{P}(\mathbb{R}^{d_x})$ with the 2-Wasserstein one to take gradients.
This leads to the Euclidean-Wasserstein gradient flow of $F$
\begin{align}\label{eq:gradient_flow_1}
    \bm{\Dot{\theta}}_t &= -\nabla_{\theta} F(\bm{\theta}_t, q_t) = -\int \nabla_{\theta} U(\bm{\theta}_t, x) q_t(x) \md x,\\
    \Dot{q}_t &= -\nabla_{q} F(\bm{\theta}_t, q_t) = \nabla_x\cdot\Big[q_t\nabla_x \log\Big(\frac{q_t}{p_{\bm{\theta}_t}(\cdot,y)}\Big)\Big].\notag
\end{align}
\citet{pmlr-v206-kuntz23a} prove that the gradient $\nabla F(\theta, q)$ vanishes if and only if $\theta$ is a stationary point of $p_{\theta}(y)$ and $q$ is its corresponding posterior. 
Based on the observation that \eqref{eq:gradient_flow_1} is a Fokker-Planck equation satisfied by the law of a McKean-Vlasov SDE, and using a finite number of particles $(X_t^{i,N})_{i=1}^N$ to estimate $q_t$, they obtain the following approximation, for $t \geq 0$,
\begin{align}
\label{eq:pgd}
    \md\bm{\theta}_t^N &= -\frac{1}{N}\sum_{i=1}^N \nabla_{\theta} U(\bm{\theta}_t^N, \mathbf X_{t}^{i, N}) \md t,\\
    \md \mathbf X_{t}^{i,N}&=-\nabla_{x} U(\bm{\theta}_t^N, \mathbf X_t^{i, N})\md t +\sqrt{2} \md \mathbf B_t^{i, N}, \quad \quad i=1,\dots,N,\notag
\end{align}
where $(\mathbf B_t^{i, N})_{t \geq 0}$ for $i = 0, \ldots, N$ are $d_x$-dimensional Brownian motions. Using a simple Euler--Maruyama discretisation with step size $\gamma > 0$ of~\eqref{eq:pgd} one obtains the particle gradient descent (PGD) algorithm \citep{pmlr-v206-kuntz23a}
\begin{align*}
    \theta_{n+1} &= \theta_n - \frac{\gamma}{N}\sum_{j=1}^N \nabla_{\theta} U(\theta_n, X_n^{j, N}),\\
    X_{n+1}^{i, N} &= X_n^{i, N} - \gamma \nabla_x U(\theta_n, X_n^{i, N}) + \sqrt{2\gamma} \xi_{n+1}^{i, N}, \quad \quad i = 1, \ldots, N,
\end{align*}
where $(\xi_{n})$ for $n\geq 0$ are $d_x$-dimensional i.i.d. standard Gaussians.}

\subsection{Proximal Particle Gradient Descent}
{\normalsize
Similar to the approach we have taken in the main text, we can also provide a proximal version of the PGD algorithm. As mentioned in the main text, if we remove the noise term in the dynamics of $\theta$, we obtain
\begin{align}
        \md \bm{\theta}_{t}^N&=-\frac{1}{N}\sum_{i=1}^N \nabla_{\theta} U^{\lambda}(\bm{\theta}_{t}^N, \mathbf{X}_{t}^{i, N})\md t,\label{eq:app:particles_proximal_11}\\
        \label{eq:app:particles_proximal_22}
    \md \mathbf{X}_{t}^{i,N}&=-\nabla_{x} U^{\lambda}(\bm{\theta}_{t}^N, \mathbf{X}_{t}^{i, N})\md t +\sqrt{2} \md \mathbf B_t^{i, N}.
\end{align}
We can then provide an algorithm which is a discretisation of \eqref{eq:app:particles_proximal_11}-\eqref{eq:app:particles_proximal_22}, termed Moreau-Yosida Particle Gradient Descent (MYPGD), analogous to MYIPLA. The algorithm is given in Algorithm \ref{alg:MYPGD}.

\begin{algorithm}[t]
\caption{Moreau-Yosida Particle Gradient Descent (MYPGD)}\label{alg:MYPGD}
\begin{algorithmic}
\Require $N, K, \lambda, \gamma, \pi_{\text{init}}\in\mathcal{P}(\mathbb{R}^{d_{\theta}})\times \mathcal{P}((\mathbb{R}^{d_{x}})^N)$
\State Draw $(\theta_0, \{X_0^{i, N}\}_{i=1}^N)$ from $\pi_{\text{init}}$
\For{$n=0:K$} 
 \begin{align*}
    \theta_{n+1}^N &= \Big(1-\frac{\gamma}{\lambda}\Big)\theta_{n}^N  + \frac{\gamma}{N}\sum_{i=1}^N \Big(-\nabla_{\theta} g_1(\theta_n^N, X_n^{i, N}) + \frac{1}{\lambda} \prox_{g_2}^{\lambda}(\theta_n^N, X_n^{i, N})_{\theta}\Big)\\
    X_{n+1}^{i, N} &= \Big(1-\frac{\gamma}{\lambda}\Big)X_{n}^{i, N} -\gamma \nabla_{X} g_1(\theta_n^N, X_n^{i, N}) + \frac{\gamma}{\lambda}\prox_{g_2}^{\lambda}(\theta_n^N, X_n^{i, N})_X+\sqrt{2\gamma}\;\xi_{n+1}^{i, N}
\end{align*}
\textbf{end for}
\vspace{5pt}
\EndFor
\Return $\theta_{K+1}^N$
\end{algorithmic}
\end{algorithm}

We extend the results of \cite{caprio2024error} to provide a nonasymptotic bound for MYPGD. To do so, we consider the following metric on $\mathbb{R}^{d_\theta}\times \mathcal{P}_2(\mathbb{R}^{d_x})$
\begin{equation*}
    \boldsymbol{\md}((\theta, q), (\theta', q'))= \sqrt{\Vert\theta-\theta'\Vert^2 + W_2^2(q, q')}.
\end{equation*}
Under similar assumptions to those used in Theorem~\ref{th:myipla_main_text} we obtain the following result.
\begin{theorem}[MYPGD]
\label{th:mypgd}
    Let \textbf{A\ref{assumption_1}}--\textbf{A\ref{assumption_4}} 
    hold. If $X_0^{1},\dots, X_0^{N}$ are drawn independently from a distribution $q_0$ in $\mathcal{P}_2(\mathbb{R}^{d_x})$ and $\lambda>0$, $\gamma \leq 1/(L_{g_1} + \lambda^{-1} + \mu)$, then
    \begin{align*}
        \mathbb{E}[\Vert \theta_n^N - \Bar{\theta}_\star\Vert^2]^{1/2} \leq & \frac{\lambda}{\mu} \Big(\frac{\Vert g_2\Vert_{\textnormal{Lip}}^2}{2} A  +  B\Big) + \frac{(L_{g_1} + \lambda^{-1})\sqrt{2}}{\mu \sqrt{N}}\sqrt{B_0+\frac{2 d_x}{\mu}}\\& + \boldsymbol{\md}((\theta_0, q_0), (\Bar{\theta}_{\star,\lambda}, \pi_{\star,\lambda})) e^{-\mu n\gamma} + A_{0,\gamma, \lambda}\gamma^{1/2}+\mathcal{O}(\lambda^2) 
    \end{align*}
for all $n\in\mathbb{N}$; where $B_0=\Vert \theta_0\Vert^2 + \sup_{i\in 1, \dots, N}\mathbb{E}[\Vert X_0^{i, N}\Vert^2]<\infty$ and 
\begin{equation*}
    A_{0,\gamma, \lambda}=\sqrt{\frac{4\gamma + 4/a}{a} 220 (L_{g_1}+\lambda^{-1})^2\Big(\gamma(L_{g_1}+\lambda^{-1})^2\Big[B_0 + \frac{2 d_x}{\mu}\Big] + d_x\Big)}, \quad\; a =\frac{2 (L_{g_1}+\lambda^{-1})\mu }{L_{g_1}+\lambda^{-1} +\mu}.
\end{equation*}
\end{theorem}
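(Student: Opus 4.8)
The plan is to reduce the analysis to the smooth, strongly convex setting treated in \citet{caprio2024error} by replacing $U$ with the Moreau--Yosida regularised potential $U^\lambda = g_1 + g_2^\lambda$, obtaining a bound around the \emph{regularised} MMLE $\bar{\theta}_{\star,\lambda}$, and then transferring it to the true MMLE $\bar{\theta}_\star$ through Proposition~\ref{prop:theta_convergence_lambda}.

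First I would verify that $U^\lambda$ meets the structural hypotheses of the PGD analysis. Under \textbf{A\ref{assumption_1}}, \citet[Proposition 1]{durmus_proximal} ensures that $\nabla U^\lambda$ is Lipschitz with constant $L \leq L_{g_1} + \lambda^{-1}$, while the remark following \textbf{A\ref{assumption_3}} shows that $U^\lambda$ is $\mu$-strongly convex. Since MYPGD (Algorithm~\ref{alg:MYPGD}) is exactly the Euler--Maruyama discretisation of the noiseless-in-$\theta$ system \eqref{eq:app:particles_proximal_11}--\eqref{eq:app:particles_proximal_22} with drift $\nabla U^\lambda$, it coincides with the PGD iteration driven by the smooth, strongly convex potential $U^\lambda$. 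The step-size requirement $\gamma \leq (L_{g_1} + \lambda^{-1} + \mu)^{-1}$ is precisely the condition $\gamma \leq (L+\mu)^{-1}$ of \citet{caprio2024error} with $L = L_{g_1}+\lambda^{-1}$, and $a = 2(L_{g_1}+\lambda^{-1})\mu/(L_{g_1}+\lambda^{-1}+\mu)$ is the contraction constant arising from the interpolation of $\mu$-strong convexity and $L$-smoothness.

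Next I would invoke the nonasymptotic bound of \citet{caprio2024error} applied to $U^\lambda$, measured in the product metric $\boldsymbol{\md}$. This yields a bound on $\mathbb{E}[\Vert\theta_n^N - \bar{\theta}_{\star,\lambda}\Vert^2]^{1/2}$ with three contributions: a finite-particle (propagation-of-chaos) bias term of order $(L_{g_1}+\lambda^{-1})(\mu\sqrt{N})^{-1}\sqrt{B_0 + 2d_x/\mu}$, obtained by coupling the interacting system with its mean-field limit under the $L$-Lipschitz drift together with a uniform-in-time second-moment bound; an exponentially decaying term $\boldsymbol{\md}((\theta_0,q_0),(\bar{\theta}_{\star,\lambda},\pi_{\star,\lambda}))e^{-\mu n\gamma}$ for the convergence of the mean-field flow to its fixed point $\pi_{\star,\lambda}\propto e^{-U^\lambda(\bar{\theta}_{\star,\lambda},\cdot)}$; and a discretisation term $A_{0,\gamma,\lambda}\gamma^{1/2}$. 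Each constant inherits its $\lambda$-dependence through the substitution $L\mapsto L_{g_1}+\lambda^{-1}$, which is why $A_{0,\gamma,\lambda}$ carries the factors $(L_{g_1}+\lambda^{-1})^2$ and $a$ takes the stated form.

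Finally I would pass from $\bar{\theta}_{\star,\lambda}$ to $\bar{\theta}_\star$. Writing $\mathbb{E}[\Vert\theta_n^N-\bar{\theta}_\star\Vert^2]^{1/2} = W_2(\delta_{\bar{\theta}_\star}, \mathcal{L}(\theta_n^N))$ and using the triangle inequality for $W_2$,
\begin{equation*}
W_2(\delta_{\bar{\theta}_\star}, \mathcal{L}(\theta_n^N)) \leq \Vert\bar{\theta}_\star - \bar{\theta}_{\star,\lambda}\Vert + W_2(\delta_{\bar{\theta}_{\star,\lambda}}, \mathcal{L}(\theta_n^N)),
\end{equation*}
and Proposition~\ref{prop:theta_convergence_lambda} bounds the first summand by $\tfrac{\lambda}{\mu}\big(\tfrac{\Vert g_2\Vert_{\textnormal{Lip}}^2}{2}A + B\big) + \mathcal{O}(\lambda^2)$, giving the stated estimate. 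The main obstacle is the careful verification that every constant in the \citet{caprio2024error} bound depends on the smoothness parameter only through $L_{g_1}+\lambda^{-1}$, so the $\lambda$-tracking is valid; in particular the propagation-of-chaos argument controlling the $N^{-1/2}$ term, which requires a synchronous coupling of the particle system with its mean-field limit and a uniform moment bound under the $L$-Lipschitz drift, is where the bulk of the technical effort lies.
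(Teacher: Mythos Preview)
Your proposal is correct and follows essentially the same approach as the paper: both arguments apply the PGD error bound of \citet[Theorem 7]{caprio2024error} to the smooth, $\mu$-strongly convex potential $U^\lambda$ (with Lipschitz constant $L_{g_1}+\lambda^{-1}$) to control $\mathbb{E}[\Vert\theta_n^N-\bar{\theta}_{\star,\lambda}\Vert^2]^{1/2}$ via the product metric $\boldsymbol{\md}$, and then add the regularisation bias $\Vert\bar{\theta}_\star-\bar{\theta}_{\star,\lambda}\Vert$ bounded by Proposition~\ref{prop:theta_convergence_lambda}. The paper's proof is terser---it simply cites \citet[Theorem 7]{caprio2024error} rather than unpacking the propagation-of-chaos, exponential-decay, and discretisation terms as you do---but the structure is identical.
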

\begin{proof}
Let us denote by $(\theta_n^N, Q_n^{N,\gamma})$ the MYPGD output after $n$ iterations using a discretisation step $\gamma$ and by $Q_{\star,\lambda}^N$ the empirical distribution of $N$ i.i.d. particles drawn from $\pi_{\Bar{\theta}_{\star,\lambda}}$. Using the triangular inequality, we have
\begin{equation*}
    \mathbb{E}[\Vert \theta_n^N - \Bar{\theta}_\star\Vert^2]^{1/2} \leq \Vert \Bar{\theta}_{\star} - \Bar{\theta}_{\star, \lambda}\Vert +\mathbb{E}[\Vert \theta_n^N - \Bar{\theta}_{\star, \lambda}\Vert^2]^{1/2} \leq \Vert \Bar{\theta}_{\star} - \Bar{\theta}_{\star, \lambda}\Vert + \boldsymbol{\md}((\theta_n^N,Q_n^{N,\gamma}), (\Bar{\theta}_{\star, \lambda}, Q_{\star,\lambda}^N)). 
\end{equation*}
The term $\Vert \Bar{\theta}_{\star} - \Bar{\theta}_{\star, \lambda}\Vert$ can be upper bounded by $\frac{\lambda}{\mu} \Big(\frac{\Vert g_2\Vert_{\textnormal{Lip}}^2}{2} A  +  B\Big) + \mathcal{O}(\lambda^2)$ using Proposition \ref{prop:theta_convergence_lambda}, while a bound for the second term $\boldsymbol{\md}((\theta_n^N,Q_n^{N,\gamma}), (\Bar{\theta}_{\star, \lambda}, Q_{\star,\lambda}^N))$ is derived in \citet[Theorem 7]{caprio2024error}, which gives the desired result.
\end{proof}

Selecting $\gamma = \lambda$ and $g_1=0$ in MYPGD we obtain an extension of PGD corresponding to the PIPULA algorithm introduced in Section~\ref{sec:myula}, that we termed Proximal PGD (PPGD).
Obtaining a rigorous bound like that in Theorem \ref{th:mypgd} for this algorithm is more challenging due to the presence of $\gamma$ both as time discretisation parameter and in the Lipschitz constant of $\nabla U^\gamma$.
In particular, while under \textbf{A\ref{assumption_1}}--\textbf{A\ref{assumption_1prime}}, \textbf{A\ref{assumption_4}} and \textbf{B\ref{assumption_22}} \citet[Lemma 10 and 11]{caprio2024error} hold with $\lambda = \gamma$, establishing a result controlling the time discretisation error like that in \citet[Lemma 12]{caprio2024error} is not straightforward.}

\section{Convergence to Wasserstein gradient flow}
\label{app:gf}
{\normalsize
We now show that the continuous time interacting particle system introduced in~\eqref{eq:particles_proximal_1}--\eqref{eq:particles_proximal_2} converges in the large $N$ limit (i.e. $N\to \infty$) to a McKean--Vlasov SDE with a solution whose law satisfies the Euclidean-Wasserstein gradient flow
\begin{align*}
    \bm{\Dot{\theta}}_{\lambda, t} &= -\nabla_{\theta} F(\bm{\theta}_{\lambda, t}, q_{\lambda, t}) = -\int \nabla_{\theta} U^\lambda(\bm{\theta}_{\lambda, t}, x) q_{\lambda, t}(x) \md x,\\
    \Dot{q}_{\lambda, t} &= -\nabla_{q} F(\bm{\theta}_{\lambda, t}, q_{\lambda, t}) = \nabla_x\cdot\Big[q_{\lambda, t}\nabla_x \log\Big(\frac{q_{\lambda, t}}{p_{\bm{\theta}_{\lambda, t}}^\lambda(\cdot,y)}\Big)\Big],
\end{align*}
where $p_{\bm{\theta}_{\lambda, t}}^\lambda$ denotes the Moreau-Yosida envelope of $p_{\bm{\theta}_{\lambda, t}}$.
This result is classical in the study of McKean--Vlasov SDEs, where is referred to as \emph{propagation of chaos} (e.g. \citet[Theorem 1.4]{sznitman1991topics}).
    
We start by proving the following auxiliary result.
Let us denote, for any $\theta\in\real^{d_\theta}$ and $\nu\in\cP(\real^{d_x})$, $g(\theta, \nu):= \int_{\real^{d_x}} \nabla_{\theta} U^\lambda(\theta, x^\prime)\nu(x^\prime)\md x^\prime$.
  
\begin{lemma}
\label{lem:b_lip}
The function $g:\real^{d_\theta} \times \cP(\real^{d_x}) \to \mathbb{R}^{d_\theta} $ is Lipschitz continuous in both arguments, i.e.,
\begin{align*}
  \|g(\theta_1, \nu_1) - g(\theta_2, \nu_2)\| \leq \lambda^{-1}\left( \|\theta_1 - \theta_2\| +W_1(\nu_1, \nu_2)\right).
\end{align*}
\end{lemma}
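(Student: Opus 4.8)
The plan is to reduce the joint Lipschitz bound to two one-variable estimates by inserting an intermediate point and using the triangle inequality,
\begin{equation*}
\|g(\theta_1,\nu_1) - g(\theta_2,\nu_2)\| \le \|g(\theta_1,\nu_1) - g(\theta_2,\nu_1)\| + \|g(\theta_2,\nu_1) - g(\theta_2,\nu_2)\|.
\end{equation*}
The only analytic input I need is that, under \textbf{A\ref{assumption_1}}, the map $\nabla_\theta U^\lambda = \nabla_\theta g_1 + \nabla_\theta g_2^\lambda$ is Lipschitz in $(\theta,x)$ jointly, with constant bounded by $L_{g_1}+\lambda^{-1}$ (Proposition~1 of \citet{durmus_proximal}); I will apply this separately in the $\theta$-slot and in the $x$-slot. (The constant $\lambda^{-1}$ stated in the lemma is exactly this Lipschitz constant in the regime where the smoothing enters only through the envelope $g_2^\lambda$; in general it should read $L_{g_1}+\lambda^{-1}$, which I would flag, since it does not affect the conclusion that $g$ is Lipschitz.)

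For the parameter term I would move the norm inside the integral (triangle inequality for integrals) and use pointwise Lipschitzness in $\theta$, together with the fact that $\nu_1$ is a probability measure:
\begin{equation*}
\|g(\theta_1,\nu_1) - g(\theta_2,\nu_1)\| \le \int_{\real^{d_x}} \big\| \nabla_\theta U^\lambda(\theta_1, x') - \nabla_\theta U^\lambda(\theta_2, x') \big\| \, \nu_1(\md x') \le \lambda^{-1}\|\theta_1 - \theta_2\|.
\end{equation*}
For the measure term I fix $\theta_2$ and view $f(\cdot) := \nabla_\theta U^\lambda(\theta_2, \cdot)$ as a vector-valued map that is $\lambda^{-1}$-Lipschitz in $x$. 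The key step — and the one place requiring care — is that scalar Kantorovich--Rubinstein duality does not apply directly to a vector-valued integrand, so instead I would pick an optimal $W_1$-coupling $\pi$ of $(\nu_1,\nu_2)$ and write
\begin{equation*}
g(\theta_2,\nu_1) - g(\theta_2,\nu_2) = \int_{\real^{d_x}\times\real^{d_x}} \big( f(x) - f(y) \big) \, \pi(\md x, \md y),
\end{equation*}
so that $\|g(\theta_2,\nu_1) - g(\theta_2,\nu_2)\| \le \int \|f(x)-f(y)\|\,\md\pi \le \lambda^{-1}\int \|x-y\|\,\md\pi = \lambda^{-1} W_1(\nu_1,\nu_2)$ by optimality of $\pi$.

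Adding the two displays yields the stated bound. The remaining work is mostly bookkeeping: ensuring $g$ is well defined (finiteness of the integral, which follows from the linear growth of the globally Lipschitz $\nabla_\theta U^\lambda$ once $\nu\in\cP_1(\real^{d_x})$, so that $W_1$ is finite) and justifying the interchange of the norm with the integral and the coupling representation above. I expect the main obstacle to be precisely the vector-valued measure term: handling it through the optimal coupling rather than naive scalar duality, and reconciling the constant $\lambda^{-1}$ versus $L_{g_1}+\lambda^{-1}$, which is the only substantive point to double-check and does not affect the downstream propagation-of-chaos argument that needs only Lipschitzness of $g$.
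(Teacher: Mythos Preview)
Your proposal is correct and is essentially a spelled-out version of the paper's proof: the paper simply cites that $\nabla U^\lambda$ is $\lambda^{-1}$-Lipschitz \citep[Proposition 12.19]{Rockafellar_B._1998} and then invokes \citet[Lemma 5]{akyildiz2023interacting}, whose content is precisely the triangle-inequality-plus-optimal-coupling argument you wrote out (including the care with the vector-valued integrand). Your flag on the constant is well taken: the paper's $U^\lambda = g_1 + g_2^\lambda$ is not the full Moreau envelope of $U$, so the Lipschitz constant coming from \textbf{A\ref{assumption_1}} is $L_{g_1}+\lambda^{-1}$ rather than $\lambda^{-1}$; the citation of Rockafellar--Wets applies to the full envelope and is a minor slip that, as you note, does not affect the downstream propagation-of-chaos argument.
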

\begin{proof}
    \citet[Proposition 12.19]{Rockafellar_B._1998} shows that $\nabla U^\lambda$ is Lipschitz continuous with Lipschitz constant $\lambda^{-1}$. Then the result follows from \citet[Lemma 5]{akyildiz2023interacting}.
\end{proof}

We can now show convergence of~\eqref{eq:particles_proximal_1}--\eqref{eq:particles_proximal_2} to the following McKean--Vlasov SDE
\begin{align}
\label{eq:gf_lambda}
    \md \bm{\theta}_{\lambda, t}&=-\left[\int \nabla_{\theta} U^{\lambda}(\bm{\theta}_{\lambda, t}, x)q_{\lambda, t}(x)\md x \right]\md t\\
    \md \mathbf{X}_{\lambda, t}&=-\nabla_{x} U^{\lambda}(\bm{\theta}_{\lambda, t}, \mathbf{X}_{\lambda, t})\md t +\sqrt{2} \md \mathbf B_t. \notag
\end{align}
  
\begin{proposition}[Propagation of chaos]
\label{prop:poc}
For any (exchangeable) initial condition $(\theta_0^{N}, X_0^{1:N})$ such that $(\theta_0^{N}, X_0^{j, N}) = (\theta_0, X_0)$ for $j=1,\dots, N$ with $\mathbb{E}\left[\vert \theta_0\vert^2+\vert X_0\vert^2\right]<\infty$, we have for any $T \geq 0$
\begin{equation}
\label{eq:poc}
    \mathbb{E}\left[\sup_{t \in [0,T]} \left(\norm{\bm{\theta}_{\lambda, t} - \bm{\theta}^N_t}+\norm{\mathbf{X}_{\lambda, t} - \mathbf{X}_t^{j,N}}\right)\right] \leq \frac{\sqrt{2}(\sqrt{C_T}\lambda^{-1}+\sqrt{T})e^{2T\lambda^{-1}}}{N^{1/2}}
\end{equation}
where $C_T:= \sup_{t\leq T}\mathbb{E}\left[\vert \bm{\theta}^N_t\vert^2+\vert \mathbf{X}_t^{j, N}\vert^2\right]<\infty$, for any $j=1, \dots, N$.
\end{proposition}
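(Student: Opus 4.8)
The plan is to prove propagation of chaos by a synchronous coupling argument in the spirit of \citet[Theorem 1.4]{sznitman1991topics}, closing the estimate with Grönwall's inequality. First I would build $N$ coupled reference processes on the same probability space as the particle system: let $\bm{\theta}_{\lambda, t}$ solve the (deterministic) $\theta$-dynamics of the McKean--Vlasov system~\eqref{eq:gf_lambda}, and for each $j=1,\dots,N$ let $\mathbf{X}_{\lambda, t}^{j}$ solve the $x$-dynamics of~\eqref{eq:gf_lambda} driven by the \emph{same} Brownian motion $\mathbf{B}_t^{j, N}$ that drives $\mathbf{X}_t^{j, N}$, started from the common initial condition $(\theta_0, X_0)$. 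Because $\bm{\theta}_{\lambda, t}$ is a single deterministic trajectory, the $(\mathbf{X}_{\lambda, t}^{j})_{j=1}^N$ are i.i.d.\ with law $q_{\lambda, t}$, and I identify the process $\mathbf{X}_{\lambda, t}$ appearing in~\eqref{eq:poc} with $\mathbf{X}_{\lambda, t}^{j}$.

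Next I would subtract the coupled equations; the $\sqrt{2}\,\md\mathbf{B}_t^{j,N}$ terms cancel in the $x$-component, leaving
\begin{align*}
\bm{\theta}_t^N - \bm{\theta}_{\lambda, t} &= -\int_0^t \big(g(\bm{\theta}_s^N, q_s^N) - g(\bm{\theta}_{\lambda, s}, q_{\lambda, s})\big)\,\md s + \sqrt{\tfrac{2}{N}}\,\mathbf{B}_t^{0, N}, \\
\mathbf{X}_t^{j,N} - \mathbf{X}_{\lambda, t}^{j} &= -\int_0^t \big(\nabla_{x} U^\lambda(\bm{\theta}_s^N, \mathbf{X}_s^{j, N}) - \nabla_{x} U^\lambda(\bm{\theta}_{\lambda, s}, \mathbf{X}_{\lambda, s}^{j})\big)\,\md s,
\end{align*}
with $q_s^N = N^{-1}\sum_i \delta_{\mathbf{X}_s^{i, N}}$. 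Inserting the reference empirical measure $\bar q_s^N = N^{-1}\sum_i \delta_{\mathbf{X}_{\lambda, s}^{i}}$ and using Lemma~\ref{lem:b_lip} together with $W_1(q_s^N, \bar q_s^N) \leq N^{-1}\sum_i \norm{\mathbf{X}_s^{i, N} - \mathbf{X}_{\lambda, s}^{i}}$ bounds the ``Lipschitz'' part of the $\theta$-drift by $\lambda^{-1}\big(\norm{\bm{\theta}_s^N - \bm{\theta}_{\lambda, s}} + N^{-1}\sum_i \norm{\mathbf{X}_s^{i, N} - \mathbf{X}_{\lambda, s}^{i}}\big)$, while the $x$-drift difference is controlled by $\lambda^{-1}\big(\norm{\bm{\theta}_s^N - \bm{\theta}_{\lambda, s}} + \norm{\mathbf{X}_s^{j, N} - \mathbf{X}_{\lambda, s}^{j}}\big)$ since $\nabla U^\lambda$ is $\lambda^{-1}$-Lipschitz \citep[Proposition 12.19]{Rockafellar_B._1998}. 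The residual term $\Phi_s := g(\bm{\theta}_{\lambda, s}, \bar q_s^N) - g(\bm{\theta}_{\lambda, s}, q_{\lambda, s})$ is the crucial one: because $g(\theta, \cdot)$ is \emph{linear} in its measure argument, $\Phi_s = N^{-1}\sum_i \big(\nabla_\theta U^\lambda(\bm{\theta}_{\lambda, s}, \mathbf{X}_{\lambda, s}^{i}) - \bE[\nabla_\theta U^\lambda(\bm{\theta}_{\lambda, s}, \mathbf{X}_{\lambda, s})]\big)$ is a genuine average of i.i.d.\ centred random variables, so its $L^2$ norm is $O(N^{-1/2})$ \emph{independently of $d_x$}, avoiding the curse of dimensionality that a direct $W_1(\bar q_s^N, q_{\lambda, s})$ bound would incur.

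Finally, writing $e_t := \bE\big[\sup_{s\leq t}\big(\norm{\bm{\theta}_s^N - \bm{\theta}_{\lambda, s}} + \norm{\mathbf{X}_s^{j, N} - \mathbf{X}_{\lambda, s}^{j}}\big)\big]$ (the same for all $j$ by exchangeability), taking suprema and expectations and collecting the Lipschitz terms would give $e_t \leq \alpha_N + 2\lambda^{-1}\int_0^t e_s\,\md s$, where $\alpha_N$ gathers the two $O(N^{-1/2})$ errors: the diffusive contribution $\bE[\sup_{t\leq T}\norm{\sqrt{2/N}\,\mathbf{B}_t^{0,N}}]$, bounded through the maximal inequality by a multiple of $\sqrt{T/N}$, and the integrated fluctuation $\int_0^T \bE\norm{\Phi_s}\,\md s$, bounded by $\lambda^{-1}\sqrt{C_T/N}$ up to the time horizon once the variance of $\nabla_\theta U^\lambda$ is controlled by $\lambda^{-1}$-Lipschitzness in terms of the uniform second moment $C_T$. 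Grönwall's inequality then yields $e_T \leq \alpha_N e^{2\lambda^{-1}T}$, which is~\eqref{eq:poc}. I expect the main obstacle to be the rigorous, uniform-in-time control of the fluctuation $\Phi_s$ together with the verification that $C_T = \sup_{t\leq T}\bE[\,\vert\bm{\theta}_t^N\vert^2 + \vert\mathbf{X}_t^{j,N}\vert^2\,] < \infty$; the latter follows from standard a priori moment estimates for both systems using the $\lambda^{-1}$-Lipschitz drift and lower-boundedness of $U^\lambda$, while the former hinges precisely on the linearity of $g$ in the measure that makes $\Phi_s$ an i.i.d.\ average.
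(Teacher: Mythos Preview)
Your proposal is correct and follows essentially the same approach as the paper, which in fact omits the proof entirely and simply points to the classical synchronous-coupling/Grönwall argument (citing \citet[Proposition 8]{akyildiz2023interacting} and implicitly \citet{sznitman1991topics}) built on the Lipschitz property of $g$ from Lemma~\ref{lem:b_lip}. Your sketch actually supplies the details the paper leaves out, including the key observation that linearity of $g$ in the measure argument makes the fluctuation $\Phi_s$ a genuine i.i.d.\ average and hence $O(N^{-1/2})$.
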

\begin{proof}
The proof exploits the Lipschitz continuity of $\nabla U^\lambda$ and of $g$ established in Lemma~\ref{lem:b_lip}. The argument is classical and omitted, see \citet[Proposition 8]{akyildiz2023interacting} for the proof in a similar context.
\end{proof}

We can further show that~\eqref{eq:gf_lambda} converges to the following MKVSDE
\begin{align}
    \label{eq:gf_sde}
    \md \bm{\theta}_{ t}&=-\left[\int \nabla_{\theta} U(\bm{\theta}_{t}, x)q_{ t}(x)\md x \right]\md t\\
    \md \mathbf{X}_{ t}&=-\nabla_{x} U(\bm{\theta}_{ t}, \mathbf{X}_{ t})\md t +\sqrt{2} \md \mathbf B_t, \notag
\end{align}
associated with the gradient flow~\eqref{eq:gradient_flow_1}, when $\lambda\to 0$.

\begin{proposition}
Assume that $U$ is gradient Lipschitz with constant $\norm{\nabla U}_{\textnormal{Lip}}$. For any initial condition $(\theta_0, X_0)$ such that  $\mathbb{E}\left[\vert \theta_0\vert^2+\vert X_0\vert^2\right]<\infty$, we have for any $T \geq 0$
    \begin{equation*}
        \mathbb{E}\left[\sup_{t \in [0,T]} \left(\Vert \bm{\theta}_{\lambda, t} -\bm{\theta}_{t}\Vert^2+\norm{\mathbf{X}_{\lambda, t}-\mathbf{X}_{t}}^2\right)\right] \leq \left(\lambda^2\norm{\nabla U}_{\textnormal{Lip}}^4 C_T+\norm{\nabla U}_{\textnormal{Lip}}^2\mathcal{O}(\lambda^4)\right)T \exp(2\norm{\nabla U}_{\textnormal{Lip}}^2T),
    \end{equation*}
    where $C_T:= \sup_{t\leq T}\mathbb{E}\left[\vert \bm{\theta}_{\lambda, t}\vert^2+\vert \mathbf{X}_{\lambda, t}\vert^2\right]<\infty$. It follows that, as $\lambda\to 0$, \eqref{eq:gf_lambda} converges to~\eqref{eq:gf_sde} in $\mathbb{L}^2$.
\end{proposition}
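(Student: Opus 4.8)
The plan is to use a \emph{synchronous coupling}: run \eqref{eq:gf_lambda} and \eqref{eq:gf_sde} with the same Brownian motion $(\mathbf B_t)_{t\ge 0}$ and the same initial datum $(\theta_0, X_0)$. Writing $\Delta_t^\theta := \bm\theta_{\lambda, t} - \bm\theta_t$ and $\Delta_t^X := \mathbf X_{\lambda, t} - \mathbf X_t$, the diffusion terms cancel in the difference of the $\mathbf X$-equations, so both $\Delta_t^\theta$ and $\Delta_t^X$ can be expressed as time integrals of the corresponding drift differences. Note moreover that $q_{\lambda,t}$ and $q_t$ are deterministic, being the marginal laws of the MKV solutions; hence $\bm\theta_{\lambda,t}, \bm\theta_t$ are deterministic and only $\mathbf X$ carries randomness.

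The key estimate is on the Moreau--Yosida error $\norm{\nabla U^\lambda(v) - \nabla U(v)}$. Using the identity $\nabla U^\lambda(v) = \nabla U(\prox_U^\lambda(v))$ (which follows from \eqref{eq:proximity} and the optimality condition of the proximal problem), the gradient-Lipschitz assumption on $U$, and the Taylor expansion of the proximal displacement in \eqref{eq:taylors2} applied with $U$ in place of $g_2$, I would obtain
\begin{equation*}
  \norm{\nabla U^\lambda(v) - \nabla U(v)} \le \norm{\nabla U}_{\textnormal{Lip}} \norm{\prox_U^\lambda(v) - v} = \lambda \norm{\nabla U}_{\textnormal{Lip}} \norm{\nabla U(v)} + \norm{\nabla U}_{\textnormal{Lip}} \mathcal{O}(\lambda^2).
\end{equation*}
I would then split each drift difference into an MY-error part evaluated at the regularised state (controlled by the display above) and a genuine difference-of-arguments part (controlled by gradient-Lipschitzness of $\nabla U$ and, for the $\theta$-drift, by the fact that $\nu \mapsto \int \nabla_\theta U(\cdot, x)\,\nu(\md x)$ is $\norm{\nabla U}_{\textnormal{Lip}}$-Lipschitz in $W_1$, the analogue of Lemma~\ref{lem:b_lip}). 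Under the coupling, $W_1(q_{\lambda,s}, q_s) \le \mathbb{E}\norm{\mathbf X_{\lambda,s} - \mathbf X_s}$, so the interaction term is controlled by the same difference processes.

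With $\Phi(t) := \mathbb{E}[\sup_{r\le t}(\norm{\Delta_r^\theta}^2 + \norm{\Delta_r^X}^2)]$, I would apply Cauchy--Schwarz to the time integrals, take suprema and expectations, and bound $\mathbb{E}\norm{\nabla U(\bm\theta_{\lambda,s}, \mathbf X_{\lambda,s})}^2$ by a linear-growth estimate in terms of the second moments, i.e.\ by a multiple of $\norm{\nabla U}_{\textnormal{Lip}}^2 C_T$. This produces an inhomogeneous Gronwall inequality of the form $\Phi(t) \le \big(\lambda^2\norm{\nabla U}_{\textnormal{Lip}}^4 C_T + \norm{\nabla U}_{\textnormal{Lip}}^2 \mathcal{O}(\lambda^4)\big) T + 2\norm{\nabla U}_{\textnormal{Lip}}^2 \int_0^t \Phi(s)\,\md s$, whose resolution by Gronwall's lemma yields the stated bound with the factor $\exp(2\norm{\nabla U}_{\textnormal{Lip}}^2 T)$. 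Finiteness of $C_T$ follows from standard second-moment estimates for SDEs with Lipschitz drift, and letting $\lambda \to 0$ gives the claimed $\mathbb{L}^2$ convergence.

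The main obstacle I anticipate is the bookkeeping around the McKean--Vlasov interaction in the $\theta$-dynamics: one must simultaneously control the discrepancy between the measures $q_{\lambda,s}$ and $q_s$ (through the coupled $\mathbf X$-difference) and ensure that the moment $\mathbb{E}\norm{\nabla U(\bm\theta_{\lambda,s}, \mathbf X_{\lambda,s})}^2$ entering the MY-error term is finite and uniformly bounded by $C_T$. Closing the loop requires that the linear-growth bound on $\nabla U$ be compatible with the a priori second-moment bound $C_T < \infty$, which is precisely where the gradient-Lipschitz hypothesis on $U$ is essential.
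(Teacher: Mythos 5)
Your proposal follows essentially the same route as the paper: synchronous coupling so the Brownian terms cancel, the Moreau--Yosida gradient error bound $\Vert\nabla U^\lambda(v)-\nabla U(v)\Vert\le\Vert\nabla U\Vert_{\textnormal{Lip}}(\lambda\Vert\nabla U(v)\Vert+\mathcal{O}(\lambda^2))$ obtained from $\nabla U^\lambda(v)=\nabla U(\prox_U^\lambda(v))$ and the Taylor expansion of the proximal map, a split of the drift difference into the MY error at the regularised state plus a Lipschitz difference-of-arguments term, and Gr\"onwall. The only cosmetic difference is that you control the interaction term via $W_1$-Lipschitzness of the mean-field drift, whereas the paper applies Jensen directly to the expectation form of the $\theta$-drift; these are equivalent under the coupling.
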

\begin{proof}
For any $t\geq 0$, we have
\begin{align*}
  \bm{\theta}_t &= \theta_0+\int_0^t \left[-\int \nabla_\theta U(\bm{\theta}_s, x)q_s(x)\md x\right]\md s,\\
  \mathbf{X}_t &= X_0-\int_0^t \nabla_x U(\bm{\theta}_s, \mathbf{X}_s)\md s +\sqrt{2} \bm{B}_t,
\end{align*}
and equivalently for $\bm{\theta}_{\lambda, t}, {X}_{\lambda, t}$. We first observe that
\begin{align*}
    \Vert \bm{\theta}_{\lambda, t} -\bm{\theta}_{t}\Vert^2 &=\Big\Vert{\int_0^t\left[\int \nabla_{\theta} U^{\lambda}(\bm{\theta}_{\lambda, s}, x)q_{\lambda, s}(x)\md x  - \int \nabla_{\theta} U(\bm{\theta}_{s}, x)q_{ s}(x)\md x \right]\md s}\Big\Vert^2\\
   &=\Big\Vert{\int_0^t\mathbb{E}\left[ \nabla_{\theta} U^{\lambda}(\bm{\theta}_{\lambda, s},\mathbf{X}_{\lambda, s})  - \nabla_{\theta} U(\bm{\theta}_{s}, \mathbf{X}_s)\right]\md s}\Big\Vert^2\\
   &\leq\mathbb{E}\left[\Big\Vert{\int_0^t [\nabla_{\theta} U^{\lambda}(\bm{\theta}_{\lambda, s},\mathbf{X}_{\lambda, s})  - \nabla_{\theta} U(\bm{\theta}_{s}, \mathbf{X}_s)]\md s}\Big\Vert^2\right],
\end{align*}
and
\begin{align*}
   \mathbb{E}\left[ \norm{\mathbf{X}_{\lambda, t}-\mathbf{X}_{t}}^2\right] &= \mathbb{E}\left[ \Big\Vert{\int_0^t [\nabla_x U(\bm{\theta}_s, \mathbf{X}_s)-\nabla_x U^\lambda(\bm{\theta}_{\lambda, s},\mathbf{X}_{\lambda, s})]\md s}\Big\Vert^2\right].
\end{align*}
Combining the above we obtain 
\begin{align*}
\mathbb{E}\bigg[\sup_{s\in [0, t]}    \Vert \bm{\theta}_{\lambda, s} &-\bm{\theta}_{s}\Vert^2+\mathbb{E}[\norm{\mathbf{X}_{\lambda, s}-\mathbf{X}_{s}}^2]\bigg] \leq \mathbb{E}\left[ \int_0^t \norm{\nabla U(\bm{\theta}_s, \mathbf{X}_s)-\nabla U^\lambda(\bm{\theta}_{\lambda, s},\mathbf{X}_{\lambda, s})}^2\md s\right]\\
    &\leq 2\mathbb{E}\left[ \int_0^t \norm{\nabla U(\bm{\theta}_s, \mathbf{X}_s)-\nabla U(\bm{\theta}_{\lambda, s},\mathbf{X}_{\lambda, s})}^2\md s\right]\\
    &+2\mathbb{E}\left[ \int_0^t \norm{\nabla U(\bm{\theta}_{\lambda, s},\mathbf{X}_{\lambda, s})-\nabla U^\lambda(\bm{\theta}_{\lambda, s},\mathbf{X}_{\lambda, s})}^2\md s\right]\\
    &\leq 2\norm{\nabla U}_{\textnormal{Lip}}^2 \int_0^t \left(\Vert \bm{\theta}_{\lambda, s} -\bm{\theta}_{s}\Vert^2+\mathbb{E}[\norm{\mathbf{X}_{\lambda, s}-\mathbf{X}_{s}}^2]\right)\md s\\
    &+2\mathbb{E}\left[ \int_0^t \norm{\nabla U(\bm{\theta}_{\lambda, s},\mathbf{X}_{\lambda, s})-\nabla U^\lambda(\bm{\theta}_{\lambda, s},\mathbf{X}_{\lambda, s})}^2\md s\right].
\end{align*}
In the case in which $\nabla U$ is Lipschitz continuous, we further have that $\nabla U^\lambda(v) = \nabla U(\prox_U^\lambda(v))$ for $v=(\theta, x)$ \cite[Section 2]{pereyra2016proximal}, and we have
\begin{align*}
    \norm{\nabla U(v) - \nabla U^\lambda(v)} = \norm{\nabla U(v) - \nabla U(\prox_U^\lambda(v))}\leq \norm{\nabla U}_{\textnormal{Lip}}\norm{v - \prox_U^\lambda(v)}.
\end{align*}
Recalling that, since $U$ is gradient Lipschitz, $\prox_{U}^{\lambda}(v) = v - \lambda\nabla U(v) + \mathcal{O}(\lambda^2)$ \citep[Section 3.3]{proximal_algorithms_neal}, we further have that
\begin{align}
\label{eq:u_lip}
    \norm{\nabla U(v) - \nabla U^\lambda(v)} &\leq \norm{\nabla U}_{\textnormal{Lip}}(\lambda\norm{\nabla U(v)}+\mathcal{O}(\lambda^2))\nonumber\\
    &\leq \norm{\nabla U}_{\textnormal{Lip}}(\lambda\norm{\nabla U}_{\textnormal{Lip}}(1+\norm{v})+\mathcal{O}(\lambda^2)),
\end{align}
and we can bound
\begin{align*}
    \mathbb{E} &\left[ \int_0^t \norm{\nabla U(\bm{\theta}_{\lambda, s},\mathbf{X}_{\lambda, s})-\nabla U^\lambda(\bm{\theta}_{\lambda, s},\mathbf{X}_{\lambda, s})}^2\md s\right]\\
    &\leq  \lambda^2\norm{\nabla U}_{\textnormal{Lip}}^4\int_0^t\mathbb{E}\left[1+\norm{(\bm{\theta}_{\lambda, s},\mathbf{X}_{\lambda, s})}^2\right]\md s+\norm{\nabla U}_{\textnormal{Lip}}^2\mathcal{O}(\lambda^4)t\\
&\leq  \lambda^2\norm{\nabla U}_{\textnormal{Lip}}^4 C_Tt+\norm{\nabla U}_{\textnormal{Lip}}^2\mathcal{O}(\lambda^4)t
\end{align*}
with $C_T$ given in the statement of the result.

Let us denote by $h(t) =\sup_{s\in[0, t]} \Vert \bm{\theta}_{\lambda, s} -\bm{\theta}_{s}\Vert^2+\mathbb{E}[\Vert{\mathbf{X}_{\lambda, s}-\mathbf{X}_{s}}\Vert^2]$. Then, using the bounds above we have that 
\begin{equation*}
    h(t) \leq 2\norm{\nabla U}_{\textnormal{Lip}}^2\int_0^t h(s) \md s + \left(\lambda^2\norm{\nabla U}_{\textnormal{Lip}}^4 C_T+\norm{\nabla U}_{\textnormal{Lip}}^2\mathcal{O}(\lambda^4)\right)t.
\end{equation*}
Using Gronwall's inequality we obtain
\begin{align*}
    h(t)&\leq \left(\lambda^2\norm{\nabla U}_{\textnormal{Lip}}^4 C_T+\norm{\nabla U}_{\textnormal{Lip}}^2\mathcal{O}(\lambda^4)\right)t \exp(2\norm{\nabla U}_{\textnormal{Lip}}^2t),
\end{align*}
from which the result follows.
\end{proof}}

\section{Algorithm Comparison}\label{app:algorithm_comparison}
We provide further details on computing the complexity estimates of Section \ref{subsec:algorithm_comparison}. 
For convenience, we summarise the complexity estimates for $\lambda$, the number of particles $N$, $\gamma$, and the number of steps $n$ to achieve an error $\mathbb{E}\left[\Vert \theta_n^N-\bar{\theta}^\star \Vert^2\right]^{1/2}=\mathcal{O}(\varepsilon)$ from Table \ref{tab:comparison_table}. The values are provided in terms of the key parameters $d_\theta, d_x$ and $\delta>0$ is any small positive constant. 
\begin{center}
\begin{tabular}{ |c ||c |c |c |c|}
\hline
 & $\lambda$& $N$ & $\gamma$ & $n$  \\ 
 \hline \hline
MYIPLA  & $\mathcal{O}(\varepsilon)$& $\mathcal{O}(d_\theta \varepsilon^{-2})$ & $\mathcal{O}(d_x^{-1}\varepsilon^2)$ & $\mathcal{O}(d_x \varepsilon^{-2-\delta})$
\\
 \hline
 PIPGLA & $\mathcal{O}(\varepsilon^2)$ &$\mathcal{O}(d_\theta \varepsilon^{-2})$ & $\mathcal{O}(d_x^{-1}\varepsilon^2)$ & $\mathcal{O}(\log \varepsilon^2/\log d_x)$\\
 \hline
  MYPGD & $\mathcal{O}(\varepsilon)$& $\mathcal{O}(d_x \varepsilon^{-2})$& $\mathcal{O}(d_x^{-1}\varepsilon^2)$ & $\mathcal{O}(d_x \varepsilon^{-2-\delta})$\\
 \hline
\end{tabular}
\end{center}
The bound for MYIPLA follows from first choosing $\lambda$ so that the first term in Theorem~\ref{th:myipla_main_text} is $\mathcal{O}(\varepsilon)$, then choosing $N$ so that the second term is $\mathcal{O}(\varepsilon)$ and $\gamma$ sufficiently small to counteract the dependence on $d_x$ in the fourth term. Finally, since for every $p\in \mathbb{N}$ one has $e^x\geq x^p/p!$ for $x>0$, for every $\delta>0$ (by choosing $p \in \mathbb{N}$ large enough) one has $e^{-\varepsilon^\delta}\leq C\varepsilon$. Therefore, as long as $n$ is chosen sufficiently large that $\mu n \gamma = \mathcal{O}(\varepsilon^{-\delta})$, the exponential decay is strong enough so that the middle term is of order $\mathcal{O}(\varepsilon)$. A similar approach based on the bound in Theorem~\ref{th:mypgd} provides the bounds for MYPGD.

On the other hand, the bound for PIPGLA follows from first choosing $N$ so that the first term in Theorem \ref{th:pipgla_main_text} is $\mathcal{O}(\varepsilon)$, then $\lambda$ and $\gamma$ to counteract the dependence of $d_x$ on the third term. Finally, considering the values of $\lambda$ and $\gamma$, we select $n$ to ensure that the second term is $\mathcal{O}(\varepsilon)$.

On the other hand, we also compared the algorithms in terms of their computational requirements. We account for the computational cost of running each algorithm for $n$ iterations with $N$ particles and time discretisation step $\gamma$, while guaranteeing an $\mathcal{O}(\varepsilon)$ error, in terms of component-wise evaluations of $\nabla g_1$ and $\prox_{g_2}^\lambda$, and number of independent standard $1$-dimensional Gaussian samples.

For every step of MYIPLA, PIPGLA and MYPGD one requires $N(d_\theta+d_x)$ evaluations of $\nabla g_1$ component-wise and $N(d_\theta+d_x)$ evaluations of $\prox_{g_2}^\lambda$ component-wise.
In the case of MYIPLA and PIPGLA, we need $d_\theta+Nd_x$ independent standard $1$-dimensional Gaussians for each iteration; since MYPGD does not have a noise in the $\theta$-component this reduces to $Nd_x$.

\begin{center}
\begin{tabular}{ |c ||c |c |c|}
\hline
 &  Evaluations of $\nabla g_1$ &  Evaluations of $\prox_{g_2}^\lambda$ & Indep. $1$d Gaussians \\ 
 \hline \hline
 MYIPLA  & $\mathcal{O}(d_\theta d_x(d_\theta+d_x)\varepsilon^{-4-\delta})$ & $\mathcal{O}(d_\theta d_x(d_\theta+d_x)\varepsilon^{-4-\delta})$ &  $\mathcal{O}(d_\theta d_x^2\varepsilon^{-4-\delta})$\\
 \hline
 PIPGLA & $\mathcal{O}(d_\theta (d_\theta+d_x)\varepsilon^{-2}\frac{\log \varepsilon^{2}}{\log d_x})$  & $\mathcal{O}(d_\theta (d_\theta+d_x)\varepsilon^{-2}\frac{\log \varepsilon^{2}}{\log d_x})$ &  $\mathcal{O}(d_\theta d_x\varepsilon^{-2}\frac{\log \varepsilon^{2}}{\log d_x})$
 \\ 
 \hline
  MYPGD & $\mathcal{O}(d_x^2(d_\theta+d_x)\varepsilon^{-4-\delta})$ & $\mathcal{O}(d_x^2(d_\theta+d_x)\varepsilon^{-4-\delta})$ &  $\mathcal{O}( d_x^3\varepsilon^{-4-\delta})$\\ 
 \hline
\end{tabular}
\end{center}

Finally, Table \ref{tab:method_comparison} summarises the key differences and advantages of each algorithm.
We recall that $L_{g_1}$ and $\mu$ denote the Lipschitz continuity and strong-convexity parameters of $g_1$, introduced in assumptions \textbf{A\ref{assumption_1}} and \textbf{A\ref{assumption_3}}, respectively.

\begin{table}[h]
\smaller
\caption{Comparison of convergence assumptions, parameter constraints, and advantages of each algorithm.}
\label{tab:method_comparison}
\centering
\begin{tabular}{llllllll}
\toprule
\multirow{2}{*}{} & \textbf{Assumptions for} & \multirow{2}{*}{\textbf{Constraints on $\lambda$}} & \multirow{2}{*}{\textbf{Constraints on $\gamma$}} & \multirow{2}{*}{\textbf{Advantages}} \\
& \textbf{convergence} & & & \\
\midrule
\multirow{2}{*}{MYIPLA} & \multirow{2}{*}{\textbf{A\ref{assumption_1}}--\textbf{A\ref{assumption_4}} }& \multirow{2}{*}{$\lambda \geq 0$} & \multirow{2}{*}{$\gamma < \min\left\{(L_{g_1} + \lambda^{-1})^{-1},\, 2\mu^{-1} \right\}$} & Noise in the $\theta$-dynamics \\
& & & & helps escape local minima. \\
\midrule
\multirow{2}{*}{PIPGLA} & \multirow{2}{*}{\textbf{A\ref{assumption_1}}, \textbf{A\ref{assumption_2}}, \textbf{A\ref{assumption_3}} and \textbf{C\ref{assumption_1_gla}}}& \multirow{2}{*}{$\gamma \leq \lambda \leq \gamma/(1 - \mu\gamma)$} & \multirow{2}{*}{$\gamma \leq 1/L_{g_1}$} & Ensures $\theta$ estimates remain  \\
& & & &within the support of the distribution.  \\
\midrule
\multirow{2}{*}{MYPGD} & \multirow{2}{*}{\textbf{A\ref{assumption_1}}--\textbf{A\ref{assumption_4}}} & \multirow{2}{*}{$\lambda \geq 0$} &\multirow{2}{*}{$\gamma < (L_{g_1} + \lambda^{-1} + \mu)^{-1}$}& Produces lower-variance estimates \\
& & & &for MMLE in the strongly convex setting. \\
\bottomrule
\end{tabular}
\end{table}
\normalsize

\section{Numerical Experiments}\label{ap:experiments}
\subsection{Derivation of the proximal operators}
{\normalsize
\subsubsection{Laplace Prior with Unknown Mean $\theta$}\label{ap:proximal_op_laplace_mean}

We recall that using a Laplace prior $g_2(\theta, x) = \sum_{i=1}^{d_x}|x_i-\theta|$.
\begin{equation*}
    \prox_{g_2}^{\lambda}(\theta,x)=\argmin_{(u_0,u)} h(u_0, u)= \argmin_{(u_0,u)} \{g_2(u_0, u) + \Vert (u_0, u)-(\theta, x)\Vert^2/{(2\lambda)}\}.
\end{equation*}
The first order optimality condition is given by
\begin{equation*}
    0\in\partial g_2(u_0, u) + \nabla\big(\Vert(u_0, u)-(\theta, x)\Vert^2/(2\lambda)\big).
\end{equation*}

We recall that $\phi\in\real^d$ is a subdifferential of the $\ell^1$-norm at $x\in\real^d$ if and only if $\phi_i(x) =  \text{sign}(x_i)$ if $x_i\neq 0$ and $\vert \phi_i(x)\vert \leq 1$ otherwise \citep{proximal_algorithms_neal}.

Let us define the set $D=\{i\in\{1,\dots, d_x\}| u_i-u_0 = 0\}$. Then, the first order optimality condition becomes 
\begin{align*}
        0&\in 
       \left\lbrace- \sum_{i\notin D} t_i  -\sum_{i\notin D}\text{sign}(u_{i}-u_0) + (u_0-\theta)/\lambda\ |\ \vert t_i\vert \leq 1\right\rbrace\\
        \ &\begin{cases}
        0\in \left\lbrace t_i + \frac{u_i-x_i}{\lambda}\ |\ \vert t_i\vert \leq 1\right\rbrace \quad\quad \text{if $i\in D$}\\
        0 = \text{sign}(u_{i}-u_0)+ (u_i-x_i)/\lambda \quad\text{if $i\not\in D$}
    \end{cases}.
\end{align*}

Reordering terms, we get
\begin{align}
    u_0&\in\left\lbrace 
       \theta + \lambda\bigg(\sum_{i\notin D} t_i  -\sum_{i\notin D}\text{sign}(u_{i}-u_0)\bigg)\ |\ \vert t_i\vert \leq 1\right\rbrace,\label{eq:iterative_regression_11}
\\
    \ & \begin{cases}
       u_i\in \left\lbrace x_i - \lambda t_i\ |\  \vert t_i\vert \leq 1\right\rbrace  \quad\quad \text{if $i\in D$},\\
        u_i = x_i - \lambda\; \text{sign}(u_{i}-u_0)\quad\text{if $i\not\in D$}.
    \end{cases}\label{eq:iterative_regression_22}
\end{align}
Assuming that $D = \emptyset$, the previous system of equations can be solved iteratively using a fixed point algorithm. 

Alternatively, for a lower computational cost we can obtain an approximate solution by setting $u_0=\theta$ in (\ref{eq:iterative_regression_22})
\begin{equation*}
     \begin{cases}
       u_i\in \left\lbrace x_i - \lambda t_i\ | \ \vert t_i\vert \leq 1\right\rbrace  \quad\quad \text{if $i\in D$},\\
        u_i = x_i - \lambda\; \text{sign}(u_{i}-\theta)\quad\text{if $i\not\in D$}.
    \end{cases}
\end{equation*}
which is solved applying the soft-thresholding operator
\begin{equation*}
    u_i = \theta + [x_i -\theta-\lambda\;\text{sign}(x_i-\theta)]\mathds{1}\{\vert x_i-\theta\vert \geq\lambda \}.
\end{equation*}
Using these $u_i$'s, taking $u_0=\theta$ in the right-hand side of (\ref{eq:iterative_regression_11}) and assuming $D = \emptyset$, we obtain
\begin{equation*}
    u_0 = \theta + \lambda\sum_{i=1}^{d_x} \text{sign}(u_i-\theta).
\end{equation*}

\subsubsection{Laplace Prior with Unknown Scale $e^{2\theta}$}\label{ap:proximal_approx_bnn}
For the Bayesian neural network experiment  we consider a Laplace prior with zero mean and unknown scale parameterised by $e^{2\theta}$ (which ensures that the scale is positive), we have $g_2(\theta, x) = d_x\alpha + \sum_i |x_i|e^{-2\alpha}$.
Its proximal operator is given by
\begin{equation*}
    \prox_{g_2}^{\lambda}(\theta, x) =\argmin_{(u_0, u)} h(u_0, u), \quad h(u_0, u) = u_0d_x+ \sum_i |u_i| e^{-2u_0} + \Vert (u_0, u)-(\theta, x)\Vert^2/{(2\lambda)}.
\end{equation*}
The optimality condition is given by
\begin{equation*}
    0\in \partial\big(u_0d_x+ \sum_i |u_i| e^{-2u_0}\big)+\nabla\big(\Vert(u_0, u)-(\theta, x)\Vert^2/(2\lambda)\big),
\end{equation*}
which provides the following system of equations
\begin{align*}
    0 &= d_x-2e^{-2u_0}\sum_{i=1}^{d_x} |u_i| + \frac{1}{\lambda}(u_0-\theta),\\
    \ &\begin{cases}
        0\in \left\lbrace e^{-2u_0}t_i + (u_i-x_i)/\lambda\ |\ \vert t_i\vert \leq 1\right\rbrace\quad\quad \text{if $u_i=0$},\\
        0 = e^{-2u_0}\;\text{sign}(u_{i}) + (u_i-x_i)/\lambda \quad \text{if $u_i\neq 0$}.
    \end{cases} 
\end{align*}
Reordering terms, we get
\begin{align}
    u_0 &= \theta -\lambda d_x +2\lambda e^{-2u_0}\sum_i\vert u_i \vert\label{eq:grad_theta}\\
    \ & \begin{cases}
u_i\in \{x_i - \lambda e^{-2u_0}t_i \ | \ \vert t_i\vert \leq 1\}\quad \quad \text{if $u_i=0$},\\
        u_i=x_i - \lambda e^{-2u_0}\;\text{sign}(u_i)\quad \;\text{if $u_i\neq 0$}. 
    \end{cases}\label{eq:grad_w}
\end{align}
This system of equations can be solved using an iterative solver, however this will incur in a high computational cost. Therefore, we opt for the following approximation of (\ref{eq:grad_w}), where we set $u_0=\theta$,
\begin{equation}
    \begin{cases}
u_i\in \{x_i - \lambda e^{-2\theta}t_i \ | \ \vert t_i\vert \leq 1\}\quad \quad \text{if $u_i=0$},\\
        u_i=x_i - \lambda e^{-2\theta}\;\text{sign}(u_i)\quad \;\text{if $u_i\neq 0$}. 
    \end{cases}\label{eq:grad_w_approx}
\end{equation}
The solution of \eqref{eq:grad_w_approx} is 
\begin{equation*}
    u_i \approx [x_i-\lambda e^{-2\theta}\;\text{sign}(x_i)]\mathds{1}\{\vert x_i\vert \geq\lambda e^{-2\theta}\}. 
\end{equation*}
Using these $u_i$'s together with the Lambert $W$ function, the solution of \eqref{eq:grad_theta} is given by
\begin{equation*}
    u_0 \approx \theta -\lambda d_x + \frac{1}{2} W\left(4\lambda e^{-2\theta}\sum_i\vert u_i \vert\right).
\end{equation*}

\subsubsection{Uniform Prior}\label{ap:proximal_op_uniform_mean}

We recall that using a uniform prior 
\begin{equation*}
    g_2(\theta, x) = d_x\log(2\theta) + \sum_{i=1}^{d_x} \imath_{[-\theta,\theta]}(x_i),
\end{equation*}
where $\imath_{\mathcal{K}}$ is the convex indicator of $\mathcal{K}$ defined by $\imath_{\mathcal{K}}(x)=0$ if $x\in\mathcal{K}$ and $\imath_{\mathcal{K}}(x)=\infty$ otherwise. 
In this case, the proximal operator satisfies
\begin{align*}
    \prox_{g_2}^{\lambda}(\theta,x)&= \argmin_{(u_0,u)} \{g_2(u_0, u) + \Vert (u_0, u)-(\theta, x)\Vert^2/{(2\lambda)}\} \\
    &= \argmin_{\substack{(u_0,u)\\|u_i|\leq u_0}} \{d_x\log(2u_0) + \Vert (u_0, u)-(\theta, x)\Vert^2/{(2\lambda)}\}.
\end{align*}
We can obtain an approximate solution by deriving the first order conditions for $u_i$ with $i=0, 1, \dots, d_x$ and combining them with the constraint $|u_i|\leq u_0$:
\begin{align*}
u_0 &= \begin{cases}
    \frac{\theta + \sqrt{\theta^2-4\lambda d_x}}{2} \;\;\text{if}\;\;\theta^2\geq4\lambda d_x,\\
    \max_i |x_i|\;\; \text{otherwise},
\end{cases}\\
u_i &= \text{sign}(x_i)\cdot\min\{|x_i|, |u_0|\}. 
\end{align*}

\subsubsection{Approximation for PIPULA and PPGD}\label{ap:proximal_operator_whole}
In PIPULA and PPGD, we need to compute the proximal operator of $U = g_1 + g_2$ which is usually not available in closed form. Since $\gamma$ is normally set to a small enough value, we follow \cite{pereyra2016proximal} and approximate the proximity map of $U$ as
\begin{align*}
    \prox_{U}^{\gamma}(v) &= \argmin_{v'} \{g_1(v') + g_2(v') + \Vert v'- v \Vert^2/(2\gamma)\}\\
    &\approx\argmin_{v'} \{g_1(v) + (v'-v)^{\intercal}\nabla g_1(v) + g_2(v') + \Vert v'- v \Vert^2/(2\gamma)\}\\
    &\approx\argmin_{v'} \{g_2(v') + \Vert v'- v + 2\gamma\nabla g_1^{\intercal}(v)  \Vert^2/(2\gamma)\}\\
    &\approx \prox_{g_2}^{\gamma}(v + 2\gamma\nabla g_1^{\intercal}(v)),
\end{align*}
where $v = (\theta, x), v' = (\theta',x')$.}
\subsection{Bayesian logistic regression}\label{ap:experiments_logistic}
In the case of the Laplace prior, the negative log joint likelihood is given by 
\begin{equation*}
  -\log p_{\theta}(x,y) = \underbrace{\sum_{i=1}^{d_x}|x_i-\theta|}_{g_2(\theta, x)} + \underbrace{d_x\log2 -\log p(y|x)}_{g_1(\theta, x)};
\end{equation*}
and for the uniform prior, we obtain
\begin{equation*}
  -\log p_{\theta}(x,y) = \underbrace{d_x\log(2\theta) + \sum_{i=1}^{d_x} \imath_{[-\theta,\theta]}(x_i)}_{g_2(\theta, x)} - \underbrace{\log p(y|x)}_{g_1(\theta, x)},
\end{equation*}
where $g_1$ is differentiable and $g_2$ is lower semi-continuous, and $\imath_{\mathcal{K}}$ is the convex indicator of $\mathcal{K}$ defined by $\imath_{\mathcal{K}}(x)=0$ if $x\in\mathcal{K}$ and $\imath_{\mathcal{K}}(x)=\infty$ otherwise.

In both cases we have that 
\begin{align*}
    g_1(\theta, x) = \sum_{j=1}^{d_y}\left(y_j\log(s(v_j^Tx))+(1-y_j)\log(s(-v_j^Tx))\right)+C
\end{align*}
where $C$ is a constant. As shown in \citet[Section 6.1.1]{akyildiz2023interacting}, the function $g_1$ is gradient Lipschitz and strictly convex but not strongly convex.
The function $g_2$ satisfies \textbf{A\ref{assumption_1}} for both the Laplace and the uniform prior, as observed in \cite{pereyra2016proximal}, in the case of the Laplace prior $g_2$ also satisfies \textbf{A\ref{assumption_1prime}} while the uniform prior does not lead to a Lipschitz $g_2$. Since $g_1$ does not depend on $\theta$, \textbf{A\ref{assumption_4}} holds for the Laplace prior. 

{\normalsize
\paragraph{Dataset.} We create a synthetic dataset by first fixing the value of $\theta$ and sampling the latent variable $x\in \mathbb{R}^{50}$ from the corresponding prior. We then sample the 900 observations from a Bernoulli distribution with parameter $s(v_j^T x)$, where $s$ is the logistic function and the entries of the covariates $v_j$ are drawn from a uniform distribution $\mathcal{U}(-1,1)$. 
The true value of $\theta$ is set to $\bar{\theta}_\star = -4$ for the Laplace prior and $\bar{\theta}_\star = 1.5$ for the uniform one.

\paragraph{Implementation details.}
The $x$-gradients of $g_1$ can be computed analytically. 
To choose the optimal values of $\gamma$ and $\lambda$ for the different implementations, we perform a grid search in the range $[5\times10^{-4}, 0.5]$. The selected optimal values are displayed in Table \ref{table-logistic-hyperparameters}. 
We note that in PIPGLA the optimal values for $\lambda, \gamma$ turn out  to be when $\lambda = \gamma$.

\begin{table}[h!]
  \caption{{\normalsize Optimal hyperparameters for Bayesian logistic regression example. Recall that for PPGD and PIPULA we only have the $\gamma$ parameter since we set $\lambda=\gamma$.}}
  \label{table-logistic-hyperparameters}
  \centering
  \begin{tabular}{llllll}
    \toprule
    Algorithm     & Approx./Iterative  & \multicolumn{2}{c}{$\gamma$} & \multicolumn{2}{c}{$\lambda$} \\
    \cmidrule(r){3-4}
    \cmidrule(r){5-6}
         &   &  Laplace & Unif & Laplace & Unif \\
    \midrule
    \multirow{2}{*}{PPGD} & Approx  & $0.1$  &   $0.03$ & $-$& $-$\\
    & Iterative  & $0.06$ & $-$ &$-$ & $-$ \\
     \midrule
    \multirow{2}{*}{PIPULA} & Approx  & $0.06$  &   $0.03$ & $-$&$-$\\
    & Iterative  & $0.06$ & $-$ & $-$ & $-$\\
     \midrule
   \multirow{2}{*}{MYPGD} & Approx  & \multirow{1}{*}{$0.05$}  &   \multirow{1}{*}{$0.001$}& \multirow{1}{*}{$0.25$} & \multirow{1}{*}{$0.01$}  \\
       & Iterative  &   $0.05$&   $-$& $0.005$ & $-$  \\
        \midrule
    \multirow{2}{*}{MYIPLA} & Approx  & $0.05$& $0.001$& $0.35$& $0.01$\\
       & Iterative  &   $0.05$&   $-$& $0.005$ & $-$  \\
 \midrule
        \multirow{2}{*}{PIPGLA} & Approx  & $0.01$ & $0.02$ & $0.01$ & $0.02$\\
     & Iterative  &   $0.01$&   $-$& $0.01$ & $-$  \\
    \bottomrule
  \end{tabular}
\end{table}

\begin{table}[t!]
\caption{Bayesian logistic regression for Laplace and uniform priors. Normalised MSE (NMSE) for $\theta$ for different algorithm when run 500 times using 50 particles, 5000 steps and different starting points. Computation times and NMSEs are averaged over the 500 replicates. The second column indicates whether the proximal map is calculated approximately or iteratively, using 40 steps in each iteration. For the uniform prior case we have not implemented the iterative method.}
\label{table-logistic-comparison-extended}
\centering
\begin{tabular}{llllll}
  \toprule
  Algorithm     & Approx./Iterative  & \multicolumn{2}{c}{NMSE (\%)} & \multicolumn{2}{c}{Times (s)} \\
  \cmidrule(r){3-4}
  \cmidrule(r){5-6}
       &   &  Laplace & Unif & Laplace & Unif \\
  \midrule
  \multirow{2}{*}{PPGD} & Approx  & $14.70\pm 4.42$  &   $3.63\pm 4.93$ & $102.6\pm 5.1$& $107.9\pm 5.5$\\
  & Iterative  & $19.04\pm 1.34$ & $-$ &$122.3\pm 5.1$ & $-$ \\
   \midrule
  \multirow{2}{*}{PIPULA} & Approx  & $12.18\pm1.62$  &   $4.71\pm 6.02$ & $98.8\pm 5.7$ & $101.0\pm4.0$ \\
  & Iterative  & $19.22\pm1.28$ & $-$ & $126.2\pm 3.8$ & $-$\\
   \midrule
 \multirow{2}{*}{MYPGD} & Approx  & $6.09\pm 0.34$  &   $\mathbf{0.60\pm 0.23}$& $91.9\pm4.8$ &$109.3\pm4.6$  \\
 & Iterative  & $4.44\pm1.40$  &   $-$& $129.7\pm 15.8$ & $-$  \\
  \midrule
  \multirow{2}{*}{MYIPLA} & Approx  & $4.42\pm 1.32$  & $15.26\pm 4.44$ & ${89.9\pm4.2}$& ${97.0\pm4.2}$\\
 & Iterative  &  $4.67\pm1.60$ &   $-$& $120.5\pm10.1$ & $-$  \\
 \midrule
  \multirow{2}{*}{PIPGLA} & Approx  & $2.30\pm0.58$  & $6.83\pm 3.97$ & $116.5\pm5.5$& $103.1\pm8.0$\\
   & Iterative  & $\mathbf{2.02\pm0.54}$  & $-$ & $122.9\pm6.9$& $-$\\
\midrule
  \multirow{1}{*}{IPLA} & --  & $7.76\pm3.39$  & $20.12\pm 2.88$ & $\mathbf{81.1\pm3.0}$& $\mathbf{82.9\pm4.9}$\\
  \bottomrule
\end{tabular}
\end{table}

\paragraph{Results.} 
Table \ref{table-logistic-comparison-extended} extends the results in Table \ref{table-logistic-comparison} by also including the results for PPGD, PIPULA and IPLA (as a benchmark).
Figure~\ref{fig:results_logistic_regression2} shows the $\theta$-iterates obtained with MYIPLA and PIPGLA starting from 7 different initial values $\theta_0$ and using the approximate solver for $\prox_{g_2}^\lambda$ with $g_2(\theta, x)=\sum_{i=1}^{d_x}|x_i-\theta|$ and an iterative procedure using 40 iterations in each step.
We observe that the iterative solver results in a slightly slower convergence to stationarity, but overall the two sets of algorithms converge to the same true value of $\theta$.
We also observe that the convergence to stationarity for PIPGLA is much slower compared to MYIPLA.
However, if we increase the value of $\gamma$ in the hope of faster convergence, the iterates either do not converge to the true value or the standard deviation is significantly larger.
For all algorithms considered, approximate solvers are 25\% faster than iterative solvers (see Table \ref{table-logistic-comparison-extended}).

We also compare the results for the uniform prior, in this case we only use the approximate proximity map (Figure \ref{fig:logistic_uniform}), as the iterative approach is not numerically stable.

\begin{figure}
  \centering
    \begin{subfigure}[b]{0.32\textwidth}
        \centering
        \includegraphics[width=\textwidth]{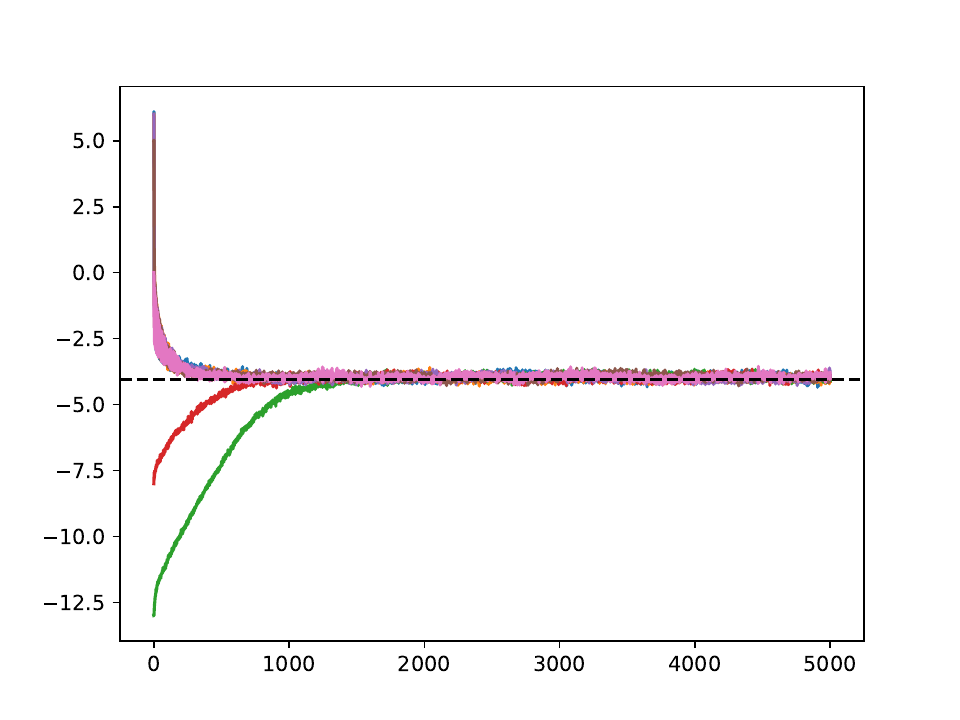} 
        \caption{MYIPLA, approximated $\prox_{g_2}^{\lambda}$.}
        \label{fig:logistic_myula}
    \end{subfigure}
    \begin{subfigure}[b]{0.32\textwidth}
        \centering
        \includegraphics[width=\textwidth]{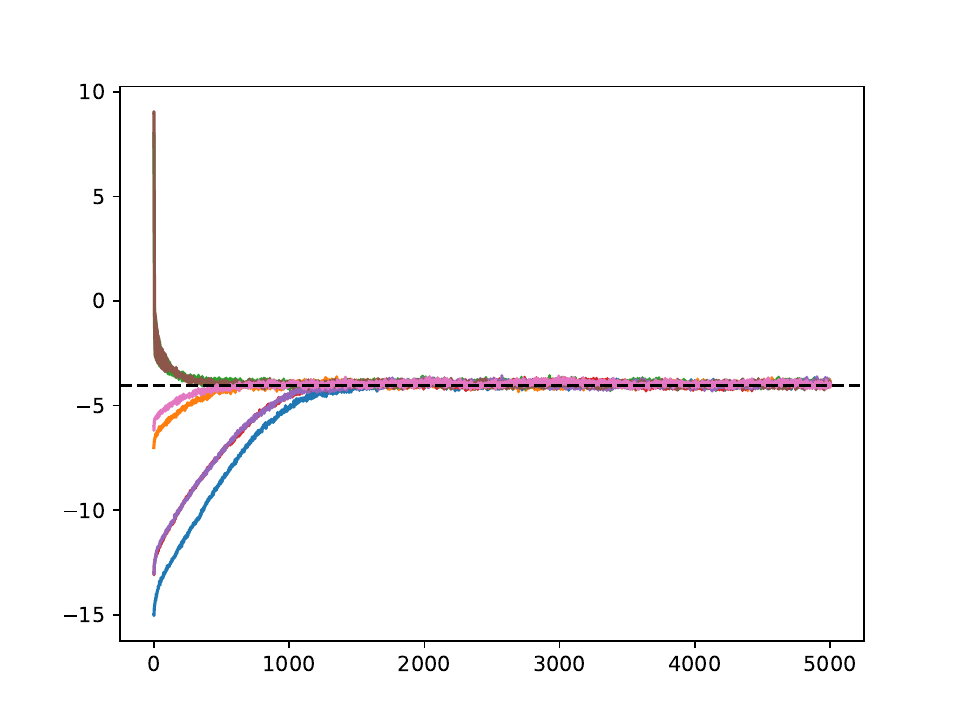} 
        \caption{MYIPLA, iterative solver for $\prox_{g_2}^{\lambda}$.}
        \label{fig:logistic_myula_iterative}
    \end{subfigure}
    \vfill
    \begin{subfigure}[b]{0.32\textwidth}
        \centering
        \includegraphics[width=\textwidth]{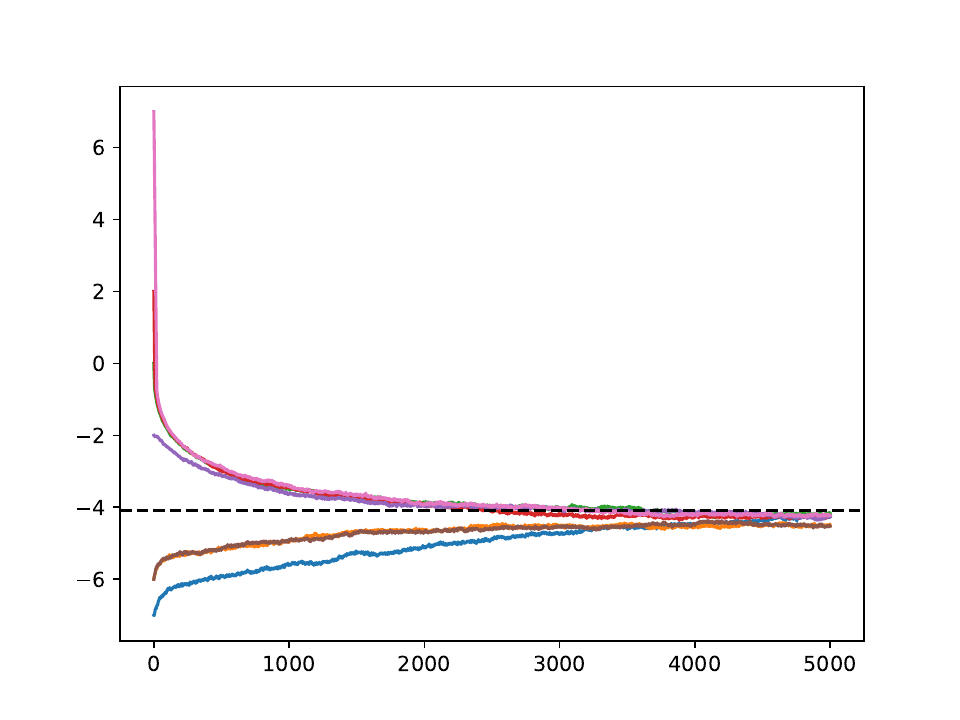} 
        \caption{PIPGLA, approximated $\prox_{g_2}^{\lambda}$.}
        \label{fig:logistic_pipgla_same}
    \end{subfigure}
    \begin{subfigure}[b]{0.32\textwidth}
        \centering
        \includegraphics[width=\textwidth]{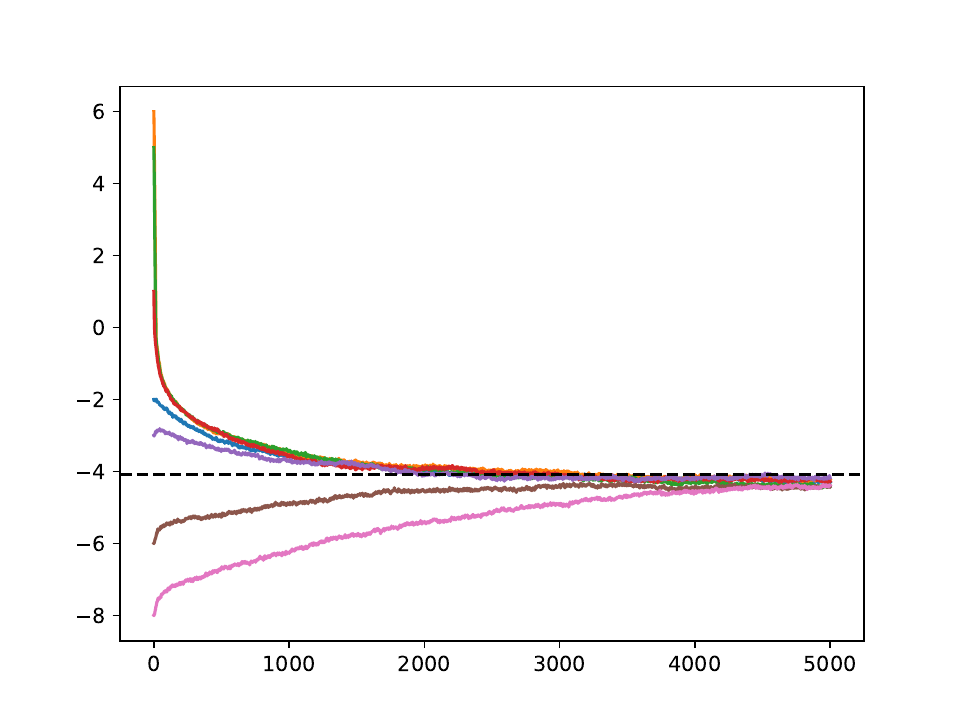} 
        \caption{PIPGLA, iterative solver for $\prox_{g_2}^{\lambda}$.}
        \label{fig:logistic_pipgla_same_iterative}
    \end{subfigure}
    \hfill
  \caption{{\normalsize Bayesian logistic regression with isotropic Laplace priors on the regression weights $\prod_i \text{Laplace}(x_i|\theta,1)$, with true $\theta=-4$. Each plot shows the $\theta$-iterates for 7 different starting points.}}  
  \label{fig:results_logistic_regression2}
\end{figure}

\begin{figure}
  \centering
      \begin{subfigure}[b]{0.32\textwidth}
        \centering
        \includegraphics[width=\textwidth]{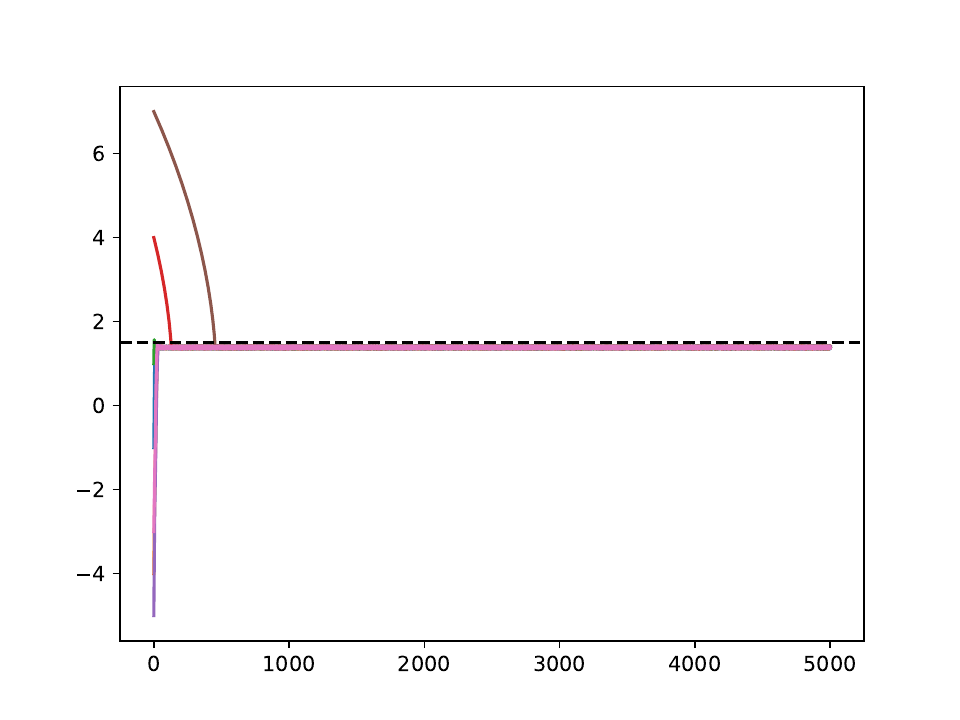}
        \caption{MYPGD.}        \label{fig:logistic_uniform_pgd}
    \end{subfigure}
    \hfill
        \begin{subfigure}[b]{0.32\textwidth}
        \centering
        \includegraphics[width=\textwidth]{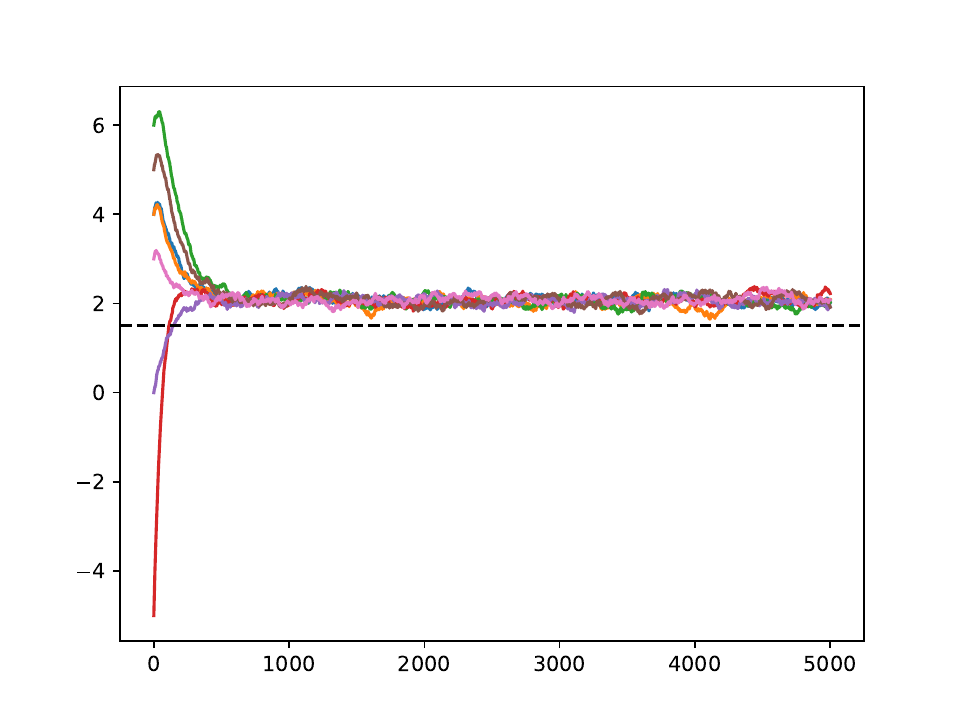}
        \caption{MYIPLA.}     \label{fig:logistic_uniform_ipla}
    \end{subfigure}
    \hfill
        \hfill
        \begin{subfigure}[b]{0.32\textwidth}
        \centering
        \includegraphics[width=\textwidth]{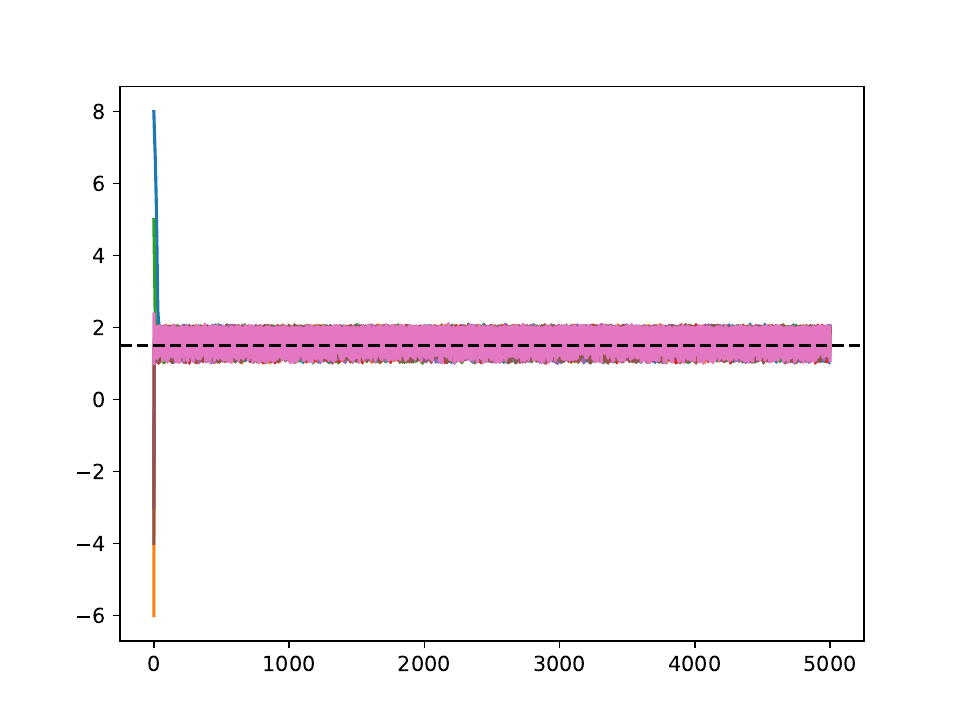}
        \caption{PIPGLA.}   \label{fig:logistic_uniform_PIPGLA}
    \end{subfigure}
    \hfill
  \caption{{\normalsize Bayesian logistic regression with isotropic uniform priors on the regression weights $\prod_i \mathcal{U}(x_i|-\theta,\theta)$, with true $\theta=1.5$. The plot displays the $\theta$-iterates for 7 randomly chosen starting points.}} 
  \label{fig:logistic_uniform}
\end{figure}

Since all the algorithms considered aim at estimating the parameter $\theta$ by sampling from a distribution which concentrates around $\bar{\theta}_\star$, we compare the estimators of $\bar{\theta}_\star$ obtained by using only the last iterate $\theta_{K+1}^N$ and averaging over a number of iterates.
We compare the normalised MSE (NMSE) for $\theta$ for the estimator obtained by averaging 
the $\theta$-iterates after discarding a burn-in of 1500 samples (column named \textit{avg}) against using the last $\theta$ of the chain (column \textit{last}).
The results are in agreement, with the NMSE for the averaged estimator having lower variance in most settings (Table~\ref{tab:blr_avg}).

\begin{table}
\caption{{\normalsize Bayesian logistic regression for Laplace and uniform priors. Normalised MSE (NMSE) for the last iterate of $\theta$ (\textit{last}) and the posterior mean after discarding a burn-in of 1500 samples (\textit{avg}). Each different algorithm is run 500 times for different starting points using 50 particles and 5000 steps. NMSEs are averaged over the 500 replicates. The second column indicates whether the proximal map is calculated approximately or iteratively, using 40 steps in each iteration. For the uniform prior case, we did not implement the iterative method due to numerical instabilities.}}
\label{table-logistic-comparison-avg}
\centering
\begin{tabular}{llllll}
  \toprule
  Algorithm     & Approx/  & \multicolumn{2}{c}{Laplace} & \multicolumn{2}{c}{Uniform} \\
  \cmidrule(r){3-4}
  \cmidrule(r){5-6}
       &  Iterative &  {\small NMSE last(\%)} & {\small NMSE avg(\%)} & {\small NMSE last(\%)} & {\small NMSE avg(\%)}\\
  \midrule
  \multirow{2}{*}{PPGD} & Approx  & $14.70\pm 4.42$  &   $16.73\pm 0.83$ & $3.63\pm 4.93$& $\mathbf{0.11\pm 0.04}$\\
  & Iterative  & $19.04\pm 1.34$ & $18.66\pm0.60 $ &$-$ & $-$ \\
   \midrule
  \multirow{2}{*}{PIPULA} & Approx  & $12.18\pm1.62$  &   $12.34\pm0.82$ & $4.71\pm 6.02$ & $0.12\pm 0.01$ \\
  & Iterative  & $19.22\pm1.28$ & $18.63\pm 0.79$ & $-$ & $-$\\
   \midrule
 \multirow{2}{*}{MYPGD} & Approx  & $6.09\pm 0.34$  &   $4.94\pm 0.51$& $\mathbf{0.60\pm 0.23}$ &$0.60\pm 0.02$  \\
 & Iterative  & $4.44\pm1.40$  &   $4.33\pm 0.59$& $-$ & $-$  \\
  \midrule
  \multirow{2}{*}{MYIPLA} & Approx  & $4.42\pm 1.32$  & $4.31\pm 0.67$ & $15.26\pm 4.44$& $16.01\pm 2.01$\\
 & Iterative  &  $4.67\pm1.60$ &   $4.45\pm 0.42$& $-$ & $-$  \\
 \midrule
  \multirow{2}{*}{PIPGLA} & Approx  & $2.30\pm0.58$  & $2.45\pm 0.94$ & $6.83\pm 3.97$& $4.22\pm 0.07$\\
   & Iterative  & $\mathbf{2.02\pm0.54}$  & $\mathbf{2.03\pm0.88}$ & $-$& $-$\\
  \bottomrule
\end{tabular}
\label{tab:blr_avg}
\end{table}}

\subsection{Bayesian neural network}\label{ap:experiments_bnn}
\subsubsection{Sparsity Inducing Prior: MNIST}\label{ap:experiments_bnn_sparsity_inducing_prior}
Our setting is equivalent to assuming that the datapoints' labels $l$ are conditionally independent given the features $f$ and network weights $x= (w,v)$, and therefore have the following probability density
\begin{equation*}
    p(l|f,x)\propto \exp\bigg(\sum_{j=1}^{40} v_{lj} \tanh\Big(\sum_{i=1}^{784}w_{ji}f_i\Big)\bigg).
\end{equation*}
We assign priors $p_{\alpha}(w)=\prod_i \text{Laplace}(w_i|0, e^{2\alpha})$ and $p_{\beta}(v)=\prod_i \text{Laplace}(v_i|0, e^{2\beta})$ to the input and output layer's weights, respectively, and learn $\theta = (\alpha, \beta)$ from the data. The model's  density is given by
\begin{equation*}
    p_{\theta}(x,\mathcal{Y}_{\text{train}}) =\prod_i \text{Laplace}(w_i|0, e^{2\alpha}) \prod_j \text{Laplace}(v_j|0, e^{2\beta})\prod_{(f,l)\in\mathcal{Y}_{\text{train}}}p(l|f,x),
\end{equation*}
where $x$ denotes the weight matrices, i.e. $x = (w, v)$.
We note that the log density can be decomposed as
\begin{equation*}
    -\log p_{\theta}(x,\mathcal{Y}_{\text{train}}) = \underbrace{d_w\alpha + \sum_i |w_i|e^{-2\alpha} + d_v\beta+ \sum_j |v_j|e^{-2\beta}}_{g_2(\theta, x)} \underbrace{- \sum_{(f,l)\in\mathcal{Y}_{\text{train}}}\log p(l|f,x)}_{g_1(\theta, x)},
\end{equation*}
where $d_w$ and $d_v$ denote the dimensions of the weights $w$ and $v$, respectively, $g_1$ is differentiable and does not depend on $\theta$ and $g_2$ is proper, convex and lower semi-continuous.
We have derived an approximation to the proximity map of $g_2$ in \ref{ap:proximal_approx_bnn}.
{\normalsize
\paragraph{Dataset.} We use the MNIST dataset. Features are normalised so that each pixel has mean zero and unit standard deviation across the dataset. We split the dataset into 80/20 training and test sets.

\paragraph{Proximal operator of $g_2$.} 

As $g_2$ can be expressed as $g_2(\theta, x) = g_2(\alpha, w) + g_2(\beta, v)$, we can compute their proximal operators separately. It is sufficient to calculate the proximal operator for $g_2(w,\alpha)$ since it is equivalent to that of $g_2(v,\beta)$. 
To do so, we have that
\begin{equation*}
    \prox_{g_2}^{\lambda}(\alpha, w) =\argmin_{(u_0, u)} h(u_0, u), \quad h(u_0, u) = u_0d_w+ \sum_i |u_i| e^{-2u_0} + \Vert (u_0, u)-(\alpha, w)\Vert^2/{(2\lambda)},
\end{equation*}
whose approximate solution is calculated in Section \ref{ap:proximal_approx_bnn}.

\paragraph{Implementation details.}
For the $x$-gradients of $g_1$, we use JAX's grad function (implementing a version of autograd). Plugging the expressions above in the corresponding equations, 
we can implement the proposed algorithms. 
However, due to the high dimensionality of the latent variables, we stabilise the algorithm using the heuristics discussed in Section 2 of \citet{pmlr-v206-kuntz23a}. 
This simply entails dividing the gradients and proximal mapping terms of the updates of $\alpha$ and $\beta$ by $d_w$ and $d_v$. 
We then set $\gamma = 0.05$ and $\lambda = 0.5$ (after performing a grid search) which ensures that the algorithms are not close to losing stability. 
In addition, the weights of the network are initialised according to the assumed prior. 
This is done by setting each weight to $\pm\; a\log u$ where $u\sim \mathcal{U}(0, 1)$, the sign is chosen uniformly at random and $a>0$ is interpreted as the average initial size of the weights. 
\citet{10.1162/neco.1995.7.1.117} suggests setting $a=1/\sqrt{2m}$ for $w$ and $a=1.6/\sqrt{2m}$ for $v$, where $m$ is the fan-in of the destination unit.

\paragraph{Predictive performance metrics.} To allow comparison, we use the same performance metrics as in \citet{pmlr-v206-kuntz23a}. We include their presentation of this metrics for completeness.

Given a new feature vector $\hat{f}$, the posterior predictive distribution for a label $\hat{l}$ associated with the marginal likelihood maximiser $\Bar{\theta}_{\star}$ is given by
\begin{equation*}
    p_{\Bar{\theta}_{\star}}(\hat{l}|\hat{f}, \mathcal{Y}_{\text{train}}) = \int p(\hat{l}|\hat{f},x) p_{\Bar{\theta}_{\star}}(x|\mathcal{Y}_{\text{train}}) \md x.
\end{equation*}
As $p_{\Bar{\theta}_{\star}}(x|\mathcal{Y}_\text{train})$ is unknown, we approximate it with the empirical distribution of the final particle cloud $q = N^{-1}\sum_{i=1}^N \delta_{X_K^i}$, leading to
\begin{equation*}
     p_{\Bar{\theta}_{\star}}(\hat{l}|\hat{f}, \mathcal{Y}_{\text{train}}) \approx \int p(\hat{l}|\hat{f},x) q(\md x) = \frac{1}{N}\sum_{i=1}^N p(\hat{l}|\hat{f},X_K^i) =: g(\hat{l}|\hat{f}).
\end{equation*}
The metrics considered to evaluate the approximation of the predictive power are the average classification error over the test set $\mathcal{Y}_{\text{test}}$, i.e. 
\begin{equation*}
    \text{Error}:=\frac{1}{\vert\mathcal{Y}_{\text{test}}\vert}\sum_{(f,l)\in \mathcal{Y}_{\text{test}}} \mathds{1}\{l=\hat{l}(f)\},\quad \text{where} \;\; \hat{l}(f):= \argmax_{\,\hat{l}} \;g(\hat{l}|\hat{f}),
\end{equation*}
and the log pointwise predictive density (LPPD, \citet{RefWorks:RefID:76-vehtari2017practical})
\begin{equation*}
     \text{LPPD}:= \frac{1}{\vert\mathcal{Y}_{\text{test}}\vert} \sum_{(f,l)\in \mathcal{Y}_{\text{test}}} \log(g(l|f)).
\end{equation*}
Under the assumption that data is drawn independently from $p(l, f)$, we have the following approximation for large test data sets,
\begin{align*}
    \text{LPPD}&\approx \int \log(g(l|f)) p(\md l, \md f) = \int \bigg[\int \log\Big(\frac{g(l|f)}{p(l|f)}\Big) p(\md l| \md f)\bigg] p(\md f) + \int \log(p(l|f)) p(\md l, \md f) \\
    &= -\int \text{KL}(g(\cdot|f)\Vert p(\cdot|f)) p(\md f) + \int \log(p(l|f)) p(\md l, \md f).
\end{align*}
This means that the larger the LPPD is, the smaller the mean KL divergence between our classifier $g(l|f)$ and the optimal classifier $p(l|f)$.

\paragraph{Results.} First, it is important to discuss whether the Laplace prior is more appropriate in this setting than the Normal one.
\citet{jaynes_prior} provides two reasons why the Laplace prior is particularly suitable for Bayesian neural network models. Firstly, for any feedforward network there is a functionally equivalent network in which the weight of a non-direct connection has the same size but opposite sign, therefore consistency demands that the prior for a given weight $w$ is a function of $\vert w\vert$ alone. Secondly, if it is assumed that all that is known about $\vert w\vert$ is its scale, and that the scale of a positive quantity is determined by its mean rather than some higher order moment, then the maximum entropy distribution for a positive quantity constrained to a given mean is the exponential distribution. It would follow that the signed weight $w$ has a Laplace density \citep{10.1162/neco.1995.7.1.117}.

We have examined the sparsity-inducing nature of the Laplace prior versus a normal one in Figure \ref{fig:histogram_weights} and Table \ref{table-bnn-comparison}.
As mentioned in the main text, the sparse representation of our experiment also has the advantage of producing models that are smaller in terms of memory usage when small weights are zeroed out.
To investigate this, we set to zero all weights below a certain threshold and analyse the performance of the compressed weight matrices. 
We consider two cases, averaging the particles of the final cloud $X_{500}^1, \dots, X_{500}^{100}$, applying the threshold and then calculating the performance, and secondly, setting to zero small values of each particle of the cloud and averaging the performance of each particle. 
We compare the results for the Bayesian neural networks with Laplace and Normal priors (Table \ref{table-bnn-sparsity}). 
It is important to note that when applying the same threshold to both cases, the Laplace prior leads to a very compressed weight matrix compared to the Normal prior, i.e. there is a significant difference in the percentage of weights set to zero. We observe that when setting the same proportion of weights to zero in both layers, the performance of the BNN with Laplace priors is better in terms of the log pointwise predictive density than that of the BNN with Normal priors, especially when averaging the final cloud of particles before computing the performance.

\begin{table}
\caption{Bayesian neural network. Performance of BNN with Laplace (implemented using MYIPLA) and Normal priors (implementation with PGD) when setting weights from the final particle cloud below a certain threshold to zero. The second column refers to whether the particles are averaged before ($\checkmark$) or after (\scalebox{0.75}{$\ballotx$}) calculating the performance.}
\label{table-bnn-sparsity}
\centering
\begin{tabular}{lcllllll}
  \toprule
  Prior   & Average over  & \multicolumn{2}{c}{$\%$ of zero weights} & \multicolumn{2}{c}{Thresholds} & Error ($\%$)     &  LPPD \\
  \cmidrule(r){3-4}
  \cmidrule(r){5-6}
     & particles? & Layer 1& Layer 2 & Layer 1& Layer 2 &      &   \\
  \midrule
  \multirow{2}{*}{Laplace} & \checkmark & $74$ & $48$  & $0.2$ & $0.2$ &$\mathbf{7.0}$& $\mathbf{-0.23}$\\
  &\scalebox{0.75}{$\ballotx$}  & $56$ & $35$ &   $1$ & $1$& $\mathbf{1.5}$ & $\mathbf{-0.07}$\\
  \midrule  
  \multirow{4}{*}{Normal} & \checkmark  & $74$ & $48$  &   $0.5$ & $1.1$& $15$& $-0.74$\\
      &   \checkmark &$16$ & $15$ &  $0.2$ & $0.2$ & $16$ &$-0.78$\\
  &   \scalebox{0.75}{$\ballotx$} &$56$ & $35$ &  $7$ & $4$ & $2.0$ &$-0.11$\\
  &   \scalebox{0.75}{$\ballotx$} &$8.6$ & $7.1$ &  $1$ & $1$ & $1.5$ &$-0.10$\\    
  \bottomrule
\end{tabular}
\end{table}
Figure \ref{fig:weight_analysis} shows how the performance metrics evolve when weights below a certain threshold are set to zero, when particles are averaged before (\ref{fig:weight_analysis_average_over_particles}) or after (\ref{fig:weight_analysis_not_averaged}) computing the performance for MYIPLA.

Once we have set the weights of the matrix below a certain threshold to zero, it is necessary to explore the dead units. These are hidden units all of whose input or output weights are zero \citep{10.1162/neco.1995.7.1.117}. In both cases, the unit is redundant and it can be eliminated to obtain a functionally equivalent network architecture, we will called this new effective weight matrix $w_{\text{pruned}}$.
The occupancy ratio of a weight matrix $w$ \citep{MARINO2023152} is defined as $\psi=\text{size}(w_{\text{pruned}})/\text{size}(w)$, where $\text{size}$ denotes the memory size. The inverse of $\psi$ is the compression ratio.
We compute the occupancy ratio of the weight matrix for both the hidden and output layer for different values of the pruning threshold. We do this for each particle of the final cloud and obtain the average as well as for the averaged final particle cloud, results are shown in Figure \ref{fig:memory_analysis}.
\vspace{30pt}

\begin{figure}[h]
  \centering
    \begin{subfigure}[b]{0.25
    \textwidth}
        \centering
        \includegraphics[width=\textwidth]{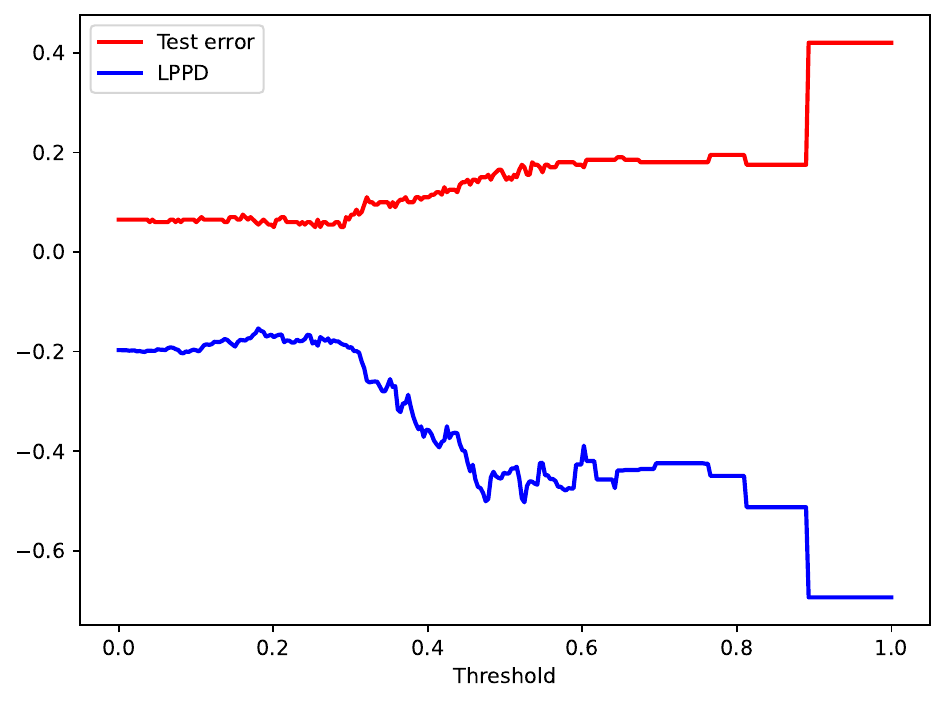}
        \caption{}
        \label{fig:weight_analysis_average_over_particles}
    \end{subfigure}
    \hspace{30pt}
        \begin{subfigure}[b]{0.25\textwidth}
        \centering
        \includegraphics[width=\textwidth]{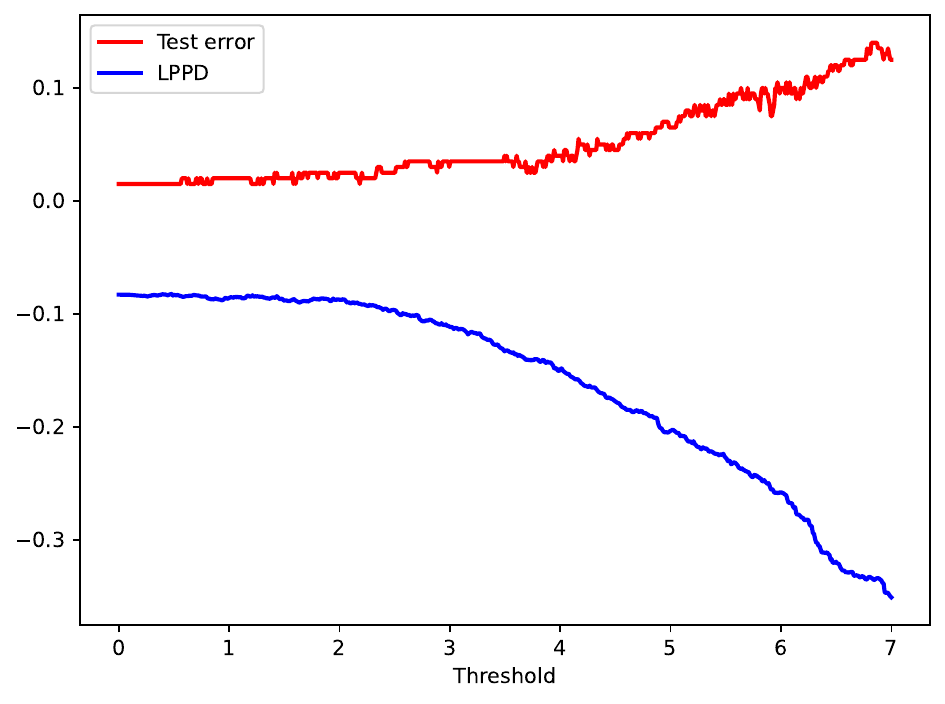} 
        \caption{}
        \label{fig:weight_analysis_not_averaged}
    \end{subfigure}
    \hfill
  \caption{{\normalsize Evolution of the performance metrics when weights below a certain threshold are set to zero, when particles are averaged before (a) or after (b) computing the performance.}}  
  \label{fig:weight_analysis}
\end{figure}

\begin{figure}[h!]
  \centering
    \begin{subfigure}[b]{0.25\textwidth}
        \centering
        \includegraphics[width=\textwidth]{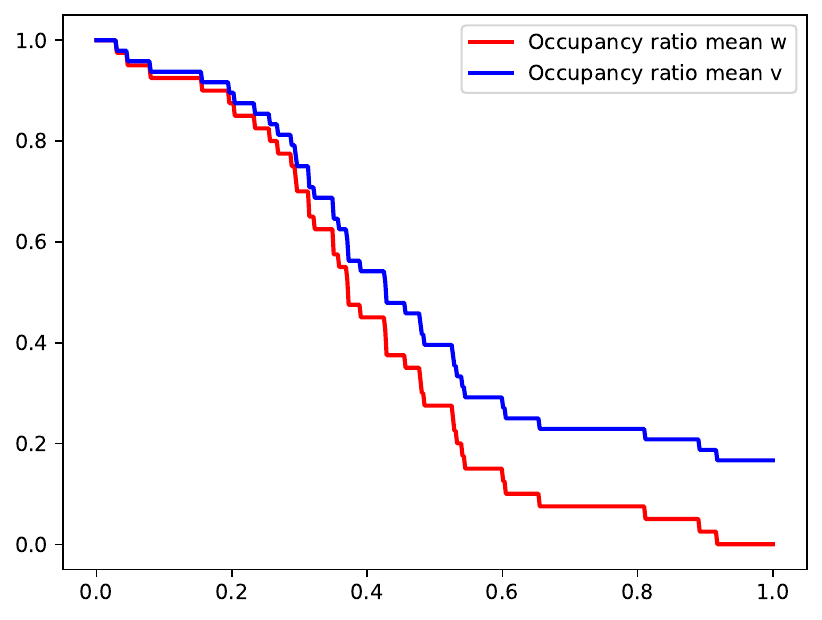}
        \caption{}
        \label{fig:memory_analysis_average_over_particles}
    \end{subfigure}
    \hspace{30pt}
        \begin{subfigure}[b]{0.25\textwidth}
        \centering
        \includegraphics[width=\textwidth]{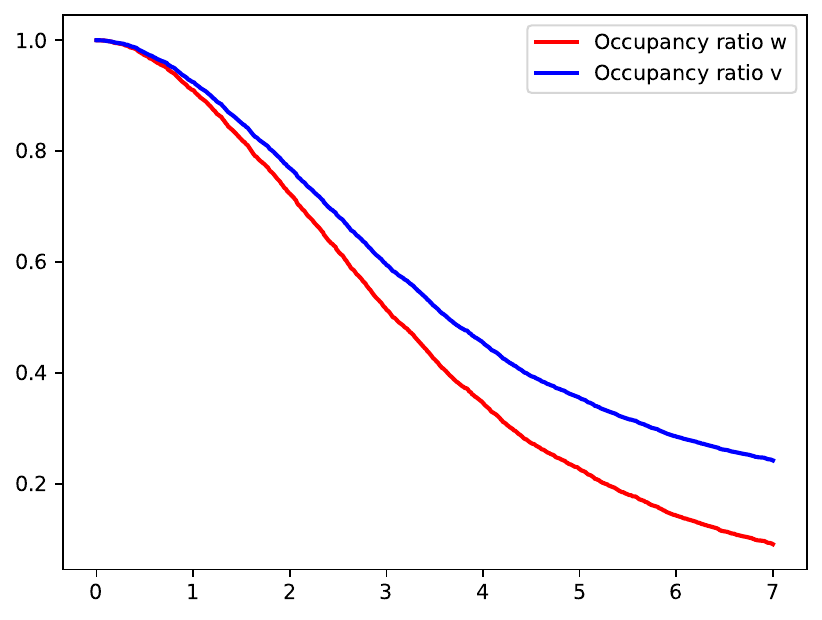} 
        \caption{}
        \label{fig:memory_analysis_not_averaged}
    \end{subfigure}
    \hfill
  \caption{{\normalsize Occupancy ratio for the weights matrices of the hidden and output layers as a function of the pruning threshold, when particles are averaged before (a) or after (b) computing the occupancy ratio.}}  
  \label{fig:memory_analysis}
\end{figure}}

\subsubsection{Sparsity Inducing Prior: CIFAR10}\label{ap:experiments_bnn_addtional_dataset}
We further evaluate our methods on  a classification task using a more complex dataset: CIFAR10.
As with the MNIST dataset, to reduce the cost of computing the gradients on a big dataset, we subsample 5000 data points with labels \emph{plane}, \emph{car}, \emph{ship} and \emph{truck}. 

Given that the data consists of colour images, we employ a convolutional neural network (CNN) architecture. Specifically, we use a combination of convolutional layers, max pooling layers, and linear layers with non-linear activation functions. For simplicity, we apply a sparsity-inducing prior only to the linear layers, and not to the convolutional ones. The sparsity inducing prior for each layer with weight matrix $w$ is given by $p_{\alpha}(w)=\prod_i \text{Laplace}(w_i|0, e^{2\alpha})$ where $\alpha$ is learn from the data. The network structure and layer dimensions are as follows.

\begin{itemize}
    \item  Convolutional layer (Deterministic): Conv2d(3, 6, 5)
\item Max pooling layer (Deterministic): MaxPool2d(2, 2)
\item Convolutional layer (Deterministic): Conv2d(6, 16, 5)
\item Linear layer with sparsity inducing prior + SELU activation function: Linear(16 $\times$ 5 $\times$ 5, 512)
\item Linear layer with sparsity inducing prior + SELU activation function: Linear(512, 256)
\item Linear layer with sparsity inducing prior + SELU activation function: Linear(256, 128)
\item Linear layer with sparsity inducing prior: Linear(128, 4)
\end{itemize}

Table \ref{tab:classification_results_cifar10} presents quantitative results for the variance of the weights and error metrics. The last column provides a measure of the sparsity-inducing effect of the Laplace prior on the linear layers.

\begin{table}[h]
\caption{Bayesian neural network on CIFAR10 dataset. Test errors and log pointwise predictive density (LPPD) achieved using the final particle cloud with $N = 50$. Computation times and standard deviation of the empirical distribution of the weight matrix $w$ for linear layers are also provided.}
\centering
\begin{tabular}{lcccc}
\toprule
{Algorithm} & {Error (\%)} & {LPPD ($\times 10^{-1}$)} & {Time (s)} & {Std. $w$} \\
\midrule
{MYPGD}   & $5.27\pm0.95$      & $-4.41\pm0.38$      & $201$     & $3.10 $      \\
{MYIPLA}  & $\mathbf{5.23\pm1.31}$ & $-5.05\pm0.45$      & $199$     & $3.22$       \\
{PIPGLA}  & $5.39\pm1.02$      & $\mathbf{-4.32\pm0.37}$ & $295$     & $\mathbf{2.85}$ \\
{PGD}     & $6.01\pm1.15$      & $-5.73\pm0.40$      & $\mathbf{178}$ & $11.51$      \\
{SOUL}    & $9.11\pm2.03$      & $-7.68\pm1.56$     & $433$     & $15.68$     \\
{IPLA}    & $5.40\pm1.33$      & $-5.90\pm0.75$      & $181$     & $15.73$     \\
\bottomrule
\end{tabular}
\label{tab:classification_results_cifar10}
\end{table}

\subsubsection{Non-Differentiable Activation Functions}\label{ap:experiments_bnn_activation_function}
During gradient checking in neural network training, a potential source of inaccuracy arises from the presence of non-differentiable points in the objective function \citep{kumar2024gddoesntmakecut}. These non-smooth points often result from the use of activation functions such as the Rectified Linear Unit (ReLU), defined as $\max(0, x)$, as well as from the hinge loss in support vector machines, maxout neurons, among others. To give a concrete example, consider the ReLU activation function and $x<0$ but very close to $0$. The analytic gradient evaluated at $x$ is equal to $0$. However, the numerical gradient can be non-zero when using a finite difference approximation in case $x+h>0$. 

Our method provides a principled way of dealing with these non-differentiable points. 
To illustrate this, we present a simple example similar to the one in the previous section. Here, we consider a Bayesian neural network with a Normal prior distribution on the weights to classify MNIST digits 1 and 7, instead of 4 and 9. 
Additionally, we use a linear approximation of $tanh$ as the activation function to mitigate the dying neuron problem associated with ReLU \citep{dying_neuron_2020}, while noting that it still remains non-differentiable. This linear approximation is defined as
\begin{equation*}
    h(x) = \begin{cases}
    -1 \quad  \text{if} \; x<-1,\\
     x \quad \ \ \;\text{if} \; x\in[-1, 1],\\
    1 \quad \ \;\;\text{if} \; x>1.
    \end{cases}
\end{equation*}
Furthermore, we can compute the proximal mapping of $h$ which is given by
\begin{equation}\label{eq:proximity_map_leaky_relu}
    \prox_{h}^\lambda(x) = \begin{cases}
        x &\text{if} \; x<-1,\\
        -1  &\text{if} \; x\in[-1, -1+\lambda],\\
        x-\lambda   & \text{for} \; x\in[-1+\lambda,1-\lambda],\\
        1 & \text{for} \; x\in[1-\lambda, 1],\\
        x&\text{if} \; x> 1,
    \end{cases}
\end{equation}
where we have applied the first order optimality condition and used the subgradient of the function at $x=-1$ and $1$ which is given by the sets $[0, 1]$ and $[-1, 0]$, respectively.
Therefore, in this setting we have the following likelihood
\begin{equation*}
    p(l|f,x)\propto \exp\bigg(\sum_{j=1}^{40} v_{lj} h_{LR}\Big(\sum_{i=1}^{784}w_{ji}f_i\Big)\bigg).
\end{equation*}
We assign priors $p_{\alpha}(w)=\prod_i \mathcal{N}(w_i|0, e^{2\alpha})$ and $p_{\beta}(v)=\prod_i \mathcal{N}(v_i|0, e^{2\beta})$ to the input and output layer's weights, respectively, and learn $\theta = (\alpha, \beta)$ from the data. Hence, model's  density is given by
\begin{equation*}
    p_{\theta}(x,\mathcal{Y}_{\text{train}}) =\prod_i \mathcal{N}(w_i|0, e^{2\alpha}) \prod_j \mathcal{N}(v_j|0, e^{2\beta})\prod_{(f,l)\in\mathcal{Y}_{\text{train}}}p(l|f,x),
\end{equation*}
where $x$ denotes the weight matrices, i.e. $x = (w, v)$.
We note that the log density can be decomposed as
\begin{equation*}
    -\log p_{\theta}(x,\mathcal{Y}_{\text{train}}) = \underbrace{d_w\alpha + \frac{1}{2}\sum_i |w_i|^2e^{-2\alpha} + d_v\beta+ \frac{1}{2}\sum_j |v_j|^2e^{-2\beta}}_{g_1(\theta, x)} \underbrace{- \sum_{(f,l)\in\mathcal{Y}_{\text{train}}}\log p(l|f,x)}_{g_2(\theta, x)},
\end{equation*}
where $d_w$ and $d_v$ denote the dimensions of the weights $w$ and $v$, respectively, $g_1$ is differentiable and depends on $\theta$ and $x$, while $g_2$ is proper, convex and lower semi-continuous and only depends on $x$, that is, $g_2(\theta, x) = g_2(x)$. 
As a result, the non-differentiability affects only the latent variables $x$. In this case, we can compute the proximity map of $g_2$ by using the expression for the proximity map of the activation function, provided in \eqref{eq:proximity_map_leaky_relu}.

We follow the same implementation details as outlined in the previous section and use the same performance metrics: average classification error over a test set and log pointwise predictive density.
The results for the proposed proximal algorithms are provided in Table \ref{table-bnn-comparison-nondiff-activation} together with the computation times for $N=50$ and $500$ iterations. In addition, plots of the evolution of the different performance metrics for different number of particles are shown in Figure \ref{fig:bnn_non_diff_activation_performance}.
We observe that the standard deviation of the LPPD across runs decreases as the number of particles increases. Moreover, PIPGLA exhibits a lower standard deviation compared to the other methods.

\begin{table}[t]
  \caption{Bayesian neural network with non-differentiable activation function. Test errors and log pointwise predictive density (LPPD) achieved using the final particle cloud with $N = 50$ and $500$ iterations.}
  \label{table-bnn-comparison-nondiff-activation}
  \centering
  \begin{tabular}{llll}
    \toprule
    Algorithm     & Error $\left(\%\right)$     &  LPPD $\left(\times 10^{-2}\right)$ & Times (s)\\
    \midrule
    MYPGD & $0.75\pm 0.68$  & $\mathbf{-3.36\pm1.18}$  &   $\mathbf{40}$\\
    MYIPLA & $\mathbf{0.70\pm 0.50}$  & $-4.28\pm 2.86$ & $\mathbf{40}$\\
        PIPGLA & $0.90\pm 0.49$  & $-3.76\pm0.96$  &   $68$\\
    \bottomrule
  \end{tabular}
\end{table}
\normalsize

\begin{figure}[t]
    \centering

    \begin{subfigure}[b]{0.33\textwidth}
        \centering
        \includegraphics[width=\textwidth]{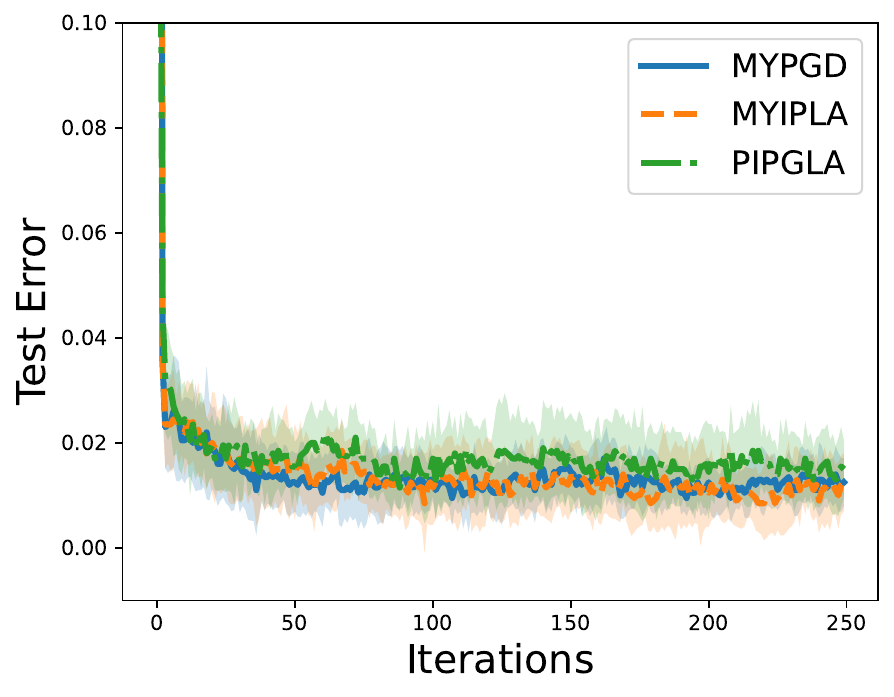}
        \caption{Test error. $N=5$}
        \label{fig:subfig_11_error_label}
    \end{subfigure}
        \begin{subfigure}[b]{0.33\textwidth}
        \centering
        \includegraphics[width=\textwidth]{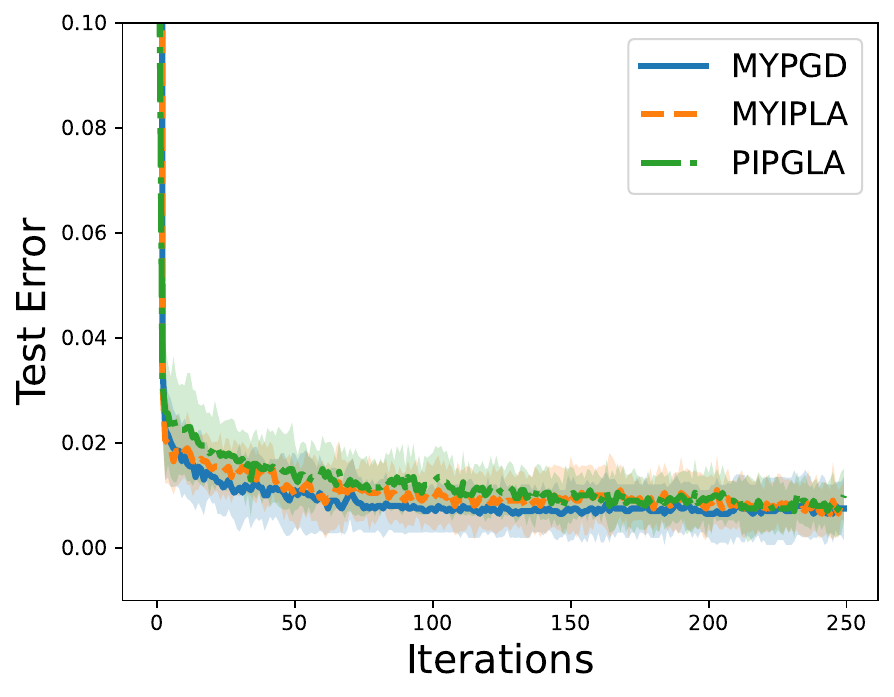}
        \caption{Test error. $N=50$}
        \label{fig:subfig_33_error_label}
    \end{subfigure}
        \begin{subfigure}[b]{0.33\textwidth}
        \centering
        \includegraphics[width=\textwidth]{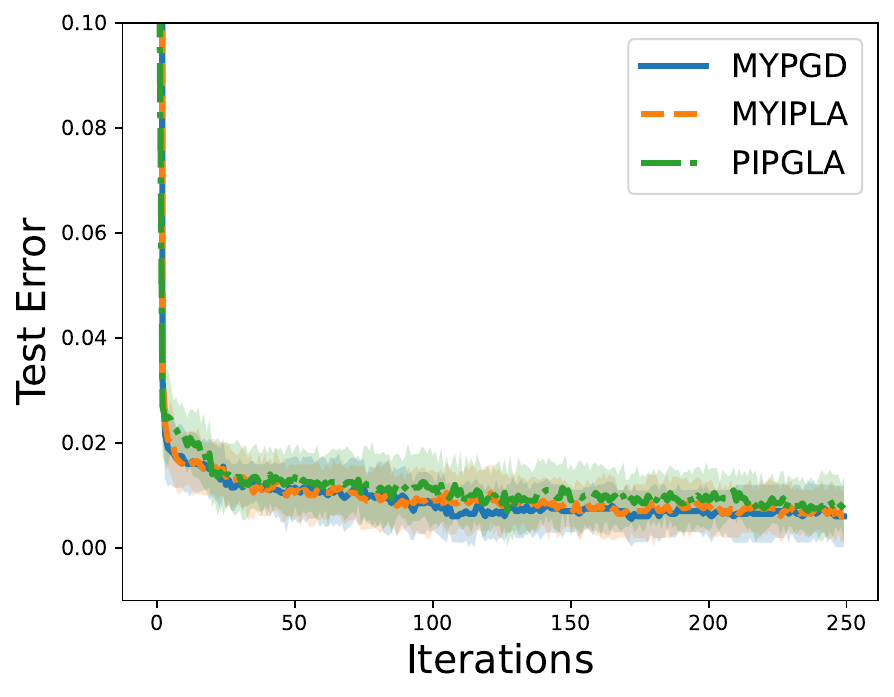}
        \caption{Test error. $N=100$}
        \label{fig:subfig_33_error_label_100}
    \end{subfigure}
    \vfill
    \begin{subfigure}[b]{0.33\textwidth}
        \centering
        \includegraphics[width=\textwidth]{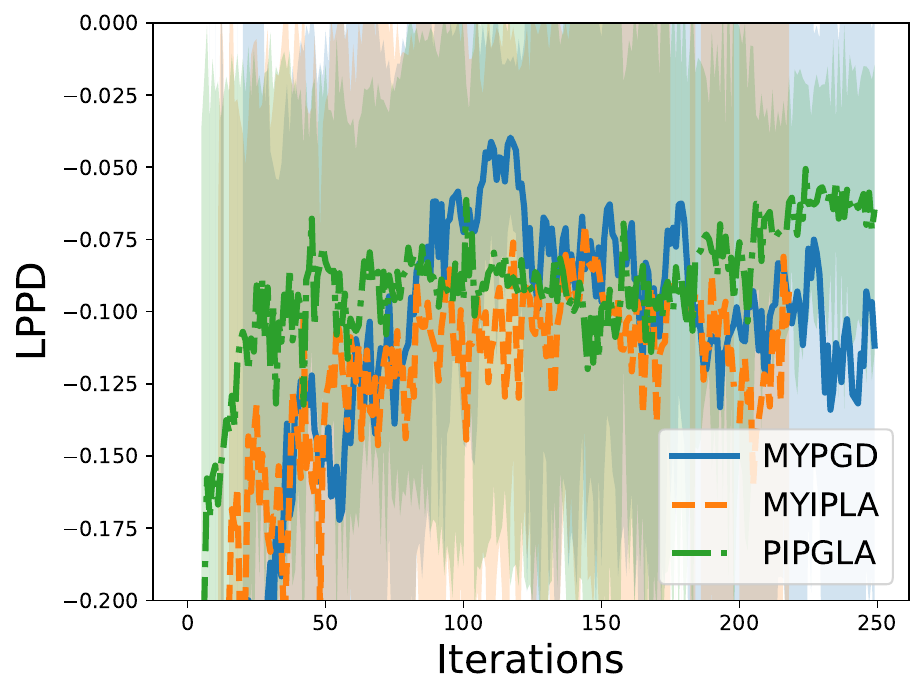}
        \caption{LPPD. $N=5$}
        \label{fig:subfig_22_lppd}
    \end{subfigure}
    \begin{subfigure}[b]{0.33\textwidth}
        \centering
        \includegraphics[width=\textwidth]{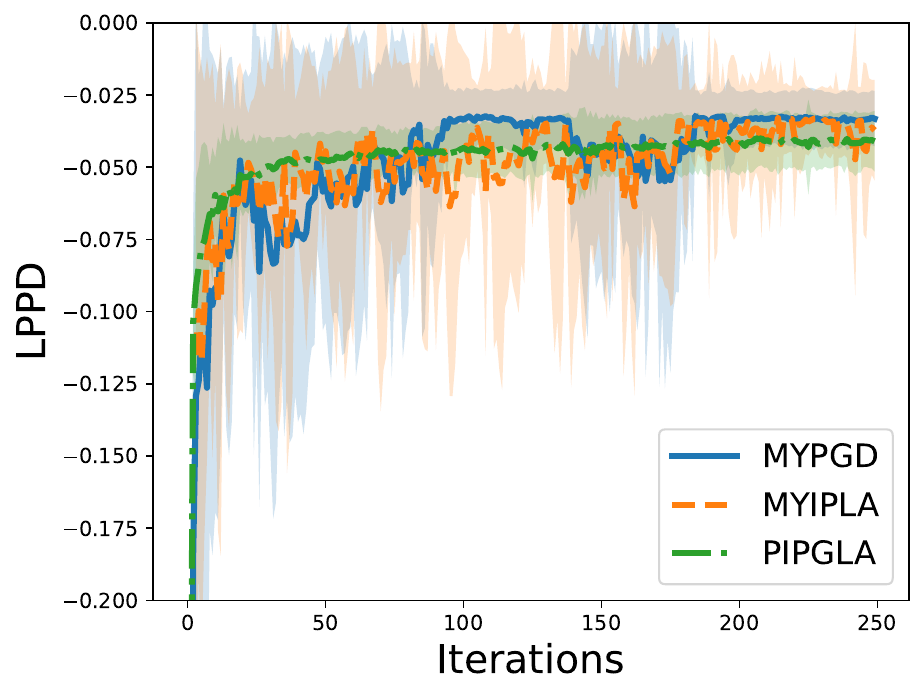}
        \caption{LPPD. $N=50$}
        \label{fig:subfig_44_lppd}
    \end{subfigure}
    \begin{subfigure}[b]{0.33\textwidth}
        \centering
        \includegraphics[width=\textwidth]{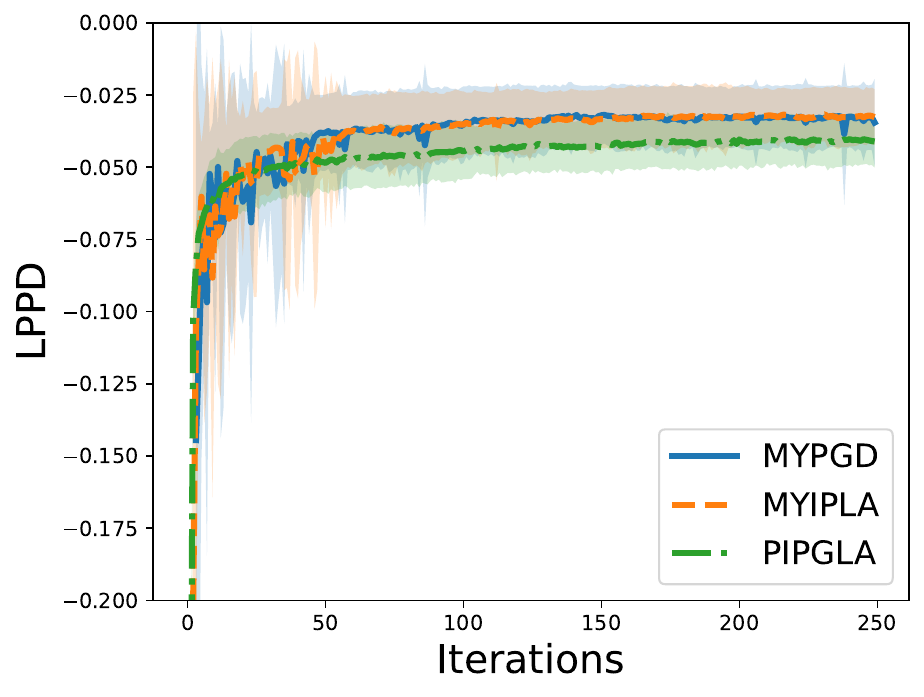}
        \caption{LPPD. $N=100$}
        \label{fig:subfig_44_lppd_100}
    \end{subfigure}

    \caption{Evolution of the classification error on a test set (top) and the log pointwise predictive density (LPPD) (bottom) over iterations in the BNN experiment with non-differentiable activation function, using $250$ iterations. Values averaged over 100 runs.}
    \label{fig:bnn_non_diff_activation_performance}
\end{figure}

\subsection{Image Deblurring}\label{app:image_deblurring}
We consider the problem of recovering a high-quality image from a blurred and noisy observation $y = Hx +\varepsilon$, where $H$ is a blurring operator that blurs a pixel $x_{i,j}$ uniformly with its closest neighbours (10 $\times$ 10 patch), and $\varepsilon\sim\mathcal{N}(0, \sigma^2 I)$. 
The log prior is proportional to the total variation defined as $ TV(x)=\Vert \nabla_d x\Vert_{1}$, where $\Vert \cdot\Vert_{1}$ is the $\ell_1$ norm and $\nabla_d$ is the two-dimensional discrete gradient operator, which is non-differentiable. 
The proportionality parameter, $e^\theta$, which controls the strength of this log prior, typically requires manual tuning.
Instead of fixing this parameter manually, we estimate its optimal value using our proposed algorithms. Note that we exponentiate $\theta$ to ensure its positivity.

The posterior distribution for the model takes the form
\begin{equation*}
    p_\theta(y|x)\propto\exp\left(-\Vert y- Hx\Vert^2/(2\sigma^2)-e^\theta TV(x) + \log C(\theta)\right),
\end{equation*}
where $C(\theta)$ is proportional to the normalising constant of the prior distribution. 
To compute $C(\theta)$, we start by considering the case when $\theta = 0$. In this case, the total variation prior is given by
\begin{equation*}
    p(x)= C \exp(-TV(x)),
\end{equation*}
where $C$ is constant. For $\theta\neq 0$, the prior $p_\theta(x)$ can be expressed using the pushforward measure as
\begin{equation*}
    p_\theta(x) = T_{e^\theta}\#p(x) = e^{d_x\theta}p(e^\theta x),
\end{equation*}
where $T_{e^\theta}\#$ denotes the pushforward operator and $d_x$ is the dimension of $x$. Due to the linearity of the total variation norm, it follows that 
\begin{equation*}
    p_\theta(x)=C e^{d_x\theta} \exp\left(-e^\theta TV(x)\right).
\end{equation*}
Thus, we obtain that theta $C(\theta) = e^{d_x\theta}$.
For the experiments, we employ the algorithms proposed by \citep{douglas_56} and \citep{chambolle2004algorithm} to efficiently compute the proximal operator of the total variation norm.
Due to the difficulty of computing the joint proximal operator over the parameter $\theta$ and latent variables $x$, we have consider hybrid versions of the algorithms, which use standard gradient-based updates for the parameters and proximal updates for the particles. 
That is, the updates for the hybrid MYIPLA algorithm are given by
\begin{align*}
    \theta_{n+1}^N =& \theta_{n}^N  - \frac{\gamma}{d_xN}\sum_{i=1}^N e^{\theta_n^N} TV( X_n^{i, N})+\gamma+\sqrt{\frac{2\gamma}{N}}\xi_{n+1}^{0, N},\\
    X_{n+1}^{i, N} =& \Big(1-\frac{\gamma}{\lambda}\Big)X_{n}^{i, N} -\gamma \frac{H^\intercal(HX_n^{i, N} -y)}{\sigma^2} + \frac{\gamma}{\lambda}\prox_{e^{\theta_n^N} TV}^{\lambda}(X_n^{i, N})+\sqrt{2\gamma}\;\xi_{n+1}^{i, N}.
\end{align*}
Note that as in the Bayesian neural network example, we apply the heuristic of dividing the gradient term in the $\theta$ updates by $d_x$ for numerical stability.
For PIPGLA, the update for the parameter $\theta$ remains the same, while the updates for the particles are of the form
\begin{align*}
    X_{n+1/2}^{i, N} &= X_{n}^{i, N} -\gamma \frac{H^\intercal(HX_n^{i, N} -y)}{\sigma^2} +\sqrt{2\gamma}\;\xi_{n+1}^{i, N},\\
    X_{n+1}^{i, N} &= \prox_{e^{\theta_{n+1}^ {N}}TV}^{\lambda} \big(X_{n+1/2}^{i, N}\big).
\end{align*}
Analogous forms are defined for the proximal PGD algorithms.

\paragraph{Dataset.} We use black and white images with pixels values ranging from 0 to 255. The dimensions of the acoustic guitar image are $d_x = n_1\times n_2 = 584\times 238$, while the dimensions of the boat image (a standard benchmark in the image reconstruction literature) are $d_x = 512\times512$.

\paragraph{Implementation details.} 
We implement the proximal operator of the total variation using the \textit{proxTV} Python package \citep{2011_Barbero11_icml, 2018_barbero_jmlr}. Specifically, we employ the Douglas-Rachford method introduced by \citet{douglas_56} and the Chambolle-Pock method \citep{chambolle2004algorithm}. The Douglas-Rachford method is significantly faster than the Chambolle-Pock method.
It is important to note that increasing the precision of the Moreau-Yosida envelope significantly slows down the computation of the proximal operator when using these numerical schemes.
We set $\gamma = 0.01$, and $\lambda = 0.4$ for MYPGD and MYIPLA and $\lambda = 0.001$ for PIPGLA (after performing a grid search) which ensures that the algorithms are not close to losing stability. 
In addition, the pixels of the initial particles are drawn from a normal distribution with mean $\mu = 50$ and scale parameter $10$, while the initial parameter estimate $\theta_0$ is sampled from a uniform distribution over $[-15, 10]$.

\paragraph{Performance metrics.} 
To evaluate the performance of our algorithms in image reconstruction, we evaluate the mean squared error (MSE) and the structural similarity index (SSIM) between the particle cloud and the ground-truth image. The SSIM quantifies image quality by comparing luminance, contrast, and structural details.

\begin{figure}[t]
    \centering

    \begin{subfigure}[b]{0.18\textwidth}
        \centering
        \includegraphics[width=\textwidth]{plots/original_reshaped.jpg}
        \caption{Original}
        \label{fig:subfig_11}
    \end{subfigure}
    \hfill
    \begin{subfigure}[b]{0.18\textwidth}
        \centering
        \includegraphics[width=\textwidth]{plots/blurred.jpg}
        \caption{Blurred}
        \label{fig:subfig_22}
    \end{subfigure}
    \hfill
    \begin{subfigure}[b]{0.18\textwidth}
        \centering
        \includegraphics[width=\textwidth]{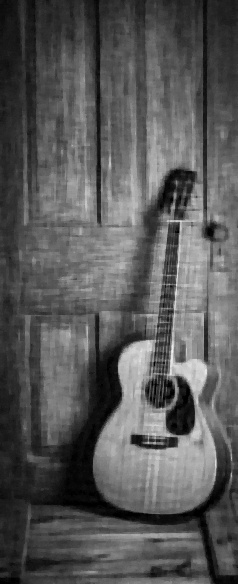}
        \caption{MYPGD}
        \label{fig:subfig_33}
    \end{subfigure}
    \hfill
    \begin{subfigure}[b]{0.18\textwidth}
        \centering
        \includegraphics[width=\textwidth]{plots/myipla_reconstructed_2.jpg}
        \caption{MYIPLA}
        \label{fig:subfig_44}
    \end{subfigure}
    \hfill
    \begin{subfigure}[b]{0.18\textwidth}
        \centering
        \includegraphics[width=\textwidth]{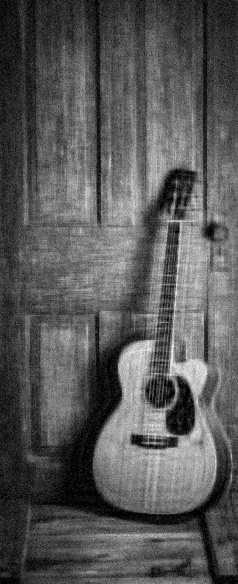}
        \caption{PIPGLA}
        \label{fig:subfig_55}
    \end{subfigure}

    \caption{Image deblurring experiment. All the algorithms use $N=10$ particles and are run for 3000 iterations with a burn-in of 100 iterations.}
    \label{fig:image_deconvolution_methods}
\end{figure}

\begin{figure}[h!]
    \centering

    \begin{subfigure}[b]{0.33\textwidth}
        \centering
        \includegraphics[width=\textwidth]{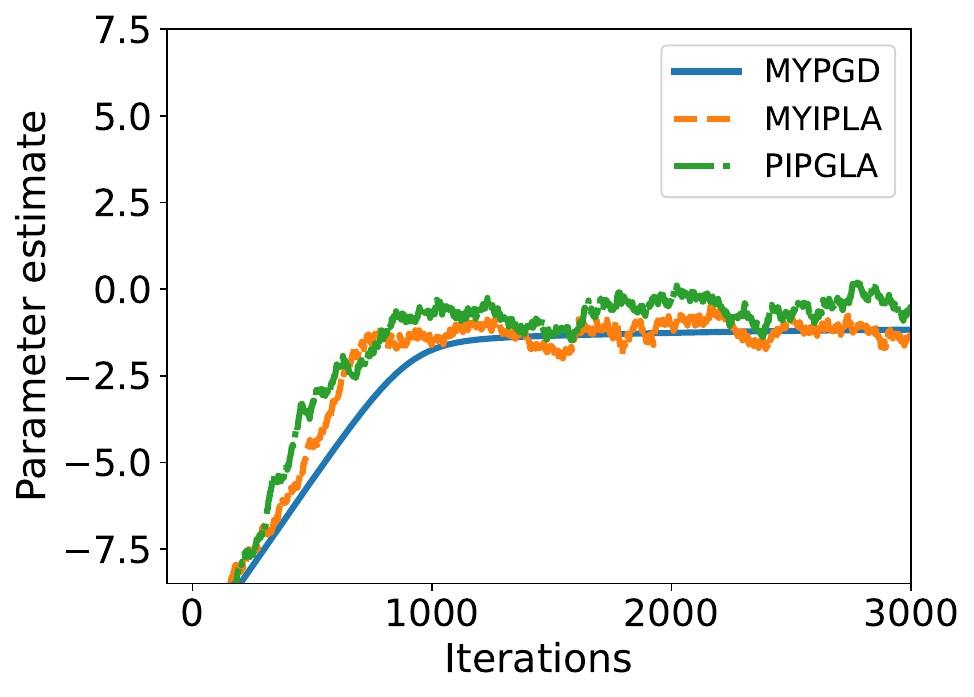}
        \caption{$\theta$ estimates}
        \label{fig:subfig_11_estimates}
    \end{subfigure}
    \begin{subfigure}[b]{0.33\textwidth}
        \centering
        \includegraphics[width=\textwidth]{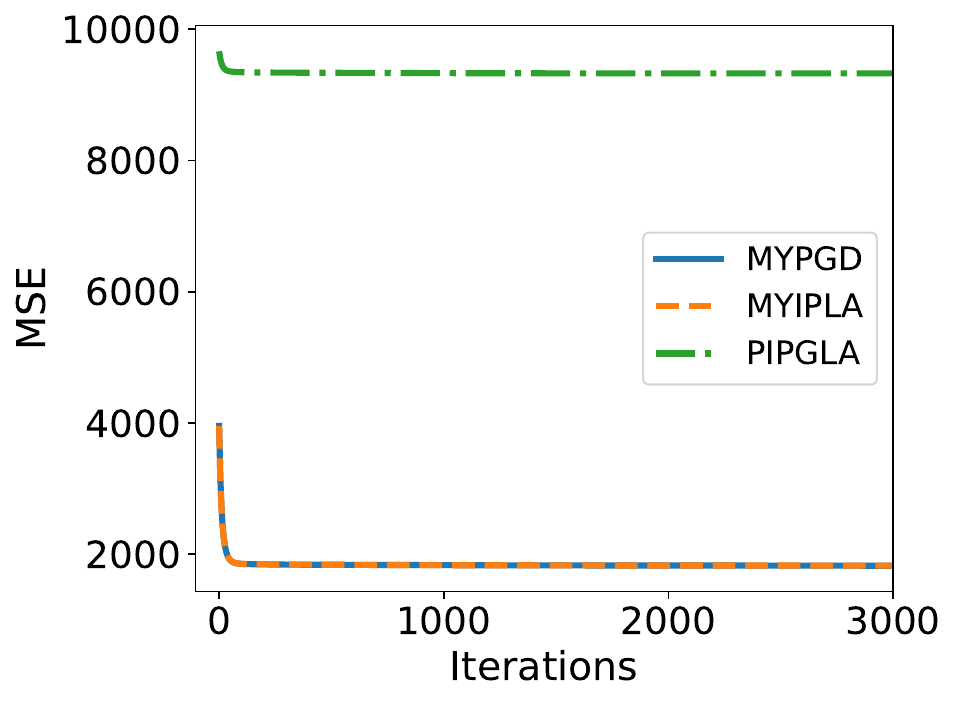}
        \caption{MSE}
        \label{fig:subfig_22_nmse}
    \end{subfigure}
    \begin{subfigure}[b]{0.33\textwidth}
        \centering
        \includegraphics[width=\textwidth]{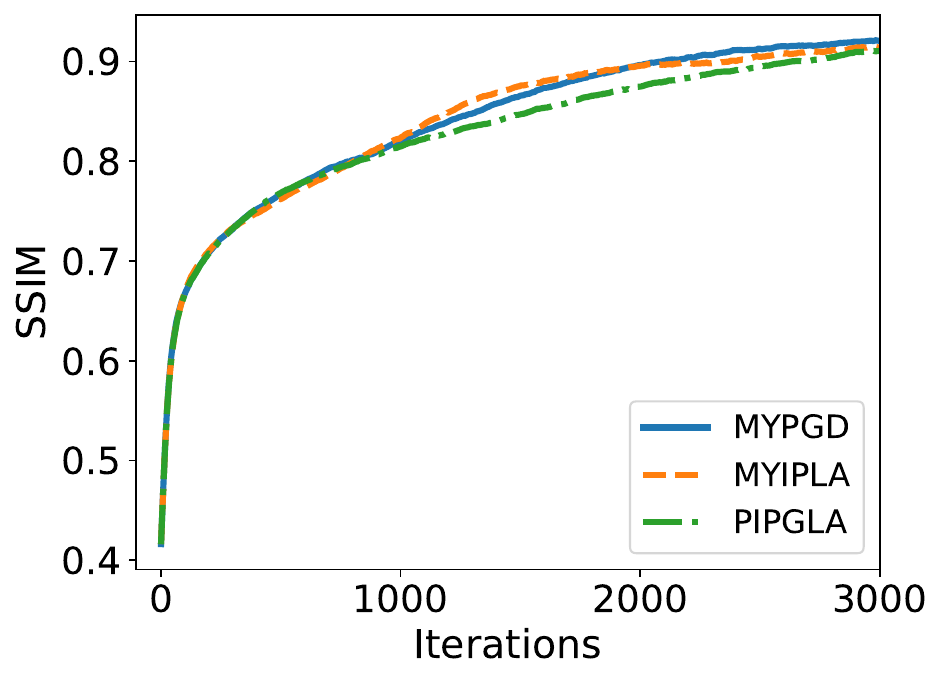}
        \caption{Structural similarity index (SSIM)}
        \label{fig:subfig_22_ssim}
    \end{subfigure}

    \caption{Evolution of different quantities over iterations in the image deblurring experiment with $N=10$ particles for the acoustic guitar image. The plots are shown after discarding a burn-in period of 100 iterations and the initial parameter is $\theta_0 = -10.5$.}
    \label{fig:image_deconvolution_methods_analysis}
\end{figure}

\begin{figure}[t]
    \centering

    \begin{subfigure}[b]{0.18\textwidth}
        \centering
        \includegraphics[width=\textwidth]{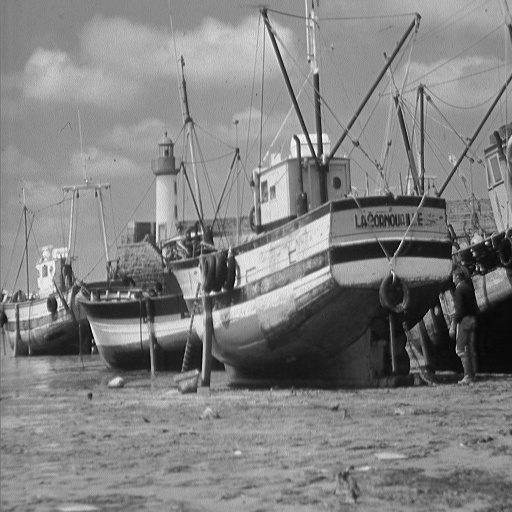}
        \caption{Original}
        \label{fig:subfig_11_boat}
    \end{subfigure}
    \hfill
    \begin{subfigure}[b]{0.18\textwidth}
        \centering
        \includegraphics[width=\textwidth]{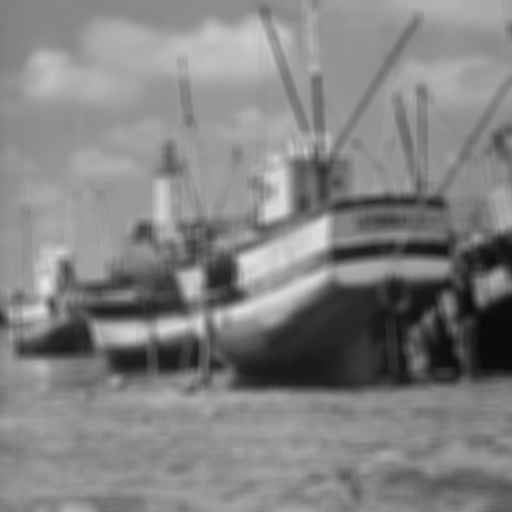}
        \caption{Blurred}
        \label{fig:subfig_22_boat}
    \end{subfigure}
    \hfill
    \begin{subfigure}[b]{0.18\textwidth}
        \centering
        \includegraphics[width=\textwidth]{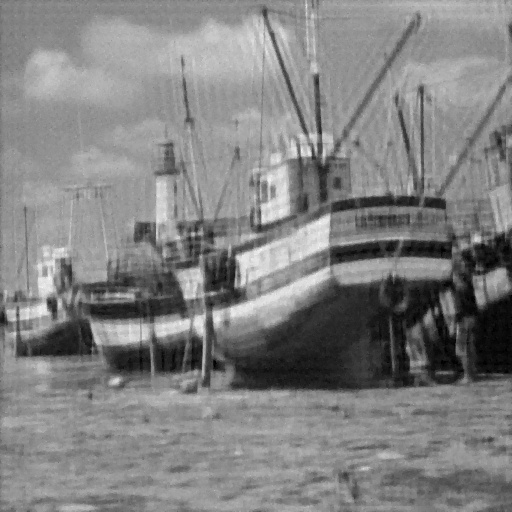}
        \caption{MYPGD}
        \label{fig:subfig_33_boat}
    \end{subfigure}
    \hfill
    \begin{subfigure}[b]{0.18\textwidth}
        \centering
        \includegraphics[width=\textwidth]{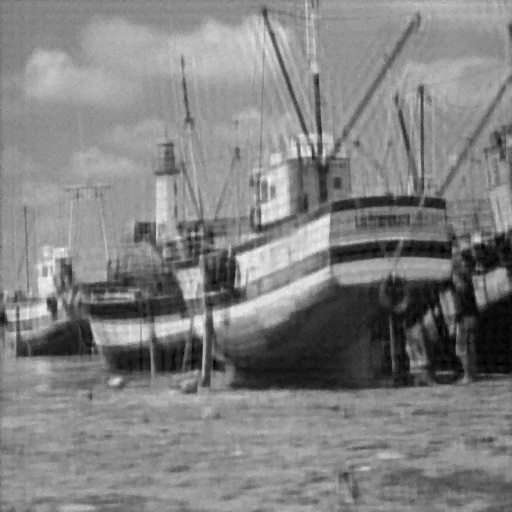}
        \caption{MYIPLA}
        \label{fig:subfig_44_boat}
    \end{subfigure}
    \hfill
    \begin{subfigure}[b]{0.18\textwidth}
        \centering
        \includegraphics[width=\textwidth]{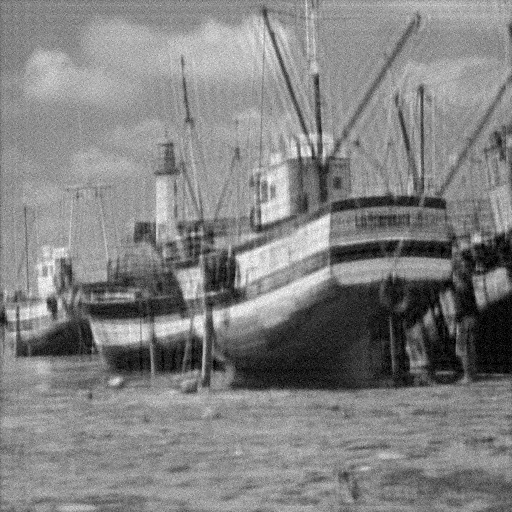}
        \caption{PIPGLA}
        \label{fig:subfig_55_boat}
    \end{subfigure}

    \caption{Image deblurring experiment. All the algorithms use $N=10$ particles and are run for 3000 iterations with a burn-in of 100 iterations.}
    \label{fig:image_deconvolution_methods_boat}
\end{figure}

\begin{figure}[h!]
    \centering

    \begin{subfigure}[b]{0.33\textwidth}
        \centering
        \includegraphics[width=\textwidth]{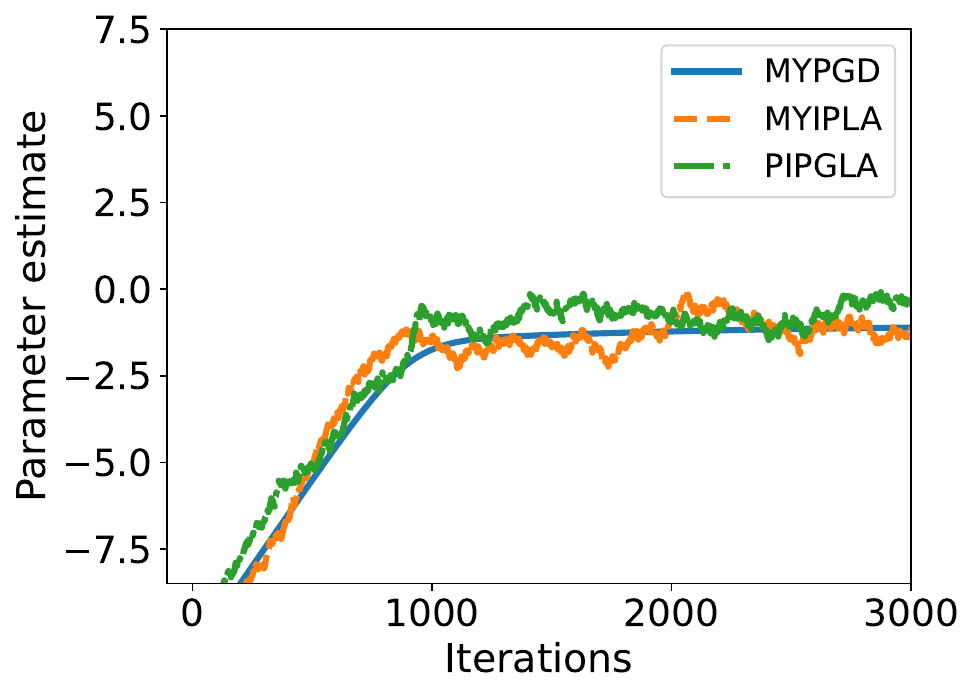}
        \caption{$\theta$ estimates}
        \label{fig:subfig_11_estimates_boat}
    \end{subfigure}
    \begin{subfigure}[b]{0.33\textwidth}
        \centering
        \includegraphics[width=\textwidth]{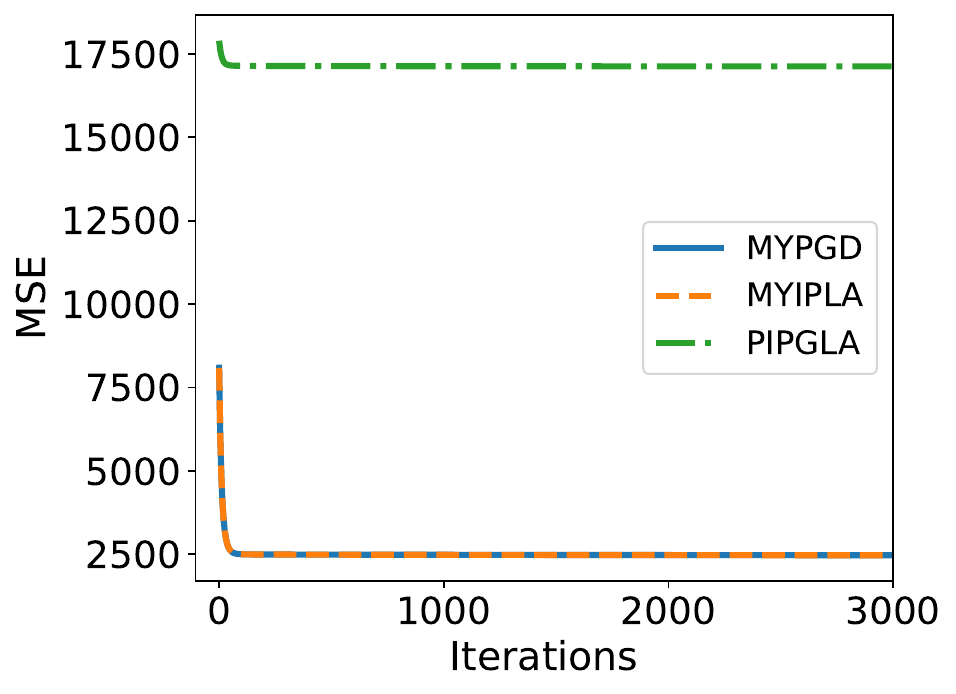}
        \caption{MSE}
        \label{fig:subfig_22_nmse_boat}
    \end{subfigure}
    \begin{subfigure}[b]{0.33\textwidth}
        \centering
        \includegraphics[width=\textwidth]{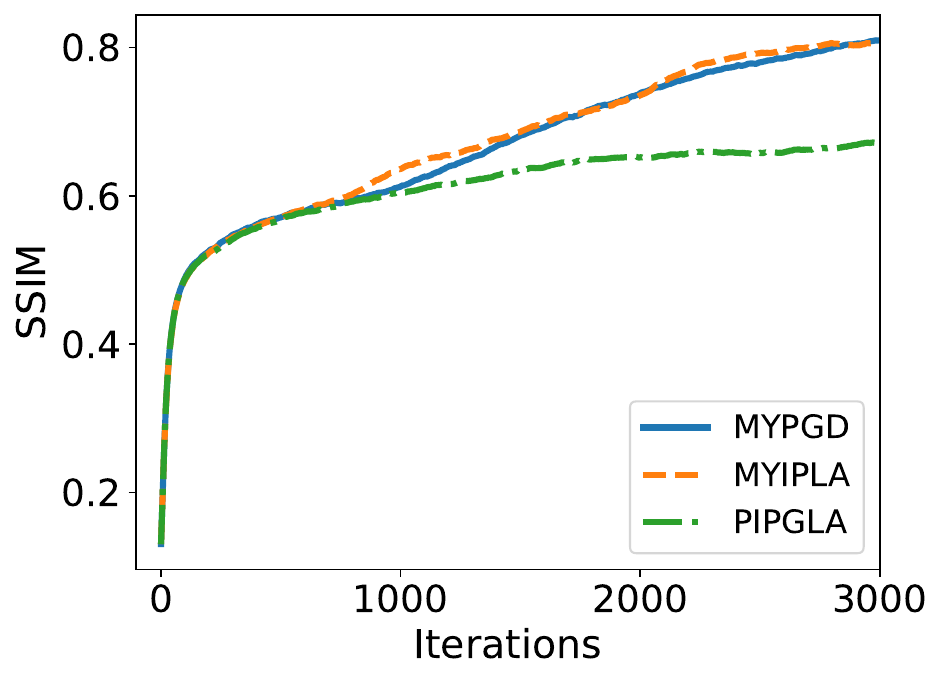}
        \caption{Structural similarity index (SSIM)}
        \label{fig:subfig_22_ssim_boat}
    \end{subfigure}

    \caption{Evolution of different quantities over iterations in the image deblurring experiment with $N=10$ particles for the boat image. The plots are shown after discarding a burn-in period of 100 iterations and the initial parameter is $\theta_0 = -8.1$.}
    \label{fig:image_deconvolution_methods_analysis_boat}
\end{figure}

\paragraph{Results.}
Figures \ref{fig:image_deconvolution_methods} and \ref{fig:image_deconvolution_methods_boat}  display the original and blurred images alongside the reconstructed images obtained using our different proximal algorithms. The methods are run for 3000 iterations (with a burn-in of 100 iterations) and $N=10$ particles, employing the Douglas-Rachford method to numerically evaluate the proximal operator of the total variation norm. Figures \ref{fig:image_deconvolution_methods_analysis} and \ref{fig:image_deconvolution_methods_analysis_boat} illustrate the evolution of the parameter estimates $\theta$, the mean squared error and the SSIM, after discarding a burn-in period of 100 iterations.
The high MSE for PIPGLA (Figures \ref{fig:subfig_22_nmse} and \ref{fig:subfig_22_nmse_boat}) arises from the difference in the shades of grey between the reconstructed and the original images, remaining large regardless of the choice of the proximal parameter $\lambda$.
Besides, the optimal value for the strength of the total variation prior achieved by the algorithms, $e^\theta\approx 0.35$ (for both test images) is close to the value set manually in similar works for image reconstruction (e.g. \citet{pereyra2016proximal, durmus_proximal,goldman2022gradient}).

\subsection{Nuclear-norm models for low rank matrix estimation}\label{app:low_rank_matrix}
In this section, we demonstrate another application of our methods: the problem of matrix completion. Matrix completion \citep{sparse_matrix_noise, sparse_matrix} focuses on recovering an intact matrix with low-rank property from incomplete data. Its application varies from
wireless communications \citep{mc_wireless}, traffic sensing \citep{Mardani2014EstimatingTA} to integrated radar and recommender systems \citep{GOGNA20155789}. 
The low-rank prior knowledge is incorporated in the model using the nuclear-norm of the matrix \citep{fazel_2002}. 
However, similar to the image deblurring example, the strength of this prior is a hyperparameter that must be set manually. Instead, we estimate the optimal value of this parameter, thereby extending the applicability of proximal methods that typically perform MLE, rather than MMLE, as in our algorithms.

We conduct a graphical posterior predictive check of the widely used nuclear norm model for low-rank matrices, similar to the example in \citet{pereyra2016proximal}, but in the context of matrix completion rather than matrix denoising. Let $x$ be an unknown low-rank matrix of size $n_1 \times n_2$. 
Consider a mask $M_\Omega$, where $\Omega$ is a set of indices from a matrix of size $n_1 \times n_2$. 
When the mask is applied to the matrix $x$, i.e., $M_\Omega X$, only the entries of the matrix corresponding to indices in  $\Omega$ are observed.
Furthermore, after the masking operation, we do not have direct access to the observed entries but instead observe a noisy version of them, where the observational noise has mean zero and covariance $\sigma^2 I$. Thus, our observations are given by $y = M_\Omega x + \sigma^2\varepsilon$, with $\varepsilon\sim\mathcal{N}(0, I)$. 
It is important to highlight, that we will also estimate with our algorithms the scale parameter $\sigma$, rather than requiring it to be fixed manually.

Our objective is to recover $x$ from $y$ under the prior knowledge that $x$ has low rank, that is, most of its singular values are zero. A convenient model for this type of problem is the nuclear
norm prior, which is a sparsity-inducing prior, given by
\begin{equation*}
    p_\theta(x) = C(\theta_1) e^{-e^{\theta_1}\Vert x\Vert_{\text{tr}}},
\end{equation*}
where $\Vert \cdot\Vert_{\text{tr}}$ is the trace (or nuclear) norm, which is a convex envelope of the rank function \citep{JMLR:v9:bach08a_trace_norm}, and is defined as
\begin{equation*}
  \Vert x\Vert_{\text{tr}} = \sum_{i=1}^r\sigma_i(x),
\end{equation*}
$r=\text{rank}(x)$ and $\sigma_1(x)\geq \dots\geq \sigma_r(x)\geq 0$ are the singular values.
Besides, the constant $C(\theta_1)$ can be computed using the pushforward argument, as in Section \ref{app:image_deblurring}, leveraging the linearity of the trace norm. Specifically, it is given by  $C(\theta_1) = C e^{d_x\theta_1}$, where $d_x$ denotes the dimension of the original matrix $x$ and $C$ is a constant.
The posterior distribution of our model can be written as
\begin{equation*}
    p_\theta(y|x)\propto \frac{e^{d_x\theta_1}}{e^{d_y\theta_2}}\exp\left({-\frac{\Vert M_\Omega x-y \Vert^2}{2e^{2\theta_2}}-e^{\theta_1}\Vert x\Vert_{\text{tr}}}\right).
\end{equation*}
Therefore, the negative log density can be decomposed as
\begin{equation*}
  U(\theta, x) =  \underbrace{-d_x\theta_1+d_y\theta_2 +\frac{\Vert M_\Omega x -y\Vert^ 2}{2 e^{2\theta_2}}}_{g_1(\theta, x)} +\underbrace{e^{\theta_1}\Vert x\Vert_{\text{tr}}}_{g_2(\theta, X)},
\end{equation*}
where $d_y$ denotes the number of observed entries.
Note that we exponentiate the parameters $\theta_1$ and $\theta_2$ to ensure their positivity.

\paragraph{Dataset.} We use the \textit{checkerboard} image of size $188\times 188$ and rank 2. We add Gaussian observational noise with variance $\sigma^2 = 0.1$ and mask $30\%$ of the pixels in the image.

\paragraph{Proximal operator of $g_2$.} Recall that $g_2(\theta, x)$ is of the form
\begin{equation*}
    g_2(\theta, x)= e^{\theta_1}\Vert x\Vert_{\text{tr}}.
\end{equation*}
To compute the proximal map, we first observe that if $\theta_1$ is known, then by \citet[Theorem 2.1]{svd_thresholding}, it follows that
\begin{equation*}
    \prox_{g_2}^{\lambda}(x) = \argmin_{z}\; \{e^{\theta_1} \Vert z\Vert_{\text{tr}} + \frac{1}{2\lambda} \Vert x-z\Vert_{F}^2 \}= S_{e^{\theta_1}\lambda}(x) := U\Sigma_{e^{\theta_1}\lambda} V^{T},
\end{equation*}
where $U\Sigma V^T$ is a singular value decomposition, and $\Sigma_{\beta}$ is diagonal with entries $(\Sigma_{\beta})_{ii} = \max \{\Sigma_{ii}-\beta, 0\}$. Based on this, we calculate
\begin{equation*}
    \prox_{g_2}^{\lambda}(\theta, x) = \argmin_{(\alpha, z)}\; \{e^{\alpha} \Vert z\Vert_{\text{tr}} + \frac{1}{2\lambda} \big(\Vert\theta_1-\alpha\Vert^2 + \Vert x-z\Vert_{F}^2 \big)\},
\end{equation*}
where $\Vert\cdot\Vert$ denotes the Frobenius norm.
The minimisers $(\alpha, z)$ satisfy the following system of equations
\begin{align}
        &\alpha = \theta_1 +\lambda e^{\alpha}\Vert S_{e^{\alpha}\lambda}(x)\Vert_{\text{tr}} \Longrightarrow (\alpha -\theta)e^{\theta_1 - \alpha} = \lambda e^{\theta_1}\Vert S_{e^{\alpha}\lambda}(x)\Vert_{\text{tr}},\label{eq:alpha_matrix}\\
    &z = S_{\lambda e^{\alpha}}(x).\label{eq:x_matrix}
\end{align}
Solving this system is complicated due to the dependence between $\alpha$ and $z$ and using an iterative solver can be computationally burdensome. Therefore, we have decided to approximate (\ref{eq:alpha_matrix}) by
\begin{equation*}
    (\alpha -\theta_1)e^{\theta_1-\alpha} \approx \lambda e^{\theta_1}\Vert S_{e^{\theta}\lambda}(x)\Vert_{\text{tr}} \Longrightarrow \alpha \approx \theta_1 + W(\lambda e^{\theta_1}\Vert S_{e^{\theta_1}\lambda}(X^l)\Vert_{\text{tr}}),
\end{equation*}
where $W$ is the Lambert $W$ function.
Substituting this value of $\alpha$ into (\ref{eq:x_matrix}), we obtain 
\begin{equation*}
    z \approx S_{e^{\alpha}\lambda}(x).
\end{equation*}

\paragraph{Implementation.} To stabilise the implementation of the algorithms, we divide the gradient and proximal mapping terms in the updates of $\theta_1$ and $\theta_2$ by the dimension of the the matrix $x$, $d_x$, and the number of observed entries in $y$, $d_y$, respectively. We then set $\gamma= 0.01$, and $\lambda=0.25$ for MYPGD and MYIPLA and $\lambda = 0.01$ for PIPGLA. 
The pixels of the initial particles are drawn from a normal distribution with mean $\mu = 50$ and scale parameter $10$, while the initial values of the parameters $\theta_1$ and $\theta_2$ are drawn from uniform distributions over $[-15, 5]$ and $[-10, 10]$, respectively.

\paragraph{Performance metrics.} 
To asses the performance of our algorithms for low-rank matrix completion, we analyse the normalised mean squared error (NMSE) for both the entire matrix and the missing entries.

\paragraph{Results.} Figure \ref{fig:matrix_completion_experiment} displays the original and observed matrices alongside the reconstructed matrices obtained using our different proximal algorithms. The methods are run for 3000 iterations (with a burn-in of 100 iterations) and $N=10$ particles. The NMSEs for the entire matrix and the missing entries for the final particle cloud are displayed in Table \ref{table-matrix-completion-experiment}, together with the computation times.

\begin{figure}[t]
    \centering

    \begin{subfigure}[b]{0.18\textwidth}
        \centering
        \includegraphics[width=\textwidth]{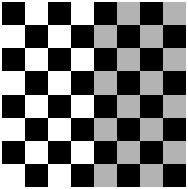}
        \caption{Original}
        \label{fig:subfig_11_matrix}
    \end{subfigure}
    \hfill
    \begin{subfigure}[b]{0.18\textwidth}
        \centering
        \includegraphics[width=\textwidth]{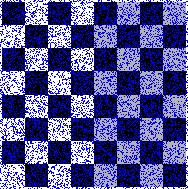}
        \caption{Observed}
        \label{fig:subfig_22_matrix}
    \end{subfigure}
    \hfill
    \begin{subfigure}[b]{0.18\textwidth}
        \centering
        \includegraphics[width=\textwidth]{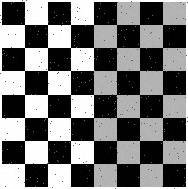}
        \caption{MYPGD}
        \label{fig:subfig_33_matrix}
    \end{subfigure}
    \hfill
    \begin{subfigure}[b]{0.18\textwidth}
        \centering
        \includegraphics[width=\textwidth]{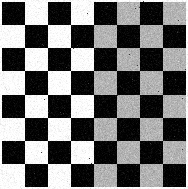}
        \caption{MYIPLA}
        \label{fig:subfig_44_matrix}
    \end{subfigure}
    \hfill
    \begin{subfigure}[b]{0.18\textwidth}
        \centering
        \includegraphics[width=\textwidth]{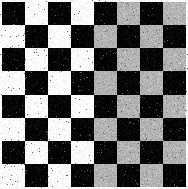}
        \caption{PIPGLA}
        \label{fig:subfig_55_matrix}
    \end{subfigure}

    \caption{Low-rank matrix completion. All the algorithms use $N=10$ particles and are run for 3000 iterations. The blue pixels in (b) represent the mask.}
    \label{fig:matrix_completion_experiment}
\end{figure}

\begin{table}[t]
  \caption{Low-rank matrix completion. Normalised mean squared errors (NMSE) for the entire matrix and the missing entries achieved using the final particle cloud with $N = 10$ and $3000$ iterations.}
  \label{table-matrix-completion-experiment}
  \centering
  \begin{tabular}{llll}
    \toprule
    Algorithm     & NMSE entire $\left(\%\right)$     &   NMSE missing $\left(\%\right)$ & Times (min)\\
    \midrule
    MYPGD & $1.21\pm 0.49$  & $1.67\pm0.52$  &   $\mathbf{4.1}$\\
    MYIPLA & $\mathbf{1.13\pm 0.48}$  & $\mathbf{1.53\pm 0.44}$ & $4.7$\\
        PIPGLA & $2.02\pm 0.29$  & $2.11\pm 0.31$  &   $5.5$\\
    \bottomrule
  \end{tabular}
\end{table}
\normalsize

\subsection{Ablation Study}
In this section, we analyse how the choice of the regularisation parameter $\lambda$ in the Moreau–Yosida approximation affects the performance and stability of the algorithm.

Choosing an appropriate value for $\lambda$ is a challenging task as this parameter controls both the level of regularisation and the closeness to the target, and is closely tied to the step size parameter $\gamma$.
\citet{durmus_proximal} provides some empirical guidance on the choice of $\gamma, \lambda$ for sampling tasks. 
\citet{crucinio2023optimal} shows that $\lambda \leq \gamma$ generally leads to better results in the case of grad Lipschitz potentials, while one should choose $\lambda \geq \gamma$ for light tail distributions. Adaptive strategies to choose $\lambda$ have been considered in the optimisation literature (see \citet{oikonomidis24a} and references therein) but equivalent results for sampling have not been obtained yet.

We conduct additional experiments to analyse the impact of the regularisation parameter $\lambda$ in the Bayesian logistic regression task with Laplace prior. In Figure \ref{fig:regularisation_parameter_analysis}, we report the performance (measured by NMSE) of MYIPLA, MYPGD and PIPGLA algorithms using approximate proximity maps, evaluated over a fine grid of $\lambda$ values.
The step size parameters used are those listed in Table~\ref{table-logistic-hyperparameters}: $\gamma = 0.05$ for MYIPLA and MYPGD, and $\gamma = 0.01$ for PIPGLA.
Each configuration is run with 100 different random seeds to compute confidence intervals.  
We observe that our algorithms exhibit stable performance across a broad range of $\lambda$ values.

\begin{figure}[h!]
    \centering

    \begin{subfigure}[b]{0.6\textwidth}
        \centering
        \includegraphics[width=\textwidth]{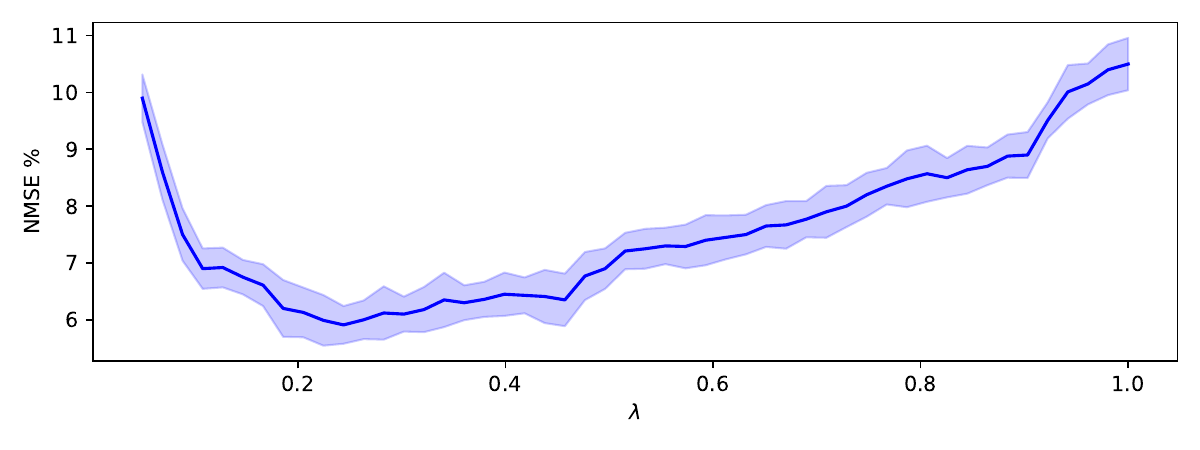}
        \caption{MYPGD, $\gamma = 0.05$}
        \label{fig:subfig_11_lambda_mypgd}
    \end{subfigure}
    \begin{subfigure}[b]{0.6\textwidth}
        \centering
        \includegraphics[width=\textwidth]{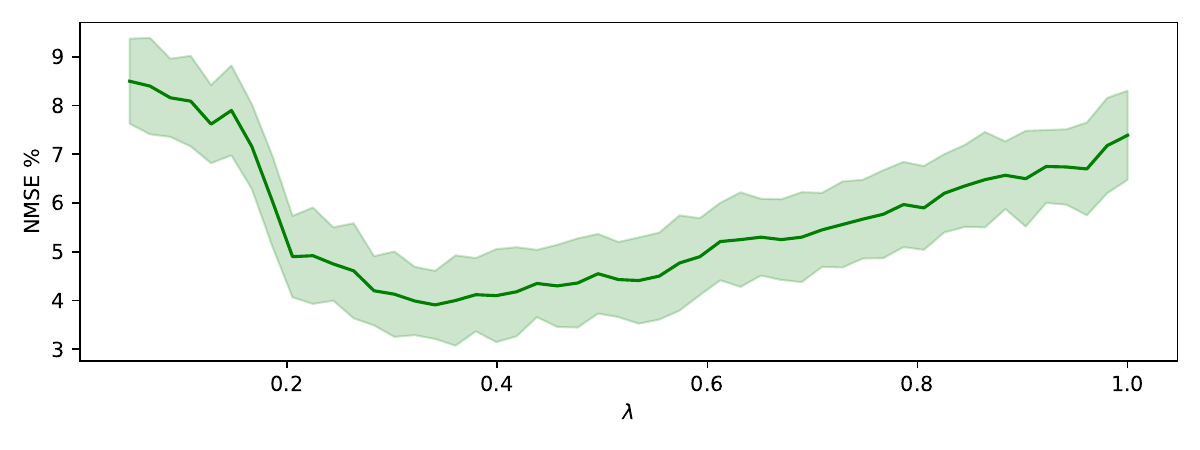}
        \caption{MYIPLA,  $\gamma = 0.05$}
        \label{fig:subfig_22_lambda_myipla}
    \end{subfigure}
    \begin{subfigure}[b]{0.6\textwidth}
        \centering
        \includegraphics[width=\textwidth]{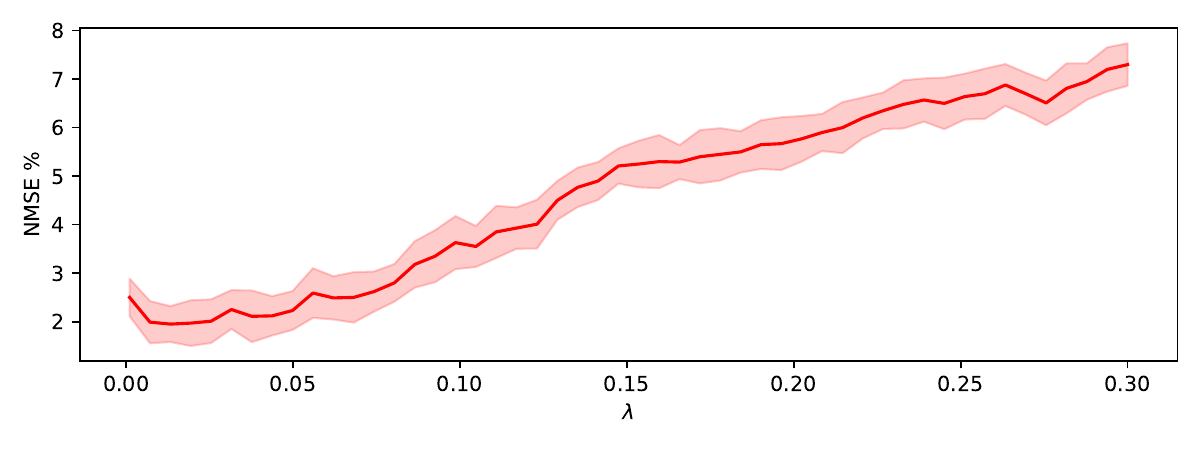}
        \caption{PIPGLA,  $\gamma = 0.01$}
        \label{fig:subfig_22_lambda_pipgla}
    \end{subfigure}

    \caption{Normalised MSE (\%) for different values of the regularisation parameter $\lambda$ and a fixed step size $\gamma$. Each configuration is run with 100 random seeds for 50 particles and 5000 steps. The proximal map for all algorithms is computed approximately.}
    \label{fig:regularisation_parameter_analysis}
\end{figure}

\end{document}